\numberwithin{equation}{section}
\newcommand{\beqa}{\begin{eqnarray}}
\newcommand{\eeqa}{\end{eqnarray}}
\newcommand{\rf}[1]{(\ref{#1})}
\newtheorem{theorem}{Theorem}[section]
\newtheorem{proposition}{Proposition}[section]
\newtheorem{lemma}{Lemma}[section]
\newtheorem{corollary}{Corollary}[section]
\newtheorem{property}{Property}[section]
{\theoremstyle{remark}
}
\newtheorem{identity}{Identity}
\numberwithin{equation}{section}
\begin{document}

\author{\vspace{1cm}\vspace{2cm}}

\bigskip \begin{flushright}
LPENSL-TH-04-19
\end{flushright}

\bigskip \bigskip

\begin{center}
\textbf{\LARGE On Separation of Variables for Reflection Algebras}

\vspace{50pt}

{\large\textbf{J. M. Maillet}\footnote{{ Univ Lyon, Ens de Lyon, Univ
Claude Bernard, CNRS, Laboratoire de Physique, F-69342 Lyon, France;
maillet@ens-lyon.fr}}}\ \ \thinspace { \textbf{and \ G. Niccoli}\footnote{%
{ Univ Lyon, Ens de Lyon, Univ Claude Bernard, CNRS, Laboratoire de
Physique, F-69342 Lyon, France; giuliano.niccoli@ens-lyon.fr}}}
\bigskip

%\today
\vspace{50pt}

\end{center}

\begin{itemize}
\item[ ] \textbf{Abstract:}\thinspace \thinspace We implement
our new Separation of Variables (SoV) approach for open quantum
integrable models associated to higher rank representations of the
reflection algebras. We construct the 
(SoV) basis for the fundamental representations of the $Y(gl_n)$ reflection
algebra associated to general integrable boundary conditions.
Moreover, we give the conditions on the boundary $K$-matrices allowing for the transfer matrix to be diagonalizable with simple spectrum. These SoV basis  are then used to completely
characterize the transfer matrix spectrum for the rank one and two reflection
algebras. The rank one case is developed for both the rational and
trigonometric fundamental representations of the 6-vertex reflection
algebra. On the one hand, we extend the complete spectrum
characterization to representations previously lying outside the SoV
approach, e.g. those for which the standard Algebraic Bethe Ansatz applies.
On the other hand, we show that our new SoV construction can be reduced to
the generalized Sklyanin's one whenever it is applicable. The rank two
case is developed explicitly for the fundamental representations of the $Y(gl_3)$ reflection algebra associated to general integrable boundary
conditions. For both rank one and two our SoV approach leads to a complete
characterization of the transfer matrix spectrum in terms of a set of
polynomial solutions to the corresponding quantum spectral curve equation. Those are finite
difference functional equations of order equal to the rank plus one, i.e., here two and three
respectively for the $Y(gl_2)$ and $Y(gl_3)$ reflection algebras.
\end{itemize}

\newpage
\tableofcontents \newpage

\section{Introduction}

In this article we continue the development of our new method \cite{MaiN18} for constructing Separation of Variables (SoV) basis of quantum integrable lattice models. While in \cite{MaiN18,MaiN18a,MaiN18b,MaiN19} we have considered  various representations of the Yang-Baxter algebra for integrable models with quasi-periodic boundary conditions, we explore in the present article representations of the reflection algebras describing open quantum integrable models with general boundary conditions preserving integrability. We use the framework of the Quantum Inverse Scattering Method (QISM) \cite{FadS78,FadST79,FadT79,Skl79,Skl79a,FadT81,Skl82,Fad82,Fad96} extended to the reflection algebras situation \cite{Che84,Skl88} to describe these open quantum integrable lattice models.  Their exact solution  has become a subject of intense studies 
\cite{Che84,Gau71a,Bar79,Bar80,Bar80a,Sch85,AlcBBBQ87,Skl88,Bar88,MezNR90,PasS90,BatMNR90,KulS91,FriM91,MezN91,MezN91b,deVG93,deVG93a,deVG94,GhoZ94a,GhoZ94b,ArtMN95,JimKKKM95, 
JimKKMW95,Kul96,FanHSY96,Zho96, AsaS96,Zho97,GuaWY97,
ShiW97,Tsu98,Gua00,MinRS01,Nep02,NepR03,CaoLSW03,Doi03,Nep04,deGP04,ArnACDFR04,GalM05,ArnCDFR05,ArnCDFR06,MurNS06,Doi06,YanNZ06,KitKMNST07,BasK07,
YanZ07,RagS07,FraNR07,AmiFOR07,KitKMNST08,Gal08,BelR09,Nep10,FilK10,FilK11,Fil11,CraR12,CaoYSW13b,BasB13,BelC13,BelCR13,CaoYSW14,Bel15,BelP15,AvaBGP15,XuHYCYS16,
BasB17,
%SoV type
DerKM03b,FraSW08,AmiFOW10,FraGSW11,Nic12, FalN14, FalKN14, KitMN14,MaiNP17,KitMNT17,MaiNP18,KitMNT18}
in quantum integrability. They have attracted a large research interest both because of  their potential relevance in the description of the non-equilibrium
and transport properties of quantum integrable systems \cite{deGE05,deGE06,KinWW06,HofLFSS07,BloDZ08,CraRS10,Pro11,TroCFMSEB12,SchHRWBBBDMRR12,RonSBHLMHBS13,EisFG15}
and the fact that their solution by standard methods has represented
a longstanding challenge for the most general integrable boundary
conditions. Important instances are the integrable $XXZ$ (and $XYZ$) quantum
spin 1/2 chains with the most general integrable boundary magnetic fields.
Their Analytic or Algebraic Bethe Ansatz descriptions have been first confined
to the case of $z$-oriented boundary magnetic fields \cite{AlcBBBQ87,Skl88} and then to the case in which the left and right boundary
magnetic fields satisfy some special relation \cite{FanHSY96,Nep02,NepR03,CaoLSW03,deGP04,YanZ07}.  The problem of the most general
integrable boundaries has been
overcome in the framework of the Sklyanin's quantum version of the
Separation of Variables (SoV) \cite{Skl85,Skl85a,Skl92,Skl95}. More in detail in 
\cite{Nic12,FalKN14,FalN14} a generalized version of the Sklyanin's SoV approach has been
introduced by the combined used of SoV and of the (Vertex-IRF) Baxter's
gauge transformations \cite{Bax73a,Bax82B}. This generalized version of the
Sklyanin's SoV approach has lead to the complete characterization of the
transfer matrix spectrum with unconstrained integrable boundaries. While in 
\cite{KitMN14} it has been first proven the equivalence of this SoV
characterization with a second order difference type functional equation of
Baxter's form containing an inhomogeneous term under the most general
integrable boundary conditions\footnote{For these unconstrained integrable boundaries, see also 
\cite{Bel15,BelP15,AvaBGP15} for a
subsequent analysis by a generalized version of the Algebraic Bethe Ansatz
approach and \cite{CaoYSW13b} for another Ansatz description of
the transfer matrix eigenvalue spectrum, where in fact inhomogeneous
Baxter's type functional equations first appeared.}.

The Sklyanin's Separation of Variables method, or some minor 
generalizations of it, has by now found a large set of applications
producing cutting edge results on the construction of the full eigenvalue
and eigenvector spectrum of the transfer matrices associated to a large
class of integrable quantum models \cite{DerKM03b,FraSW08,AmiFOW10,FraGSW11,Nic12, FalN14, FalKN14, KitMN14,MaiNP17,KitMNT17,MaiNP18,KitMNT18,BabBS96,BabBS97,Smi98a,DerKM03,BytT06,vonGIPS06,vonGIPS09,vonGIPST07,
vonGIPST08,
NicT10,Nic10a,GroN12,GroMN12,Nic13,Nic13b,Nic13a,GroMN14,NicT15,NicT16,LevNT16,KitMNT16,Smi98,NieWF09,JiaKKS16,RyaV18,MaiN18,MaiN18a,MaiN18b,MaiN19}, while also providing some fundamental steps towards the computation of their 
form factors and correlation functions \cite{GroMN12,Nic13,Nic13b,Nic13a,GroMN14,NicT15,NicT16,LevNT16,KitMNT16,KitMNT17,KitMNT18}. Nevertheless, one has to
mention that some applicability issues have been encountered in particular
in relation with the quantum integrable models associated to higher rank
representations of the Yang-Baxter and reflection algebras. Moreover, these
issues appeared to exist already for the rank one quantum integrable models when
some special integrable boundary conditions are considered. This is in
particular the case for boundary conditions allowing for the application of
the standard Algebraic Bethe Ansatz approach, e.g. the XXZ quantum spin 1/2
chain with closed and open boundary conditions associated to diagonal twist
and diagonal boundary matrices, respectively. This may have generated the
wrong perception that SoV and ABA methods have disjoint applicability ranges.

Our new SoV method \cite{MaiN18} allows to overcome these problems while it is
proven to be reducible to the original Sklyanin's SoV approach when this last one is
applicable. We have already proven its efficiency  to completely characterize
the spectrum for fundamental representations of both the rational and
trigonometric Yang-Baxter algebra for any positive integer rank \cite%
{MaiN18, MaiN18a, MaiN18b} and under general closed integrable boundary
conditions. Moreover,  in \cite{MaiN19} we have described its application to
non-fundamental representations of the Yang-Baxter algebras. Already for the rank one case our analysis \cite%
{MaiN18} has proven the applicability of the SoV method beyond the limits of
the original one, being applicable as well for diagonal and non-diagonal
boundary twists, overcoming for these representations the
apparent dualism between standard ABA and SoV approaches.

In the present article, we show that our SoV approach can be applied to higher rank 
fundamental representations of the reflection algebra  and under the most general integrable boundary
conditions. Once again we get new results already for the rank one case
leading to the extension of the SoV approach to representations previously
unattainable by the generalized version of the Sklyanin's method.

More in detail the article is organized as follows. In Section 2, we develop the analysis for general fundamental representations of $Y(gl_2)$ reflection algebra. First, we analyze the applicability of SoV in a generalized Sklyanin's approach. Then we present our new SoV approach. We compare them and we show their consistence in the overlapping region of applicability, while proving that our SoV construction does not suffer the limitations of the original one. Then, we derive the transfer matrix spectrum in our SoV scheme by the quantum spectral curve equation and we define general criteria to establish the diagonalizability and simplicity of the transfer matrix. Section 3 is devoted to the generalization of our SoV analysis and results to the fundamental representations of $U_q(gl_2)$ reflection algebra. By carefully taking the rational limit of the trigonometric 6-vertex $R$ and $K$ matrices, we are able to obtain the trigonometric results from their rational counterparts. In Section 4, we develop our SoV analysis of the fundamental representations of $Y(gl_3)$ reflection algebra. We construct our SoV basis and we obtain complete characterization of the transfer matrix spectrum both by solutions to a discrete system of equations and to the quantum spectral curve equation. Finally, Section 5 is devoted to the SoV basis construction for the fundamental representations of $Y(gl_n)$ reflection algebra and to the identification of general criteria for the diagonalizability and simplicity of the transfer matrix. In Section 6 we present some conclusions. Finally, in the appendix, the scalar products of separate vectors and co-vectors are determined for the rank one case in our new SoV approach.

\section{SoV for fundamental representations of $Y(gl_{2})$ reflection algebra}

The transfer matrix associated to the fundamental representations of $gl_{2}$
reflection algebra reads: 
\begin{equation}
T(\lambda )=\text{tr}_{0}\{K_{+}(\lambda
)\,M(\lambda )\,K_{-}(\lambda )\,\hat{M}(\lambda )\}=\text{tr}_{0}\left\{
K_{+}(\lambda )\,\mathcal{U}_{-}(\lambda )\right\} .  \label{T-open++}
\end{equation}%
It defines a one-parameter family of commuting operators \cite{Skl88} on the
quantum space $\mathcal{H}=\otimes _{i-1}^{\mathsf{N}}V_{i}$, with $%
V_{i}\simeq \mathbb{C}^{2}$, of the $\mathsf{N}$ sites bidimensional
fundamental representations of the reflection algebra \cite{Che84}. The
transfer matrix is introduced in terms of the following definitions. First
we define the boundary matrices 
\begin{equation}
K_{+}(\lambda )=K_{-}(\lambda +\eta ;\zeta _{+},\kappa _{+},\tau _{+}),
\label{Kpm++}
\end{equation}%
and%
\begin{equation}
K_{-}(\lambda ;\zeta _{-},\kappa _{-},\tau _{-})=\frac{1}{\zeta _{-}}\left( 
\begin{array}{cc}
\zeta _{-}+\lambda -\eta /2 & 2\kappa _{-}e^{\tau _{-}}(\lambda -\eta /2) \\ 
2\kappa _{-}e^{-\tau _{-}}(\lambda -\eta /2) & \zeta _{-}-\lambda +\eta /2%
\end{array}%
\right) \in \mathrm{End}(\mathbb{C}^{2}),  \label{Kxxx}
\end{equation}%
where $K_{-}(\lambda)$ is the most general scalar solution \cite%
{deVG93,deVG94,GhoZ94a,GhoZ94b} of the rational 6-vertex reflection equation:%
\begin{equation}
R_{ab}(\lambda -\mu )\,K_{-,a}(\lambda )\,R_{a,b}(\lambda +\mu -\eta
)\,K_{-,b}(\mu )=K_{-,b}(\mu )\,R_{ab}(\lambda +\mu -\eta )\,K_{-,a}(\lambda
)\,R_{ab}(\lambda -\mu )\in \mathrm{End}(V_{a}\otimes V_{b}),
\end{equation}%
w.r.t. the rational 6-vertex $R$-matrix:%
\begin{equation}
R_{ab}(\lambda )=\left( 
\begin{array}{cccc}
\lambda +\eta & 0 & 0 & 0 \\ 
0 & \lambda & \eta & 0 \\ 
0 & \eta & \lambda & 0 \\ 
0 & 0 & 0 & \lambda +\eta%
\end{array}%
\right) \in \mathrm{End}(V_{a}\otimes V_{b}).
\end{equation}%
Then using it we can define the boundary monodromy matrix: 
\begin{equation}
\mathcal{U}_{-,0}(\lambda )=M_{0}(\lambda )\,K_{-,0}(\lambda )\,\hat{M}%
_{0}(\lambda )=\left( 
\begin{array}{cc}
\mathcal{A}_{-}(\lambda ) & \mathcal{B}_{-}(\lambda ) \\ 
\mathcal{C}_{-}(\lambda ) & \mathcal{D}_{-}(\lambda )%
\end{array}%
\right)_{(0)} \in \mathrm{End}(V_{0}\otimes \mathcal{H}),  \label{U+}
\end{equation}
an operator solution to the same reflection equation \cite{Skl88}%
\begin{equation}
\hspace{-0.1cm}R_{ab}(\lambda -\mu )\,\mathcal{U}_{-,a}(\lambda
)\,R_{ab}(\lambda +\mu -\eta )\,\mathcal{U}_{-,b}(\mu )=\mathcal{U}%
_{-.b}(\mu )\,R_{ab}(\lambda +\mu -\eta )\,\mathcal{U}_{-,a}(\lambda
)\,R_{ab}(\lambda -\mu )\in \mathrm{End}(V_{a}\otimes V_{b}\otimes \mathcal{H}%
),
\end{equation}%
where we have defined:%
\begin{equation}
\hat{M}_{0}(\lambda )=(-1)^{\mathsf{N}}\,\sigma
_{0}^{y}\,M_{0}^{t_{0}}(-\lambda )\,\sigma _{0}^{y},
\end{equation}%
in terms of the bulk monodromy matrix:%
\begin{equation}
M_{0}(\lambda )=R_{0\mathsf{N}}(\lambda -\xi _{\mathsf{N}}^{(0)})\dots
R_{01}(\lambda -\xi _{1}^{(0)})=\left( 
\begin{array}{cc}
A(\lambda ) & B(\lambda ) \\ 
C(\lambda ) & D(\lambda )%
\end{array}%
\right) ,  \label{mon++}
\end{equation}%
satisfying the rational 6-vertex Yang-Baxter algebra: 
\begin{equation}
R_{ab}(\lambda -\mu )\,M_{a}(\lambda )\,M_{b}(\mu )=M_{b}(\mu
)\,M_{a}(\lambda )\,R_{ab}(\lambda -\mu )\in \mathrm{End}(V_{a}\otimes
V_{b}\otimes \mathcal{H}),  \label{YB-R}
\end{equation}%
with 
\begin{equation}
\xi _{n}^{(h)}=\xi _{n}+\eta /2-h\eta ,\qquad 1\leq n\leq \mathsf{N},\quad
h\in \{0,1\}.  \label{xi-h++}
\end{equation}

The commutativity of the transfer matrix family $T^{\left( K_{+,-}\right)
}(\lambda )$ has been first proven by Sklyanin \cite{Skl88} as a consequence
of the reflection equation satisfied by $\mathcal{U}_{-}(\lambda )$ and $%
K_{+}(\lambda )$.

It is worth noticing that the most general boundary matrices are in fact of the following general form:%
\begin{equation}
K_{\pm }(\lambda )=I+\frac{\lambda \pm \eta /2}{\bar{\zeta}_{\pm }}\mathcal{M%
}^{\left( \pm \right) },  \label{Sl2-Kpm}
\end{equation}%
where%
\begin{eqnarray}
\mathcal{M}^{\left( \pm \right) 2} &=&r^{\left( \pm \right) }I,\text{ with }%
r^{\left( \pm \right) }=1-\delta _{\kappa _{\pm }^{2},-1/4}, \\
\bar{\zeta}_{\pm } &=&\zeta _{\pm }\delta _{\kappa _{\pm }^{2},-1/4}+\zeta
_{\pm }r^{\left( \pm \right) }/\sqrt{1+4\kappa _{\pm }^{2}}.
\end{eqnarray}%
Moreover, in the case $r^{\left( \pm \right) }=1$ and $K_{\pm }(\lambda )$
not proportional to the identity, there exist $S^{\left( \pm \right) }$
invertible $2\times 2$ matrices such that $\mathcal{M}^{(\pm )}=S^{\left(
\pm \right) }\sigma ^{z}\left( S^{\left( \pm \right) }\right) ^{-1}$. Then,
one can easily verify that $K_{\pm }(\lambda )$ are one-parameter families
of commuting matrices:%
\begin{equation}
\lbrack K_{+}(\lambda ),K_{-}(\mu )]=\frac{(\lambda +\eta /2)(\mu -\eta /2)}{%
\bar{\zeta}_{+}\bar{\zeta}_{-}}[\mathcal{M}^{\left( +\right) },\mathcal{M}%
^{\left( -\right) }]=0
\end{equation}%
if and only if:%
\begin{equation}
\kappa _{-}e^{\pm \tau _{-}}=\kappa _{+}e^{\pm \tau _{+}}\equiv \kappa
e^{\pm \tau },  \label{Cond-S-Diag}
\end{equation}%
so that $K_{\pm }(\lambda )$ are simultaneously diagonalizable if and only
if the above conditions are satisfied and $\kappa _{\pm }^{2}\neq -1/4$.
Finally, as already described in \cite{KitMNT16}, let us observe that if the
matrices $K_{\pm }(\lambda )$ are non-commuting, i.e., if there exists $a\in \{-1,1\}$ such that:%
\begin{equation}
\kappa _{-}e^{a\tau _{-}}\neq \kappa
_{+}e^{a\tau _{+}},  \label{Cond-NoN-S-Diag}
\end{equation}%
then there exists a couple $(\epsilon _{+},\epsilon _{-})\in \{-1,1\}^{2}$
such that the following matrix is invertible%
\begin{equation}
W^{(K_{+,-})}\equiv \left( 
\begin{array}{cc}
1 & -\frac{1-\epsilon _{+}\sqrt{1+4\kappa _{+}^{2}}}{2\kappa _{+}e^{-\tau
_{+}}} \\ 
\frac{1-\epsilon _{-}\sqrt{1+4\kappa _{-}^{2}}}{2\kappa _{-}e^{\tau _{-}}} & 
1%
\end{array}%
\right) ,
\end{equation}%
and we can define the following similarity transformation%
\begin{equation}
\bar{K}_{\mp }(\lambda )=W_{0}^{(K_{+,-})}\,K_{\mp }(\lambda
)\,(W_{0}^{(K_{+,-})-1})^{-1}=\left( 
\begin{array}{cc}
\bar{a}_{\mp }(\lambda ) & \bar{b}_{\mp }(\lambda ) \\ 
\bar{c}_{\mp }(\lambda ) & \bar{d}_{\mp }(\lambda )%
\end{array}%
\right) ,  \label{XXX-Similar}
\end{equation}%
where it holds:%
\begin{equation}
\bar{K}_{+}(\lambda )=\mathrm{I}+\frac{\lambda +\eta /2}{\bar{\zeta}_{+}}%
(\sigma ^{z}+\bar{\mathsf{c}}_{+}\sigma ^{-}),\qquad \bar{K}_{-}(\lambda )=%
\mathrm{I}+\frac{\lambda -\eta /2}{\bar{\zeta}_{-}}(\sigma ^{z}+\bar{\mathsf{%
b}}_{-}\sigma ^{+}),
\end{equation}%
with 
\begin{align}
& \bar{\zeta}_{\pm }=\epsilon _{\pm }\frac{\zeta _{\pm }}{\sqrt{1+4\kappa
_{\pm }^{2}}}\,,  \label{ZeBar} \\
& \bar{\mathsf{c}}_{+}=\epsilon _{+}\,\frac{2\kappa _{+}e^{-\tau _{+}}}{%
\sqrt{1+4\kappa _{+}^{2}}}\left[ 1+\frac{(1+\epsilon _{+}\sqrt{1+4\kappa
_{+}^{2}})(1-\epsilon _{-}\sqrt{1+4\kappa _{-}^{2}})}{4\kappa _{+}\kappa
_{-}e^{\tau _{-}-\tau _{+}}}\right] , \\
& \bar{\mathsf{b}}_{-}=\epsilon _{-}\,\frac{2\kappa _{-}e^{\tau _{-}}}{\sqrt{%
1+4\kappa _{-}^{2}}}\left[ 1+\frac{(1-\epsilon _{+}\sqrt{1+4\kappa _{+}^{2}}%
)(1+\epsilon _{-}\sqrt{1+4\kappa _{-}^{2}})}{4\kappa _{+}\kappa _{-}e^{\tau
_{-}-\tau _{+}}}\right],  \label{B_-Bar}
\end{align}
and\footnote{%
Note that the assumption that the boundary matrices are non-commuting
implies that they are not simultaneously diagonalizable. At least one of the
conditions $\bar{\mathsf{c}}_{+}\neq 0$ or $\bar{\mathsf{b}}_{-}\neq 0$
must be satisfied and with a proper choice of $(\epsilon _{+},\epsilon
_{-})\in \{-1,1\}^{2}$ we can obtain that  the second inequality holds.} $\bar{%
\mathsf{b}}_{-}\neq 0$.

\subsection{Applicability of SoV in generalized Sklyanin's approach}

The Sklyanin's approach to define SoV can be generalized to the reflection
algebra case. We can summarize the applicability of this approach\footnote{See also \cite{FraSW08,FraGSW11} for an earlier purely functional version of SoV (i.e.
without the construction of the SoV basis) for these representations.} as
developed in \cite{Nic12,FalKN14} and \cite{KitMNT17} by the following:

\begin{proposition}
Let us assume that for any $a$ and $b$ in $ \{1,...,\mathsf{N}\}$, with $a \neq b$,  the following condition on inhomogeneity parameters $\xi$'s 
\begin{equation}
 \xi_{a}\neq \xi _{b}+\epsilon\eta \text{ \ \ }\forall\,
\epsilon\in\{-1,0,1\},
\label{Inhomog-cond}
\end{equation}
holds and that the boundary matrix $K_{-}(\lambda )$ and $K_{+}(\lambda )$
are non-commuting ones, namely that $(\ref{Cond-NoN-S-Diag})$ is satisfied. Then, defining:%
\begin{equation}
\widehat{\mathcal{U}}_{-}(\lambda )=W_{0}^{(K_{+,-})}\,\mathcal{U}%
_{-}(\lambda )\,(W_{0}^{(K_{+,-})})^{-1}=\left( 
\begin{array}{cc}
\widehat{\mathcal{A}}_{-}(\lambda ) & \widehat{\mathcal{B}}_{-}(\lambda ) \\ 
\widehat{\mathcal{C}}_{-}(\lambda ) & \widehat{\mathcal{D}}_{-}(\lambda )%
\end{array}%
\right) ,
\end{equation}%
with $\mathcal{W}_{K_{+,-}}=\otimes _{a=1}^{\mathsf{N}}W_{a}^{(K_{+,-})}$,
the generalized Sklyanin's left and right SoV basis for the transfer matrix $%
T(\lambda )$ are the left and right eigenbasis of $%
\widehat{\mathcal{B}}_{-}(\lambda )$:%
\begin{equation}
\langle \,\mathbf{h}_{-}\,|\equiv \langle \,0\,|\mathcal{W}%
_{K_{+,-}}\prod_{n=1}^{\mathsf{N}}\left( \frac{\widehat{\mathcal{A}}%
_{-}(\eta /2-\xi _{n})}{\mathsf{A}_{-}(\eta /2-\xi _{n})}\right) ^{1-h_{n}},%
\text{ \ }|\,\mathbf{h}_{-}\,\rangle \equiv \prod_{n=1}^{\mathsf{N}}\left( 
\frac{\widehat{\mathcal{D}}_{-}(\xi _{n}+\eta /2)}{\text{\textsc{k}}_{n}\,%
\mathsf{A}_{-}(\eta /2-\xi _{n})}\right) ^{h_{n}}\mathcal{W}%
_{K_{+,-}}^{-1}|\,\underline{0}\,\rangle ,
\end{equation}%
with eigenvalues: 
\begin{equation*}
\mathsf{b}_{-,\mathbf{h}}(\lambda )=(-1)^{\mathsf{N}}\,\bar{\mathsf{b}}_{-}%
\frac{\lambda -\eta /2}{\bar{\zeta}_{-}}\,\prod_{n=1}^{\mathsf{N}}(\lambda
-\xi _{n}^{(h_{n})})(-\lambda -\xi _{n}^{(h_{n})}),\text{ \ }\bar{\zeta}%
_{-}=\epsilon _{-}\zeta _{-}/\,\sqrt{1+4\kappa _{-}^{2}}.
\end{equation*}%
Here $\langle \,0\,|$ is the co-vector with all spin up and $|\,\underline{0}%
\,\rangle $ is the vector with all spin down and 
\begin{equation}
\mathsf{A}_{-}(\lambda )=(-1)^{\mathsf{N}}\frac{\bar{\zeta}_{-}+\lambda
-\eta /2}{\bar{\zeta}_{-}}a(\lambda )\,d(-\lambda ),\qquad \text{\textsc{k}}%
_{n}=(\xi _{n}+\eta )/(\xi _{n}-\eta ),
\end{equation}%
and%
\begin{equation}
a(\lambda )\equiv \prod_{n=1}^{\mathsf{N}}(\lambda -\xi _{n}+\eta /2),\qquad
d(\lambda )\equiv \prod_{n=1}^{\mathsf{N}}(\lambda -\xi _{n}-\eta /2).
\label{a-d}
\end{equation}
\end{proposition}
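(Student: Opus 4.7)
The plan is to model the proof on Sklyanin's original SoV construction for the open chain, the essential innovation being the preliminary similarity transformation by $\mathcal{W}_{K_{+,-}}$ which converts non-commuting boundary data into a form suitable for Sklyanin's scheme. First I would verify that $\widehat{\mathcal{U}}_{-}(\lambda)$ satisfies the same reflection equation as $\mathcal{U}_{-}(\lambda)$, since $W^{(K_{+,-})}$ is a $\lambda$-independent similarity acting only in auxiliary space; consequently, the quadratic commutation relations among the entries $\widehat{\mathcal{A}}_{-}, \widehat{\mathcal{B}}_{-}, \widehat{\mathcal{C}}_{-}, \widehat{\mathcal{D}}_{-}$ have the same algebraic form as in the standard reflection algebra. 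Next I would determine the analytic structure of $\widehat{\mathcal{B}}_{-}(\lambda)$: using the triangular-plus-rank-one decomposition of $\bar K_{-}$ implicit in (\ref{XXX-Similar})--(\ref{B_-Bar}), one shows that it is a polynomial of the form $(\lambda-\eta/2)\,p(\lambda^{2})$ with $\deg p=\mathsf{N}$, whose leading coefficient is proportional to $\bar{\mathsf{b}}_{-}/\bar\zeta_{-}$. The non-commutation hypothesis (\ref{Cond-NoN-S-Diag}), together with the appropriate choice of $(\epsilon_{+},\epsilon_{-})$, ensures $\bar{\mathsf{b}}_{-}\neq 0$, so that $\widehat{\mathcal{B}}_{-}(\lambda)$ is not identically zero and admits exactly $2\mathsf{N}$ finite zeros occurring in symmetric pairs.

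Second, I would verify the reference-state eigenvalue assignments. On the all-spin-up co-vector $\langle 0|$, the bulk monodromy acts triangularly with $\langle 0|B(\lambda)=0$ and $\langle 0|A(\lambda)=a(\lambda)\langle 0|$, $\langle 0|D(\lambda)=d(\lambda)\langle 0|$; tracking this triangular action through $K_{-}(\lambda)$, $\hat{M}(\lambda)$ and the conjugation by $\mathcal{W}_{K_{+,-}}$, a direct computation shows that $\langle 0|\,\mathcal{W}_{K_{+,-}}$ is an eigen-covector of $\widehat{\mathcal{B}}_{-}(\lambda)$ whose eigenvalue, fixed by the zeros $\{\pm\xi_{n}^{(1)}\}_{n=1}^{\mathsf N}$ and by the leading asymptotic coefficient computed above, is precisely $\mathsf{b}_{-,\mathbf{1}}(\lambda)$ (consistent with $h_{n}=1$ for all $n$). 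A symmetric computation on the all-spin-down vector yields that $\mathcal{W}_{K_{+,-}}^{-1}|\,\underline{0}\,\rangle$ is a right eigenvector with eigenvalue $\mathsf{b}_{-,\mathbf{0}}(\lambda)$.

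Third, from the reflection-algebra commutation relations I would derive the shift identities for $\widehat{\mathcal{A}}_{-}(\eta/2-\xi_{n})$ and $\widehat{\mathcal{D}}_{-}(\xi_{n}+\eta/2)$: upon commutation past $\widehat{\mathcal{B}}_{-}(\lambda)$, the operator $\widehat{\mathcal{D}}_{-}(\xi_{n}^{(0)})$ shifts the pair of zeros $\pm\xi_{n}^{(0)}$ to $\pm\xi_{n}^{(1)}$, while $\widehat{\mathcal{A}}_{-}(-\xi_{n}^{(1)})$ performs the inverse shift; both specializations are finite and well defined precisely because of (\ref{Inhomog-cond}), which prevents the arguments from hitting the poles of the structure constants. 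Iterating these shifts on the two reference states generates the full families $\langle\mathbf{h}_{-}|$ and $|\mathbf{h}_{-}\rangle$ with the prescribed eigenvalues $\mathsf{b}_{-,\mathbf{h}}(\lambda)$, and the normalization factors $\mathsf{A}_{-}(\eta/2-\xi_{n})$ and $\text{\textsc{k}}_{n}\,\mathsf{A}_{-}(\eta/2-\xi_{n})$ are tuned so that the two shift operators become mutually inverse on each coordinate $h_{n}\in\{0,1\}$, consistently with the two dual formulas of the statement.

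The main obstacle is to prove that the $2^{\mathsf{N}}$ families so constructed are genuine bases of $\mathcal{H}$, for which the argument is spectral simplicity of $\widehat{\mathcal{B}}_{-}(\lambda)$. The inhomogeneity condition (\ref{Inhomog-cond}) ensures that the $4\mathsf{N}$ numbers $\{\pm\xi_{n}^{(0)},\pm\xi_{n}^{(1)}\}_{n=1}^{\mathsf N}$ are pairwise distinct, and therefore the $2^{\mathsf{N}}$ polynomial eigenvalues $\mathsf{b}_{-,\mathbf{h}}(\lambda)$ are mutually distinct as well. Distinct states then belong to distinct spectral subspaces of $\widehat{\mathcal{B}}_{-}(\lambda)$ and are linearly independent; since their number equals $\dim\mathcal{H}=2^{\mathsf{N}}$, this forces the basis property, and at the same time implies that $\widehat{\mathcal{B}}_{-}(\lambda)$ (hence $\mathcal{B}_{-}(\lambda)$) is diagonalizable with simple spectrum, completing the generalized Sklyanin construction under the stated hypotheses.
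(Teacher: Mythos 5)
The paper does not actually prove this Proposition: it is stated as a summary of the generalized Sklyanin construction developed in the cited references \cite{Nic12,FalKN14,KitMNT17}, so your proposal can only be measured against that standard construction. Your outline does follow it faithfully in spirit: the $GL(2)$ invariance of the rational $R$-matrix makes the gauge-transformed $\widehat{\mathcal{U}}_-$ a representation of the same reflection algebra, the reference (co)vectors are eigenstates of $\widehat{\mathcal{B}}_-$, and the operators $\widehat{\mathcal{A}}_-(-\xi_n^{(1)})$ and $\widehat{\mathcal{D}}_-(\xi_n^{(0)})$ shift the zeros of the $\widehat{\mathcal{B}}_-$-eigenvalue in the way you describe. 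Those parts are fine.

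The genuine gap is in your final step. The argument ``the $2^{\mathsf{N}}$ eigenvalues $\mathsf{b}_{-,\mathbf{h}}(\lambda)$ are pairwise distinct, hence the states are linearly independent, hence they form a basis'' is only valid if every state $\langle\mathbf{h}_-|$ and $|\mathbf{h}_-\rangle$ is \emph{nonzero}, and nothing in your proposal establishes this. A priori the operators $\widehat{\mathcal{A}}_-(\eta/2-\xi_n)$ could annihilate the co-vector they act on (this is exactly what goes wrong in the degenerate situations where Sklyanin's approach fails), so non-vanishing is the actual content of the basis statement, not a formality. The standard way to close this, and the one used in the cited works, is to compute the pairing $\langle\mathbf{h}_-|\mathbf{k}_-\rangle$ explicitly from the reflection-algebra relations and show it equals $\delta_{\mathbf{h},\mathbf{k}}$ times an explicitly nonzero Vandermonde-type factor in the $(\xi_n^{(h_n)})^2$; this simultaneously yields non-vanishing, linear independence, and the simple spectrum of $\widehat{\mathcal{B}}_-$. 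A secondary inaccuracy: condition \rf{Inhomog-cond} constrains only the differences $\xi_a-\xi_b$, so it guarantees $\xi_n^{(h)}\neq\xi_m^{(k)}$ but not $\xi_n^{(h)}\neq-\xi_m^{(k)}$ (which would require $\xi_n+\xi_m\neq(h+k-1)\eta$); since the eigenvalues depend on $\lambda$ through $\lambda^2$, the pairwise distinctness of the $(\xi_n^{(h_n)})^2$ that your argument needs is not fully implied by the stated hypothesis and must be added (or absorbed into a genericity assumption).
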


As presented here, the Sklyanin's SoV basis can be defined only
in the case of non-commuting boundary matrices. Instead, as we will prove
in the next section, our new SoV approach works in the completely general
case. So we can use it also in the case of commuting boundary matrices for
which Algebraic Bethe Ansatz \cite{Skl88} also works in the special case of simultaneously 
diagonalizable boundary matrices.

\subsection{Our SoV approach}

Let us define:%
\begin{equation}
\mathsf{A}_{\bar{\zeta}_{+},\bar{\zeta}_{-}}(\lambda )\equiv (-1)^{\mathsf{N}%
}\frac{2\lambda +\eta }{2\lambda }\,\frac{(\lambda -\frac{\eta }{2}+\bar{%
\zeta}_{+})(\lambda -\frac{\eta }{2}+\bar{\zeta}_{-})}{\bar{\zeta}_{+}\,\bar{%
\zeta}_{-}}\,a(\lambda )\,d(-\lambda ),  \label{Coef-6v-Req-R}
\end{equation}%
then the following theorem holds:

\begin{theorem}
\label{Th-SoV-basis}i) Let $K_{-}(\lambda )$ and $K_{+}(\lambda )$ be
non-commutative boundary matrices $(\ref{Cond-NoN-S-Diag})$ and $%
T(\lambda )$ be the associated one-parameter family of transfer
matrix, then%
\begin{equation}
\langle h_{1},...,h_{\mathsf{N}}|\equiv \langle S|\prod_{n=1}^{\mathsf{N}%
}\left( \frac{T(\xi _{n}-\eta /2)}{\mathsf{A}_{\bar{\zeta}_{+},%
\bar{\zeta}_{-}}(\eta /2-\xi _{n})}\right) ^{1-h_{n}}\text{, }
\label{SoV-Basis-6v-Open}
\end{equation}%
for any $\{h_{1},...,h_{\mathsf{N}}\}\in \{0,1\}^{\otimes \mathsf{N}}$, is a
co-vector basis of $\mathcal{H}$ for almost any choice of the co-vector $%
\langle S|$ and of the inhomogeneity parameters satisfying the condition $(%
\ref{Inhomog-cond})$.

ii) Let $K_{-}(\lambda )$ and $K_{+}(\lambda )$ be commutative boundary
matrices $(\ref{Cond-S-Diag})$, moreover not both proportianl to the identity. Then, for any fixed choice of the boundary
parameters $\{\zeta _{+},\kappa ,\tau \}$ (or $\{\zeta _{-},\kappa ,\tau \}$%
), the set $(\ref{SoV-Basis-6v-Open})$ is a co-vector basis of $\mathcal{%
H}$ for almost any choice of the co-vector $\langle S|$, of the inhomogeneity
parameters satisfying the condition $(\ref{Inhomog-cond})$ and of $\zeta
_{-} $(or $\zeta _{+}$).
\end{theorem}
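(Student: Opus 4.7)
The plan is to recast the claim as a non-vanishing determinant condition. Fixing any reference basis $\{e_{\mathbf{k}}\}_{\mathbf{k}\in\{0,1\}^\mathsf{N}}$ of $\mathcal H$, the co-vectors $\{\langle h_1,\ldots,h_\mathsf{N}|\}$ form a basis iff the Gram-like matrix $G_{\mathbf h,\mathbf k}=\langle\mathbf h|e_{\mathbf k}\rangle$ is invertible. Its determinant is a rational function of the components of $\langle S|$, of the inhomogeneities $\xi_j$, and of the boundary parameters; hence it suffices to show that this function is not identically zero, after which the "almost any" statement follows from standard Zariski-genericity. The central observation is that because $T(\xi_n-\eta/2)T(\xi_n+\eta/2)$ coincides, up to an explicit scalar, with a quantum-determinant-type combination, the exponents $1-h_n\in\{0,1\}$ in (\ref{SoV-Basis-6v-Open}) already exhaust the relevant powers and no spurious relations appear at this level.

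For part (i), I would leverage the generalized Sklyanin basis of Proposition 2.1. The technical heart is computing the action of $T(\xi_n-\eta/2)$ on this basis. Using the collapse $R_{0n}(-\eta)\propto P^-_{0n}$ (the rank-one antisymmetric projector) inside $M_0(\xi_n-\eta/2)$, together with the analogous identity for $\hat M_0$, one factors $\widehat{\mathcal U}_-(\xi_n-\eta/2)$ through site $n$; taking the trace against $K_+$ then rewrites $T(\xi_n-\eta/2)$ as an explicit linear combination of $\widehat{\mathcal A}_-$ and $\widehat{\mathcal D}_-$ evaluated at $\pm(\xi_n-\eta/2)$, whose actions on the $\widehat{\mathcal B}_-$-eigenbasis are known from Proposition 2.1. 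Ordering $\{0,1\}^\mathsf{N}$ by $|\mathbf h|=\sum h_n$ (or lexicographically), the resulting change-of-basis matrix from the Sklyanin basis to (\ref{SoV-Basis-6v-Open}) becomes block-triangular with diagonal coefficients built from the $\widehat{\mathcal A}_-(\eta/2-\xi_n)$ and $\widehat{\mathcal D}_-(\xi_n+\eta/2)$ eigenvalues, which are manifestly non-vanishing for generic inhomogeneities satisfying (\ref{Inhomog-cond}).

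For part (ii), the Sklyanin basis degenerates, so a direct comparison is unavailable and a more global argument is needed. Fixing $\{\zeta_+,\kappa,\tau\}$ (respectively $\{\zeta_-,\kappa,\tau\}$), the determinant of $G$ is a rational function of the remaining free parameter $\zeta_-$ (resp.\ $\zeta_+$) and of $\langle S|$ and the $\xi_j$. It suffices to find a single specialization where the determinant is non-zero. A natural choice is the limit $\zeta_-\to\infty$, in which $K_-(\lambda)\to\mathrm I$ and the transfer matrix reduces to $\mathrm{tr}_0\{K_+(\lambda)M(\lambda)\hat M(\lambda)\}$, a much simpler object whose values at $\xi_n-\eta/2$ can be computed by the projector collapse as in part (i); the hypothesis that $K_\pm$ are not both proportional to the identity ensures that $K_+$ still contributes a non-trivial shift in the $n$-th quantum number, giving a triangular change-of-basis matrix analogous to part (i).

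The main obstacle is the off-diagonal non-vanishing claim in both parts: one must verify that at each step, the coefficient controlling the flip $h_n=0\leftrightarrow h_n=1$ produced by applying $T(\xi_n-\eta/2)$ is non-zero. In part (i) this reduces to the explicit evaluation of $K_+$-dependent structure constants combined with the $\widehat{\mathcal A}_-,\widehat{\mathcal D}_-$ eigenvalues, and one must check that the non-commutativity assumption (\ref{Cond-NoN-S-Diag}) is precisely what guarantees non-degeneracy. In part (ii), the analogue must be extracted from the $\zeta_-\to\infty$ (or $\zeta_+\to\infty$) asymptotics, and the delicate point is to confirm that the resulting diagonal coefficients remain invertible under the commutativity constraint (\ref{Cond-S-Diag}) precisely when $K_\pm$ are not both scalar.
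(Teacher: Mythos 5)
Your global strategy --- recast the claim as the non-vanishing of a determinant that depends polynomially/rationally on the components of $\langle S|$, on the $\xi_j$ and on the boundary parameters, and then exhibit a single point where it does not vanish --- is exactly the paper's. For part (i), however, you pick a different point. You propose to compare with the generalized Sklyanin basis; the paper carries out precisely this comparison in Section 2.3 (by induction on $\sum_a h_a$, using $\langle 0|\bar{\mathcal A}_-(\xi_a-\eta/2)=0$ and the commutation relation $(\ref{AA-BC-Req})$), but only for the particular seed $\langle S|=\langle 0|\mathcal W_{K_{+,-}}$ and as a separate consistency check, not as the proof of the theorem. Combined with polynomiality in $\langle S|$ this route would indeed give part (i), at the price of tracking that all proportionality constants (e.g.\ the factor $(-\xi_c^{(1)}-\eta/2+\bar\zeta_+)$ multiplying $\bar{\mathcal A}_-(-\xi_c^{(1)})$) are generically non-zero. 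The paper's actual proof is lighter: it specializes $\xi_a=a\xi$ as in $(\ref{Relation-InH})$ and extracts the leading $\xi\to\infty$ coefficient $(\ref{Asymp-i})$ of $T(\xi_l-\eta/2)$, which is the purely local operator $\mathcal M_l^{(-)}\mathcal M_l^{(+)}$ up to scalars; for a tensor-product $\langle S|$ the $2^{\mathsf N}\times 2^{\mathsf N}$ determinant then factorizes into $\mathsf N$ local $2\times2$ determinants whose non-vanishing is exactly what condition $(\ref{Cond-NoN-S-Diag})$ guarantees.

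The genuine gap is in part (ii). Your first step --- send $\zeta_-\to\infty$ so that $K_-\to I$, and use that the covectors are polynomials of bounded degree in $1/\zeta_-$ --- is the paper's. But your conclusion that the ``projector collapse as in part (i)'' then yields ``a triangular change-of-basis matrix analogous to part (i)'' does not close: triangular with respect to which basis? In part (i) the reference basis was the Sklyanin $\widehat{\mathcal B}_-$-eigenbasis, which exists only under the non-commutativity condition $(\ref{Cond-NoN-S-Diag})$; for commuting boundary matrices, and a fortiori for $K_-=I$, there is no such basis, and the projector collapse by itself only rewrites $T^{(K_+,I)}(\xi_n-\eta/2)$ as a product of $R$-matrices and $K_+$ dressed around site $n$, with no evident triangular structure. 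The paper closes this by repeating the $\xi_a=a\xi$, $\xi\to\infty$ localization for $T^{(K_+,I)}$: the leading coefficient is the local matrix appearing in $(\ref{Asymp-ii})$, namely $\sigma^z+2\kappa e^{\tau}\sigma^{+}+2\kappa e^{-\tau}\sigma^{-}$, and the determinant again factorizes into local $2\times2$ blocks equal to $2(s_-^2\kappa e^{-\tau}+s_+^2\kappa e^{\tau}+s_+s_-)$, which can be made non-zero by a choice of $s_\pm$; the hypothesis that $K_\pm$ are not both proportional to the identity enters exactly here. Without this (or some equivalent) localization, your part (ii) argument does not go through.
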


\begin{proof}
\textit{Let us prove i).} Let us consider the following choice on the
inhomogeneity parameters:%
\begin{equation}
\xi _{a}=a\xi \text{ \ }\forall a\in \{1,...,\mathsf{N}\},
\label{Relation-InH}
\end{equation}%
where $\xi$ is some complex parameter, then, by exactly the same steps followed in the proof of the general Proposition 2.4 of 
\cite{MaiN18}, we can prove that $T(\xi _{l}-\eta /2)$ are polynomials of
degree $2\mathsf{N}+1$ in $\xi $ for all $l\in \{1,...,\mathsf{N}\}$
with maximal degree coefficient given by:%
\begin{equation}
\frac{(-1)^{\mathsf{N}-l}\eta l(\mathsf{N}-l)!(\mathsf{N}+l)!}{\bar{\zeta}%
_{+}\bar{\zeta}_{-}}\mathcal{M}_{l}^{\left( -\right) }\mathcal{M}%
_{l}^{\left( +\right) }.  \label{Asymp-i}
\end{equation}%
Let us choose $\langle S|$ of tensor product form%
\begin{equation}
\langle S|=\otimes _{l=1}^{\mathsf{N}}\langle S,l|\text{ \ where \ \ }\langle
S,l|=(s_{+},s_{-})_{l} \ , \label{S-tensor}
\end{equation}%
then the set of co-vectors $(\ref{SoV-Basis-6v-Open})$ is a basis as soon as%
\begin{equation}
\langle S,l|\left( \frac{\mathcal{M}_{l}^{\left( -\right) }\mathcal{M}%
_{l}^{\left( +\right) }}{\bar{\zeta}_{-}\bar{\zeta}_{+}\mathsf{A}_{\bar{\zeta%
}_{+},\bar{\zeta}_{-}}(\eta /2-\xi _{l})}\right) ^{1-h}\text{ with }h\in
\{0,1\},
\end{equation}%
is a basis on the local space $V_{l}\simeq \mathbb{C}^{2}$ for all $l\in
\{1,...,\mathsf{N}\}$. This is indeed the case as it holds:%
\begin{align}
& \text{det}_{2}||\left( \langle S,l|\left( \frac{\mathcal{M}_{l}^{\left(
-\right) }\mathcal{M}_{l}^{\left( +\right) }}{\bar{\zeta}_{-}\bar{\zeta}_{+}%
\mathsf{A}_{\bar{\zeta}_{+},\bar{\zeta}_{-}}(\eta /2-\xi _{l})}\right)
^{i-1}|e_{j}(l)\rangle\right) _{i,j\in \{1,2\}}||  \notag \\
& \left. =\right. 2\frac{s_{-}^{2}(\kappa _{+}e^{-\tau _{+}}-\kappa
_{-}e^{-\tau _{-}})+s_{+}^{2}(\kappa _{+}e^{\tau _{+}}-\kappa _{-}e^{\tau
_{-}})+2s_{+}s_{-}\kappa _{-}\kappa _{+}(e^{\tau _{+}-\tau _{-}}-e^{\tau
_{-}-\tau _{+}})}{\zeta _{-}\zeta _{+}\mathsf{A}_{\bar{\zeta}_{+},\bar{\zeta}%
_{-}}(\eta /2-\xi _{l})},
\end{align}%
where $|e_{j}(l)\rangle$ is the element $j\in \{1, 2\}$ of the natural basis in $V_{l}$, which can be always chosen different from zero by an appropriate choice of $%
s_{+}$ and $s_{-}$ under the condition $(\ref{Cond-NoN-S-Diag})$.

\textit{Let us now prove ii).} Let us observe that $T(%
\lambda )$ is a polynomial of degree $1$ in $1/\zeta _{-}$ with constant
term which coincides with the transfer matrix $T^{(K_{+},I)}(\lambda )$
associated to the same $K_{+}(\lambda )$ and $K_{-}(\lambda )=I$. Here, we
show that the set of co-vectors $(\ref{SoV-Basis-6v-Open})$ generated by $%
T^{(K_{+},I)}(\lambda )$ is a basis for almost any choice of the co-vector $%
\langle S|$ and of the inhomogeneity parameters satisfying $(\ref%
{Inhomog-cond})$. This implies the statement of the theorem once we recall
that the co-vectors $(\ref{SoV-Basis-6v-Open})$ generated by $%
T(\lambda )$ are polynomials of maximal degree $\mathsf{N}$ in $%
1/\zeta _{-}$.

Let us impose on the inhomogeneity parameters the same condition $(\ref%
{Relation-InH})$ then $T^{(K_{+},I)}(\xi _{l}-\eta /2)$ are polynomials of
degree $2\mathsf{N}$ in $\xi $ for all $l\in \{1,...,\mathsf{N}\}$ with
maximal degree coefficient given by:%
\begin{equation}
\frac{(-1)^{\mathsf{N}-l}\eta l(\mathsf{N}-l)!(\mathsf{N}+l)!}{\zeta _{+}}%
\left( 
\begin{array}{cc}
1 & 2\kappa e^{\tau } \\ 
2\kappa e^{-\tau } & -1%
\end{array}%
\right) .
\end{equation}%
Then for a chosen $\langle S|$ of tensor product form $(\ref{S-tensor})$ the set of
co-vectors $(\ref{SoV-Basis-6v-Open})$ generated by $T^{(K_{+},I)}(\lambda )$
is a basis as soon as%
\begin{equation}
\langle S,l|\left( 
\begin{array}{cc}
1 & 2\kappa e^{\tau } \\ 
2\kappa e^{-\tau } & -1%
\end{array}%
\right) ^{1-h}\text{ with }h\in \{0,1\},  \label{Asymp-ii}
\end{equation}%
is a basis on the local space $V_{l}\simeq \mathbb{C}^{2}$ for all $l\in
\{1,...,\mathsf{N}\}$. This is indeed the case as it holds:
\begin{align}
& \text{det}_{2}||\left( \langle S,l|\left( 
\begin{array}{cc}
1 & 2\kappa e^{\tau } \\ 
2\kappa e^{-\tau } & -1%
\end{array}%
\right) ^{i-1}|e_{j}(l)\rangle\right) _{i,j\in \{1,2\}}||  \notag \\
& \left. =\right. 2(s_{-}^{2}\kappa e^{-\tau }+s_{+}^{2}\kappa e^{\tau
}+s_{+}s_{-}),
\end{align}%
which can be always chosen different from zero for any fixed $\kappa $ and $%
\tau $ for an appropriate choice of $s_{+}$ and $s_{-}$.
\end{proof}

\subsection{Comparison of the two SoV constructions}

Here we want to show that under some special choice of the co-vector $\langle
S|$, our SoV left basis reduces to the SoV basis associated to the
generalized Sklyanin's approach when this last one is applicable, i.e. when
the two boundary matrices are non-commuting.

Let us introduce the following gauged transformed monodromy matrix: 
\begin{equation}
\left( 
\begin{array}{cc}
\mathcal{\bar{A}}_{-}(\lambda ) & \mathcal{\bar{B}}_{-}(\lambda ) \\ 
\mathcal{\bar{C}}_{-}(\lambda ) & \mathcal{\bar{D}}_{-}(\lambda )%
\end{array}%
\right) =M(\lambda )\,\bar{K}_{-}(\lambda )\,\hat{M}(\lambda
)=W_{0}^{(K_{+,-})}\mathcal{W}_{K_{+,-}}\,\mathcal{U}_{- }(\lambda )\,%
\mathcal{W}_{K_{+,-}}^{-1\,}\,(W_{0}^{(K_{+,-})})^{-1},  \label{Gaug-U}
\end{equation}%
then the associated transfer matrix: 
\begin{align}
\bar T(\lambda )& =\text{tr}_{0}\left\{ \bar{K}_{+}(\lambda
)\,M(\lambda )\,\bar{K}_{-}(\lambda )\,\hat{M}(\lambda )\right\} =\bar{c}%
_{+}(\lambda )\,\mathcal{\bar{B}}_{-}(\lambda )  \notag \\
& +\frac{(2\lambda +\eta )\,(\lambda -\frac{\eta }{2}+\bar{\zeta}_{+})\,\bar{%
\mathcal{A}}_{-}(\lambda )+(2\lambda -\eta )\,(-\lambda -\frac{\eta }{2}+%
\bar{\zeta}_{+})\,\bar{\mathcal{A}}_{-}(-\lambda )}{2\lambda \,\bar{\zeta}%
_{+}},
\end{align}%
is related to the original transfer matrix by:%
\begin{equation}
T(\lambda )=\mathcal{W}_{K_{+,-}}^{-1\,}\bar T(\lambda )\mathcal{W}_{K_{+,-}},
\end{equation}%
and the generalized Sklyanin's SoV basis can be rewritten as:%
\begin{equation}
\langle \,\mathbf{h}_{-}\,|\equiv \langle \,0\,|\prod_{n=1}^{\mathsf{N}%
}\left( \frac{\mathcal{\bar{A}}_{-}(\eta /2-\xi _{n})}{\mathsf{A}_{-}(\eta
/2-\xi _{n})}\right) ^{1-h_{n}}\mathcal{W}_{K_{+,-}}.
\end{equation}%
Here we want to show that the co-vector $\langle \,\mathbf{h}_{-}\,|$ and the
co-vector $\langle h_{1},...,h_{\mathsf{N}}|$ of our SoV basis $(\ref%
{SoV-Basis-6v-Open})$ are proportional for any $\{h_{1},...,h_{\mathsf{N}%
}\}\in \{0,1\}^{\otimes \mathsf{N}}$ when we set:%
\begin{equation}
\langle \,S\,|=\langle \,0\,|\mathcal{W}_{K_{+,-}}.
\end{equation}%
The proof is done by induction on $l=\mathsf{N}-\sum_{a=1}^{\mathsf{N}}h_{a}$, just using the identity:%
\begin{equation}
\langle 0|\mathcal{\bar{A}}_{-}(\xi _{a}-\eta /2)=0\text{ \ }\forall a\in
\{1,...,\mathsf{N}\}  \label{0-cond}
\end{equation}%
and the following reflection algebra commutation relations:%
\begin{equation}
\mathcal{\bar{A}}_{-}\left( \mu \right) \mathcal{\bar{A}}_{-}\left( \lambda
\right) =\mathcal{\bar{A}}_{-}\left( \lambda \right) \mathcal{\bar{A}}%
_{-}\left( \mu \right) +\frac{\eta }{\lambda +\mu -\eta }(\mathcal{\bar{B}}%
_{-}\left( \lambda \right) \mathcal{\bar{C}}_{-}\left( \mu \right) -\mathcal{%
\bar{B}}_{-}\left( \mu \right) \mathcal{\bar{C}}_{-}\left( \lambda \right) ).
\label{AA-BC-Req}
\end{equation}%
First, the statement is obviously true for $l=0$. Let us assume that our statement holds for any state:%
\begin{equation}
\langle \,\mathbf{h}_{-}\,|=\mathsf{N}_{\mathbf{h}}\langle h_{1},...,h_{%
\mathsf{N}}|\text{\ \ with \ }l=\mathsf{N}-\sum_{a=1}^{\mathsf{N}}h_{a}\leq 
\mathsf{N}-1,
\end{equation}%
for some given $l$. Then, let us show it for any state with $l+1$. To this aim we fix a state in
the above set and we denote with $\pi $ a permutation on the set $\{1,...,%
\mathsf{N}\}$ such that:%
\begin{equation}
h_{\pi (a)}=0\text{ for }a\leq l\text{ \ and }h_{\pi (a)}=1\text{ for }l<a\,
.
\end{equation}%
Now, let us take $c\in \{\pi (l+1),...,\pi (\mathsf{N})\}$ and let us
compute:%
\begin{align}
\langle \,\mathbf{h}_{-}\,|T(\xi _{c}^{\left( 1\right) })&
=\langle \,0\,|\prod_{n=1}^{l}\frac{\mathcal{\bar{A}}_{-}(\eta /2-\xi _{\pi
(n)})}{\mathsf{A}_{-}(\eta /2-\xi _{\pi (n)})}\bar T(\xi
_{c}^{\left( 1\right) })\mathcal{W}_{K_{+,-}} \\
& =\langle \,0\,|\prod_{n=1}^{l}\frac{\mathcal{\bar{A}}_{-}(\eta /2-\xi
_{\pi (n)})}{\mathsf{A}_{-}(\eta /2-\xi _{\pi (n)})}\left( \frac{(2\xi
_{c}^{\left( 1\right) }+\eta )\,(\xi _{c}^{\left( 1\right) }-\frac{\eta }{2}+%
\bar{\zeta}_{+})\,\bar{\mathcal{A}}_{-}(\xi _{c}^{\left( 1\right) })}{2\xi
_{c}^{\left( 1\right) }\,\bar{\zeta}_{+}}\right.  \notag \\
& +\left. \frac{(2\xi _{c}^{\left( 1\right) }-\eta )\,(-\xi _{c}^{\left(
1\right) }-\frac{\eta }{2}+\bar{\zeta}_{+})\,\bar{\mathcal{A}}_{-}(-\xi
_{c}^{\left( 1\right) })}{2\xi _{c}^{\left( 1\right) }\,\bar{\zeta}_{+}}%
\right) \mathcal{W}_{K_{+,-}},
\end{align}%
so that we have just to prove:%
\begin{equation}
\langle \,0\,|\prod_{n=1}^{l}\frac{\mathcal{\bar{A}}_{-}(\eta /2-\xi _{\pi
(n)})}{\mathsf{A}_{-}(\eta /2-\xi _{\pi (n)})}\bar{\mathcal{A}}_{-}(\xi
_{c}^{\left( 1\right) })=0.  \label{UN-W-SoV}
\end{equation}%
From the commutation relation $(\ref{AA-BC-Req})$, the above co-vector can be
rewritten as:%
\begin{eqnarray}
&&\langle \,0\,|\prod_{n=1}^{l-1}\frac{\mathcal{\bar{A}}_{-}(\eta /2-\xi
_{\pi (n)})}{\mathsf{A}_{-}(\eta /2-\xi _{\pi (n)})}\mathsf{A}_{-}^{-1}(\eta
/2-\xi _{\pi (l)})(\bar{\mathcal{A}}_{-}(\xi _{c}^{\left( 1\right) })\bar{%
\mathcal{A}}_{-}(-\xi _{\pi (l)}^{\left( 1\right) })  \notag \\
&&+\eta (\mathcal{\bar{B}}_{-}(\xi _{c}^{\left( 1\right) })\mathcal{\bar{C}}%
_{-}(-\xi _{\pi (l)}^{\left( 1\right) })-\mathcal{\bar{B}}_{-}(-\xi _{\pi
(l)}^{\left( 1\right) })\mathcal{\bar{C}}_{-}(\xi _{c}^{\left( 1\right)
}))/(\xi _{c}+\xi _{\pi (l)}-\eta )),
\end{eqnarray}%
which reduces to:%
\begin{equation}
\langle 0|\prod_{n=1}^{l-1}\frac{\mathcal{\bar{A}}_{-}(\eta /2-\xi _{\pi
(n)})}{\mathsf{A}_{-}(\eta /2-\xi _{\pi (n)})}\bar{\mathcal{A}}_{-}(\xi
_{c}^{\left( 1\right) })\frac{\bar{\mathcal{A}}_{-}(-\xi _{\pi (l)}^{\left(
1\right) })}{\mathsf{A}_{-}^{-1}(\eta /2-\xi _{\pi (l)})},
\end{equation}%
once we observe that the co-vector on the left of $\mathcal{\bar{B}}_{-}( \xi
_{c}^{\left( 1\right) }) $ and $\mathcal{\bar{B}}_{-}(\xi _{\pi (l)}^{\left(
1\right) })$ are left eigenco-vectors of $\mathcal{\bar{B}}_{-}\left( \lambda
\right) $ with eigenvalue zeros at $\lambda =\pm \xi _{\pi (l)}^{\left(
1\right) }$, $\pm \xi _{c}^{\left( 1\right) }$. That is we can commute in
the co-vector $\bar{\mathcal{A}}_{-}(-\xi _{\pi (l)}^{\left( 1\right) })$ and 
$\bar{\mathcal{A}}_{-}(\xi _{c}^{\left( 1\right) })$ and by the same
argument $\bar{\mathcal{A}}_{-}(-\xi _{\pi (l)}^{\left( 1\right) })$ and $%
\bar{\mathcal{A}}_{-}(\xi _{c}^{\left( 1\right) })$ for any $r\leq l-1$ up
to bring $\bar{\mathcal{A}}_{-}(\xi _{c}^{\left( 1\right) })$ completely to
the left acting on $\langle 0|$ which proves $(\ref{UN-W-SoV})$ as a
consequence of $(\ref{0-cond}) $.

\subsection{Transfer matrix spectrum in our SoV approach}

In our SoV basis the separate relations are given directly by the
particularization of the fusion relations at the spectrum of the separate
variables. In the case at hand these fusion relations just reduces to the
following identities:%
\begin{equation}
T(\xi _{n}^{(0)})T(\xi _{n}^{(1)})=\mathsf{A}_{\bar{%
\zeta}_{+},\bar{\zeta}_{-}}(\xi _{n}^{(0)})\mathsf{A}_{\bar{\zeta}_{+},\bar{%
\zeta}_{-}}(-\xi _{n}^{(1)}),\text{ }\forall a\in \{1,...,\mathsf{N}\},
\label{gl_2-fusion}
\end{equation}%
which are proven by direct computations using the reduction of the rational
6-vertex $R$-matrix to the permutation operator and to a 1-dimensional
projector at $\lambda =0$ and $-\eta $, respectively. To these relations now
one has to add the knowledge of the analytic properties of the transfer
matrix that we can easily derive. In fact, $T(\lambda )$ is a
polynomial of degree $2$ in all the $\xi _{a}$ and of degree $\mathsf{N}+1$
in $\lambda ^{2}$ with the following leading central coefficient:%
\begin{equation}
\lim_{\lambda \rightarrow +\infty }\lambda ^{-2(\mathsf{N}%
+1)}T(\lambda )=t_{\mathsf{N}+1}I,\text{ with }t_{\mathsf{N}+1}=%
\frac{2(1+4\kappa _{+}\kappa _{-}\cosh (\tau _{+}-\tau _{-}))}{\zeta
_{+}\zeta _{-}}=\frac{2+\bar{\mathsf{b}}_{-}\bar{\mathsf{c}}_{+}}{\bar{\zeta}%
_{+}\,\bar{\zeta}_{-}},
\end{equation}%
whose values in $\pm \eta /2$ are central: 
\begin{equation}
T(\pm \eta /2)=2(-1)^{\mathsf{N}}\mathrm{det}_{q}M(0)\equiv
t(\eta /2),
\end{equation}%
with $\mathrm{det}_{q}M(\lambda )=a(\lambda +\eta /2)\,d(\lambda -\eta /2)$.
Let us define the following set of functions:%
\begin{eqnarray}
r_{a,\mathbf{h}}(\lambda ) &=&\frac{\lambda ^{2}-\left( \eta /2\right) ^{2}}{%
(\xi _{a}^{(h_{a})})^{2}-\left( \eta /2\right) ^{2}}\prod_{b\neq a,b=1}^{%
\mathsf{N}}\frac{\lambda ^{2}-(\xi _{b}^{(h_{b})})^{2}}{(\xi
_{a}^{(h_{a})})^{2}-(\xi _{b}^{(h_{b})})^{2}}\ , \\
s_{\mathbf{h}}(\lambda ) &=&\prod_{b=1}^{\mathsf{N}}\frac{\lambda ^{2}-(\xi
_{b}^{(h_{b})})^{2}}{(\eta /2)^{2}-(\xi _{b}^{(h_{b})})^{2}}\ , \\
u_{\mathbf{h}}(\lambda ) &=&(\lambda ^{2}-(\eta /2)^{2})\prod_{b=1}^{\mathsf{%
N}}(\lambda ^{2}-(\xi _{b}^{(h_{b})})^{2})\ ,
\end{eqnarray}%
then the following theorem holds:

\begin{theorem}
\label{SoV-Sp-Ch-gl2-R}Under the same conditions of Theorem \ref%
{Th-SoV-basis}, ensuring the existence of the left SoV basis, the spectrum
of $T(\lambda )$ is characterized by:%
\begin{equation}
\Sigma _{T}=\left\{ t(\lambda ):t(\lambda )=t_{\mathsf{N}+1}u_{%
\mathbf{h}=0}(\lambda )+t(\eta /2)s_{\mathbf{h}=0}(\lambda )+\sum_{a=1}^{%
\mathsf{N}}r_{a,\mathbf{h}=0}(\lambda )x_{a},\text{ \ \ }\forall
\{x_{1},...,x_{\mathsf{N}}\}\in S_{T}\right\} ,
\end{equation}%
$S_{T}$ is the set of solutions to the following system of \textsf{$N$}
quadratic equations:%
\begin{equation}
x_{n}[t_{\mathsf{N}+1}u_{\mathbf{h}=0}(\xi _{n}^{(1)})+t(\eta /2)s_{\mathbf{h%
}=0}(\xi _{n}^{(1)})+\sum_{a=1}^{\mathsf{N}}r_{a,\mathbf{h}=0}(\xi
_{n}^{(1)})x_{a}]=\mathsf{A}_{\bar{\zeta}_{+},\bar{\zeta}_{-}}(\xi
_{n}^{(0)})\mathsf{A}_{\bar{\zeta}_{+},\bar{\zeta}_{-}}(-\xi _{n}^{(1)}),%
\text{ }\forall n\in \{1,...,\mathsf{N}\},  \label{Quadratic System-open}
\end{equation}%
in $\mathsf{N}$ unknown $\{x_{1},...,x_{\mathsf{N}}\}$. Moreover, $%
T(\lambda )$ has simple spectrum and for any $t(\lambda )\in
\Sigma _{T}$ the associated unique (up-to normalization)
eigenvector $|t\rangle $ has the following separated wave-function in the left SoV
basis:%
\begin{equation}
\langle h_{1},...,h_{\mathsf{N}}|t\rangle =\prod_{n=1}^{\mathsf{N}}\left( 
\frac{t(\xi _{n}-\eta /2)}{\mathsf{A}_{\bar{\zeta}_{+},\bar{\zeta}_{-}}(\eta
/2-\xi _{n})}\right) ^{1-h_{n}}.  \label{SoV-Ch-T-eigenV-open}
\end{equation}
\end{theorem}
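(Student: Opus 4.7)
The plan is to establish all three assertions of the theorem simultaneously by constructing a bijection between $S_T$ and the (normalized) eigenvectors of $T(\lambda)$. The ingredients I will use are the polynomial structure of $T(\lambda)$ (degree $\mathsf{N}+1$ in $\lambda^2$, with central leading coefficient $t_{\mathsf{N}+1}$ and central values $T(\pm\eta/2)=t(\eta/2)I$), the operator fusion identities $(\ref{gl_2-fusion})$, and the defining identity of the SoV basis $(\ref{SoV-Basis-6v-Open})$.

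For the inclusion $\Sigma_T\subseteq\{\text{interpolants built from }S_T\}$, I would start from an eigenvector $|t\rangle$ with eigenvalue $t(\lambda)$. Since $t(\lambda)$ inherits the above polynomial structure, Lagrange interpolation in $\lambda^2$ at the $\mathsf{N}+1$ nodes $(\eta/2)^2$ and $(\xi_a^{(0)})^2$, together with the leading coefficient, reproduces exactly the claimed form with $x_a=t(\xi_a^{(0)})$. Evaluating $(\ref{gl_2-fusion})$ on $|t\rangle$ and writing $t(\xi_n^{(1)})$ via the same interpolation yields precisely the system $(\ref{Quadratic System-open})$. For the separated wave function and the simplicity, I set $\hat{t}_n:=T(\xi_n-\eta/2)/\mathsf{A}_{\bar{\zeta}_+,\bar{\zeta}_-}(\eta/2-\xi_n)$ so that $\langle h_1,\ldots,h_{\mathsf{N}}|=\langle S|\prod_{n=1}^{\mathsf{N}}\hat{t}_n^{1-h_n}$; the commutativity of the transfer matrix family then gives $\hat{t}_n|t\rangle=(t(\xi_n-\eta/2)/\mathsf{A}_{\bar{\zeta}_+,\bar{\zeta}_-}(\eta/2-\xi_n))|t\rangle$, and fixing the normalization $\langle S|t\rangle=1$ produces $(\ref{SoV-Ch-T-eigenV-open})$. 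Since this wave function is determined by $t(\lambda)$ alone, the eigenvector associated to any spectral point is unique up to scale, giving the claimed simplicity.

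For the reverse inclusion, given $(x_1,\ldots,x_{\mathsf{N}})\in S_T$, I would define $t(\lambda)$ through the interpolation formula and $|t\rangle$ by prescribing its co-components in the SoV basis via $(\ref{SoV-Ch-T-eigenV-open})$; the $\mathbf{h}=(1,\ldots,1)$ component equals $1$, so $|t\rangle\neq 0$. To verify $T(\lambda)|t\rangle=t(\lambda)|t\rangle$, I fix $\mathbf{h}$ and view both sides of $\langle\mathbf{h}|T(\lambda)|t\rangle=t(\lambda)\langle\mathbf{h}|t\rangle$ as polynomials of degree $\mathsf{N}+1$ in $\lambda^2$. Agreement at $\lambda=\pm\eta/2$ and in the leading coefficient follows from centrality, leaving the $\mathsf{N}$ adaptive nodes $\lambda=\xi_a^{(h_a)}$ to check. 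At each such node I apply one of two dual identities to $\langle\mathbf{h}|$: if $h_a=0$, I write $\langle\mathbf{h}|=\langle\mathbf{h}'|\hat{t}_a$ (with $\mathbf{h}'$ obtained from $\mathbf{h}$ by replacing $h_a$ by $1$) and combine with the fusion identity $(\ref{gl_2-fusion})$ to obtain $\langle\mathbf{h}|T(\xi_a^{(0)})=\mathsf{A}_{\bar{\zeta}_+,\bar{\zeta}_-}(\xi_a^{(0)})\langle\mathbf{h}'|$; if $h_a=1$, the basis identity $(\ref{SoV-Basis-6v-Open})$ directly yields $\langle\mathbf{h}|T(\xi_a^{(1)})=\mathsf{A}_{\bar{\zeta}_+,\bar{\zeta}_-}(-\xi_a^{(1)})\langle\mathbf{h}'|$ where $\mathbf{h}'$ is now obtained from $\mathbf{h}$ by replacing $h_a$ by $0$. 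Pairing each relation with $|t\rangle$ and inserting the separated wave function, the required identity reduces in both cases to $x_a\,t(\xi_a^{(1)})=\mathsf{A}_{\bar{\zeta}_+,\bar{\zeta}_-}(\xi_a^{(0)})\mathsf{A}_{\bar{\zeta}_+,\bar{\zeta}_-}(-\xi_a^{(1)})$, i.e., exactly the $n=a$ equation of $(\ref{Quadratic System-open})$. Distinct solutions in $S_T$ give distinct polynomials $t(\lambda)$ and hence linearly independent eigenvectors, closing the bijection.

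The main obstacle I anticipate is the case-dependent verification at the adaptive nodes $\xi_a^{(h_a)}$: the two cases $h_a=0$ and $h_a=1$ require genuinely different operator manipulations (fusion versus the SoV basis identity), and one must check that both reductions, once paired with the separated wave function, collapse to the same quadratic relation. A minor side check is that the chosen adaptive nodes together with $\pm\eta/2$ are pairwise distinct in $\lambda^2$, which follows from $(\ref{Inhomog-cond})$ and the genericity assumptions carried over from Theorem \ref{Th-SoV-basis}.
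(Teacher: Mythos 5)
Your proposal is correct and follows essentially the same route as the paper: the forward direction via the polynomial structure of $t(\lambda)$ and the fusion identities, and the reverse direction via the adaptive interpolation formula $T(\lambda)=t_{\mathsf{N}+1}u_{\mathbf{h}}(\lambda)+t(\eta/2)s_{\mathbf{h}}(\lambda)+\sum_a r_{a,\mathbf{h}}(\lambda)T(\xi_a^{(h_a)})$, with the same two-case evaluation of $\langle\mathbf{h}|T(\xi_a^{(h_a)})|t\rangle$ (SoV basis identity for $h_a=1$, fusion for $h_a=0$) collapsing to the quadratic system. The only cosmetic imprecision is that for $h_a=1$ the node check is automatic from the basis definition and does not actually invoke the quadratic relation; only the $h_a=0$ case uses it.
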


\begin{proof}
The system of $\mathsf{N}$ quadratic equations $(\ref{Quadratic System-open}%
) $ in $\mathsf{N}$ unknown $\{x_{1},...,x_{\mathsf{N}}\}$ is nothing else but 
the rewriting of the transfer matrix fusion equations for the eigenvalues.
Any transfer matrix eigenvalue is then a solution of this system and the
associated right eigenvector $|t\rangle $ admits the characterization $(\ref%
{SoV-Ch-T-eigenV-open})$ in our left SoV basis. Let us now prove the reverse statement. This is done by proving that any polynomial $t(\lambda )$
satisfying this system is an eigenvalue. For this, we prove that the vector $%
|t\rangle $ characterized by $(\ref{SoV-Ch-T-eigenV-open})$ is a transfer
matrix eigenvector, namely:%
\begin{equation}
\langle h_{1},...,h_{\mathsf{N}}|T(\lambda )|t\rangle =t(\lambda
)\langle h_{1},...,h_{\mathsf{N}}|t\rangle ,\text{ }\forall \{h_{1},...,h_{%
\mathsf{N}}\}\in \{0,1\}^{\otimes \mathsf{N}}.
\end{equation}%
Let us write the following interpolation formula for the transfer matrix:%
\begin{equation}
T(\lambda )=t_{\mathsf{N}+1}u_{\mathbf{h}}(\lambda )+t(\eta
/2)s_{\mathbf{h}}(\lambda )+\sum_{a=1}^{\mathsf{N}}r_{a,\mathbf{h}}(\lambda
)T(\xi _{a}^{\left( h_{a}\right) }),  \label{Interp-T-open}
\end{equation}%
and use it to act on the generic element of the left SoV basis. Then, we
have:%
\begin{equation}
\langle h_{1},...,h_{a},...,h_{\mathsf{N}}|T(\xi _{a}^{\left(
h_{a}\right) })|t\rangle =\left\{ 
\begin{array}{l}
\mathsf{A}_{\bar{\zeta}_{+},\bar{\zeta}_{-}}(\eta /2-\xi _{n})\langle
h_{1},...,h_{a}^{\prime }=0,...,h_{\mathsf{N}}|t\rangle \text{ \ \ if \ }%
h_{a}=1 \\ 
\frac{\mathsf{A}_{\bar{\zeta}_{+},\bar{\zeta}_{-}}(\xi _{n}^{(0)})\mathsf{A}%
_{\bar{\zeta}_{+},\bar{\zeta}_{-}}(-\xi _{n}^{(1)})}{\mathsf{A}_{\bar{\zeta}%
_{+},\bar{\zeta}_{-}}(-\xi _{n}^{(1)})}\langle h_{1},...,h_{a}^{\prime
}=1,...,h_{\mathsf{N}}|t\rangle \text{ \ if \ }h_{a}=0%
\end{array}%
\right.
\end{equation}%
which by the definition of the state $|t\rangle $ can be rewritten as:%
\begin{equation}
\langle h_{1},...,h_{a},...,h_{\mathsf{N}}|T(\xi _{a}^{\left(
h_{a}\right) })|t\rangle =\left\{ 
\begin{array}{l}
t(\xi _{a}^{\left( 1\right) })\prod_{n\neq a,n=1}^{\mathsf{N}}\left( \frac{%
t(\xi _{n}-\eta /2)}{\mathsf{A}_{\bar{\zeta}_{+},\bar{\zeta}_{-}}(\eta
/2-\xi _{n})}\right) ^{1-h_{n}}\text{ \ \ if \ }h_{a}=1 \\ 
\frac{\mathsf{A}_{\bar{\zeta}_{+},\bar{\zeta}_{-}}(\xi _{n}^{(0)})\mathsf{A}%
_{\bar{\zeta}_{+},\bar{\zeta}_{-}}(-\xi _{n}^{(1)})\prod_{n\neq a,n=1}^{%
\mathsf{N}}\left( \frac{t(\xi _{n}-\eta /2)}{\mathsf{A}_{\bar{\zeta}_{+},%
\bar{\zeta}_{-}}(\eta /2-\xi _{n})}\right) ^{1-h_{n}}}{\mathsf{A}_{\bar{\zeta%
}_{+},\bar{\zeta}_{-}}(-\xi _{n}^{(1)})}\,\text{if}\,h_{a}=0%
\end{array}%
\right.
\end{equation}%
and finally, by the fusion equation satisfied by the $t(\lambda )$, it reads:%
\begin{equation}
\langle h_{1},...,h_{a},...,h_{\mathsf{N}}|T(\xi _{a}^{\left(
h_{a}\right) })|t\rangle =\left\{ 
\begin{array}{l}
t(\xi _{a}^{\left( 1\right) })\prod_{n\neq a,n=1}^{\mathsf{N}}\left( \frac{%
t(\xi _{n}-\eta /2)}{\mathsf{A}_{\bar{\zeta}_{+},\bar{\zeta}_{-}}(\eta
/2-\xi _{n})}\right) ^{1-h_{n}}\text{ \ \ \ if \ }h_{a}=1 \\ 
t(\xi _{a}^{\left( 0\right) })\prod_{n=1}^{\mathsf{N}}\left( \frac{t(\xi
_{n}-\eta /2)}{\mathsf{A}_{\bar{\zeta}_{+},\bar{\zeta}_{-}}(\eta /2-\xi _{n})%
}\right) ^{1-h_{n}}\text{ \ \ if \ }h_{a}=0%
\end{array}%
\right. ,
\end{equation}%
and so:%
\begin{equation}
\langle h_{1},...,h_{a},...,h_{\mathsf{N}}|T(\xi _{a}^{\left(
h_{a}\right) })|t\rangle =t(\xi _{a}^{\left( h_{a}\right) })\langle
h_{1},...,h_{a},...,h_{\mathsf{N}}|t\rangle ,
\end{equation}%
from which we have:%
\begin{equation}
\langle h_{1},...,h_{\mathsf{N}}|T(\lambda )|t\rangle =\left( t_{%
\mathsf{N}+1}u_{\mathbf{h}}(\lambda )+t(\eta /2)s_{\mathbf{h}}(\lambda
)+\sum_{a=1}^{\mathsf{N}}r_{a,\mathbf{h}}(\lambda )t(\xi _{a}^{\left(
h_{a}\right) })\right) \langle h_{1},...,h_{\mathsf{N}}|t\rangle ,
\end{equation}%
proving our statement.
\end{proof}

The previous characterization of the spectrum allows to introduce a
functional equation characterization of it, the so-called quantum spectral
curve equation. This is in the current case a second order Baxter's type difference
equation. In particular, this result coincides with the Theorem 3.2 of  
\cite{KitMNT17}, the only difference being on the applicability of the
result that is now extended to the case of commuting boundary matrices.

\begin{theorem}
Under the same conditions of Theorem \ref{Th-SoV-basis}, ensuring the
existence of the left SoV basis, an entire function $t(\lambda )\in \Sigma
_{T}$ if and only if there exists a unique polynomial 
\begin{equation}
Q_{t}(\lambda )=\prod_{b=1}^{p_{K_{+,-}}}\left( \lambda ^{2}-\lambda
_{b}^{2}\right) ,\qquad \lambda _{1},\ldots ,\lambda _{p_{K_{+,-}}}\in 
\mathbb{C}\setminus \{\pm \xi _{1}^{(0)},\ldots ,\pm \xi _{\mathsf{N}%
}^{(0)}\},
\end{equation}%
such that 
\begin{equation}
t(\lambda )\,Q_{t}(\lambda )=\mathsf{A}_{\bar{\zeta}_{+},\bar{\zeta}%
_{-}}(\lambda )\,Q_{t}(\lambda -\eta )+\mathsf{A}_{\bar{\zeta}_{+},\bar{\zeta%
}_{-}}(-\lambda )\,Q_{t}(\lambda +\eta )+F(\lambda ),  \label{InH-BAX-eq}
\end{equation}%
with 
\begin{equation}
F(\lambda )=\frac{\bar{\mathsf{b}}_{-}\bar{\mathsf{c}}_{+}}{\bar{\zeta}_{-}%
\bar{\zeta}_{+}}\left( \lambda ^{2}-\left( \eta /2\right) ^{2}\right)
\,\prod_{b=1}^{\mathsf{N}}\prod_{h=0}^{1}\left( \lambda ^{2}-(\xi
_{b}^{(h)})^{2}\right) .  \label{InH-F}
\end{equation}%
Similarly, an entire function $t(\lambda )\in \Sigma _{T}$ if
and only if there exists a unique polynomial%
\begin{equation}
P_{t}(\lambda )=\prod_{b=1}^{q_{K_{+,-}}}\left( \lambda ^{2}-\mu
_{b}^{2}\right) ,\qquad \mu _{1},\ldots ,\mu _{q_{K_{+,-}}}\in \mathbb{C}%
\setminus \{\pm \xi _{1}^{(0)},\ldots ,\pm \xi _{\mathsf{N}}^{(0)}\},
\label{P-poly}
\end{equation}%
such that 
\begin{equation}
t(\lambda )\,P_{t}(\lambda )=\mathsf{A}_{-\bar{\zeta}_{+},-\bar{\zeta}%
_{-}}(\lambda )\,P_{t}(\lambda -\eta )+\mathsf{A}_{-\bar{\zeta}_{+},-\bar{%
\zeta}_{-}}(-\lambda )\,P_{t}(\lambda +\eta )+F(\lambda ).
\label{InH-BAX-eq-+}
\end{equation}%
Here, it holds%
\begin{eqnarray}
p_{K_{+,-}} &=&(1-\delta _{0,\bar{\mathsf{b}}_{-}\bar{\mathsf{c}}_{+}})%
\mathsf{N}+\delta _{0,\bar{\mathsf{b}}_{-}\bar{\mathsf{c}}_{+}}\ p \\
q_{K_{+,-}} &=&(1-\delta _{0,\bar{\mathsf{b}}_{-}\bar{\mathsf{c}}_{+}})%
\mathsf{N}++\delta _{0,\bar{\mathsf{b}}_{-}\bar{\mathsf{c}}_{+}}\ q
\end{eqnarray}%
with $p$ and $q$ non negative integers such that%
\begin{equation}
p+q=\mathsf{N}\text{\textsf{,}}
\end{equation}%
and the following Wronskian equation is satisfied in the case $\bar{\mathsf{b%
}}_{-}\bar{\mathsf{c}}_{+}=0$:%
\begin{align}
2(-1)^{\mathsf{N}}({\bar{\zeta}}_{+}+{\bar{\zeta}}_{-}+(p-q)\eta )(\lambda
-\eta /2)\,a(\lambda )\,d(\lambda )& =(\lambda -\frac{\eta }{2}+\bar{\zeta}%
_{+})(\lambda -\frac{\eta }{2}+\bar{\zeta}_{-})\,Q_{t}(\lambda -\eta
)\,P_{t}(\lambda )  \notag \\
& -(\lambda -\frac{\eta }{2}-\bar{\zeta}_{+})(\lambda -\frac{\eta }{2}-\bar{%
\zeta}_{-})\,Q_{t}(\lambda )\,P_{t}(\lambda -\eta ).  \label{W-eq}
\end{align}
\end{theorem}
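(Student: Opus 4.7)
The strategy is to reduce the Baxter equations \rf{InH-BAX-eq} and \rf{InH-BAX-eq-+} to the quadratic fusion system of Theorem \ref{SoV-Sp-Ch-gl2-R} by evaluating them at the inhomogeneity points, and to handle the Wronskian identity \rf{W-eq} in the homogeneous case by a degree-and-zero matching argument.

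First I would record the key special values: from \rf{a-d}, \rf{Coef-6v-Req-R} and \rf{InH-F}, one has $\mathsf{A}_{\bar{\zeta}_{+},\bar{\zeta}_{-}}(\xi_n^{(1)})=0=\mathsf{A}_{\bar{\zeta}_{+},\bar{\zeta}_{-}}(-\xi_n^{(0)})$ (since $a(\xi_n-\eta/2)=0=d(\xi_n+\eta/2)$), $F(\pm\xi_n^{(h)})=0=F(\pm\eta/2)$, $\mathsf{A}_{\bar{\zeta}_{+},\bar{\zeta}_{-}}(-\eta/2)=0$, and $\mathsf{A}_{\bar{\zeta}_{+},\bar{\zeta}_{-}}(\eta/2)=t(\eta/2)$. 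Hence the evaluation of \rf{InH-BAX-eq} at $\lambda=\xi_n^{(h)}$ collapses to the two-term identities
\begin{equation*}
t(\xi_n^{(0)})\,Q_t(\xi_n^{(0)})=\mathsf{A}_{\bar{\zeta}_{+},\bar{\zeta}_{-}}(\xi_n^{(0)})\,Q_t(\xi_n^{(1)}),\qquad t(\xi_n^{(1)})\,Q_t(\xi_n^{(1)})=\mathsf{A}_{\bar{\zeta}_{+},\bar{\zeta}_{-}}(-\xi_n^{(1)})\,Q_t(\xi_n^{(0)}),
\end{equation*}
whose product is precisely the fusion equation \rf{gl_2-fusion}, while the evaluation at $\lambda=\pm\eta/2$ is automatic by the evenness of $t$, $Q_t$ and $F$.

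For the forward direction, starting from $t\in\Sigma_T$ satisfying \rf{Quadratic System-open}, I would define $Q_t$ as the unique even polynomial with $\deg_{\lambda^2}Q_t=p_{K_{+,-}}$ whose values at $\pm\xi_n^{(h)}$ are determined by the two-term relations above. The degree is pinned down by leading-order asymptotics: using $t_{\mathsf{N}+1}=(2+\bar{\mathsf{b}}_{-}\bar{\mathsf{c}}_{+})/(\bar{\zeta}_{+}\bar{\zeta}_{-})$ and $\mathsf{A}_{\bar{\zeta}_{+},\bar{\zeta}_{-}}(\pm\lambda)\sim\lambda^{2\mathsf{N}+2}/(\bar{\zeta}_{+}\bar{\zeta}_{-})$, matching the leading term of $[t(\lambda)-\mathsf{A}_{\bar{\zeta}_{+},\bar{\zeta}_{-}}(\lambda)-\mathsf{A}_{\bar{\zeta}_{+},\bar{\zeta}_{-}}(-\lambda)]\,Q_t(\lambda)$ with $F(\lambda)$ forces $p_{K_{+,-}}=\mathsf{N}$ when $\bar{\mathsf{b}}_{-}\bar{\mathsf{c}}_{+}\neq 0$ and leaves $p_{K_{+,-}}=p\in\{0,\ldots,\mathsf{N}\}$ otherwise. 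I would then verify \rf{InH-BAX-eq} as a polynomial identity via the residue
\begin{equation*}
R(\lambda):=t(\lambda)\,Q_t(\lambda)-\mathsf{A}_{\bar{\zeta}_{+},\bar{\zeta}_{-}}(\lambda)\,Q_t(\lambda-\eta)-\mathsf{A}_{\bar{\zeta}_{+},\bar{\zeta}_{-}}(-\lambda)\,Q_t(\lambda+\eta)-F(\lambda),
\end{equation*}
which is an even polynomial (the potential pole at $\lambda=0$ cancels by the evenness of $Q_t$) vanishing by construction at the $2\mathsf{N}+1$ distinct squared values $(\xi_n^{(h)})^2$ and $(\eta/2)^2$; since $\deg_{\lambda^2}R\leq 2\mathsf{N}+1$ with vanishing leading coefficient, $R\equiv 0$. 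The reverse implication is immediate: evaluating \rf{InH-BAX-eq} at $\xi_n^{(h)}$ returns \rf{gl_2-fusion}, and reading off the asymptotic data identifies $t_{\mathsf{N}+1}$ and $t(\eta/2)$, so Theorem \ref{SoV-Sp-Ch-gl2-R} yields $t\in\Sigma_T$. The statement for $P_t$ and \rf{InH-BAX-eq-+} follows by the same argument under $\bar{\zeta}_{\pm}\to-\bar{\zeta}_{\pm}$.

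The main obstacle is the Wronskian identity \rf{W-eq} in the homogeneous case $\bar{\mathsf{b}}_{-}\bar{\mathsf{c}}_{+}=0$. Setting
\begin{equation*}
w(\lambda):=(\lambda-\tfrac{\eta}{2}+\bar{\zeta}_{+})(\lambda-\tfrac{\eta}{2}+\bar{\zeta}_{-})\,Q_t(\lambda-\eta)\,P_t(\lambda)-(\lambda-\tfrac{\eta}{2}-\bar{\zeta}_{+})(\lambda-\tfrac{\eta}{2}-\bar{\zeta}_{-})\,Q_t(\lambda)\,P_t(\lambda-\eta),
\end{equation*}
I would derive the quasi-periodicity relation $w(\lambda+\eta)\,f(-\lambda)=f(\lambda)\,w(\lambda)$, where $f(\lambda)$ is the common $\bar{\zeta}_{\pm}$-independent factor of $\mathsf{A}_{\bar{\zeta}_{+},\bar{\zeta}_{-}}$ and $\mathsf{A}_{-\bar{\zeta}_{+},-\bar{\zeta}_{-}}$, by substituting both homogeneous Baxter equations into $w(\lambda+\eta)$. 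Combined with $w(\xi_n^{(0)})=0$ (which follows from the ratio identity $[Q_t(\xi_n^{(1)})P_t(\xi_n^{(0)})]/[Q_t(\xi_n^{(0)})P_t(\xi_n^{(1)})]=[(\xi_n-\bar{\zeta}_{+})(\xi_n-\bar{\zeta}_{-})]/[(\xi_n+\bar{\zeta}_{+})(\xi_n+\bar{\zeta}_{-})]$ dictated by the Baxter equations at $\xi_n^{(h)}$) and $w(\eta/2)=0$ (by evenness), the quasi-periodicity propagates the vanishing to all $2\mathsf{N}+1$ zeros of the right-hand side of \rf{W-eq}, so $w(\lambda)=C\,(\lambda-\eta/2)\,a(\lambda)\,d(\lambda)$ for some constant $C$, since $\deg w\leq 2\mathsf{N}+1$ thanks to $p+q=\mathsf{N}$ and the leading-order cancellation in $w$. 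The hardest step is identifying $C=2(-1)^{\mathsf{N}}(\bar{\zeta}_{+}+\bar{\zeta}_{-}+(p-q)\eta)$, which I would obtain by extracting the subleading $\lambda^{2\mathsf{N}+1}$-coefficient of $w$ from $Q_t(\lambda\mp\eta)=\lambda^{2p}\mp 2p\eta\,\lambda^{2p-1}+O(\lambda^{2p-2})$ and the analogous expansion for $P_t$, where the $(p-q)\eta$ dependence arises from the difference of the derivative terms and the $\bar{\zeta}_{\pm}$ dependence from the quadratic prefactors.
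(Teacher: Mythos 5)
Your overall strategy --- evaluating the Baxter equation at $\lambda=\pm\xi_n^{(h)}$ and $\pm\eta/2$ to collapse it onto the fusion system \rf{gl_2-fusion}, reconstructing $Q_t$ from the resulting two-term ratio conditions, and closing the polynomial identity by matching the $2\mathsf{N}+1$ zeros in $\lambda^2$ against the leading coefficient --- is precisely the ``standard argument'' of the references that the paper's one-sentence proof delegates to, and that part of your proposal is sound. One step you pass over too quickly: the ratio conditions are $\mathsf{N}$ homogeneous linear equations on the $\mathsf{N}+1$ coefficients of an even polynomial of degree $\mathsf{N}$ in $\lambda^2$, so a nonzero solution exists, but you must also show its leading coefficient is nonzero (i.e.\ that the system restricted to degree $\le\mathsf{N}-1$ is injective); if the degree dropped, the coefficient of $\lambda^{2(2\mathsf{N}+1)}$ in your residue $R$ would be $-\bar{\mathsf{b}}_{-}\bar{\mathsf{c}}_{+}/(\bar{\zeta}_{+}\bar{\zeta}_{-})\neq 0$ and $R$ could not vanish. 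This is where the cited references do genuine work, and it deserves more than an assertion.

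The real gap is in the Wronskian step. Writing $g_{\pm}(\lambda)=(\lambda-\eta/2\pm\bar{\zeta}_{+})(\lambda-\eta/2\pm\bar{\zeta}_{-})$, the evenness of $Q_t$ and $P_t$ together with $g_{\mp}(\lambda+\eta)=g_{\pm}(-\lambda)$ gives the \emph{exact} identity $w(-\lambda)=-w(\lambda+\eta)$, i.e.\ $w$ is odd about $\lambda=\eta/2$, independently of any Baxter equation. Your quasi-periodicity relation $c(\lambda)\,w(\lambda)=c(-\lambda)\,w(\lambda+\eta)$ with $c(\lambda)\propto\frac{2\lambda+\eta}{2\lambda}a(\lambda)d(-\lambda)$, fed with the seed zeros, therefore forces $w$ to vanish at $\eta/2$ and at $\eta/2\pm\xi_n$ --- the zeros of $(\lambda-\eta/2)\,a(-\lambda)\,d(\lambda)$ --- but \emph{not} at $\xi_n-\eta/2$: both evaluations of the Baxter pair, at $\xi_n^{(0)}$ and at $\xi_n^{(1)}$, collapse to the single condition $w(\xi_n^{(0)})=0$, and nothing in your argument produces $w(\xi_n^{(1)})=0$. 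In fact the oddness of $w$ about $\eta/2$ shows that \rf{W-eq} as printed cannot hold for $\mathsf{N}\ge 1$ with generic inhomogeneities, since $(\lambda-\eta/2)\,a(\lambda)\,d(\lambda)$ is not odd about $\eta/2$; the identity your method actually establishes has $a(\lambda)d(\lambda)$ replaced by $a(-\lambda)d(\lambda)=(-1)^{\mathsf{N}}\prod_{n=1}^{\mathsf{N}}\bigl((\lambda-\eta/2)^2-\xi_n^2\bigr)$. So the sentence ``the quasi-periodicity propagates the vanishing to all $2\mathsf{N}+1$ zeros of the right-hand side of \rf{W-eq}'' is the step that fails as written: you should either prove the corrected identity and flag the discrepancy with the stated formula, or supply the (nonexistent) extra input giving $w(\xi_n^{(1)})=0$. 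The determination of the constant by the subleading coefficient is fine once the correct zero set is used.
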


\begin{proof}
Under the conditions ensuring the existence of the left SoV basis, the
equivalence of the first discrete SoV characterization with these functional
equations is proven by the standard arguments as introduced in \cite{Nic10a,KitMN14}, see e.g. the proof of Theorem 2.3 of \cite{KitMNT17}.
\end{proof}

\subsection{Diagonalizability and simplicity of the transfer matrix}

Our SoV approach implies that the transfer matrix spectrum is simple as soon as
our SoV basis can be constructed. Here, we show that the transfer matrix is
indeed diagonalizable with simple spectrum for generic values of boundary
parameters. One can adapt the general Proposition 2.5 of \cite{MaiN18} to
the present case and in fact this result is just a special
case of the general Theorem \ref{General-SoV-basis} of Section 5, derived for the
fundamental representations of the $gl_n$ reflection algebra. However, in
this section we present a slightly different proof based on the explicit
form of the transfer matrix scalar product formula, as re-derived in the appendix
A within  the current SoV framework.

\begin{theorem}
Let $\langle t|$ and $|t\rangle $\ be the unique eigenco-vector and
eigenvector associated to any fixed eigenvalue $t(\lambda )$ of $%
T(\lambda )$, then it holds:%
\begin{equation}
\langle t|t\rangle \neq 0,
\end{equation}%
and $T(\lambda )$ is diagonalizable with simple spectrum almost
for any value of $\eta $ and of the inhomogeneities satisfying $(\ref%
{Inhomog-cond})$ in the following two general cases:

i) $K_{-}(\lambda )$ and $K_{+}(\lambda )$ are non-commutative boundary
matrices $(\ref{Cond-S-Diag})$ while $\mathcal{M}_{l}^{\left( -\right) }%
\mathcal{M}_{l}^{\left( +\right) }$ is diagonalizable with simple spectrum%
\footnote{%
Note that this is the case for any fixed value of the boundary parameters $%
\kappa _{\epsilon },$ $\tau _{\epsilon }$ and $\tau _{-\epsilon }$ up two
values of $\kappa _{-\epsilon }$, for $\epsilon \in \{-1,1\}$.}.

ii) $K_{-}(\lambda )$ and $K_{+}(\lambda )$ are simultaneously
diagonalizables, i.e. it holds $(\ref{Cond-S-Diag})$ and%
\begin{equation}
\kappa ^{2}\neq -1/4,  \label{k-condi-simple}
\end{equation}%
for any fixed choice of the boundary parameters $\{\zeta _{-\epsilon
},\kappa ,\tau \}$ and for almost any value of $\zeta _{\epsilon }$, with $%
\epsilon \in \{-1,1\}$.
\end{theorem}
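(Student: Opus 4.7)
The plan is to deduce both $\langle t|t\rangle\neq 0$ and the diagonalizability/simplicity of $T(\lambda)$ from the explicit scalar product formula for separate states derived in Appendix A, combined with the separated wave-function characterization of the eigenvectors given in Theorem \ref{SoV-Sp-Ch-gl2-R}. Since the SoV construction already attaches at most one right eigenvector $|t\rangle$ and one left eigenvector $\langle t|$ to each solution $t(\lambda)$ of the discrete system \eqref{Quadratic System-open}, what remains is to show that these eigenvectors are not null and that the number of admissible $t(\lambda)$ is $2^{\mathsf{N}}$ for generic parameters; both will follow from a single non-vanishing statement.

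First, I would write $\langle t|t\rangle$ using the separated wave-function \eqref{SoV-Ch-T-eigenV-open} and its mirror on the right. The scalar product formula of Appendix A gives $\langle t|t\rangle$ as a determinant whose rows/columns are indexed by the sites $1,\dots,\mathsf{N}$, with entries that are polynomials in the $\mathsf{N}$ unknowns $t(\xi_n-\eta/2)$. This is precisely the framework in which the Cauchy/Vandermonde-like determinants arising in SoV scalar products can be analyzed.

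Next, I would show that this determinant is non-zero for generic boundary and inhomogeneity parameters. For case (i), the idea is to use the same asymptotic analysis as in the proof of Theorem \ref{Th-SoV-basis}: along the curve $\xi_a=a\xi$ with $\xi\to\infty$, the leading coefficient of $T(\xi_l-\eta/2)$ is given by \eqref{Asymp-i}, which is proportional to $\mathcal{M}_l^{(-)}\mathcal{M}_l^{(+)}$. The assumption that this matrix is diagonalizable with simple spectrum forces the $\mathsf{N}$-tuples $(t(\xi_1-\eta/2),\dots,t(\xi_{\mathsf{N}}-\eta/2))$ attached to distinct eigenvalues to be mutually distinct in this limit, hence the scalar product determinant does not vanish identically, hence is non-zero on a Zariski open set of parameters. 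For case (ii), I would exploit the polynomial dependence on $1/\zeta_{-\epsilon}$ noted in the proof of Theorem \ref{Th-SoV-basis}(ii): the leading-$1/\zeta_{-\epsilon}$ behaviour of $T(\xi_l-\eta/2)$ is governed by the matrix \eqref{Asymp-ii}, whose eigenvalues are distinct under the condition \eqref{k-condi-simple}, so the same separation argument applies for almost every $\zeta_\epsilon$.

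Once $\langle t|t\rangle\neq 0$ is established for each eigenvalue, diagonalizability and simplicity follow directly: each eigenvalue carries a genuine one-dimensional eigenspace, the number of distinct eigenvalues must equal $\dim\mathcal{H}=2^{\mathsf{N}}$ (any collision would contradict the non-vanishing of the scalar-product Gram determinant computed on the putative eigenvectors), and the full set of $|t\rangle$ therefore spans $\mathcal{H}$. The main obstacle I expect is the clean non-vanishing statement for the scalar-product determinant across the whole spectrum: one must control it uniformly in $t(\lambda)$, not just for one fixed eigenvalue, and in case (ii) this requires tracking carefully the interplay between the polynomial dependence on $1/\zeta_{-\epsilon}$ of the matrix entries and the parametric dependence of the eigenvalues $t(\lambda)$ themselves.
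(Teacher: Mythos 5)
Your overall strategy matches the paper's: compute $\langle t|t\rangle$ from the Appendix~A scalar product formula, impose $\xi_a=a\xi$, send $\xi\to\infty$, and read off the leading behaviour of $T(\xi_l-\eta/2)$ from \rf{Asymp-i} in case (i) and from \rf{Asymp-ii} (after first sending $\bar\zeta_-\to\infty$) in case (ii). But the decisive step --- why the limiting value of $\langle t|t\rangle$ is nonzero --- is not correctly argued. Your claim is that simplicity of the spectrum of $\mathcal{M}_l^{(-)}\mathcal{M}_l^{(+)}$ forces the tuples $(t(\xi_1-\eta/2),\dots,t(\xi_{\mathsf N}-\eta/2))$ of \emph{distinct} eigenvalues to be mutually distinct, "hence the scalar product determinant does not vanish identically." Distinctness of the tuples separates different eigenvalues from one another; it says nothing about whether the \emph{individual} norm $\langle t|t\rangle$ of a fixed eigenvalue vanishes, and it is exactly that individual non-vanishing which excludes a nontrivial Jordan block. (Since left and right eigenvectors for distinct eigenvalues are automatically orthogonal, the Gram determinant you invoke is just $\prod_t\langle t|t\rangle$, so your argument is circular at this point.)

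What actually closes the gap in the paper is an explicit evaluation of the limit: in the limit the Vandermonde ratios tend to $1$ and the sum over $\mathbf h$ factorizes site by site, giving in case (i)
\begin{equation}
\lim_{\xi\to\infty}\langle t|t\rangle=\mathsf{N}!\prod_{a=1}^{\mathsf N}\bigl(\mathsf{k}_{\theta_a}^{-1}-\mathsf{k}_{\theta_a}\bigr),
\end{equation}
where $\mathsf{k}_{\theta_a}$ is the eigenvalue of $K=\mathcal{M}^{(-)}\mathcal{M}^{(+)}$ selected by the asymptotics of $t(\xi_a-\eta/2)$. The non-vanishing then rests on the algebraic identity $\det\mathcal{M}^{(-)}\mathcal{M}^{(+)}=1$, i.e.\ $\mathsf{k}_0\mathsf{k}_1=1$, so that $\mathsf{k}_{\theta}^{-1}=\mathsf{k}_{\hat\theta}\neq\mathsf{k}_{\theta}$ precisely because $K$ has simple spectrum; in case (ii) one uses instead $\mathsf{k}_1=-\mathsf{k}_0=\sqrt{1+4\kappa^2}\neq 0$, which is where the condition \rf{k-condi-simple} enters. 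Without this determinant/trace input on the product of boundary matrices, the simple-spectrum hypothesis alone does not rule out $\mathsf{k}_{\theta_a}=\pm 1$, in which case a factor would vanish. You correctly identified the obstacle at the end of your proposal; the resolution is this explicit factorized limit rather than a separation-of-eigenvalues argument.
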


\begin{proof}
Let us denote with:%
\begin{equation}
K=\left\{ 
\begin{array}{l}
\mathcal{M}^{\left( -\right) }\mathcal{M}^{\left( +\right) }\text{ \ in the
case i)} \\ 
\\ 
\left( 
\begin{array}{cc}
1 & 2\kappa e^{\tau } \\ 
2\kappa e^{-\tau } & -1%
\end{array}%
\right) \text{ \ in the case ii)}%
\end{array}%
\right.
\end{equation}%
and let us denote with $\mathsf{k}_{0}$ and $\mathsf{k}_{1}$ the associated
eigenvalues. Then in the case i) the matrix $K$ is diagonalizable and with
simple spectrum by assumption while in the case ii) the requirement $(\ref%
{k-condi-simple})$ implies that $K$ is diagonalizable with simple spectrum
as it holds:%
\begin{equation}
\mathsf{k}_{1}=-\mathsf{k}_{0}=\sqrt{1+4\kappa ^{2}}\neq 0.
\end{equation}%
We can now proceed to compute the scalar product:%
\begin{align}
\langle \,t\,|\,t\,\rangle & =\sum_{h_{1},...,h_{\mathsf{N}%
}=0}^{1}\prod_{a=1}^{\mathsf{N}}\left( \frac{\xi _{a}-\eta }{\xi _{a}+\eta }%
\frac{t(\xi _{a}+\eta /2)}{\,\mathsf{A}_{\bar{\zeta}_{+},\bar{\zeta}%
_{-}}(\eta /2-\xi _{n})}\right) ^{\!h_{a}}(\frac{t(\xi _{a}-\eta /2)}{%
\mathsf{A}_{\bar{\zeta}_{+},\bar{\zeta}_{-}}(\eta /2-\xi _{n})})^{1-h_{a}}%
\frac{\hat{V}(\xi _{1}^{(h_{1})},...,\xi _{\mathsf{N}}^{(h_{\mathsf{N}})})}{%
\hat{V}(\xi _{1},...,\xi _{\mathsf{N}})} \\
& =\sum_{h_{1},...,h_{\mathsf{N}}=0}^{1}\prod_{a=1}^{\mathsf{N}}\left( \frac{%
\xi _{a}-\eta }{\xi _{a}+\eta }\frac{\mathsf{A}_{\bar{\zeta}_{+},\bar{\zeta}%
_{-}}(\xi _{a}+\eta /2)}{t(\xi _{a}-\eta /2)}\right) ^{\!h_{a}}(\frac{t(\xi
_{a}-\eta /2)}{\mathsf{A}_{\bar{\zeta}_{+},\bar{\zeta}_{-}}(\eta /2-\xi _{n})%
})^{1-h_{a}}\frac{\hat{V}(\xi _{1}^{(h_{1})},...,\xi _{\mathsf{N}}^{(h_{%
\mathsf{N}})})}{\hat{V}(\xi _{1},...,\xi _{\mathsf{N}})},
\end{align}%
then the leading coefficient of $\langle \,t\,|\,t\,\rangle $ in $\xi $ is
given by the following limit:%
\begin{equation}
\lim_{\xi \rightarrow \infty }\text{ }\langle \,t\,|\,t\,\rangle =\lim_{\xi
\rightarrow \infty }\sum_{h_{1},...,h_{\mathsf{N}}=0}^{1}\prod_{a=1}^{%
\mathsf{N}}\left( \frac{\mathsf{A}_{\bar{\zeta}_{+},\bar{\zeta}_{-}}(\xi
_{a}+\eta /2)}{t(\xi _{a}-\eta /2)}\right) ^{\!h_{a}}(\frac{t(\xi _{a}-\eta
/2)}{\mathsf{A}_{\bar{\zeta}_{+},\bar{\zeta}_{-}}(\eta /2-\xi _{n})}%
)^{1-h_{a}},
\end{equation}%
once we impose on the inhomogeneity parameters the condition $(\ref%
{Relation-InH})$. Let us now distinguish between the two cases.

In the case i), let us first observe that it holds:%
\begin{equation}
\text{det}\mathcal{M}^{\left( -\right) }\mathcal{M}^{\left( +\right) }=\text{%
det}\mathcal{M}^{\left( -\right) }\text{det}\mathcal{M}^{\left( +\right)
}=(-1)(-1)=1
\end{equation}%
and so%
\begin{equation}
\mathsf{k}_{0}\mathsf{k}_{1}=1,
\end{equation}%
taking that into account we have:%
\begin{eqnarray}
\lim_{\xi \rightarrow \infty }\text{ }\langle t|t\rangle
&=&\sum_{h_{1},...,h_{\mathsf{N}}=0}^{1}\prod_{a=1}^{\mathsf{N}}(\frac{a}{%
\mathsf{k}_{\theta _{a}}})^{h_{a}}(-a\mathsf{k}_{\theta _{a}})^{1-h_{a}}=%
\mathsf{N}!\prod_{a=1}^{\mathsf{N}}(\mathsf{k}_{\theta _{a}}^{-1}-\mathsf{k}%
_{\theta _{a}}) \\
&=&\mathsf{N}!\prod_{a=1}^{\mathsf{N}}(\mathsf{k}_{\hat{\theta}_{a}}-\mathsf{%
k}_{\theta _{a}}) \\
&=&\mathsf{N}!(\mathsf{k}_{0}-\mathsf{k}_{1})^{\mathsf{N}}\prod_{a=1}^{%
\mathsf{N}}\left( 1-2\delta _{\theta _{a},0}\right) \neq 0,
\end{eqnarray}%
where we have defined $\hat{\theta}_{a}=\{0,1\}\backslash \theta _{a}$ for
any $a\in \{1,...,\mathsf{N}\}$ and the $\{\theta _{1},...,\theta _{%
\mathsf{N}}\}\in \{0,1\}^{\mathsf{N}}$ are uniquely fixed by:%
\begin{equation}
\frac{(-1)^{\mathsf{N}-l}\eta l(\mathsf{N}-l)!(\mathsf{N}+l)!}{\bar{\zeta}%
_{+}\bar{\zeta}_{-}}\mathsf{k}_{\theta _{a}}=\lim_{\xi \rightarrow \infty
}\xi ^{-(2\mathsf{N}+1)}t(\xi _{a}-\eta /2).
\end{equation}%
Here, we have used that $T(\xi _{a}-\eta /2)$ are polynomials of
degree $2\mathsf{N}+1$ in $\xi $ for all $a\in \{1,...,\mathsf{N}\}$
with maximal degree coefficient given by $\left( \ref{Asymp-i}\right) $.

In the case ii), the scalar product $\langle \,t\,|\,t\,\rangle $ is
computed for the eigenstates associated to the transfer matrix $%
T^{(K_{+},I)}(\lambda )$. That is we have to take first the limit $\bar{\zeta%
}_{-}\rightarrow \infty $, then the leading coefficient of $\langle
\,t\,|\,t\,\rangle $ in $\xi $ is given by the following limit:%
\begin{equation}
\lim_{\xi \rightarrow \infty }\text{ }\langle \,t\,|\,t\,\rangle =\lim_{\xi
\rightarrow \infty }\sum_{h_{1},...,h_{\mathsf{N}}=0}^{1}\prod_{a=1}^{%
\mathsf{N}}\left( \frac{\mathsf{A}_{\bar{\zeta}_{+},\infty }(\xi _{a}+\eta
/2)}{t(\xi _{a}-\eta /2)}\right) ^{\!h_{a}}(\frac{t(\xi _{a}-\eta /2)}{%
\mathsf{A}_{\bar{\zeta}_{+},\infty }(\eta /2-\xi _{n})})^{1-h_{a}},
\end{equation}%
where%
\begin{equation}
\mathsf{A}_{\bar{\zeta}_{+},\infty }(\lambda )\equiv \lim_{\bar{\zeta}%
_{-}\rightarrow \infty }\mathsf{A}_{\bar{\zeta}_{+},\bar{\zeta}_{-}}(\eta
/2\pm \xi _{n})=(-1)^{\mathsf{N}}\frac{2\lambda +\eta }{2\lambda }\,\frac{%
(\lambda -\frac{\eta }{2}+\bar{\zeta}_{+})}{\bar{\zeta}_{+}\,}\,a(\lambda
)\,d(-\lambda ).
\end{equation}%
The following identities hold:%
\begin{equation}
\frac{(-1)^{\mathsf{N}-l}\eta l(\mathsf{N}-l)!(\mathsf{N}+l)!}{\zeta _{+}}%
\mathsf{k}_{\theta _{a}}=\lim_{\xi \rightarrow \infty }\xi ^{-2\mathsf{N}%
}t(\xi _{a}-\eta /2),
\end{equation}%
as $T^{(K_{+},I)}(\xi _{l}-\eta /2)$ are polynomials of degree $2\mathsf{N}$
in $\xi $ for all $l\in \{1,...,\mathsf{N}\}$ with maximal degree
coefficient given by $\left( \ref{Asymp-ii}\right) $.

So that taking now the limit $\xi \rightarrow \infty $, we get:%
\begin{eqnarray}
\lim_{\xi \rightarrow \infty }\text{ }\langle t|t\rangle
&=&\sum_{h_{1},...,h_{\mathsf{N}}=0}^{1}\prod_{a=1}^{\mathsf{N}}(\frac{\sqrt{%
1+4\kappa ^{2}}}{\mathsf{k}_{\theta _{a}}})^{h_{a}}(\frac{\mathsf{k}_{\theta
_{a}}}{\sqrt{1+4\kappa ^{2}}})^{1-h_{a}} \\
&=&\sum_{h_{1},...,h_{\mathsf{N}}=0}^{1}\prod_{a=1}^{\mathsf{N}}(\frac{%
\mathsf{k}_{1}}{\mathsf{k}_{\theta _{a}}})^{h_{a}}(\frac{\mathsf{k}_{\theta
_{a}}}{\mathsf{k}_{1}})^{1-h_{a}} \\
&=&2^{\mathsf{N}}\prod_{a=1}^{\mathsf{N}}\left( 1-2\delta _{\theta
_{a},0}\right) \neq 0.
\end{eqnarray}%
This proves that%
\begin{equation}
\langle t|t\rangle \neq 0
\end{equation}%
for almost any values of the inhomogeneities, of $\eta $ and for any choice of
the transfer matrix eigenvalue $t(\lambda )$. Finally, given an eigenvalue $%
t(\lambda )$ it is associated with a non trivial Jordan block if and only if
the eigenco-vector and eigenvector associated to $t(\lambda )$ are
orthogonal. Since we have shown that this is not the case, it implies
that the transfer matrix is diagonalizable and has simple spectrum as already proven.
\end{proof}

\section{SoV for fundamental representations of $U_{q}(\widehat{gl}_{2})$
reflection algebra}

In this section we consider the representation of the reflection algebra
associated to the trigonometric 6-vertex $R$-matrix:%
\begin{equation}
R_{ab}(\lambda )=\left( 
\begin{array}{cccc}
\sinh (\lambda +\eta ) & 0 & 0 & 0 \\ 
0 & \sinh \lambda & \sinh \eta & 0 \\ 
0 & \sinh \eta & \sinh \lambda & 0 \\ 
0 & 0 & 0 & \sinh (\lambda +\eta )%
\end{array}%
\right) \in \mathrm{End}(V_{a}\otimes V_{b}).
\end{equation}%
As in the rational case also in the trigonometric case the transfer matrix,
defined by%
\begin{equation}
T(\lambda )=\text{tr}_{a}\{K_{+,a}(\lambda
)\,M_{a}(\lambda )\,K_{-,a}(\lambda )\,\hat{M}_{a}(\lambda )\}=\text{tr}%
_{0}\left\{ K_{+,a}(\lambda )\,\mathcal{U}_{-,a}(\lambda )\right\} ,
\label{T-open+Trig}
\end{equation}%
generates a one-parameter family of commuting operators on the quantum space 
$\mathcal{H}=\otimes _{i-1}^{\mathsf{N}}V_{i}$, with $V_{i}\simeq \mathbb{C}%
^{2}$. Here, we have defined%
\begin{equation}
K_{+}(\lambda )=K_{-}(\lambda +\eta ;\zeta _{+},\kappa _{+},\tau _{+}),
\end{equation}%
and%
\begin{equation}
K_{-,a}(\lambda ;\zeta ,\kappa ,\tau )=\frac{1}{\sinh \zeta }\left( 
\begin{array}{cc}
\sinh (\lambda -\eta /2+\zeta ) & \kappa e^{\tau }\sinh (2\lambda -\eta ) \\ 
\kappa e^{-\tau }\sinh (2\lambda -\eta ) & \sinh (\zeta -\lambda +\eta /2)%
\end{array}%
\right) \in \mathrm{End}(V_{a}\simeq \mathbb{C}^{2}),
\end{equation}%
which is the most general scalar solution to the trigonometric 6-vertex
reflection equation \cite{deVG93,deVG94,GhoZ94a,GhoZ94b}. The same definitions as in the
rational case hold for the boundary monodromy matrix%
\begin{equation}
\mathcal{U}_{-,a}(\lambda )=M_{0}(\lambda )\,K_{-}(\lambda )\,\hat{M}%
_{0}(\lambda )=\left( 
\begin{array}{cc}
\mathcal{A}_{-}(\lambda ) & \mathcal{B}_{-}(\lambda ) \\ 
\mathcal{C}_{-}(\lambda ) & \mathcal{D}_{-}(\lambda )%
\end{array}%
\right) \in \mathrm{End}(V_{a}\otimes \mathcal{H}),
\end{equation}%
an operator solution to the same reflection equation, for the bulk monodromy
matrix:%
\begin{equation}
M_{0}(\lambda )=R_{0\mathsf{N}}(\lambda -\xi _{\mathsf{N}}^{(0)})\dots
R_{01}(\lambda -\xi _{1}^{(0)})=\left( 
\begin{array}{cc}
A(\lambda ) & B(\lambda ) \\ 
C(\lambda ) & D(\lambda )%
\end{array}%
\right) \in \mathrm{End}(V_{a}\otimes \mathcal{H}),
\end{equation}%
an operator solution of the trigonometric 6-vertex Yang-Baxter equation, and
for%
\begin{equation}
\hat{M}_{0}(\lambda )=(-1)^{\mathsf{N}}\,\sigma
_{0}^{y}\,M_{0}^{t_{0}}(-\lambda )\,\sigma _{0}^{y}.
\end{equation}%
For the trigonometric 6-vertex reflection algebra the fusion of transfer
matrices leads to the following quantum determinant relations:%
\begin{equation}
T(\xi _{n}^{(0)})T(\xi _{n}^{(1)})=\mathsf{A}%
_{\alpha _{\pm },\beta _{\pm }}(\xi _{n}^{(0)})\mathsf{A}_{\alpha _{\pm
},\beta _{\pm }}(-\xi _{n}^{(1)}),\text{ }\forall a\in \{1,...,\mathsf{N}\},
\end{equation}%
which are proven by direct computations using the reduction of the
trigonometric 6-vertex $R$-matrix to the permutation operator and to a
1-dimensional projector for $\lambda =0$ and $-\eta $, respectively. Here,
we have defined:%
\begin{align}
\mathsf{A}_{\alpha _{\pm },\beta _{\pm }}(\lambda )& =(-1)^{\mathsf{N}}\frac{%
\sinh (2\lambda +\eta )}{\sinh 2\lambda }g_{+}(\lambda )g_{-}(\lambda
)a(\lambda )d(-\lambda ),  \label{Coef-6v-Req-T} \\
d(\lambda )& =a(\lambda -\eta ),\text{ \ \ }a(\lambda )=\prod_{n=1}^{\mathsf{%
N}}\sinh (\lambda -\xi _{n}+\eta /2),
\end{align}%
and%
\begin{equation}
g_{\pm }(\lambda )=\left\{ 
\begin{array}{l}
\sinh (\lambda +\alpha _{\pm }-\eta /2)\cosh (\lambda \mp \beta _{\pm }-\eta
/2)/(\sinh \alpha _{\pm }\cosh \beta _{\pm })\text{ if }\kappa _{\pm }\neq 0
\\ 
\sinh (\lambda +\zeta _{\pm }-\eta /2)/\sinh \zeta _{\pm }\text{ \ \ if }%
\kappa _{\pm }=0,%
\end{array}%
\right. ,
\end{equation}%
where $\alpha _{\pm }$\ and $\beta _{\pm }$ are defined in terms of the
boundary parameters by:%
\begin{equation}
\sinh \alpha _{\pm }\cosh \beta _{\pm }=\frac{\sinh \zeta _{\pm }}{2\kappa
_{\pm }},\text{ \ \ \ \ \ }\cosh \alpha _{\pm }\sinh \beta _{\pm }=\frac{%
\cosh \zeta _{\pm }}{2\kappa _{\pm }}.
\end{equation}%
Moreover, the transfer matrix is an even function of the spectral parameter $%
\lambda $ and it is central in the following values: 
\begin{align}
\lim_{\lambda \rightarrow \pm \infty }e^{\mp 2\lambda (\mathsf{N}%
+2)}T(\lambda ) &=2^{-(2\mathsf{N}+1)}\frac{\kappa _{+}\kappa
_{-}\cosh (\tau _{+}-\tau _{-})}{\sinh \zeta _{+}\sinh \zeta _{-}},
\label{Asymp-Cent-trigo} \\
T(\pm \eta /2) &=(-1)^{\mathsf{N}}2\cosh \eta \,a(\eta
/2)d(-\eta /2),  \label{C-1} \\
T(\pm (\eta /2-i\pi /2)) &=-2\cosh \eta \coth \zeta _{-}\coth
\zeta _{+}a(i\pi /2+\eta /2)d(i\pi /2-\eta /2).  \label{C-2}
\end{align}%
Let us define the functions:%
\begin{equation}
g_{a,\mathbf{h}}(\lambda )=\frac{\cosh ^{2}2\lambda -\cosh ^{2}\eta }{\cosh
^{2}2\xi _{a}^{(h_{a})}-\cosh ^{2}\eta }\,\prod_{\substack{ b=1  \\ b\neq a}}%
^{\mathsf{N}}\frac{\cosh 2\lambda -\cosh 2\xi _{b}^{(h_{b})}}{\cosh 2\xi
_{a}^{(h_{a})}-\cosh 2\xi _{b}^{(h_{b})}}\quad \text{ \ for }a\in \{1,...,%
\mathsf{N}\},
\end{equation}%
and%
\begin{align}
f_{\mathbf{h}}(\lambda )=& \frac{\cosh 2\lambda +\cosh \eta }{2\cosh \eta }%
\prod_{b=1}^{\mathsf{N}}\frac{\cosh 2\lambda -\cosh 2\xi _{b}^{(h_{b})}}{%
\cosh \eta -\cosh 2\xi _{b}^{(h_{b})}}\mathsf{A}_{\alpha _{\pm },\beta _{\pm
}}(\eta /2)  \notag \\
& -(-1)^{\mathsf{N}}\frac{\cosh 2\lambda -\cosh \eta }{2\cosh \eta }%
\prod_{b=1}^{\mathsf{N}}\frac{\cosh 2\lambda -\cosh 2\xi _{b}^{(h_{b})}}{%
\cosh \eta +\cosh 2\xi _{b}^{(h_{b})}}\mathsf{A}_{\alpha _{\pm },\beta _{\pm
}}(\eta /2+i\pi /2)  \notag \\
& +2^{(1-\mathsf{N})}\frac{\kappa _{+}\kappa _{-}\cosh (\tau _{+}-\tau _{-})%
}{\sinh \zeta _{+}\sinh \zeta _{-}}(\cosh ^{2}2\lambda -\cosh ^{2}\eta
)\prod_{b=1}^{\mathsf{N}}(\cosh 2\lambda -\cosh 2\xi _{b}^{(h_{b})}),
\label{f-function}
\end{align}%
Then for any $\mathbf{h=}\{h_{1},...,h_{\mathsf{N}}\}\in \{0,1\}^{\mathsf{N}%
} $ the following interpolation formula for the transfer matrix holds:%
\begin{equation}
T(\lambda )=f_{\mathbf{h}}(\lambda )+\sum_{a=1}^{\mathsf{N}}g_{a,%
\mathbf{h}}(\lambda )T(\xi _{a}^{(h_{a})}).
\label{T-open-interp-T}
\end{equation}

\subsection{The rational limit of trigonometric 6-vertex $R$-matrix and $K $%
-matrix}

Let us remark that both the trigonometric 6-vertex $R$-matrix and the $K$%
-matrix general scalar solution of the trigonometric 6-vertex reflection
equation admit a well defined limit to their rational counterparts. To shorten the notations, we denote all objects related to the rational case with an index $XXX$ and while an index  $XXZ$ will refer to the same objects in the trigonometric case.   In
particular, defining:%
\begin{eqnarray}
\lambda &=&\varepsilon \hat{\lambda},\eta =\varepsilon \hat{\eta},\xi
_{n}=\varepsilon \hat{\xi}_{n},\zeta _{\pm }=\varepsilon \hat{\zeta}_{\pm },
\label{Prescription-Rational-L1} \\
\kappa _{\pm } &=&\hat{\kappa}_{\pm }+O(\varepsilon ),\text{ }\tau _{\pm }=%
\hat{\tau}_{\pm }+\varepsilon \breve{\tau}_{\pm }+O(\varepsilon ^{2}),
\label{Prescription-Rational-L1+}
\end{eqnarray}%
then it holds:%
\begin{align}
\lim_{\varepsilon \rightarrow 0}\frac{R^{(XXZ)}(\lambda |\eta )}{%
\sinh \varepsilon \text{ }}& =R^{(XXX)}(\hat{\lambda}|\hat{\eta}), \\
\lim_{\varepsilon \rightarrow 0}\frac{K_{\pm }^{(XXZ)}(\lambda
|\eta ,\zeta _{\pm },\kappa _{\pm },\tau _{\pm })}{\sinh \varepsilon \text{ }%
}& =K_{\pm }^{(XXX)}(\hat{\lambda}|\hat{\eta},\hat{\zeta}_{\pm },\hat{%
\kappa}_{\pm },\hat{\tau}_{\pm }),
\end{align}%
so that we have:%
\begin{align}
\lim_{\varepsilon \rightarrow 0}\frac{\mathcal{U}_{-}^{(XXZ)}(%
\lambda |\eta ,\{\xi \},\zeta _{-},\kappa _{-},\tau _{-})}{\sinh ^{2\mathsf{N%
}+1}\varepsilon \text{ }}& =\mathcal{U}_{-}^{(XXX)}(\hat{\lambda}|\hat{%
\eta},\{\hat{\xi}\},\hat{\zeta}_{-},\hat{\kappa}_{-},\hat{\tau}_{-}), \\
\lim_{\varepsilon \rightarrow 0}\frac{T_{(XXZ)}(%
\lambda |\eta ,\{\xi \},\zeta _{\pm },\kappa _{\pm },\tau _{\pm })}{\sinh ^{2%
\mathsf{N}+2}\varepsilon \text{ }}& =T_{(XXX)}(\hat{\lambda}%
|\hat{\eta},\{\hat{\xi}\},\hat{\zeta}_{\pm },\hat{\kappa}_{\pm },\hat{\tau}%
_{\pm }).
\end{align}%
Moreover, we get the following prescriptions on the parameters $\alpha _{\pm
}$ in the rational limit:%
\begin{equation}
\alpha _{\pm }(\varepsilon )=\varepsilon \bar{\zeta}_{\pm }\text{ with }\bar{%
\zeta}_{\pm }=\hat{\zeta}_{\pm }/\sqrt{1+4\hat{\kappa}_{\pm }^{2}},
\label{Prescription-Rational-L2}
\end{equation}%
which induces the following functional form for 
\begin{equation}
\beta _{\pm }(\varepsilon )=\hat{\beta}_{\pm }+\varepsilon \breve{\beta}%
_{\pm }
\end{equation}
with:
\begin{eqnarray}
\cosh \hat{\beta}_{\pm } &=&\lim_{\varepsilon \rightarrow 0}\frac{\sinh
\zeta _{\pm }}{2\kappa _{\pm }\sinh \alpha _{\pm }}=\frac{\sqrt{1+4\hat{%
\kappa}_{\pm }^{2}}}{2\hat{\kappa}_{\pm }}, \\
\sinh \hat{\beta}_{\pm } &=&\lim_{\varepsilon \rightarrow 0}\frac{\cosh
\zeta _{\pm }}{2\kappa _{\pm }\cosh \alpha _{\pm }}=\frac{1}{2\hat{\kappa}%
_{\pm }},  \label{Prescription-Rational-L3}
\end{eqnarray}%
and which lead to the following rational limit:%
\begin{equation}
\lim_{\varepsilon \rightarrow 0}\frac{\mathsf{A}_{\alpha _{\pm },\beta _{\pm
}}(\lambda )}{\sinh ^{2\mathsf{N}+2}\varepsilon \text{ }}=\mathsf{A}_{\bar{%
\zeta}_{+},\bar{\zeta}_{-}}(\lambda ),
\end{equation}%
where the $\mathsf{A}_{\bar{\zeta}_{+},\bar{\zeta}_{-}}(\lambda )$ are the
coefficients $(\ref{Coef-6v-Req-R})$ defined for the rational 6-vertex
reflection algebra. This is in agreement with the preservation of the transfer matrix fusion
equations under the rational limit:%
\begin{eqnarray}
0 &=&\lim_{\varepsilon \rightarrow 0}\frac{T_{(XXZ)}(%
\xi _{n}^{(0)})T_{(XXZ)}(\xi _{n}^{(1)})-\mathsf{A}%
_{\alpha _{\pm },\beta _{\pm }}(\xi _{n}^{(0)})\mathsf{A}_{\alpha _{\pm
},\beta _{\pm }}(-\xi _{n}^{(1)})}{\sinh ^{4\mathsf{N}+4}\varepsilon \text{ }%
}, \\
&=&T_{(XXX)}(\hat{\xi}_{n}^{(0)})T_{(XXX)}(%
\hat{\xi}_{n}^{(1)})-\mathsf{A}_{\bar{\zeta}_{+},\bar{\zeta}_{-}}(\hat{\xi}%
_{n}^{(0)})\mathsf{A}_{\bar{\zeta}_{+},\bar{\zeta}_{-}}(-\hat{\xi}%
_{n}^{(1)}).
\end{eqnarray}

\subsection{Applicability of our new approach and comparison with Sklyanin's
SoV}

The general Proposition 2.6 of our first paper \cite{MaiN18} applies to
these representations and it allows us to define the left SoV basis.

\begin{theorem}
\label{SoV-Basis-6v-Open-trigo}Let $T(\lambda )$ be the
one-parameter family of transfer matrix associated to a generic couple $%
K_{-}(\lambda )$ and $K_{+}(\lambda )$ of boundary matrices then%
\begin{equation}
\langle h_{1},...,h_{\mathsf{N}}|\equiv \langle S|\prod_{n=1}^{\mathsf{N}%
}\left( \frac{T(\xi _{n}-\eta /2)}{\mathsf{A}_{\alpha _{\pm
},\beta _{\pm }}(\eta /2-\xi _{n})}\right) ^{1-h_{n}}\text{\ }
\label{SoV-Basis-6v-Open-trigo-0}
\end{equation}%
for any $\{h_{1},...,h_{\mathsf{N}}\}\in \{0,1\}^{\otimes \mathsf{N}}$, is a
co-vector basis of $\mathcal{H}$ for almost any choice of the co-vector $%
\langle S|$, of the inhomogeneity parameters satisfying the condition $(\ref%
{Inhomog-cond})$ modulo $i\pi$, of the parameter $\eta $ and of the boundary
parameters.
\end{theorem}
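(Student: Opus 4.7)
The plan is to mimic the proof of Theorem \ref{Th-SoV-basis} while replacing polynomial asymptotics in $\xi$ by the exponential asymptotics natural in the trigonometric regime. Since the statement allows us to exclude a measure-zero set of boundary parameters, I can from the outset place myself in a generic configuration in which the trigonometric analogue of condition (\ref{Cond-NoN-S-Diag}) holds, which simplifies the treatment by avoiding a split into commuting and non-commuting cases.

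First I would impose the one-parameter restriction $\xi_a = a \xi$ on the inhomogeneities (compatibly with (\ref{Inhomog-cond}) modulo $i\pi$) and study the behaviour of $T(\xi_l - \eta/2)$ as $\xi \to +\infty$. Following the same line as Proposition 2.4 of \cite{MaiN18} (and as in Theorem \ref{Th-SoV-basis}), the operator $T(\xi_l - \eta/2)$ is a Laurent polynomial in $e^{2\xi}$ whose leading coefficient factorizes as a known scalar prefactor (depending on $l$ and on $\mathsf{N}$) times a $2\times 2$ matrix $\mathsf{K}_l$ acting on the local space $V_l$, exactly as the prefactor in (\ref{Asymp-i}) multiplies $\mathcal{M}_l^{(-)}\mathcal{M}_l^{(+)}$ in the rational case.

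Next I would take $\langle S | = \otimes_{l=1}^{\mathsf{N}} \langle S, l|$ of tensor product form, $\langle S, l| = (s_+, s_-)_l$. The set (\ref{SoV-Basis-6v-Open-trigo-0}) is then a basis as soon as, for each $l$, the $2\times 2$ matrix with rows $\langle S, l|$ and $\langle S, l|\,\mathsf{K}_l$ has non-vanishing determinant, which is a single polynomial non-vanishing condition on $(s_+, s_-)$ and on the boundary parameters. To check that $\mathsf{K}_l$ is not proportional to the identity, and that this determinant is generically non-zero, the most efficient route is to invoke the rational limit prescribed in (\ref{Prescription-Rational-L1})--(\ref{Prescription-Rational-L3}): under this limit $\mathsf{K}_l$ reduces (after the correct rescaling) to the matrix $\mathcal{M}_l^{(-)}\mathcal{M}_l^{(+)}$ appearing in (\ref{Asymp-i}), whose non-scalar nature was already used in the proof of Theorem \ref{Th-SoV-basis}. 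Since non-scalarness is an open condition, it persists in a full neighbourhood of the rational point, and the determinant non-vanishing condition extends to a Zariski-dense open subset of the boundary parameter space.

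The main technical obstacle is to identify the leading matrix $\mathsf{K}_l$ without performing the full trigonometric asymptotic expansion explicitly; the rational-limit bridge just described circumvents this, since one only needs enough control to match the leading behaviour with its rational counterpart. Once these ingredients are in place, the statement "for almost any choice of $\langle S|$, of the inhomogeneities, of $\eta$ and of the boundary parameters" follows from the fact that the basis condition is the non-vanishing of a finite collection of polynomial (in each parameter) expressions, each of which has been shown not to be identically zero.
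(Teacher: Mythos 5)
There is a genuine gap at the point where you identify the leading Laurent coefficient $\mathsf{K}_l$ of $T(\xi_l-\eta/2)$ in $e^{2\xi}$ by ``invoking the rational limit''. The rational limit $(\ref{Prescription-Rational-L1})$--$(\ref{Prescription-Rational-L3})$ sends $\xi_n=\varepsilon\hat{\xi}_n\to 0$, i.e.\ $e^{2\xi}\to 1$: in that regime every Laurent coefficient of $T(\xi_l-\eta/2)$ contributes at the same order, so the limit collapses the whole Laurent expansion and carries no information about its top coefficient. The two limits ($\xi\to\infty$ versus $\varepsilon\to 0$) are incompatible, and the large-$\xi$ asymptotics of the trigonometric transfer matrix is structurally different from the rational one: for large arguments the trigonometric $R$-matrix degenerates to a diagonal matrix while the $K$-matrices become anti-diagonal at leading order (compare the \emph{central} asymptotics $(\ref{Asymp-Cent-trigo})$, which in the rational case would instead produce the non-scalar operator $\mathrm{tr}\,\mathcal{M}^{(+)}\mathcal{M}^{(-)}$-type structure). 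So there is no reason for $\mathsf{K}_l$ to be the trigonometric avatar of $\mathcal{M}_l^{(-)}\mathcal{M}_l^{(+)}$, nor even to be non-proportional to the identity; establishing this would require an explicit computation in the trigonometric $\xi\to\infty$ regime that your argument does not supply. The remark that ``non-scalarness is an open condition'' does not rescue this, because you are not perturbing around the rational point in the variable relevant to the $\xi\to\infty$ expansion.

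The paper's proof avoids this issue entirely by applying the rational limit to the \emph{full} objects rather than to a leading coefficient: under $(\ref{Prescription-Rational-L1})$--$(\ref{Prescription-Rational-L1+})$ each trigonometric SoV co-vector equals its rational counterpart plus higher-order corrections in $\varepsilon$, so the $2^{\mathsf{N}}\times 2^{\mathsf{N}}$ determinant $(\ref{Lin-indep-SoV})$ is non-zero for all sufficiently small $\varepsilon>0$ because the rational determinant is non-zero by Theorem \ref{Th-SoV-basis}. Genericity then follows since that determinant is polynomial in $e^{2\eta}$, $e^{2\xi_n}$, $e^{2\zeta_\pm}$, $\kappa_\pm$ and Laurent polynomial in $e^{2\tau_\pm}$, hence non-vanishing at one point implies non-vanishing for almost all parameter values. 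If you wish to keep a local, site-by-site reduction as in the rational proof, you must compute the actual leading matrix of the trigonometric transfer matrix in the $\xi\to\infty$ regime from scratch and verify it has simple spectrum; otherwise the global rational-degeneration argument on the full determinant is the correct (and much shorter) route.
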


\begin{proof}
The proof can be given as a consequence of the fact that the rational limit of the
trigonometric transfer matrix coincides with the rational transfer matrix
and then we can apply the theorem proven in the rational case.

More in detail, let us denote by $\{\hat{\eta},\{\hat{\xi}_{n}\},\hat{\zeta%
}_{\pm }$ $\hat{\kappa}_{\pm },\hat{\tau}_{\pm }\}\in \mathbb{C}^{7+\mathsf{N%
}}$ a choice of parameters such that the set of SoV co-vectors%
\begin{equation}
\langle h_{1},...,h_{\mathsf{N}};\hat{\eta},\{\hat{\xi}_{n}\},\hat{\zeta}%
_{\pm }\hat{\kappa}_{\pm },\hat{\tau}_{\pm }|_{(XXX)}\equiv \langle
S|\prod_{n=1}^{\mathsf{N}}\left( \frac{T_{(XXX)}(\hat{\xi}%
_{n}-\hat{\eta}/2|\hat{\eta},\{\hat{\xi}_{n}\},\hat{\zeta}_{\pm }\hat{\kappa}%
_{\pm },\hat{\tau}_{\pm })}{\mathsf{A}_{\bar{\zeta}_{+},\bar{\zeta}_{-}}(%
\hat{\eta}/2-\hat{\xi}_{n}|\hat{\eta},\{\hat{\xi}_{n}\})}\right) ^{1-h_{n}},
\label{SoV-rational-1}
\end{equation}
of the rational $gl_{2}$ reflection algebra $(\ref{SoV-Basis-6v-Open})$ is a
co-vector basis of $\mathcal{H}$. Then, under the prescriptions:%
\begin{equation}
\eta (\varepsilon )=\varepsilon \hat{\eta},\xi _{n}(\varepsilon
)=\varepsilon \hat{\xi}_{n},\zeta _{\pm }(\varepsilon )=\varepsilon \hat{%
\zeta}_{\pm },\kappa _{\pm }(\varepsilon )=\hat{\kappa}_{\pm }+O(\varepsilon
),\text{ }\tau _{\pm }(\varepsilon )=\hat{\tau}_{\pm }+\varepsilon \breve{%
\tau}_{\pm }+O(\varepsilon ^{2}),
\end{equation}%
the SoV co-vectors of the trigonometric $U_{q}(\widehat{gl}_{2})$ reflection algebra $(\ref%
{SoV-Basis-6v-Open-trigo-0})$ admit the following power expansions in $%
\varepsilon $%
\begin{align}
& \langle h_{1},...,h_{\mathsf{N}};\eta (\varepsilon ),\{\xi
_{n}(\varepsilon )\},\zeta _{\pm }(\varepsilon ),\kappa _{\pm }(\varepsilon
),\tau _{\pm }(\varepsilon )|_{(XXZ)}\left. =\right. \langle
h_{1},...,h_{\mathsf{N}};\hat{\eta},\hat{\xi}_{n},\hat{\zeta}_{\pm }\hat{%
\kappa}_{\pm },\hat{\tau}_{\pm }|_{(XXX)}  \notag \\
& \text{ \ \ \ \ \ \ \ \ \ \ \ \ \ \ \ \ \ \ \ \ \ \ \ \ \ \ \ \ \ \ \ \ \ \
\ \ \ \ \ \ \ \ \ \ \ \ \ \ \ \ \ \ \ \ }%
+\varepsilon ^{2(\mathsf{N}-\prod_{n=1}^{\mathsf{N}}h_{n})}\overline{\langle
h_{1},...,h_{\mathsf{N}}|}_{1}+O(\varepsilon ^{4(\mathsf{N}-\prod_{n=1}^{%
\mathsf{N}}h_{n})}),  \label{SoV-trigo-1}
\end{align}%
where $\overline{\langle h_{1},...,h_{\mathsf{N}}|}_{1}$ is some finite
co-vector, as it holds:%
\begin{align}
&\frac{T_{(XXZ)}(\xi _{n}(\varepsilon )-\eta
(\varepsilon )/2|\eta (\varepsilon ),\{\xi _{n}(\varepsilon )\},\zeta _{\pm
}(\varepsilon ),\kappa _{\pm }(\varepsilon ),\tau _{\pm }(\varepsilon ))}{%
\mathsf{A}_{\alpha _{\pm }(\varepsilon ),\beta _{\pm }(\varepsilon )}(\eta
(\varepsilon )/2-\xi _{n}(\varepsilon )|\eta (\varepsilon ),\{\xi
_{n}(\varepsilon )\})}\notag\\
& \ \ \ \ \ \ \ \ \ \ \ \ \ \ \ \ \ \ \ \ \ \ \ \ \ \ \ \ \ \ \ \ \ \ \ \ \ \ \ \ \ \ \ \ \ \ \ \ \ \ \ =\frac{T_{(XXX)}(\hat{\xi}_{n}-\hat{%
\eta}/2|\hat{\eta},\{\hat{\xi}_{n}\},\hat{\zeta}_{\pm }\hat{\kappa}_{\pm },%
\hat{\tau}_{\pm })}{\mathsf{A}_{\bar{\zeta}_{+},\bar{\zeta}_{-}}(\hat{\eta}%
/2-\hat{\xi}_{n}|\hat{\eta},\{\hat{\xi}_{n}\})} +\varepsilon ^{2}\mathsf{O}_{n}+O(\varepsilon ^{4})
\end{align}%
for some finite operator $\mathsf{O}_{n}$. Clearly, the above power
expansions in $\varepsilon $ and the fact that the co-vectors $(\ref%
{SoV-rational-1})$ form by assumption a basis imply that there exists a
positive $\bar{\varepsilon}$ such that the set of co-vectors $(\ref%
{SoV-trigo-1})$ is also a basis for any $\varepsilon $ such that $0\leq
\varepsilon \leq \bar{\varepsilon}$. The statement that $(\ref%
{SoV-Basis-6v-Open-trigo-0})$ is basis for almost any choice of the
parameters $\{\eta ,\{\xi _{n}\},\zeta _{\pm },\kappa _{\pm },\tau _{\pm
}\}\in \mathbb{C}^{7+\mathsf{N}}$ is then mainly a consequence of the fact
that these co-vectors are rational function of polynomials in the variables:%
\begin{equation}
E=e^{2\eta },\,\{X_{n}=e^{2\xi _{n}}\},\,Z_{\pm }=e^{2\zeta _{\pm }},\,\kappa
_{\pm },\,T_{\pm }=e^{2\tau _{\pm }}\text{.}
\end{equation}%
More precisely, let us define the $n^{\mathsf{N}}\times n^{\mathsf{N}}$
matrices:%
\begin{align}
\mathcal{M}_{i,j}^{(XXZ)}\left( \langle S|,\eta ,\{\xi
_{n}\},\zeta _{\pm },\kappa _{\pm },\tau _{\pm }\right) &\equiv \frac{%
\langle h_{1}(i),...,h_{\mathsf{N}}(i);\eta ,\{\xi _{n}\},\zeta _{\pm
},\kappa _{\pm },\tau _{\pm }|_{(XXZ)}e_{j}\rangle}{\prod_{n=1}^{\mathsf{N%
}}\mathsf{A}_{\alpha _{\pm },\beta _{\pm }}^{h_{a}-1}(\eta /2-\xi _{n}|\eta
,\{\xi _{n}\})}, \\
\mathcal{M}_{i,j}^{(XXX)}(\langle S|,\hat{\eta},\{\hat{\xi}_{n}\},\hat{%
\zeta}_{\pm }\hat{\kappa}_{\pm },\hat{\tau}_{\pm }) &\equiv \frac{\langle
h_{1}(i),...,h_{\mathsf{N}}(i);\hat{\eta},\{\hat{\xi}_{n}\},\hat{\zeta}_{\pm
}\hat{\kappa}_{\pm },\hat{\tau}_{\pm }|_{(XXX)}e_{j}\rangle}{\prod_{n=1}^{%
\mathsf{N}}\mathsf{A}_{\bar{\zeta}_{+},\bar{\zeta}_{-}}^{h_{a}-1}(\hat{\eta}%
/2-\hat{\xi}_{n}|\hat{\eta},\{\hat{\xi}_{n}\})},
\end{align}%
for any $i,j\in \{1,...,n^{\mathsf{N}}\}$, where we have defined uniquely
the $\mathsf{N}$-tuple $(h_{1}(i),...,h_{\mathsf{N}}(i))\in
\{1,...,n\}^{\otimes \mathsf{N}}$ by:%
\begin{equation}
1+\sum_{a=1}^{\mathsf{N}}h_{a}(i)n^{a-1}=i\in \{1,...,n^{\mathsf{N}}\},\label{Isomorph-j}
\end{equation}%
and $|e_{j}\rangle\in \mathcal{H}$ is the element $j\in \{1,...,n^{\mathsf{N}}\}$ of
the elementary basis in $\mathcal{H}$. Then the condition that the set $(\ref%
{SoV-Basis-6v-Open-trigo-0})$ form a basis of co-vector in $\mathcal{H}$ is
equivalent to the condition:%
\begin{equation}
\text{det}_{n^{\mathsf{N}}}||\mathcal{M}_{i,j}^{(XXZ)}\left(
\langle S|,\eta ,\{\xi _{n}\},\zeta _{\pm },\kappa _{\pm },\tau _{\pm
}\right) ||\neq 0.  \label{Lin-indep-SoV}
\end{equation}%
Note that the above determinant is polynomial in the variables $E,\{X_{n}\},\,Z_{\pm
},\,\kappa _{\pm }$ and Laurent polynomial in $T_{\pm }=e^{2\tau _{\pm }}$. So
to prove that $(\ref{Lin-indep-SoV})$ indeed holds for almost any values of
the parameters it is enough to prove that it holds in just one point. Now by
using the power expansion in $\varepsilon $ of the trigonometric transfer
matrices we have:%
\begin{align}
&\text{det}_{n^{\mathsf{N}}}||\mathcal{M}_{i,j}^{(XXZ)}\left(
\langle S|,\eta (\varepsilon ),\{\xi _{n}(\varepsilon )\},\zeta _{\pm
}(\varepsilon ),\kappa _{\pm }(\varepsilon ),\tau _{\pm }(\varepsilon
)\right) ||  \notag \\
&\text{ \ \ \ \ \ \ \ \ \ \ \ \ \ \ \ \ \ \ \ \ \ \ \ \ \ \ \ \ \ \ }%
\left. =\right. \varepsilon ^{2\mathsf{N}(2\mathsf{N}+2)}(\text{det}_{n^{%
\mathsf{N}}}||\mathcal{M}_{i,j}^{(XXX)}(\langle S|,\hat{\eta},\{\hat{\xi%
}_{n}\},\hat{\zeta}_{\pm }\hat{\kappa}_{\pm },\hat{\tau}_{\pm
})||+O(\varepsilon ^{2})),
\end{align}%
which is nonzero for any $\varepsilon $ such that $0<\varepsilon \leq \bar{%
\varepsilon}$. This complete the proof of the Theorem.
\end{proof}

It is important to recall that Sklyanin's SoV approach \cite{Skl85} or its
generalized version by Baxter's like gauge transformations \cite{Nic12,FalKN14} works only in the case in which at least one of the two
boundary matrices is non-diagonal and furthermore the boundary parameters
satisfy the requirements%
\begin{equation}
\tau _{+}-\tau _{-}+\left( \mathsf{N}+1-2r\right) \eta \neq \epsilon
_{-}(\alpha _{-}+\beta _{-})-\epsilon _{+}(\alpha _{+}-\beta _{+})+\frac{%
i(\epsilon _{+}+\epsilon _{-})\pi }{2},  \label{Sk-Sov-cond}
\end{equation}%
for any $(r,\epsilon _{+},\epsilon _{-})\in \{1,...,\mathsf{N}\}\times
\{-1,1\}^{2}$. In our SoV approach we can define the above basis even in the
case of both diagonal boundary matrices or in non-diagonal cases which are
forbidden in the generalized Sklyanin's SoV approach.

Let us impose that there exists $(r,\epsilon _{+},\epsilon _{-})\in \{1,...,%
\mathsf{N}\}\times \{-1,1\}^{2}$:%
\begin{equation}
\tau _{+}(\varepsilon )-\tau _{-}(\varepsilon )+\left( \mathsf{N}%
+1-2r\right) \eta (\varepsilon )=\frac{i(\epsilon _{+}+\epsilon _{-})\pi }{2}%
+\sum_{l=+,-}\epsilon _{l}(\beta _{l}(\varepsilon )-l\alpha _{l}(\varepsilon
))\text{ \ \ }\forall \varepsilon \in \mathbb{C},  \label{Non-Skly-SoV}
\end{equation}%
where $\eta (\varepsilon )$, $\tau _{+}(\varepsilon )$, $\alpha _{\pm
}(\varepsilon )$ and $\beta _{\pm }(\varepsilon )$ satisfy the
prescription on the rational limit, i.e. $\left( \ref%
{Prescription-Rational-L1}\right) $-$\left( \ref{Prescription-Rational-L1+}%
\right) $ and $\left( \ref{Prescription-Rational-L2}\right) $-$\left( \ref%
{Prescription-Rational-L3}\right) $, so the above equation is equivalent to:%
\begin{eqnarray}
\hat{\tau}_{+}-\hat{\tau}_{-} &=&\frac{i(\epsilon _{+}+\epsilon _{-})\pi }{2}%
+\epsilon _{-}\hat{\beta}_{-}+\epsilon _{+}\hat{\beta}_{+},
\label{Cond-level-0} \\
\breve{\tau}_{+}-\breve{\tau}_{-} &=&\left( \mathsf{N}+1-2r\right) \hat{\eta}%
+\sum_{l=+,-}\epsilon _{l}(\breve{\beta}_{l}-l\bar{\zeta}_{\pm }).
\label{Cond-level-1}
\end{eqnarray}%
Then taking the rational limit, we obtain $T_{(XXX)}(\hat{%
\lambda}|\hat{\eta},\{\hat{\xi}\},\hat{\zeta}_{\pm },\hat{\kappa}_{\pm },%
\hat{\tau}_{\pm })$ where $\left( \ref{Cond-level-0}\right) $ is just
imposing one condition on the parameters $\hat{\kappa}_{\pm }$ and $\hat{\tau%
}_{\pm }$ which has no effect on the definition of the SoV basis in our
approach for the rational case. So by the polynomiality argument above
developed, it follows that also in the trigonometric case our approach is
defining a basis for almost any value of $\varepsilon $ and of the boundary
parameters satisfying (\ref{Cond-level-0})-(\ref{Cond-level-1}). This
finally implies that our set of co-vector stays a co-vector basis for almost
any choice of the boundary parameters satisfying the constrain (\ref%
{Non-Skly-SoV}).

Here we want to show that under some special choice of the co-vector $\langle
S|$, our SoV left basis reduces to the SoV basis associated to Sklyanin's
approach when this last one is applicable. For simplicity we show this
statement only in the case: 
\begin{equation}
K_{+}(\lambda )=\left( 
\begin{array}{cc}
a_{+}(\lambda ) & b_{+}(\lambda ) \\ 
0 & d_{+}(\lambda )%
\end{array}%
\right)
\end{equation}%
where $b_{+}(\lambda )$ may be also zero. In this case Sklyanin's approach
directly applies and the associate co-vector basis is the eigenbasis of $%
\mathcal{B}_{-}(\lambda )$, which reads (up to normalization):%
\begin{equation}
\langle \,\mathbf{h}_{-}\,|\equiv \langle \,0\,|\prod_{n=1}^{\mathsf{N}%
}\left( \mathcal{A}_{-}(\eta /2-\xi _{n})\right) ^{1-h_{n}}.
\end{equation}%
The proof is done by induction just using the identity:%
\begin{equation}
\langle 0|\mathcal{A}_{-}(\xi _{a}-\eta /2)=0\text{ \ }\forall a\in \{1,...,%
\mathsf{N}\}
\end{equation}%
and the following reflection algebra commutation relations:%
\begin{equation}
\mathcal{A}_{-}\left( \mu \right) \mathcal{A}_{-}\left( \lambda \right) =%
\mathcal{A}_{-}\left( \lambda \right) \mathcal{A}_{-}\left( \mu \right) +%
\frac{\sinh \eta }{\sinh (\lambda +\mu -\eta )}(\mathcal{B}_{-}\left(
\lambda \right) \mathcal{C}_{-}\left( \mu \right) -\mathcal{B}_{-}\left( \mu
\right) \mathcal{C}_{-}\left( \lambda \right) ).  \label{AA-BC-Req-T}
\end{equation}%
The steps in the proof for this trigonometric case are mainly the same as
those presented in the rational case so we do not repeat them here.

\subsection{Transfer matrix spectrum in our SoV approach}

Let us show here how in our SoV schema it is characterized the transfer
matrix spectrum.

\begin{theorem}
\label{SoV-Sp-Ch-gl2-Trigo}Under the same general conditions of Theorem \ref%
{SoV-Basis-6v-Open-trigo}, ensuring the existence of the left SoV basis, the
spectrum of $T(\lambda )$ is characterized by:%
\begin{equation}
\Sigma _{T}=\left\{ t(\lambda ):t(\lambda )=f_{\mathbf{h}%
=0}(\lambda )+\sum_{a=1}^{\mathsf{N}}g_{a,\mathbf{h}=0}(\lambda )x_{a},\text{
\ \ }\forall \{x_{1},...,x_{\mathsf{N}}\}\in S_{T}\right\} ,
\end{equation}%
$S_{T}$ is the set of solutions to the following system of $\mathsf{N}$
quadratic equations:%
\begin{equation}
x_{n}[f_{\mathbf{h}=0}(\xi _{n}^{(1)})+\sum_{a=1}^{\mathsf{N}}g_{a,\mathbf{h}%
=0}(\xi _{n}^{(1)})x_{a}]=\mathsf{A}_{\alpha _{\pm },\beta _{\pm }}(\xi
_{n}^{(0)})\mathsf{A}_{\alpha _{\pm },\beta _{\pm }}(-\xi _{n}^{(1)}),\text{ 
}\forall n\in \{1,...,\mathsf{N}\},  \label{Quadratic System_open}
\end{equation}%
in $\mathsf{N}$ unknown $\{x_{1},...,x_{\mathsf{N}}\}$. Moreover, $%
T(\lambda )$ has w-simple spectrum and for any $t(\lambda )\in
\Sigma _{T}$ the associated unique (up-to normalization)
eigenvector $|t\rangle $ has the following factorized  wave-function in the left SoV
basis:%
\begin{equation}
\langle h_{1},...,h_{\mathsf{N}}|t\rangle =\prod_{n=1}^{\mathsf{N}}\left( 
\frac{t(\xi _{n}-\eta /2)}{\mathsf{A}_{\alpha _{\pm },\beta _{\pm }}(\eta
/2-\xi _{n})}\right) ^{1-h_{n}}.  \label{SoV-Ch-T-eigenV-trig-open}
\end{equation}
\end{theorem}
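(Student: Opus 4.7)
The strategy is to transport the proof of Theorem~\ref{SoV-Sp-Ch-gl2-R} to the trigonometric setting, since the two proofs share the same algebraic backbone and differ only in the explicit form of the interpolation ingredients $f_{\mathbf{h}}$ and $g_{a,\mathbf{h}}$. In the forward direction, if $t(\lambda)$ is an eigenvalue of $T(\lambda)$, then it is an even trigonometric polynomial fixed by the asymptotics (\ref{Asymp-Cent-trigo}) and the central values (\ref{C-1})--(\ref{C-2}), together with $\mathsf{N}$ free values which we take at $\xi_a^{(1)}$. The operator fusion identities evaluated at the inhomogeneities then force $x_a = t(\xi_a^{(1)})$ to satisfy the quadratic system (\ref{Quadratic System_open}).

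The reverse direction is the heart of the statement. Given any solution $\{x_a\}$, we define $t(\lambda)$ by the interpolation formula (\ref{T-open-interp-T}) with $\mathbf{h}=0$, and introduce the candidate right vector $|t\rangle$ through the factorized wave-function (\ref{SoV-Ch-T-eigenV-trig-open}). Since Theorem~\ref{SoV-Basis-6v-Open-trigo} ensures that the SoV family is a basis, it suffices to verify
\begin{equation*}
\langle h_{1},\ldots ,h_{\mathsf{N}}|\,T(\lambda )\,|t\rangle = t(\lambda )\,\langle h_{1},\ldots ,h_{\mathsf{N}}|t\rangle ,\qquad \forall \mathbf{h}\in \{0,1\}^{\otimes \mathsf{N}},
\end{equation*}
and, using (\ref{T-open-interp-T}), this reduces to computing $\langle \mathbf{h}|\,T(\xi_{a}^{(h_{a})})\,|t\rangle$ for each $a$.

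Two cases occur. If $h_a=1$ the factor $T(\xi_a-\eta/2)$ is absent from $\langle \mathbf{h}|$, so right-multiplication by $T(\xi_a^{(1)})$ produces $\mathsf{A}_{\alpha_{\pm},\beta_{\pm}}(\eta/2-\xi_a)$ times the co-vector $\langle \mathbf{h}'_a|$ with $h_a$ flipped to $0$. If $h_a=0$ the factor is present, and right-multiplication by $T(\xi_a^{(0)})$ creates the product $T(\xi_a^{(1)})T(\xi_a^{(0)})$, which collapses through the fusion identity into the scalar $\mathsf{A}_{\alpha_{\pm},\beta_{\pm}}(\xi_a^{(0)})\mathsf{A}_{\alpha_{\pm},\beta_{\pm}}(-\xi_a^{(1)})$ multiplying $\langle \mathbf{h}'_a|$ with $h_a$ flipped to $1$. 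Matching each case with the factorized form of $\langle \mathbf{h}'_a|t\rangle$ and invoking the numerical fusion relation $x_a\cdot t(\xi_a^{(0)}) = \mathsf{A}_{\alpha_{\pm},\beta_{\pm}}(\xi_a^{(0)})\mathsf{A}_{\alpha_{\pm},\beta_{\pm}}(-\xi_a^{(1)})$ for the candidate eigenvalue yields $\langle \mathbf{h}|\,T(\xi_a^{(h_a)})\,|t\rangle = t(\xi_a^{(h_a)})\langle \mathbf{h}|t\rangle$ in both cases; summing via (\ref{T-open-interp-T}) completes the verification.

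The w-simplicity of the spectrum and uniqueness up to normalization follow from the injectivity of the map $t \mapsto (t(\xi_1^{(1)}),\ldots ,t(\xi_{\mathsf{N}}^{(1)}))$ together with the factorized form of $|t\rangle$, which cannot vanish because $\mathsf{A}_{\alpha_{\pm},\beta_{\pm}}(\eta/2-\xi_n)$ is non-zero for generic parameters in Theorem~\ref{SoV-Basis-6v-Open-trigo}. The main potential obstacle is purely bookkeeping: one must check that the interpolation data encoded in $f_{\mathbf{h}}$ and $g_{a,\mathbf{h}}$ really reproduce the analytic structure of $T(\lambda)$ at the four central points $\pm \eta/2$, $\pm (\eta/2 - i\pi/2)$ and at the two asymptotic directions $\lambda \to \pm \infty$ with coefficient (\ref{Asymp-Cent-trigo}). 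Once this trigonometric interpolation identity is certified, no new algebraic ingredient is needed and the rational argument of Theorem~\ref{SoV-Sp-Ch-gl2-R} goes through unchanged.
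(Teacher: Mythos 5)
Your proposal is correct and follows essentially the same route as the paper: the forward direction via the fusion relations evaluated at the inhomogeneities, and the reverse direction by acting with $T(\xi_a^{(h_a)})$ on the SoV co-vectors, splitting into the cases $h_a=1$ (shift produces $\mathsf{A}_{\alpha_\pm,\beta_\pm}(\eta/2-\xi_a)$ times the flipped co-vector) and $h_a=0$ (the operator fusion identity collapses $T(\xi_a^{(0)})T(\xi_a^{(1)})$ to a scalar), then recombining through the interpolation formula \rf{T-open-interp-T}. The only cosmetic difference is that you flag the verification of the trigonometric interpolation identity as a step to certify, whereas the paper takes \rf{T-open-interp-T} as already established from the central values and asymptotics listed earlier in the section.
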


\begin{proof}
The system of $\mathsf{N}$ quadratic equations $(\ref{Quadratic System_open})$ in 
$\mathsf{N}$ unknown $\{x_{1},...,x_{\mathsf{N}}\}$ is also in the
trigonometric case the rewriting of the transfer matrix fusion equations for
the eigenvalues. So, any transfer matrix eigenvalue is solution of this
system and the associated eigenvector $|t\rangle $ admits the
characterization $(\ref{SoV-Ch-T-eigenV-trig-open})$ in the left SoV basis.
The reverse statement is proven following the same steps of the rational
case by proving that the vector $|t\rangle $ characterized by $(\ref%
{SoV-Ch-T-eigenV-trig-open})$ and associated to any polynomial $t(\lambda )$
satisfying the above system of equations is a transfer matrix eigenvector, i.e. we have to
show:%
\begin{equation}
\langle h_{1},...,h_{\mathsf{N}}|T(\lambda )|t\rangle =t(\lambda
)\langle h_{1},...,h_{\mathsf{N}}|t\rangle ,\text{ }\forall \{h_{1},...,h_{%
\mathsf{N}}\}\in \{0,1\}^{\otimes \mathsf{N}}.
\end{equation}%
We compute first the following matrix elements:%
\begin{equation}
\langle h_{1},...,h_{a},...,h_{\mathsf{N}}|T(\xi _{a}^{\left(
h_{a}\right) })|t\rangle =\left\{ 
\begin{array}{l}
\mathsf{A}_{\alpha _{\pm },\beta _{\pm }}(\eta /2-\xi _{n})\langle
h_{1},...,h_{a}^{\prime }=0,...,h_{\mathsf{N}}|t\rangle \text{ \ if \ }%
h_{a}=1 \\ 
\frac{\mathsf{A}_{\alpha _{\pm },\beta _{\pm }}(\xi _{n}^{(0)})\mathsf{A}%
_{\alpha _{\pm },\beta _{\pm }}(-\xi _{n}^{(1)})}{\mathsf{A}_{\alpha _{\pm
},\beta _{\pm }}(-\xi _{n}^{(1)})}\langle h_{1},...,h_{a}^{\prime }=1,...,h_{%
\mathsf{N}}|t\rangle \text{ \ if \ }h_{a}=0%
\end{array}%
\right.
\end{equation}%
which by the definition of the state $|t\rangle $ can be rewritten as:%
\begin{equation}
\langle h_{1},...,h_{a},...,h_{\mathsf{N}}|T(\xi _{a}^{\left(
h_{a}\right) })|t\rangle =\left\{ 
\begin{array}{l}
t(\xi _{a}^{\left( 1\right) })\prod_{n\neq a,n=1}^{\mathsf{N}}\left( \frac{%
t(\xi _{n}-\eta /2)}{\mathsf{A}_{\alpha _{\pm },\beta _{\pm }}(\eta /2-\xi
_{n})}\right) ^{1-h_{n}}\text{ \ \ if \ }h_{a}=1 \\ 
\frac{\mathsf{A}_{\alpha _{\pm },\beta _{\pm }}(\xi _{n}^{(0)})\mathsf{A}%
_{\alpha _{\pm },\beta _{\pm }}(-\xi _{n}^{(1)})}{\mathsf{A}_{\alpha _{\pm
},\beta _{\pm }}(-\xi _{n}^{(1)})\prod_{n\neq a,n=1}^{\mathsf{N}}\left( 
\frac{t(\xi _{n}-\eta /2)}{\mathsf{A}_{\alpha _{\pm },\beta _{\pm }}(\eta
/2-\xi _{n})}\right) ^{h_{n}-1}}\ \text{if\ }h_{a}=0%
\end{array}%
\right.
\end{equation}%
and finally, by the fusion equation satisfied by $t(\lambda )$, it reads:%
\begin{equation}
\langle h_{1},...,h_{a},...,h_{\mathsf{N}}|T(\xi _{a}^{\left(
h_{a}\right) })|t\rangle =\left\{ 
\begin{array}{l}
t(\xi _{a}^{\left( 1\right) })\prod_{n\neq a,n=1}^{\mathsf{N}}\left( \frac{%
t(\xi _{n}-\eta /2)}{\mathsf{A}_{\alpha _{\pm },\beta _{\pm }}(\eta /2-\xi
_{n})}\right) ^{1-h_{n}}\text{ \ \ \ if \ }h_{a}=1 \\ 
t(\xi _{a}^{\left( 0\right) })\prod_{n=1}^{\mathsf{N}}\left( \frac{t(\xi
_{n}-\eta /2)}{\mathsf{A}_{\alpha _{\pm },\beta _{\pm }}(\eta /2-\xi _{n})}%
\right) ^{1-h_{n}}\text{ \ \ if \ }h_{a}=0%
\end{array}%
\right. ,
\end{equation}%
and so:%
\begin{equation}
\langle h_{1},...,h_{a},...,h_{\mathsf{N}}|T(\xi _{a}^{\left(
h_{a}\right) })|t\rangle =t(\xi _{a}^{\left( h_{a}\right) })\langle
h_{1},...,h_{a},...,h_{\mathsf{N}}|t\rangle .
\end{equation}%
From these identities and by using the interpolation formula:%
\begin{equation}
T(\lambda )=f_{\mathbf{h}}(\lambda )+\sum_{a=1}^{\mathsf{N}}g_{a,%
\mathbf{h}}(\lambda )T(\xi _{a}^{(h_{a})}),
\end{equation}%
we get 
\begin{equation}
\langle h_{1},...,h_{\mathsf{N}}|T(\lambda )|t\rangle =\left( f_{%
\mathbf{h}}(\lambda )+\sum_{a=1}^{\mathsf{N}}g_{a,\mathbf{h}}(\lambda )t(\xi
_{a}^{\left( h_{a}\right) })\right) \langle h_{1},...,h_{\mathsf{N}%
}|t\rangle ,
\end{equation}%
proving our statement.
\end{proof}

The previous characterization of the spectrum allows to introduce an equivalent description in terms of a 
functional equation, the so-called quantum
spectral curve equation, which in the case at hand is a second order Baxter's type
difference equation. In particular, this result coincides with the Theorem
3.1 of \cite{KitMN14}, the only difference being that the applicability
of the result extends now to the case of both diagonal boundary matrices
and non-diagonal boundary matrices even satisfying the condition $\left( \ref%
{Sk-Sov-cond}\right) $.

\begin{theorem}
Under the same conditions of Theorem \ref{SoV-Basis-6v-Open-trigo}, ensuring
the existence of the left SoV basis, $t(\lambda )\in \Sigma _{T}$
if and only if there exists and is unique the polynomial 
\begin{equation}
Q_{t}(\lambda )=\prod_{a=1}^{p_{K_{+,-}}}\left( \cosh 2\lambda -\cosh
2\lambda _{a}\right) ,\qquad \lambda _{1},\ldots ,\lambda _{p_{K_{+,-}}}\in 
\mathbb{C}\setminus \{\pm \xi _{1}^{(0)},\ldots ,\pm \xi _{\mathsf{N}%
}^{(0)}\},
\end{equation}%
such that 
\begin{equation}
t(\lambda )\,Q_{t}(\lambda )=\mathsf{A}_{\alpha _{\pm },\beta _{\pm
}}(\lambda )\,Q_{t}(\lambda -\eta )+\mathsf{A}_{\alpha _{\pm },\beta _{\pm
}}(-\lambda )\,Q_{t}(\lambda +\eta )+F(\lambda ),  \label{InH-BAX-eq-trigo}
\end{equation}%
with 
\begin{equation}
F(\lambda )=F_{0}\,(\cosh ^{2}2\lambda -\cosh ^{2}\eta )\prod_{b=1}^{\mathsf{%
N}}\prod_{i=0}^{1}(\cosh 2\lambda -\cosh 2\xi _{b}^{(i)}),
\label{InH-F-trigo}
\end{equation}%
where%
\begin{eqnarray}
F_{0} &=&\frac{\kappa _{+}\kappa _{-}\left( \cosh (\tau _{+}-\tau
_{-})-\cosh (\alpha _{+}+\alpha _{-}-\beta _{+}+\beta _{-}-(\mathsf{N}%
+1)\eta )\right) }{2^{\mathsf{N}-1}\sinh \zeta _{+}\sinh \zeta _{-}}, \\
p_{K_{+,-}} &=&(1-\delta _{0,F_{0}})\mathsf{N}+\delta _{0,F_{0}}\ p,\text{ \
with }p\leq \mathsf{N}.
\end{eqnarray}
\end{theorem}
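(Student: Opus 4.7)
The plan is to establish the equivalence between the discrete SoV characterization of $\Sigma_T$ obtained in Theorem \ref{SoV-Sp-Ch-gl2-Trigo} and the inhomogeneous Baxter functional equation \eqref{InH-BAX-eq-trigo}, following the strategy of \cite{Nic10a,KitMN14} used in Theorem 3.2 of \cite{KitMNT17} for the rational counterpart. A parallel route would be to transport the rational result directly to the trigonometric setting via the controlled limit \eqref{Prescription-Rational-L1}--\eqref{Prescription-Rational-L3}, mirroring what was done for the SoV basis in Theorem \ref{SoV-Basis-6v-Open-trigo}; I will however describe the direct argument, as it makes the role of the inhomogeneity $F$ transparent.

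For the direction $(\Leftarrow)$, starting from a polynomial $Q_t$ satisfying \eqref{InH-BAX-eq-trigo}, I would specialize the Baxter equation to the separation points $\lambda=\xi_n^{(0)}$ and $\lambda=\xi_n^{(1)}$. The identities $\mathsf{A}_{\alpha_\pm,\beta_\pm}(\xi_n^{(1)})=0$ and $\mathsf{A}_{\alpha_\pm,\beta_\pm}(-\xi_n^{(0)})=0$ (both following from the vanishing of the factor $a(\xi_n^{(1)})$, respectively $d(\xi_n^{(0)})$ inside $\mathsf{A}$), together with $F(\pm\xi_n^{(h)})=0$ coming from the explicit form \eqref{InH-F-trigo}, reduce those two specializations to the pair
\begin{equation*}
t(\xi_n^{(0)})\,Q_t(\xi_n^{(0)})=\mathsf{A}_{\alpha_\pm,\beta_\pm}(\xi_n^{(0)})\,Q_t(\xi_n^{(1)}),\qquad t(\xi_n^{(1)})\,Q_t(\xi_n^{(1)})=\mathsf{A}_{\alpha_\pm,\beta_\pm}(-\xi_n^{(1)})\,Q_t(\xi_n^{(0)}).
\end{equation*}
The assumption $\lambda_a\notin\{\pm\xi_1^{(0)},\dots,\pm\xi_{\mathsf{N}}^{(0)}\}$ ensures $Q_t(\pm\xi_n^{(h)})\neq 0$, so multiplying the two relations gives precisely the quadratic fusion system \eqref{Quadratic System_open}. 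Matching the asymptotic behaviour of \eqref{InH-BAX-eq-trigo} at $\lambda\to\pm\infty$, where $F_0$ is forced by \eqref{Asymp-Cent-trigo}, and the central values \eqref{C-1}--\eqref{C-2}, fixes $t(\lambda)$ to be of the interpolation form \eqref{T-open-interp-T}, thereby placing $t$ in $\Sigma_T$ by Theorem \ref{SoV-Sp-Ch-gl2-Trigo}.

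For the direction $(\Rightarrow)$, given $t\in\Sigma_T$ I would construct $Q_t$ by reversing this argument. The ratios $Q_t(\xi_n^{(0)})/Q_t(\xi_n^{(1)})$ are uniquely determined by either of the two relations above in terms of $t$; together with the prescribed degree $p_{K_{+,-}}$ in $\cosh 2\lambda$ and the leading coefficient fixed by asymptotics, this yields $Q_t$ by an interpolation-type construction on the $2\mathsf{N}$ nodes $\pm\xi_n^{(0)}$. One then verifies that the combination $\mathsf{A}(\lambda)\,Q_t(\lambda-\eta)+\mathsf{A}(-\lambda)\,Q_t(\lambda+\eta)+F(\lambda)-t(\lambda)\,Q_t(\lambda)$ is an even polynomial in $\lambda$ of controlled degree that vanishes at $\pm\xi_n^{(h)}$ and at the central points $\pm\eta/2$, $\pm(\eta/2-i\pi/2)$, forcing it to vanish identically. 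Uniqueness of $Q_t$ follows because its values on the SoV nodes are rigidly fixed by $t$ through the quadratic system.

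The main obstacle is expected to be the degenerate regime $F_0=0$. In this stratum $Q_t$ has reduced degree $p_{K_{+,-}}=p\leq\mathsf{N}$, and the interpolation above becomes underdetermined by the SoV data alone: the precise value of $p$ and the location of the missing zeros must be pinned down by extra input, typically a companion polynomial $P_t$ together with a Wronskian-type relation, exactly as for the rational analogue. This degenerate stratum is also the most delicate when passing through the rational limit, since the prefactor $F_0$ degenerates and one must keep track of the sub-leading asymptotics of $t(\lambda)$ in order to recover the correct polynomial data on the trigonometric side.
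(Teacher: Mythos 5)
Your proposal follows essentially the same route as the paper: the paper's own proof of this theorem is a one-line appeal to the ``standard arguments'' of \cite{Nic10a,KitMN14} --- namely, specialization of the inhomogeneous $TQ$-equation at $\lambda=\xi_n^{(0)},\xi_n^{(1)}$ (where $\mathsf{A}_{\alpha_\pm,\beta_\pm}(-\xi_n^{(0)})$, $\mathsf{A}_{\alpha_\pm,\beta_\pm}(\xi_n^{(1)})$ and $F$ vanish) to recover the quadratic fusion system, and conversely an interpolation construction of $Q_t$ from the ratios $Q_t(\xi_n^{(0)})/Q_t(\xi_n^{(1)})$ followed by a zero-counting/asymptotics argument --- which is precisely what you spell out. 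Your flagging of the degenerate stratum $F_0=0$, where the degree of $Q_t$ drops and the construction needs the extra Wronskian-type datum, correctly identifies the only point the sketch leaves open, and this is consistent with how the paper handles the analogous $\bar{\mathsf{b}}_-\bar{\mathsf{c}}_+=0$ case in the rational theorem.
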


\begin{proof}
Under the conditions ensuring the existence of the left SoV basis, the
equivalence of the first discrete SoV characterization with this functional
equation is proven by standard arguments as introduced in \cite{Nic10a,KitMN14}, see for example the proof of the Theorem 3.1 of  \cite{KitMN14}.
\end{proof}

We have already proven that the existence of our SoV basis implies that the
transfer matrix spectrum is simple. Now following the general
Proposition 2.6 of \cite{MaiN18} we can also show that in general the
transfer matrix is diagonalizable with simple spectrum.

\begin{theorem}
For almost any couple $K_{-}(\lambda )$ and $K_{+}(\lambda )$ of boundary
matrices, any choice of the co-vector $\langle S|$, of the inhomogeneity
parameters satisfying the condition $(\ref{Inhomog-cond})$ and of the parameter 
$\eta $, we have that for any eigenvalue $t(\lambda )$ of $%
T(\lambda)$, it holds:%
\begin{equation}
\langle t|t\rangle \neq 0,
\end{equation}%
where $|t\rangle $ and $\langle t|$ are the unique eigenvector and
eigenco-vector associated to $t(\lambda )$, and $T(\lambda)$ is
diagonalizable with simple spectrum.
\end{theorem}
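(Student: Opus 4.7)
The plan is to mirror the structure of the analogous rational statement (Theorem 2.5), adapted to the trigonometric setting. The diagonalizability and simplicity will be deduced from the single non-orthogonality statement $\langle t|t\rangle \neq 0$: since Theorem \ref{SoV-Sp-Ch-gl2-Trigo} already gives a full set of left and right separate eigenstates (one per transfer matrix eigenvalue), proving that for each such $t(\lambda)$ the associated right eigenvector is not orthogonal to the associated left eigenco-vector rules out any non-trivial Jordan block, forcing $T(\lambda)$ to be diagonalizable; simplicity is then automatic from the SoV factorized form $(\ref{SoV-Ch-T-eigenV-trig-open})$, since distinct eigenvalues produce distinct wave functions and thus linearly independent eigenvectors.

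To establish $\langle t|t\rangle \neq 0$ generically, I would first use the scalar product formula for separate (co-)vectors derived in Appendix A, which in the trigonometric case takes an analogous form with the rational Vandermonde $\hat{V}$ replaced by a trigonometric Vandermonde in $\cosh 2\xi_n^{(h_n)}$. Together with the wave function $(\ref{SoV-Ch-T-eigenV-trig-open})$ and its right-SoV dual (obtained from the fusion identities $T(\xi_n^{(0)})T(\xi_n^{(1)}) = \mathsf{A}_{\alpha_\pm,\beta_\pm}(\xi_n^{(0)})\mathsf{A}_{\alpha_\pm,\beta_\pm}(-\xi_n^{(1)})$), this expresses $\langle t|t\rangle$ as a finite sum over $\mathbf{h}\in\{0,1\}^{\mathsf{N}}$ of products of ratios $t(\xi_a-\eta/2)/\mathsf{A}_{\alpha_\pm,\beta_\pm}(\eta/2-\xi_a)$ and their fusion counterparts.

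The second step is the asymptotic analysis. Following the rational proof, I would impose $\xi_a = a\xi$ (as in the proof of Theorem \ref{SoV-Basis-6v-Open-trigo}) and send $e^{2\xi}\to\infty$. Using $(\ref{Asymp-Cent-trigo})$ and the corresponding leading behavior of $\mathsf{A}_{\alpha_\pm,\beta_\pm}$, each ratio tends to one of two signs $\mathsf{k}_{\theta_a}$ determined by the eigenvalues of the relevant boundary combination (the analogue of $\mathcal{M}^{(-)}\mathcal{M}^{(+)}$ in the non-commuting case, or of the diagonal boundary matrix in the commuting one). The scalar product then collapses to an expression proportional to $\mathsf{N}!\prod_a(\mathsf{k}_{\hat{\theta}_a}-\mathsf{k}_{\theta_a})$, which is non-zero for generic boundary parameters under the natural non-degeneracy conditions.

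A more economical alternative, and the one I would most likely adopt given that it is already alluded to by citing Proposition 2.6 of \cite{MaiN18}, is to use the rational limit argument. Under the prescription $(\ref{Prescription-Rational-L1})$--$(\ref{Prescription-Rational-L3})$ one has a well-defined $\varepsilon\to 0$ limit in which $\langle t|t\rangle_{(XXZ)}$, after division by a suitable power of $\sinh\varepsilon$, tends to the rational $\langle t|t\rangle_{(XXX)}$, which is already known to be non-zero by Theorem 2.5. Since $\langle t|t\rangle$ depends rationally on the variables $E=e^{2\eta}$, $X_n=e^{2\xi_n}$, $Z_\pm = e^{2\zeta_\pm}$, $\kappa_\pm$, $T_\pm = e^{2\tau_\pm}$, non-vanishing at one point of parameter space propagates to almost every point by exactly the same polynomial/analyticity argument used in the proof of Theorem \ref{SoV-Basis-6v-Open-trigo}. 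The main obstacle in this strategy is the careful bookkeeping of powers of $\sinh\varepsilon$ so that the leading rational limit of $\langle t|t\rangle$ is genuinely the already-proven non-vanishing rational scalar product and not a subleading term that could accidentally cancel; once this accounting is controlled, the rest of the argument is essentially the one carried out in the trigonometric SoV basis theorem.
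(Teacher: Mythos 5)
Your preferred strategy — taking the rational limit under the prescription $(\ref{Prescription-Rational-L1})$--$(\ref{Prescription-Rational-L3})$, invoking the non-vanishing of $\langle t|t\rangle$ already established in the rational case, and propagating non-vanishing to almost all parameter values via the algebraic/rational dependence of eigenvalues and eigenstates on the representation parameters — is exactly the argument the paper gives (its proof is in fact just this, stated in one sentence). Your additional remarks on the $\sinh\varepsilon$ power bookkeeping and the alternative direct asymptotic computation are sound but go beyond what the paper records.
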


\begin{proof}
The proof follows taking the rational limit and using the result proven in
this case and then by using the fact that eigenvalues and eigenstates are algebraic functions in the parameter of the representations to deduce that the statement is true
for almost any values of the parameters in the trigonometric case.
\end{proof}

\section{SoV for fundamental representations of $Y(gl_{3})$ reflection algebra}
Here, we develop the SoV approach, from the construction of the SoV basis up to the functional 
equation characterization of the transfer matrix spectrum, for the most general fundamental 
representations of the $Y(gl_{3})$ reflection algebra. This spectral problem has been already studied in the 
Analytic and Nested Algebraic Bethe Ansatz framework in \cite{deVG93a,MezN91b,ArnACDFR04,GalM05,ArnCDFR05,BelR09,Nep10}, under 
some special type of boundary conditions. More recently, it has been analyzed in \cite{CaoYSW14} under general 
boundary conditions by a modified version of Analytic Bethe Ansatz producing an Ansatz for the 
transfer matrix eigenvalues.
\subsection{Fundamental representations of $Y(gl_{3})$ reflection algebra}
We consider here the reflection algebra associated to the rational $gl_{3}$ $%
R$-matrix:%
\begin{equation}
R_{a,b}(\lambda )=\lambda I_{a,b}+\eta \mathbb{P}_{a,b}=\left( 
\begin{array}{ccc}
a_{1}(\lambda ) & b_{1} & b_{2} \\ 
c_{1} & a_{2}(\lambda ) & b_{3} \\ 
c_{2} & c_{3} & a_{3}(\lambda )%
\end{array}%
\right) \in \mathrm{End}(V_{a}\otimes V_{b}),
\end{equation}%
where $V_{a}\cong V_{b}\cong \mathbb{C}^{3}$ and we have defined:%
\begin{align}
& a_{j}(\lambda )\left. =\right. \left( 
\begin{array}{ccc}
\lambda +\eta \delta _{j,1} & 0 & 0 \\ 
0 & \lambda +\eta \delta _{j,2} & 0 \\ 
0 & 0 & \lambda +\eta \delta _{j,3}%
\end{array}%
\right) ,\text{ \ \ }\forall j\in \{1,2,3\},  \notag \\
& b_{1}\left. =\right. \left( 
\begin{array}{ccc}
0 & 0 & 0 \\ 
\eta & 0 & 0 \\ 
0 & 0 & 0%
\end{array}%
\right) ,\text{ \ }b_{2}\left. =\right. \left( 
\begin{array}{ccc}
0 & 0 & 0 \\ 
0 & 0 & 0 \\ 
\eta & 0 & 0%
\end{array}%
\right) ,\text{ \ }b_{3}\left. =\right. \left( 
\begin{array}{ccc}
0 & 0 & 0 \\ 
0 & 0 & 0 \\ 
0 & \eta & 0%
\end{array}%
\right) ,  \notag \\
& c_{1}\left. =\right. \left( 
\begin{array}{ccc}
0 & \eta & 0 \\ 
0 & 0 & 0 \\ 
0 & 0 & 0%
\end{array}%
\right) ,\text{ \ }c_{2}\left. =\right. \left( 
\begin{array}{ccc}
0 & 0 & \eta \\ 
0 & 0 & 0 \\ 
0 & 0 & 0%
\end{array}%
\right) ,\text{ \ }c_{3}\left. =\right. \left( 
\begin{array}{ccc}
0 & 0 & 0 \\ 
0 & 0 & \eta \\ 
0 & 0 & 0%
\end{array}%
\right) ,
\end{align}%
which satisfies the Yang-Baxter equation:%
\begin{equation}
R_{12}(\lambda -\mu )R_{13}(\lambda )R_{23}(\mu )=R_{23}(\mu )R_{13}(\lambda
)R_{12}(\lambda -\mu )\in \mathrm{End}(V_{1}\otimes V_{2}\otimes V_{3}).
\end{equation}%
Let us introduce the following boundary matrices \cite{MezN91,deVG93,Kul96,MinRS01}:
\begin{equation}
K_{\pm }(\lambda )=I\mp \frac{\lambda -3\delta _{\pm 1,1}\eta /2}{\zeta
_{\pm }}\mathcal{M}^{\left( \pm \right) },
\end{equation}%
where%
\begin{equation}
(\mathcal{M}^{\left( \pm \right) })^{2}=r^{\left( \pm \right) }I,r^{\left(
\pm \right) }=1,0,
\end{equation}%
and in the case $r^{\left( \pm \right) }=1$ 
\begin{equation}
\mathcal{M}^{\left( \pm \right) }=W^{\left( \pm \right) }\left( 
\begin{array}{ccc}
\epsilon _{1}^{\left( \pm \right) } & 0 & 0 \\ 
0 & \epsilon _{2}^{\left( \pm \right) } & 0 \\ 
0 & 0 & \epsilon _{3}^{\left( \pm \right) }%
\end{array}%
\right) (W^{\left( \pm \right) })^{-1},
\end{equation}%
for any fixed invertible $W^{\left( \pm \right) }\in \mathrm{End}(V)$, where: 
\begin{equation}
\epsilon _{j}^{\left( \pm \right) }=1\text{ for }j\in \{1,..,p_{\pm }\},%
\text{ \ }\epsilon _{j}^{\left( \pm \right) }=-1\text{ for }j\in \{p_{\pm
}+1,..,3\},
\end{equation}%
for some $p_{\pm }\in \{0,1,2,3\}$. These $K_{\pm }$-matrices satisfy the
following reflection equations:%
\begin{equation}
R_{ab}(\lambda -\mu )\,K_{-,a}(\lambda )\,R_{ba}(\lambda +\mu )\,K_{-,b}(\mu
)=K_{-,b}(\mu )\,R_{ab}(\lambda +\mu )\,K_{-,a}(\lambda )\,R_{ba}(\lambda
-\mu ),
\end{equation}%
and%
\begin{equation}
R_{ab}(\mu -\lambda )\,K_{+,a}(\lambda )\,R_{ba}(-\lambda -\mu -3\eta
)\,K_{+,b}(\mu )=K_{+,b}(\mu )\,R_{ab}(-\lambda -\mu -3\eta
)\,K_{+,a}(\lambda )\,R_{ba}(\mu -\lambda ).
\end{equation}%
We can define the following bulk monodromy matrix:%
\begin{equation}
M_{a}(\lambda )\equiv R_{a,\mathsf{N}}(\lambda -\xi _{\mathsf{N}})\cdots
R_{a,1}(\lambda -\xi _{1})\in \mathrm{End}(V_{a}\otimes \mathcal{H}),
\end{equation}%
satisfying the Yang-Baxter algebra associated to $R$, where $\mathcal{H}=\bigotimes_{n=1}^{%
\mathsf{N}}V_{n}$ is the Hilbert space of a lattice model with $\mathsf{N}$ sites, having in each lattice site a local Hilbert space given by a  fundamental
representation. We can then define the boundary monodromy matrix: 
\begin{equation}
\mathcal{U}_{-,a}(\lambda )=M_{a}(\lambda )\,K_{-,a}(\lambda )\,\hat{M}%
_{a}(\lambda )\in \mathrm{End}(V_{a}\otimes \mathcal{H}),
\end{equation}%
satisfying the above reflection equation, where we have defined:%
\begin{equation}
\hat{M}_{a}(\lambda )\equiv R_{a,1}(\lambda +\xi _{1})\cdots R_{a,\mathsf{N}%
}(\lambda +\xi _{\mathsf{N}})\in \mathrm{End}(V_{a}\otimes \mathcal{H}).
\end{equation}%
Then, the  transfer matrix,%
\begin{equation}
T(\lambda )=\text{tr}_{V_{a}}\{K_{+,a}(\lambda )\,M_{a}(\lambda
)\,K_{-,a}(\lambda )\,\hat{M}_{a}(\lambda )\}=\text{tr}_{V_{a}}\left\{
K_{+,a}(\lambda )\,\mathcal{U}_{-,a}(\lambda )\right\} \in \mathrm{End}(%
\mathcal{H}),
\end{equation}%
defines a one-parameter family of commuting operators on $\mathcal{H}$ \cite{Skl88}.

It is interesting to remark that given a couple of integers $p_{\pm }\in \{0,1,2,3\}$,
then the following identity holds:%
\begin{eqnarray}
T(\lambda |p_{\pm },\zeta _{\pm }) &=&\text{tr}_{V_{a}}\{K_{+,a}(\lambda
|p_{+},\zeta _{+})\,M_{a}(\lambda )\,K_{-,a}(\lambda |p_{-},\zeta _{-})\,%
\hat{M}_{a}(\lambda )\} \\
&=&\mathcal{C}T(\lambda |p_{\pm }^{\prime }=3-p_{\pm },\zeta _{\pm }^{\prime
}=-\zeta _{\pm })\mathcal{C},
\end{eqnarray}%
where we have used that%
\begin{eqnarray}
K_{\pm ,a}(\lambda |p_{\pm },\zeta _{\pm }) &=&C_{a}K_{\pm ,a}(\lambda
|p_{\pm }^{\prime }=3-p_{\pm },\zeta _{\pm }^{\prime }=-\zeta _{\pm })C_{a},
\\
C_{a} &=&\left( 
\begin{array}{ccc}
0 & 0 & 1 \\ 
0 & 1 & 0 \\ 
1 & 0 & 0%
\end{array}%
\right) ,\text{ \ \ }\mathcal{C}=\otimes _{n=1}^{\mathsf{N}}C_{n}.
\end{eqnarray}%
In the following we consider only couples of $p_{\pm }\in \{0,1,2,3\}$ which
are not complementary in this sense as those complementary follows by the
above identity. Let us remark moreover that in the case $r^{\left( \pm
\right) }=1$, it holds:%
\begin{equation}
\mathcal{M}^{\left( +\right) }\mathcal{M}^{\left( -\right) }=W^{\left(
+,-\right) }\left( 
\begin{array}{ccc}
\left( -1\right) ^{(3-p_{+})(3-p_{-})} & 0 & 0 \\ 
0 & e^{-\alpha } & 0 \\ 
0 & 0 & e^{\alpha }%
\end{array}%
\right) (W^{\left( +,-\right) })^{-1},  \label{Gen-Form-B+-}
\end{equation}%
for some invertible $3\times 3$ matrix $W^{\left( +,-\right) }$ and $\alpha
\in \mathbb{C}$, being 
\begin{eqnarray}
\text{det}\mathcal{M}^{\left( +\right) }\mathcal{M}^{\left( -\right) }
&=&\left( -1\right) ^{(3-p_{+})(3-p_{-})}, \\
\text{tr}\left( \mathcal{M}^{\left( +\right) }\mathcal{M}^{\left( -\right)
}\right) ^{k} &=&\text{tr}\left( \mathcal{M}^{\left( -\right) }\mathcal{M}%
^{\left( +\right) }\right) ^{k}=\text{tr}\left( \mathcal{M}^{\left( +\right)
}\mathcal{M}^{\left( -\right) }\right) ^{-k}\text{ \ \ }\forall k\in \mathbb{%
Z}\text{.}
\end{eqnarray}
Let us here follow the standard fusion procedure of $R$-matrices \cite{KulRS81,KulR82,Res83,KirR86} and boundary $K$-matrices \cite{MezN91}. We define, the following antisymmetric projectors\footnote{Here, clearly it holds $P_{1,...,m}^{-}=0$ for $m\geq 4$
for the current case $V_{i}\cong \mathbb{C}^{3}$ for any $i\in \{1,...,%
\mathsf{N}\}$.}:
\begin{equation}
P_{1,...,m}^{-}=\frac{\sum_{\pi \in S_{m}}\left( -1\right) ^{\sigma _{\pi
}}P_{\pi }}{m!}\in \mathrm{End}(V_{1}\otimes \cdots \otimes V_{m}),
\end{equation}%
where $S_{m}$ is the set of the permutations $\pi $ of $%
\{1,...,m\}$, ${\sigma _{\pi
}}$ is the signature of $\pi$,  and we have defined%
\begin{equation}
P_{\pi }(v_{1}\otimes \cdots \otimes v_{m})=v_{\pi (1)}\otimes \cdots
\otimes v_{\pi (m)}\in V_{1}\otimes \cdots \otimes V_{m},
\end{equation}%
with $P_{1}^{-}=I$. Now, by using them we can introduce the fused transfer
matrices. In particular, the second fused transfer matrix family reads:%
\begin{equation}
T_{2}(\lambda )=\text{tr}_{V_{\langle ab\rangle }}\{K_{\langle ab\rangle
}^{+}(\lambda )\,M_{\langle ab\rangle }(\lambda )\,K_{\langle ab\rangle
}^{-}(\lambda )\,\hat{M}_{\langle ab\rangle }(\lambda )\},
\end{equation}%
where $V_{\langle ab\rangle }=P_{ab}^{-}V_{a}\otimes V_{b}$, and we have
defined the fused boundary matrices:%
\begin{eqnarray}
K_{\langle ab\rangle }^{+}(\lambda ) &=&P_{ab}^{-}K_{+,b}(\lambda -\eta
)\,R_{ab}(-2\lambda -2\eta )\,K_{+,a}(\lambda )P_{ab}^{-}\,, \\
K_{\langle ab\rangle }^{-}(\lambda ) &=&P_{ab}^{-}K_{-,a}(\lambda
)\,R_{ba}(2\lambda -\eta )\,K_{-,b}(\lambda -\eta )\,P_{ab}^{-},
\end{eqnarray}%
and the fused bulk monodromy matrices:%
\begin{eqnarray}
M_{\langle ab\rangle }(\lambda ) &=&P_{ab}^{-}M_{a}(\lambda )M_{b}(\lambda
-\eta )P_{ab}^{-}, \\
\hat{M}_{\langle ab\rangle }(\lambda ) &=&P_{ab}^{-}\hat{M}_{a}(\lambda )%
\hat{M}_{b}(\lambda -\eta )P_{ab}^{-}.
\end{eqnarray}%
Then, we can define the further fused boundary matrices:%
\begin{eqnarray}
K_{\langle abc\rangle }^{+}(\lambda ) &=&P_{abc}^{-}K_{\langle bc\rangle
}^{+}(\lambda -\eta )\,R_{ac}(-2\lambda -\eta )\,R_{ab}(-2\lambda -2\eta
)\,K_{+,a}(\lambda )P_{abc}^{-}\,, \\
K_{\langle abc\rangle }^{-}(\lambda ) &=&P_{abc}^{-}K_{\langle bc\rangle
}^{-}(\lambda )\,R_{ba}(2\lambda -\eta )\,R_{ca}(2\lambda -2\eta
)\,K_{-,a}(\lambda -\eta )P_{abc}^{-}\,,
\end{eqnarray}%
and the further fused bulk monodromy matrices:%
\begin{eqnarray}
M_{\langle abc\rangle }(\lambda ) &=&P_{abc}^{-}M_{a}(\lambda )M_{b}(\lambda
-\eta )M_{c}(\lambda -2\eta )P_{abc}^{-}, \\
\hat{M}_{\langle abc\rangle }(\lambda ) &=&P_{abc}^{-}\hat{M}_{a}(\lambda )%
\hat{M}_{b}(\lambda -\eta )\hat{M}_{c}(\lambda -2\eta )P_{abc}^{-},
\end{eqnarray}%
and from them the third family of quantum spectral invariants:%
\begin{equation}
T_{3}(\lambda )=\text{tr}_{V_{\langle abc\rangle }}\{K_{\langle abc\rangle
}^{+}(\lambda )\,M_{\langle abc\rangle }(\lambda )\,K_{\langle abc\rangle
}^{-}(\lambda )\,\hat{M}_{\langle abc\rangle }(\lambda )\}\,,
\end{equation}%
where $V_{\langle abc\rangle }=P_{abc}^{-}V_{a}\otimes V_{b}\otimes V_{c}$,
which is also called the quantum determinant.

\subsection{Properties of the transfer matrices}

In this subsection we regroup some properties satisfied by the transfer 
matrices associated to the fundamental representations of the the rank two reflection algebras which play an important role in our SoV construction.

\begin{property}
The transfer matrices $T(\lambda )$ and $T_{2}(\lambda )$ defines two one
parameter families of mutually commuting operators:%
\begin{equation}
\left[ T(\lambda ),T(\mu )\right] =\left[ T(\lambda ),T_{2}(\mu )\right] =%
\left[ T_{2}(\lambda ),T_{2}(\mu )\right] =0.
\end{equation}%
Moreover, the quantum determinant $T_{3}(\lambda )$ is a central element of
the reflection algebra, i.e.%
\begin{equation}
\lbrack T_{3}(\lambda ),\mathcal{U}_{-,a}(\mu )]=0.
\end{equation}
\end{property}

The transfer matrix properties described in the following can be directly derived by using the known $R$-matrix properties like its reduction to the permutation operator and to the antisymmetric and symmetric projectors for special values of its arguments ($0,-\eta$ and $\eta$, respectively). These explicit computations have been presented recently in \cite{CaoYSW14}.

\begin{property}
The quantum spectral invariants have the following polynomial form:

i) $T(\lambda )$ is a degree $2\mathsf{N}+2$ polynomial in $\lambda $ with
the  central asymptotics:%
\begin{equation}
\lim_{\lambda \rightarrow \infty }\lambda ^{-(2+2\mathsf{N})}T(\lambda )=-%
\text{tr}_{a}\mathcal{M}_{a}^{\left( +\right) }\mathcal{M}_{a}^{\left(
-\right) },
\end{equation}%
and the following two central values:%
\begin{equation}
T(0)=\zeta _{-}d(\eta )\text{tr}_{a}K_{a}^{\left( +\right) }(0),\text{ \ }%
T(-3\eta /2)=\zeta _{+}d(3\eta /2)\text{tr}_{a}K_{a}^{\left( -\right)
}(-3\eta /2).  \label{Central-T1}
\end{equation}%

ii) $T_{2}^{\left( K\right) }(\lambda )$ is a degree $4\mathsf{N}+6$
polynomial in $\lambda $ with the  $2\mathsf{N}+2$ central zeros:%
\begin{equation}
T_{2}(\lambda )=(\lambda -\eta )(\lambda +3\eta /2)d(\lambda -\eta )\bar{T}%
_{2}(\lambda ),  \label{Central-zeros-T2}
\end{equation}%
and the central asymptotic behaviour:%
\begin{equation}
\text{ }\lim_{\lambda \rightarrow \infty }\lambda ^{-(6+4\mathsf{N}%
)}T_{2}(\lambda )=\text{tr}_{ab}P_{ab}^{-}\mathcal{M}_{a}^{\left( +\right) }%
\mathcal{M}_{a}^{\left( -\right) }\mathcal{M}_{b}^{\left( +\right) }\mathcal{%
M}_{b}^{\left( -\right) }P_{ab}^{-},
\end{equation}%
furthermore it has the following two central values:%
\begin{eqnarray}
T_{2}(\eta /2) &=&\eta (\eta ^{2}/4-\zeta _{-}^{2})d(\eta /2)d(3\eta /2)%
\text{tr}_{\langle ab\rangle }K_{\langle ab\rangle }^{+}(\eta /2),
\label{Central-T2a} \\
T_{2}(-\eta ) &=&\eta (\eta ^{2}/4-\zeta _{+}^{2})d(\eta )d(2\eta )\text{tr}%
_{\langle ab\rangle }K_{\langle ab\rangle }^{-}(-\eta )\ . \label{Central-T2b}
\end{eqnarray}%
It has also two known values in terms of the transfer matrix:%
\begin{eqnarray}
T_{2}(0) &=&r(-\eta )T(0)T(-\eta ),  \label{T2byT1a} \\
T_{2}(-\eta /2) &=&r(-2\eta )T(-\eta /2)T(-3\eta /2).  \label{T2byT1b}
\end{eqnarray}%
iii) The quantum determinant explicitly reads:%
\begin{align}
T_{3}(\lambda )& =\left( -1\right) ^{p_{+}+p_{-}}(2\lambda -2\eta )(2\lambda
+\eta )(2\lambda -3\eta )(2\lambda +2\eta )(2\lambda -4\eta )(2\lambda
+3\eta )  \notag \\
& \times d(\lambda +\eta )d(\lambda -\eta )\prod_{h=0}^{2-p_{+}}(\eta
/2-\zeta _{+}-h\eta -\lambda )\prod_{h=0}^{p_{+}-1}(\eta /2+\zeta _{+}-h\eta
-\lambda ) \\
& \times d(\lambda -2\eta )\prod_{h=0}^{2-p_{-}}(\lambda -\zeta _{-}-h\eta
)\prod_{h=0}^{p_{-}-1}(\lambda +\zeta _{-}-h\eta ).
\end{align}%
Moreover, the following fusion identities holds:%
\begin{eqnarray}
r(\pm 2\xi _{a}-\eta )r(\pm 2\xi _{a}-2\eta )T(\pm \xi _{a})T_{2}(\pm \xi
_{a}-\eta ) &=&T_{3}(\pm \xi _{a}), \\
r(\pm 2\xi _{a}-\eta )T(\pm \xi _{a})T(\pm \xi _{a}-\eta ) &=&T_{2}(\pm \xi
_{a}),
\end{eqnarray}%
where we have defined%
\begin{equation}
d(\lambda )=\prod_{a=1}^{\mathsf{N}}(\lambda -\xi _{a})(\lambda +\xi _{a}),%
\text{ }a(\lambda )=d(\lambda +\eta ),\text{\ \ }r(\lambda )=-\lambda
(\lambda +3\eta ).
\end{equation}
\end{property}

Moreover, the transfer matrix satisfies the following important set of inversion relations:
\begin{lemma}
The following identities holds:%
\begin{align}
T(\xi _{l})& =R_{l,l-1}(\xi _{l}-\xi _{l-1})\cdots R_{l,1}(\xi _{l}-\xi
_{1})K_{-,l}(\xi _{l})R_{l,1}(\xi _{l}+\xi _{1})\cdots R_{l,l-1}(\xi
_{l}+\xi _{l-1})  \notag \\
& \cdot R_{l,l+1}(\xi _{l}+\xi _{l+1})\cdots R_{l,\mathsf{N}}(\xi _{l}+\xi _{%
\mathsf{N}})tr_{V_{a}}\left[ K_{+,a}(\xi _{l})R_{a,l}(0)R_{a,l}(2\xi _{l})%
\right]  \notag \\
& \cdot R_{l,\mathsf{N}}(\xi _{l}-\xi _{\mathsf{N}})\cdots R_{l,l+1}(\xi
_{l}-\xi _{l+1}),  \label{T+ID}
\end{align}%
and%
\begin{align}
T(-\xi _{l})& =R_{l,l+1}(\xi _{l+1}-\xi _{l})\cdots R_{a,\mathsf{N}}(\xi _{%
\mathsf{N}}-\xi _{l})tr_{V_{a}}\left[ K_{+,a}(-\xi
_{l})R_{a,l}(0)R_{a,l}(-2\xi _{l})\right]  \notag \\
& \cdot R_{l,\mathsf{N}}(-(\xi _{l}+\xi _{\mathsf{N}}))\cdots
R_{l,l+1}(-(\xi _{l}+\xi _{l+1}))R_{l,1-1}(-(\xi _{l}+\xi _{l-1}))\cdots
R_{l,1}(-(\xi _{l}+\xi _{1}))  \notag \\
& \cdot K_{-,l}(-\xi _{l})R_{l,1}(\xi _{1}-\xi _{l})\cdots R_{l,l-1}(\xi
_{l-1}-\xi _{l}),  \label{T-ID}
\end{align}%
for any $l\in \{1,...,\mathsf{N}\}$. Moreover, the following $\mathsf{N}$
centrality conditions hold:%
\begin{equation}
T(\xi _{a})T(-\xi _{a})=r_{a}  \label{Inversion-symmetry}
\end{equation}%
where%
\begin{equation}
r_{a}=\frac{(\xi _{a}-3\eta /2)(\xi _{a}+3\eta /2)}{(\xi _{a}-\eta /2)(\xi
_{a}+\eta /2)}a(\xi _{a})a(-\xi _{a})(\left( \zeta _{+}+\eta /2\right)
^{2}-\xi _{a}^{2})(\zeta _{-}^{2}-\xi _{a}^{2}).
\end{equation}
\end{lemma}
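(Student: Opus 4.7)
The plan is to derive both explicit inversion formulas by exploiting the reduction \(R_{a,l}(0)=\eta\,\mathbb{P}_{a,l}\) inside the bulk monodromy at \(\lambda=\pm\xi_l\), together with the permutation identities \(\mathbb{P}_{a,l}R_{a,k}(\mu)=R_{l,k}(\mu)\mathbb{P}_{a,l}\) for \(k\neq l\) and \(\mathbb{P}_{a,l}K_{-,a}=K_{-,l}\mathbb{P}_{a,l}\), and finally the Yang--Baxter equation for the ``crossing'' argument on the upper half. For \((\ref{T+ID})\) I would start from \(T(\xi_l)=\mathrm{tr}_a\{K_{+,a}(\xi_l)\,M_a(\xi_l)\,K_{-,a}(\xi_l)\,\hat M_a(\xi_l)\}\), extract the factor \(R_{a,l}(0)=\eta\mathbb{P}_{a,l}\) sitting between the \(k>l\) and \(k<l\) halves of \(M_a(\xi_l)\), and propagate the permutation to the right, converting in turn the factors \(R_{a,l-1}^{-},\ldots,R_{a,1}^{-}\) of \(M_a\), then \(K_{-,a}\), then \(R_{a,1}^{+},\ldots,R_{a,l-1}^{+}\) of \(\hat M_a\) into their \(R_{l,\cdot},K_{-,l}\) counterparts. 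This exactly synthesises the sandwich \(L=L_-K_{-,l}L_+\) appearing as the first block of the claim. Because \(L\) does not touch \(V_a\), it can be pulled out of \(\mathrm{tr}_a\).

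The next step handles the upper half: inside the remaining trace one is left with \(K_{+,a}\,M'_a\,R_{a,l}(2\xi_l)\,R_+\,\mathbb{P}_{a,l}\), where \(M'_a=R_{a,\mathsf{N}}^{-}\cdots R_{a,l+1}^{-}\). I would then apply inductively the YBE identity
\begin{equation*}
R_{a,k}(\xi_l-\xi_k)\,R_{a,l}(2\xi_l)\,R_{l,k}(\xi_l+\xi_k)=R_{l,k}(\xi_l+\xi_k)\,R_{a,l}(2\xi_l)\,R_{a,k}(\xi_l-\xi_k)
\end{equation*}
together with the trivial commutations of \(R_{a,k}\) with \(R_{l,m}\) for \(m\neq k\) (disjoint spaces). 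A simple induction on \(\mathsf{N}-l\) shows that the whole block propagates as \(M'_a\,R_{a,l}(2\xi_l)\,R_+=R_+\,R_{a,l}(2\xi_l)\,M'_a\), with both \(R_+\) and \(M'_a\) keeping their internal order intact. Pulling \(R_+\) out of the trace on the left, rewriting \(M'_a\mathbb{P}_{a,l}=\mathbb{P}_{a,l}R_-\) by another sweep of the permutation, and pulling \(R_-\) out on the right, one is left precisely with \(\frac{1}{\eta}\mathrm{tr}_a[K_{+,a}\,R_{a,l}(0)\,R_{a,l}(2\xi_l)]\), which combined with the prefactor \(\eta\) coming from the initial \(R_{a,l}(0)=\eta\mathbb{P}_{a,l}\) reproduces the right-hand side of \((\ref{T+ID})\). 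The companion identity \((\ref{T-ID})\) is obtained in an entirely parallel manner, the only difference being that the zero now arises from \(\hat M_a(-\xi_l)\) instead of \(M_a(\xi_l)\), which is why the \(K_{-,l}\) sandwich appears on the right and the boundary trace on the left.

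For the centrality \((\ref{Inversion-symmetry})\), I would substitute the two formulas into the product \(T(\xi_a)T(-\xi_a)\) and observe a threefold telescoping driven by the unitarity relation \(R_{a,b}(\mu)R_{a,b}(-\mu)=(\eta^2-\mu^2)I\) of the \(gl_3\) \(R\)-matrix. First, the middle junction \(R_-(\xi_a)\cdot\tilde R_+(\xi_a)\) (where \(\tilde R_+(\xi_a)\) is the leading block of \(T(-\xi_a)\)) collapses pairwise — innermost pair first — to the scalar \(\prod_{k>a}(\eta^2-(\xi_a-\xi_k)^2)\); a symmetric collapse occurs in the \(L_+\tilde L_-\) and \(\tilde L_+L_-\) blocks that appear after cycling \(L\) to the right, contributing \(\prod_{k<a}(\eta^2-(\xi_a-\xi_k)^2)\) and \(\prod_{k<a,k>a}(\eta^2-(\xi_a+\xi_k)^2)\). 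The two remaining boundary factors \(K_{-,l}(\xi_a)K_{-,l}(-\xi_a)\) collapse using the explicit scalar form of \(K_-\) to \(\zeta_-^{-2}(\zeta_-^2-\xi_a^2)I\), while the product of the two boundary traces at \(\pm\xi_a\) supplies, after a short direct evaluation using \(R_{a,l}(0)=\eta\mathbb{P}_{a,l}\), the factor \((\zeta_++\eta/2)^2-\xi_a^2\) times a scalar depending only on \(\xi_a\) and \(\eta\). Multiplying all scalar contributions and comparing with \(a(\pm\xi_a)=d(\mp\xi_a\pm\eta)\) delivers exactly \(r_a\).

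The main obstacle I expect is the intermediate YBE step \(M'_a\,R_{a,l}(2\xi_l)\,R_+=R_+\,R_{a,l}(2\xi_l)\,M'_a\): each of the \(\mathsf{N}-l\) factors of \(M'_a\) must be transported across \(R_{a,l}(2\xi_l)\), and this requires first commuting the target factor \(R_{l,k}^{+}\) to sit adjacent to \(R_{a,l}(2\xi_l)\) through the other \(R_{l,m}^{+}\) factors — a step where the YBE on three spaces \((l,m,k)\) would normally produce extra \(R_{k,m}\) terms. Checking that all such correction factors cancel (which they do, because at this special spectral point those extra factors act only between spaces disjoint from the ones traced over and ultimately drop out via disjoint-space commutation) is the most delicate bookkeeping in the whole argument.
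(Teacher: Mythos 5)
Your proposal is correct and follows essentially the same route as the paper: extract $R_{a,l}(0)=\eta\,\mathbb{P}_{a,l}$ at $\lambda=\pm\xi_l$, sweep the permutation through the lower block to synthesise the $K_{-,l}$ sandwich and pull it out of the trace, disentangle the upper block by the three-space Yang--Baxter identity at arguments $(\xi_l-\xi_k,\,2\xi_l,\,\xi_l+\xi_k)$ combined with disjoint-space commutations, and obtain the centrality by telescoping with $R_{ab}(x)R_{ab}(-x)=-(x^2-\eta^2)$ together with a direct evaluation of the $K_{-,l}(\pm\xi_l)$ and traced boundary factors. The only remark worth adding is that the ``main obstacle'' you anticipate never materialises: if the factors of $M_a'$ are transported across $R_{a,l}(2\xi_l)$ starting from the innermost pair, the factor $R_{l,k}(\xi_l+\xi_k)$ adjacent to $R_{a,l}(2\xi_l)$ always shares its quantum space with the adjacent factor of $M_a'$, so each Yang--Baxter move applies directly and the emitted factors only ever cross operators on disjoint spaces --- no three-space exchange on $(l,m,k)$ and hence no correction terms arise.
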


\begin{proof}
Let us prove the above identities for the transfer matrix evaluated  at the
inhomogeneities values. Let us introduce the following short notations:%
\begin{eqnarray}
K_{+,a}^{(l)} &=&K_{+,a}(\xi _{l}),\text{ }K_{-,a}^{(l)}=K_{-,a}(\xi _{l}),%
\text{ }K_{-,l}^{(l)}=K_{-,l}(\xi _{l}), \\
R_{a,h}^{(-)} &=&R_{a,h}(\xi _{l}-\xi _{h}),\text{ }R_{l,h}^{(-)}=R_{l,h}(%
\xi _{l}-\xi _{h}), \\
R_{a,h}^{(+)} &=&R_{a,h}(\xi _{l}+\xi _{h}),\text{ }R_{l,h}^{(+)}=R_{l,h}(%
\xi _{l}+\xi _{h}),
\end{eqnarray}%
and%
\begin{eqnarray}
\hat{K}_{+,a}^{(l)} &=&K_{+,a}(-\xi _{l}),\text{ }\hat{K}%
_{-,a}^{(l)}=K_{-,a}(-\xi _{l}),\text{ }\hat{K}_{-,l}^{(l)}=K_{-,l}(-\xi _{l}),
\\
\hat{R}_{a,h}^{(-)} &=&R_{a,h}(\xi _{h}-\xi _{l}),\text{ }\hat{R}%
_{l,h}^{(-)}=R_{l,h}(\xi _{h}-\xi _{l}), \\
\hat{R}_{a,h}^{(+)} &=&R_{a,h}(-(\xi _{l}+\xi _{h})),\text{ }\hat{R}%
_{l,h}^{(+)}=R_{l,h}(-(\xi _{l}+\xi _{h})).
\end{eqnarray}%
By definition, then $T(\xi _{l})$ reads:%
\begin{align}
& \eta \,tr_{V_{a}}\left[ K_{+,a}^{(l)}R_{a,\mathsf{N}}^{(-)}\cdots
R_{a,l+1}^{(-)}P_{a,l}(0)R_{a,l-1}^{(-)}\cdots
R_{a,1}^{(-)}K_{-,a}^{(l)}R_{a,1}^{(+)}\cdots R_{a,\mathsf{N}}^{(+)}\right] 
\notag \\
& =\eta R_{l,l-1}^{(-)}\cdots R_{l,1}^{(-)}K_{-,l}^{(l)}R_{l,1}^{(+)}\cdots
R_{l,l-1}^{(+)}\,tr_{V_{a}}\left[ K_{+,a}^{(l)}R_{a,\mathsf{N}}^{(-)}\cdots
R_{a,l+1}^{(-)}P_{a,l}(0)R_{a,l}^{(+)}\cdots R_{a,\mathsf{N}}^{(+)}\right] 
\notag \\
& =\eta R_{l,l-1}^{(-)}\cdots R_{l,1}^{(-)}K_{-,l}^{(l)}R_{l,1}^{(+)}\cdots
R_{l,l-1}^{(+)}\,tr_{V_{a}}\left[ K_{+,a}^{(l)}P_{a,l}(0)R_{l,\mathsf{N}%
}^{(-)}\cdots R_{l,l+1}^{(-)}R_{a,l}^{(+)}R_{a,l+1}^{(+)}\cdots R_{a,\mathsf{%
N}}^{(+)}\right] .  \label{T+Step1}
\end{align}%
In the following, we use the Yang-Baxter equation:%
\begin{equation}
R_{l,l+1}^{(-)}R_{a,l}^{(+)}R_{a,l+1}^{(+)}=R_{a,l+1}^{(+)}R_{a,l}^{(+)}R_{l,l+1}^{(-)},
\end{equation}%
and:%
\begin{align}
R_{l,\mathsf{N}}^{(-)}\cdots R_{l,l+2}^{(-)}R_{a,l+1}^{(+)}
&=R_{a,l+1}^{(+)}R_{l,\mathsf{N}}^{(-)}\cdots R_{l,l+2}^{(-)}, \\
R_{l,l+1}^{(-)}R_{a,l+2}^{(+)}\cdots R_{a,\mathsf{N}}^{(+)}
&=R_{a,l+2}^{(+)}\cdots R_{a,\mathsf{N}}^{(+)}R_{l,l+1}^{(-)}.
\end{align}%
to rewrite $\left( \ref{T+Step1}\right) $ as it follows:%
\begin{align}
& \eta R_{l,l-1}^{(-)}\cdots R_{l,1}^{(-)}K_{-,l}^{(l)}R_{l,1}^{(+)}\cdots
R_{l,l-1}^{(+)}\,tr_{V_{a}}\left[ P_{a,l}(0)R_{a,l+1}^{(+)}K_{+,l}^{(l)}R_{l,%
\mathsf{N}}^{(-)}\cdots R_{l,l+2}^{(-)}R_{a,l}^{(+)}R_{a,l+2}^{(+)}\cdots
R_{a,\mathsf{N}}^{(+)}\right] R_{l,l+1}^{(-)}  \notag \\
& =\eta R_{l,l-1}^{(-)}\cdots R_{l,1}^{(-)}K_{-,l}^{(l)}R_{l,1}^{(+)}\cdots
R_{l,l-1}^{(+)}R_{l,l+1}^{(+)}\,tr_{V_{a}}\left[ P_{a,l}(0)K_{+,l}^{(l)}R_{l,%
\mathsf{N}}^{(-)}\cdots R_{l,l+2}^{(-)}R_{a,l}^{(+)}R_{a,l+2}^{(+)}\cdots
R_{a,\mathsf{N}}^{(+)}\right] R_{l,l+1}^{(-)},
\end{align}%
now making the same steps for $R_{l,j}^{(-)}R_{a,l}^{(+)}R_{a,j}^{(+)}$ for
all the $j$ from $l+2$ up to $\mathsf{N}$, we end up with our formula $%
\left( \ref{T+ID}\right) $. Similarly, we have that $T(-\xi _{l})$ reads:%
\begin{align}
& \eta \,tr_{V_{a}}\left[ \hat{K}_{+,a}^{(l)}\hat{R}_{a,\mathsf{N}%
}^{(+)}\cdots \hat{R}_{a,1}^{(+)}\hat{K}_{-,a}^{(l)}\hat{R}%
_{a,1}^{(-)}\cdots \hat{R}_{a,l-1}^{(-)}P_{a,l}(0)\hat{R}_{a,l+1}^{(-)}%
\cdots \hat{R}_{a,\mathsf{N}}^{(-)}\right]  \notag \\
& =\eta tr_{V_{a}}\left[ \hat{K}_{+,a}^{(l)}\hat{R}_{a,\mathsf{N}%
}^{(+)}\cdots \hat{R}_{a,l}^{(+)}P_{a,l}(0)\hat{R}_{a,l+1}^{(-)}\cdots \hat{R%
}_{a,\mathsf{N}}^{(-)}\right] \hat{R}_{l,l-1}^{(+)}\cdots \hat{R}_{l,1}^{(+)}%
\hat{K}_{-,l}^{(l)}\hat{R}_{l,1}^{(-)}\cdots \hat{R}_{l,l-1}^{(-)}\,,  \notag
\\
& =\eta tr_{V_{a}}\left[ \hat{K}_{+,a}^{(l)}P_{a,l}(0)\hat{R}_{l,\mathsf{N}%
}^{(+)}\cdots \hat{R}_{l,l+1}^{(+)}\hat{R}_{a,l}^{(+)}\hat{R}%
_{a,l+1}^{(-)}\cdots \hat{R}_{a,\mathsf{N}}^{(-)}\right] \hat{R}%
_{l,l-1}^{(+)}\cdots \hat{R}_{l,1}^{(+)}\hat{K}_{-,l}^{(l)}\hat{R}%
_{l,1}^{(-)}\cdots \hat{R}_{l,l-1}^{(-)}\,,  \label{T-Step1}
\end{align}%
where in the last line we have used:%
\begin{equation}
\hat{R}_{a,l}^{(+)}P_{a,l}(0)=P_{a,l}(0)\hat{R}_{a,l}^{(+)}.
\end{equation}%
In the following, we use the Yang-Baxter equation:%
\begin{equation}
\hat{R}_{l,l+1}^{(+)}\hat{R}_{a,l}^{(+)}\hat{R}_{a,l+1}^{(-)}=\hat{R}%
_{a,l+1}^{(-)}\hat{R}_{a,l}^{(+)}\hat{R}_{l,l+1}^{(+)},
\end{equation}%
and the commutativities:%
\begin{eqnarray}
\hat{R}_{l,\mathsf{N}}^{(+)}\cdots \hat{R}_{l,l+2}^{(+)}\hat{R}%
_{a,l+1}^{(-)} &=&\hat{R}_{a,l+1}^{(-)}\hat{R}_{l,\mathsf{N}}^{(+)}\cdots 
\hat{R}_{l,l+2}^{(+)}, \\
\hat{R}_{l,l+1}^{(+)}\hat{R}_{a,l+2}^{(-)}\cdots \hat{R}_{a,\mathsf{N}%
}^{(-)} &=&\hat{R}_{a,l+2}^{(-)}\cdots \hat{R}_{a,\mathsf{N}}^{(-)}\hat{R}%
_{l,l+1}^{(+)}.
\end{eqnarray}%
to rewrite $\left( \ref{T-Step1}\right) $ as it follows:%
\begin{align}
& \eta tr_{V_{a}}\left[ \hat{K}_{+,a}^{(l)}P_{a,l}(0)\hat{R}_{a,l+1}^{(-)}%
\hat{R}_{l,\mathsf{N}}^{(+)}\cdots \hat{R}_{l,l+2}^{(+)}\hat{R}_{a,l}^{(+)}%
\hat{R}_{a,l+2}^{(-)}\cdots \hat{R}_{a,\mathsf{N}}^{(-)}\right] \hat{R}%
_{l,l+1}^{(+)}\hat{R}_{l,l-1}^{(+)}\cdots \hat{R}_{l,1}^{(+)}\hat{K}%
_{-,l}^{(l)}\hat{R}_{l,1}^{(-)}\cdots \hat{R}_{l,l-1}^{(-)}\,,  \notag \\
& =\eta \hat{R}_{l,l+1}^{(-)}tr_{V_{a}}\left[ \hat{K}_{+,a}^{(l)}P_{a,l}(0)%
\hat{R}_{l,\mathsf{N}}^{(+)}\cdots \hat{R}_{l,l+2}^{(+)}\hat{R}_{a,l}^{(+)}%
\hat{R}_{a,l+2}^{(-)}\cdots \hat{R}_{a,\mathsf{N}}^{(-)}\right] \hat{R}%
_{l,l+1}^{(+)}\hat{R}_{l,l-1}^{(+)}\cdots \hat{R}_{l,1}^{(+)}\hat{K}%
_{-,l}^{(l)}\hat{R}_{l,1}^{(-)}\cdots \hat{R}_{l,l-1}^{(-)},
\end{align}
now making the same steps for $\hat{R}_{l,j}^{(+)}\hat{R}_{a,l}^{(+)}\hat{R}%
_{a,j}^{(-)}$ for all the $j$ from $l+2$ up to $\mathsf{N}$, we end up with
our formula $\left( \ref{T-ID}\right) $.

Let us now prove the inversion relations. By direct computation one can prove
that the following identities hold:
\begin{eqnarray}
&&K_{-,l}(\xi _{l})tr_{V_{a}}\left[ K_{+,a}(\xi _{l})R_{a,l}(0)R_{a,l}(2\xi
_{l})\right] tr_{V_{a}}\left[ K_{+,a}(-\xi _{l})R_{a,l}(0)R_{a,l}(-2\xi _{l})%
\right] K_{-,l}(-\xi _{l})  \notag \\
&=&(\xi
_{l}^{2}-\left(3 \eta /2\right) ^{2})\eta (\eta -2\xi _{l})(\left( \zeta
_{+}+\eta /2\right) ^{2}-\xi _{l}^{2})(\zeta _{-}^{2}-\xi _{l}^{2})/(\xi
_{l}^{2}-\left( \eta /2\right) ^{2})
\end{eqnarray}%
and by using the unitarity property of the $R$-matrix:%
\begin{equation}
R_{a,b}(x)R_{a,b}(-x)=-(x+\eta )(x-\eta )
\end{equation}%
we get:%
\begin{eqnarray}
R_{l,j}^{(-)}\hat{R}_{l,j}^{(-)} &=&-(\xi _{l}-\xi _{j}+\eta )(\xi _{l}-\xi
_{j}-\eta ), \\
R_{l,j}^{(+)}\hat{R}_{l,j}^{(+)} &=&-(\xi _{l}+\xi _{j}+\eta )(\xi _{l}+\xi
_{j}-\eta ).
\end{eqnarray}%
Finally, by taking the product of the r.h.s. of formulae $\left( \ref{T+ID}%
\right) $ and $\left( \ref{T-ID}\right) $ and by using the above identities we derive $\left( \ref{Inversion-symmetry}\right) $.
\end{proof}

These properties imply that the second transfer matrix $T_2(\lambda)$ can be written in terms
of the transfer matrix $T(\lambda)$. Let us introduce the functions%
\begin{eqnarray}
g_{a,\epsilon }(\lambda ) &=&\frac{\lambda (\lambda +3\eta /2)}{\xi _{a}(\xi
_{a}+3\epsilon \eta /2)}\prod_{b\neq a,b=1}^{\mathsf{N}}\frac{(\lambda
+\epsilon \xi _{a})(\lambda ^{2}-\xi _{b}^{2})}{2\epsilon \xi _{a}(\xi
_{a}^{2}-\xi _{b}^{2})}, \\
f_{a,\epsilon }(\lambda ) &=&\frac{(\lambda ^{2}-\eta ^{2})(\lambda
^{2}-(\eta /2)^{2})d(\lambda -\eta )}{(\xi _{a}^{2}-\eta ^{2})(\xi
_{a}^{2}-(\eta /2)^{2})d(\epsilon \xi _{a}-\eta )}g_{a,\epsilon }(\lambda ),
\end{eqnarray}%
and%
\begin{align}
T^{(\infty )}(\lambda )& =-\lambda (\lambda +3\eta /2)d(\lambda )\,\text{tr}%
_{a}\mathcal{M}_{a}^{\left( +\right) }\mathcal{M}_{a}^{\left( -\right) }%
\text{ } \\
T_{2}^{(\infty )}(\lambda )& =4\lambda (\lambda ^{2}-\left( \eta /2\right)
^{2})(\lambda ^{2}-\eta ^{2})(\lambda +3\eta /2)d(\lambda )  \notag \\
& \times d(\lambda -\eta )\text{tr}_{ab}P_{ab}^{-}\mathcal{M}_{a}^{\left(
+\right) }\mathcal{M}_{a}^{\left( -\right) }\mathcal{M}_{b}^{\left( +\right)
}\mathcal{M}_{b}^{\left( -\right) }P_{ab}^{-},
\end{align}%
then the following corollary holds:

\begin{corollary}
The transfer matrix $T_{2}(\lambda )$ is completely characterized in terms
of the fundamental transfer matrix $T(\lambda )$ by the fusion equations, and the
following interpolation formulae hold:
\begin{equation}
T_{2}(\lambda )=T_{2}^{(\infty )}(\lambda )+\sum_{\epsilon =\pm
1}\sum_{a=1}^{\mathsf{N}}f_{a,\epsilon }(\lambda )T(\epsilon \xi _{a}-\eta
)T(\epsilon \xi _{a})+V(\lambda |T)
\end{equation}%
where%
\begin{align}
V(\lambda |T)& =\frac{8(\lambda ^{2}-\eta ^{2})(\lambda ^{2}-(\eta
/2)^{2})(\lambda +3\eta /2)d(\lambda -\eta )d(\lambda )}{3\eta ^{5}d(\eta
)d(0)}T_{2}(0)  \notag \\
& -\sum_{\epsilon =\pm 1}\frac{16\lambda (\lambda ^{2}-\eta ^{2})(\lambda
+\epsilon \eta /2)(\lambda +3\eta /2)d(\lambda -\eta )d(\lambda )}{3\epsilon
(3+\epsilon )\eta ^{5}d((\epsilon -2)\eta /2)d(\eta /2)}T_{2}(\epsilon \eta
/2)  \notag \\
& +\frac{4\lambda (\lambda -\eta )(\lambda ^{2}-(\eta /2)^{2})(\lambda
+3\eta /2)d(\lambda -\eta )d(\lambda )}{3\eta ^{5}d(2\eta )d(\eta )}%
T_{2}(-\eta ),
\end{align}%
and%
\begin{align}
T(\lambda )& =T^{(\infty )}(\lambda )+\sum_{\epsilon =\pm 1}\sum_{a=1}^{%
\mathsf{N}}g_{a,\epsilon }(\lambda )r_{a}^{(1-\epsilon )/2}(T(\xi
_{a}))^{\epsilon }+\frac{(\lambda +3\eta /2)d(\lambda )}{(3\eta /2)d(0)}T(0)
\notag \\
& -\frac{\lambda d(\lambda )}{(3\eta /2)d(0)}T(-3\eta /2).
\end{align}
\end{corollary}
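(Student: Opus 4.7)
The plan is to read both identities as Lagrange interpolation formulas, with the interpolation data supplied by Property 4.2 (polynomial degrees, central asymptotics, central zeros and central values), by the fusion identities collected in Property 4.2(iii), and by the inversion relation $(\ref{Inversion-symmetry})$ of the preceding lemma.

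For the formula giving $T(\lambda)$ in its own terms: by Property 4.2(i), $T(\lambda)$ is polynomial of degree $2\mathsf{N}+2$ in $\lambda$, hence determined by $2\mathsf{N}+3$ independent pieces of data. These are supplied by the central leading coefficient, the two central values $T(0)$ and $T(-3\eta/2)$ from $(\ref{Central-T1})$, and the $\mathsf{N}$ values $T(\xi_a)$ together with their ``mirrors'' $T(-\xi_a)=r_a/T(\xi_a)$ obtained from $(\ref{Inversion-symmetry})$; the combination $r_a^{(1-\epsilon)/2}(T(\xi_a))^{\epsilon}$ in the statement is exactly this rewriting for $\epsilon=\pm 1$. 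The polynomial $T^{(\infty)}(\lambda)$ is of degree $2\mathsf{N}+2$, carries the correct leading coefficient and vanishes at every one of the $2\mathsf{N}+2$ nodes $\{\pm\xi_a,0,-3\eta/2\}$ via its factors $\lambda$, $\lambda+3\eta/2$ and $d(\lambda)$. Consequently $T(\lambda)-T^{(\infty)}(\lambda)$ has degree at most $2\mathsf{N}+1$ and is uniquely Lagrange-reconstructed on these same $2\mathsf{N}+2$ nodes; one checks by direct substitution that the $g_{a,\pm}(\lambda)$ and the two polynomials $(\lambda+3\eta/2)d(\lambda)/[(3\eta/2)d(0)]$ and $-\lambda d(\lambda)/[(3\eta/2)d(0)]$ are the corresponding Lagrange basis polynomials.

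For the formula relating $T_2(\lambda)$ to $T(\lambda)$, the mechanism is identical, once the central zeros are factored out. By Property 4.2(ii), $T_2(\lambda)=(\lambda-\eta)(\lambda+3\eta/2)d(\lambda-\eta)\bar{T}_2(\lambda)$ with $\bar{T}_2$ of degree $2\mathsf{N}+4$, whose $2\mathsf{N}+5$ determining data are: the leading coefficient; the two central values $T_2(\eta/2)$ and $T_2(-\eta)$ of $(\ref{Central-T2a})$--$(\ref{Central-T2b})$; the two fusion values $T_2(0)$ and $T_2(-\eta/2)$ expressed through $T$ by $(\ref{T2byT1a})$--$(\ref{T2byT1b})$; and the $2\mathsf{N}$ fusion values $T_2(\pm\xi_a)=r(\pm2\xi_a-\eta)\,T(\pm\xi_a)\,T(\pm\xi_a-\eta)$ of Property 4.2(iii). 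The function $T_2^{(\infty)}(\lambda)$ is crafted so as to contain both the central-zeros factor $(\lambda-\eta)(\lambda+3\eta/2)d(\lambda-\eta)$ and the additional factors $\lambda,\lambda^2-(\eta/2)^2,\lambda+\eta,d(\lambda)$, which together make it vanish at each of the $2\mathsf{N}+4$ interpolation nodes $\{\pm\xi_a,0,\pm\eta/2,-\eta\}$ while still carrying the correct leading asymptotic; the functions $f_{a,\pm}(\lambda)$ and the four terms gathered in $V(\lambda|T)$ are then the Lagrange basis polynomials at these nodes, each multiplied by the central-zeros factor and normalized to $1$ at its own node and to $0$ at the others.

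What is left is mechanical: substitute $\lambda\in\{\pm\xi_a,0,\pm\eta/2,-\eta\}$ into $f_{a,\epsilon}$, $g_{a,\epsilon}$ and the four $V$-terms, verify the Kronecker-delta property (using $d(\lambda)=\prod_a(\lambda^2-\xi_a^2)$ and $r(\lambda)=-\lambda(\lambda+3\eta)$), and use the fusion rewriting of $T_2(\pm\xi_a)$ to identify $f_{a,\epsilon}(\epsilon\xi_a)$ with $r(\pm2\xi_a-\eta)$ up to the required normalization. The one delicate step is the matching of the $\lambda^{4\mathsf{N}+6}$ coefficient: since every $f_{a,\epsilon}$ and every $V$-term has degree strictly less than $4\mathsf{N}+6$, the leading term of the right-hand side comes entirely from $T_2^{(\infty)}(\lambda)$, and reconciling it with Property 4.2(ii) requires the standard antisymmetric-projector trace identity for $\operatorname{tr}_{ab}P_{ab}^-$. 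Distinctness of the $2\mathsf{N}+4$ interpolation nodes, needed for Lagrange interpolation, is ensured for generic $\eta$ by the inhomogeneity condition $(\ref{Inhomog-cond})$, completing the argument.
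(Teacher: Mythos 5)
Your proposal is correct and takes essentially the same route as the paper: the authors' (one-sentence) proof likewise invokes the known degrees, central zeros, central values and asymptotics to set up the interpolation, then uses the fusion identities to express $T_2(\pm\xi_a)$ through $T(\pm\xi_a)T(\pm\xi_a-\eta)$ and the inversion relation to trade $T(-\xi_a)$ for $r_a/T(\xi_a)$. You have simply made explicit the node counting and the Lagrange-basis verification that the paper leaves implicit.
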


\begin{proof}
The known central zeros and asymptotics imply the above interpolation
formulae once we use the fusion equations to write $T_{2}(\pm \xi _{a}-\eta
) $ and the inversion relations to write $T(-\xi _{a})$ in terms of $T(\xi
_{a})$.
\end{proof}

\subsection{Our SoV co-vector basis}

\begin{theorem}
Let us assume that both $K_{+,a}(\lambda )$ and $K_{-,a}(\lambda )$ are
non-proportional to the identity\footnote{%
Note this means that $p_{\pm }\in \{1,2\}$ and that we have just two
independent cases here $(p_{+}=1,p_{-}=2)$ (with the equivalent
complementary one $(p_{+}=2,p_{-}=1)$) and $(p_{+}=1,p_{-}=1)$ (with the
equivalent complementary one $(p_{+}=2,p_{-}=2)$).} and that one of the
following requirements holds:

i) $K_{+,a}(\lambda )$ and $K_{-,a}(\lambda )$ are non-commuting, ii) $%
K_{+,a}(\lambda )$ and $K_{-,a}(\lambda )$ are commuting matrices and either 
$r^{\left( +\right) }r^{\left( -\right) }=0$ or $r^{\left( +\right)
}r^{\left( -\right) }=1$, where in this last case, moreover it holds:%
\begin{equation}
\mathcal{M}^{\left( -\right) }=\pm W\left( 
\begin{array}{ccc}
1 & 0 & 0 \\ 
0 & -1 & 0 \\ 
0 & 0 & -1%
\end{array}%
\right) W^{-1}, \ \ \ with\ \ W\in End(\mathbb{C}^3),
\end{equation}%
and%
\begin{equation}
\mathcal{M}^{\left( +\right) }=\pm W\left( 
\begin{array}{ccc}
1 & 0 & 0 \\ 
0 & \left( -1\right) ^{a} & 0 \\ 
0 & 0 & \left( -1\right) ^{a+1}%
\end{array}%
\right) W^{-1}\text{ \ with }a\in \{0,1\},
\end{equation}%
then for almost any choice of $\langle S|$ and of the inhomogeneities under
the condition $(\ref{Inhomog-cond})$, the following set of co-vectors:%
\begin{equation}
\langle h_{1},...,h_{\mathsf{N}}|\equiv \langle S|\prod_{n=1}^{\mathsf{N}%
}(T(\xi _{n}))^{h_{n}}\text{ \ for any }\{h_{1},...,h_{\mathsf{N}}\}\in
\{0,1,2\}^{\otimes \mathsf{N}},
\end{equation}%
forms basis of $\mathcal{H}^*$. In particular, we can take the state $%
\langle S|$ of the following tensor product form:%
\begin{equation}
\langle S|=\bigotimes_{a=1}^{\mathsf{N}}(x,y,z)_{a}\Gamma _{W}^{-1},\text{ \
\ }\Gamma _{W}=\bigotimes_{a=1}^{\mathsf{N}}W_{K,a}\,,
\end{equation}%
simply asking $x\,y\,z\neq 0$.
\end{theorem}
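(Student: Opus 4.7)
The plan is to follow the same overall strategy used in the proof of Theorem \ref{Th-SoV-basis} for the $Y(gl_2)$ case, reducing the question of global linear independence to a local condition on each factor $V_l\cong \mathbb{C}^3$. I would first specialize the inhomogeneities to $\xi_a=a\xi$ and analyze the polynomial behavior of $T(\xi_l)$ as $\xi\to\infty$. The key observation is that at $\lambda=\xi_l$ the $R$-matrix $R_{0l}(0)=\eta P_{0l}$ collapses to a permutation operator, which, together with the inversion identity of the preceding Lemma and the polynomial form of $T(\lambda)$ given in the Properties, localizes the computation at site $l$: the leading coefficient of $T(\xi_l)$ in $\xi$ is a $3\times 3$ operator $C_l$ acting nontrivially only on $V_l$ and built out of the boundary data $\mathcal{M}^{(\pm)}_l$.

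Taking $\langle S|$ of tensor product form, the leading asymptotic of the co-vectors $\langle S|\prod_n T(\xi_n)^{h_n}$ also factorizes across sites, so linear independence reduces to the local statement that for every $l$ the three co-vectors
\begin{equation*}
\langle S,l|,\qquad \langle S,l|\,C_l,\qquad \langle S,l|\,C_l^2
\end{equation*}
are independent in $\mathbb{C}^3$. This amounts to the non-vanishing of a single $3\times 3$ determinant, which one evaluates explicitly by passing to a basis where $C_l$ has the normal form $(\ref{Gen-Form-B+-})$. In case (i), non-commutativity of $K_\pm$ generically yields three distinct eigenvalues $\{(-1)^{(3-p_+)(3-p_-)},e^{-\alpha},e^{\alpha}\}$ for $\mathcal{M}^{(+)}\mathcal{M}^{(-)}$, the matrix $W^{(+,-)}$ plays the role of $W_K$ in the tensor ansatz, and the determinant factorizes into a non-vanishing Vandermonde in these eigenvalues times the monomial $x\,y\,z$. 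In the sub-case of (ii) with $r^{(+)}r^{(-)}=0$, one of the $\mathcal{M}^{(\pm)}$ is nilpotent, so $C_l$ has a non-trivial Jordan structure; then $\{I,C_l,C_l^2\}$ still acts as three independent operators on a generic local co-vector, giving once more a non-zero determinant upon imposing $xyz\neq 0$ in the appropriate $W$-rotated basis.

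The main obstacle is the remaining sub-case of (ii), namely $r^{(+)}r^{(-)}=1$ with commuting $K_\pm$. Here the spectrum of $\mathcal{M}^{(+)}\mathcal{M}^{(-)}$ is contained in $\{\pm 1\}$ and therefore has at most two distinct values, so the bare leading asymptotic $C_l$ cannot separate the three local states on its own; one cannot simply diagonalize and read off a Vandermonde. This is precisely why the theorem singles out the specific block forms of $\mathcal{M}^{(\pm)}$ parametrized by $a\in\{0,1\}$: under these constraints the individual matrices $\mathcal{M}^{(\pm)}$, though sharing an eigenbasis (hence the matrix $W$ in the tensor ansatz), differ in how they distribute signs across the three simultaneous eigenspaces, so that expanding $T(\xi_l)$ to the next order in $1/\xi$ produces a sub-leading local operator whose action is not a polynomial of degree $\le 1$ in $C_l$. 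Combining the leading and sub-leading local data, the resulting $3\times 3$ determinant factorizes as a non-zero constant times $xyz$, provided the constraints on the sign pattern in the theorem hold, completing the reduction to the non-vanishing condition $xyz\neq 0$ in the $W$-rotated basis.
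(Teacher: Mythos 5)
Your overall strategy --- specialize $\xi_a=a\xi$, extract a leading local operator at each site, and reduce everything to a local $3\times 3$ determinant for a tensor-product $\langle S|$ --- is indeed the skeleton of the paper's argument, which simply invokes Theorem \ref{General-SoV-basis}. But you are missing the one idea that makes all three cases close at leading order: in \rf{First-gl_n-asymp-Cond} the boundary parameters are scaled together with the inhomogeneities, $\zeta_\pm=\alpha_\pm\xi$, so that the leading coefficient of $T(\xi_l)\,\zeta_+\zeta_-$ is not (a multiple of) $\mathcal{M}_l^{(-)}\mathcal{M}_l^{(+)}$ but the two-parameter family $K_l^{(+,-)}(\alpha_\pm,\mathcal{M}^{(\pm)})=(\alpha_-+l\mathcal{M}_l^{(-)})(\alpha_++l\mathcal{M}_l^{(+)})$. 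The whole theorem then reduces to exhibiting $\alpha_\pm$ for which this matrix has simple spectrum, and that is precisely what hypotheses i) and ii) guarantee: in the commuting case the prescribed sign patterns make the three joint eigenvalue pairs $(\epsilon_j^{(-)},\epsilon_j^{(+)})$ pairwise distinct, hence the three products $(\alpha_-+l\epsilon_j^{(-)})(\alpha_++l\epsilon_j^{(+)})$ distinct for generic $\alpha_\pm$.

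Without this scaling, two of your steps break. First, in the sub-case $r^{(+)}r^{(-)}=0$ the relevant $\mathcal{M}^{(\pm)}$ squares to zero on $\mathbb{C}^3$ and therefore has rank at most one; consequently $C_l\propto\mathcal{M}_l^{(-)}\mathcal{M}_l^{(+)}$ has rank at most one and satisfies $C_l^2=\tr(C_l)\,C_l$, so the three co-vectors $\langle S,l|$, $\langle S,l|C_l$, $\langle S,l|C_l^2$ can never be independent --- your claim for that sub-case is false as stated. Second, for $r^{(+)}r^{(-)}=1$ with commuting boundary matrices you correctly observe that the bare leading term cannot separate three local states, but the proposed cure --- expanding to the next order in $1/\xi$ --- destroys the very feature the argument rests on: the subleading coefficient of $T(\xi_l)$ is not supported at site $l$ alone, so the $3^{\mathsf N}\times 3^{\mathsf N}$ determinant no longer factorizes into local $3\times 3$ blocks, and you are left with a genuinely global computation that the proposal does not carry out. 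Both difficulties disappear once $\zeta_\pm$ is taken to infinity proportionally to $\xi$, which enriches the local operator while keeping the computation at leading order and fully factorized.
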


\begin{proof}
This is a special case of the Theorem \ref{General-SoV-basis} presented in the next section for the case $%
n=3$. Following the proof there, we obtain that a sufficient
condition to get the theorem is to prove that there exist  $\alpha _{\pm
}\in \mathbb{C}$ such that the following matrices%
\begin{equation}
K_{a}^{(+,-)}(\alpha _{\pm },\mathcal{M}^{\left( \pm \right) })=(\alpha
_{-}+a\mathcal{M}_{a}^{\left( -\right) })(\alpha _{+}+a\mathcal{M}%
_{a}^{\left( +\right) })\text{ \ }\forall a\in \{1,...,\mathsf{N}\}
\end{equation}%
have simple spectrum. Then, it is simple to observe that the set of
conditions considered above just imply this property.
\end{proof}

\subsection{Transfer matrix spectrum in our SoV approach}

The following characterization of the transfer matrix spectrum holds:

\begin{theorem}
Under the same assumptions ensuring that the set of SoV co-vectors form a
basis, the spectrum of $T(\lambda )$ is characterized by:%
\begin{equation}
\Sigma _{T^{(K)}}=\{ t_{1}(\lambda ):t_{1}(\lambda |\{x_{i}\})=\left\{ 
\begin{array}{l}
T^{(\infty )}(\lambda )-\frac{\lambda d(\lambda )}{(3\eta /2)d(0)}T(-3\eta
/2) \\ 

+\sum_{\epsilon =\pm 1}\sum_{a=1}^{\mathsf{N}}g_{a,\epsilon }(\lambda
)r_{a}^{(1-\epsilon )/2}x_{a}^{\epsilon } \\ 
+\frac{(\lambda +3\eta /2)d(\lambda )}{(3\eta /2)d(0)}T(0)%
\end{array}%
\right. ,\text{ \ }\forall \{x_{1},...,x_{\mathsf{N}}\}\in S_{T}\} .
\label{SET-T-open}
\end{equation}%
Here, $S_{T}$ is the set of solutions to the following system of $\mathsf{N}$
equations in $\mathsf{N}$ unknowns $\{x_{1},...,x_{\mathsf{N}}\}$:%
\begin{equation}
r_{n}^{(1-\mu )/2}x_{n}^{\epsilon }[T_{2}^{(\infty )}(\mu \xi _{n}-\eta
)+\sum_{\epsilon =\pm 1}\sum_{a=1}^{\mathsf{N}}f_{a,\epsilon }(\mu \xi
_{n})t_{1}(\epsilon \xi _{a}-\eta )r_{a}^{(1-\epsilon )/2}x_{a}^{\epsilon
}+V(\lambda |t_{1}(\lambda |\{x_{i}\}))]\left. =\right. T_{3}(\mu \xi _{n}),
\label{System-SoV-open}
\end{equation}%
for any $\mu =\pm 1,n\in \{1,...,\mathsf{N}\}$, where%
\begin{align}
V(\lambda |t_{1}(\lambda |\{x_{i}\}))& =\frac{8(\lambda ^{2}-\eta
^{2})(\lambda ^{2}-(\eta /2)^{2})(\lambda +3\eta /2)d(\lambda -\eta
)d(\lambda )}{3\eta ^{5}d(\eta )d(0)}t_{2}(0)  \notag \\
& -\sum_{\epsilon =\pm 1}\frac{16\lambda (\lambda ^{2}-\eta ^{2})(\lambda
+\epsilon \eta /2)(\lambda +3\eta /2)d(\lambda -\eta )d(\lambda )}{3\epsilon
(3+\epsilon )\eta ^{5}d((\epsilon -2)\eta /2)d(\eta /2)}t_{2}(\epsilon \eta
/2)  \notag \\
& +\frac{4\lambda (\lambda -\eta )(\lambda ^{2}-(\eta /2)^{2})(\lambda
+3\eta /2)d(\lambda -\eta )d(\lambda )}{3\eta ^{5}d(2\eta )d(\eta )}%
t_{2}(-\eta ),
\end{align}%
and we have defined:%
\begin{eqnarray}
t_{2}(0) &=&r(-\eta )t_{1}(0)t_{1}(-\eta ), \\
t_{2}(-\eta /2) &=&r(-2\eta )t_{1}(-\eta /2)t_{1}(-3\eta /2), \\
t_{2}(\eta /2) &=&T_{2}(\eta /2),\text{ }t_{2}(-\eta )=T_{2}(-\eta ).
\end{eqnarray}%
Moreover, $T(\lambda )$ has simple spectrum and for any $t_{1}(\lambda
)\in \Sigma _{T^{(K)}}$ the associated unique (up-to normalization)
eigenvector $|t\rangle $ has the following wave-function in the left SoV
basis:%
\begin{equation}
\langle h_{1},...,h_{\mathsf{N}}|t\rangle =\prod_{n=1}^{\mathsf{N}%
}t_{1}^{h_{n}}(\xi _{n}).  \label{SoV-Ch-T-eigenV-open-gl3}
\end{equation}
\end{theorem}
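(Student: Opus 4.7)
The plan is to mirror the structure of the proof of Theorem \ref{SoV-Sp-Ch-gl2-R}, adapted to the rank two situation where the local dimension is $3$ and the relevant polynomial identity for $T(\xi_a)$ is cubic rather than quadratic. The forward direction, namely that every eigenvalue of $T(\lambda)$ produces a tuple $\{x_a = t_1(\xi_a)\}_{a=1}^{\mathsf{N}}$ solving the system \eqref{System-SoV-open}, will be obtained directly from the fusion identities. Indeed, evaluating the second fusion identity $r(\pm2\xi_n-\eta)r(\pm2\xi_n-2\eta)\,T(\pm\xi_n)T_2(\pm\xi_n-\eta)=T_3(\pm\xi_n)$ and rewriting $T_2(\mu\xi_n-\eta)$ via the interpolation formula of the Corollary in terms of values $t_1(\epsilon\xi_a-\eta)$ and $t_1(\epsilon\xi_a)$, using the inversion relation \eqref{Inversion-symmetry} to trade $t_1(-\xi_a)=r_a/t_1(\xi_a)=r_a x_a^{-1}$, gives exactly the $\mathsf{N}$ pairs of equations indexed by $(\mu,n)\in\{\pm1\}\times\{1,\ldots,\mathsf{N}\}$ in the theorem statement. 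The central values $T(0)$ and $T(-3\eta/2)$ fixed in \eqref{SET-T-open} and the leading term $T^{(\infty)}(\lambda)$ come from \eqref{Central-T1} and the asymptotics, so the interpolation expression for $t_1(\lambda)$ is forced.

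For the converse, given any tuple $\{x_a\}\in S_T$ I would define the polynomial $t_1(\lambda)$ by the formula \eqref{SET-T-open} and the vector $|t\rangle$ by the separated wave-function \eqref{SoV-Ch-T-eigenV-open-gl3}, and verify the eigenvalue equation
\begin{equation*}
\langle h_1,\ldots,h_{\mathsf{N}}|T(\lambda)|t\rangle=t_1(\lambda)\langle h_1,\ldots,h_{\mathsf{N}}|t\rangle
\end{equation*}
on every basis co-vector. Invoking the interpolation formula for $T(\lambda)$ of the Corollary, this identity reduces to establishing, for each $a\in\{1,\ldots,\mathsf{N}\}$, the two relations $\langle h|T(\xi_a)|t\rangle=t_1(\xi_a)\langle h|t\rangle$ and $\langle h|T(-\xi_a)|t\rangle=t_1(-\xi_a)\langle h|t\rangle$. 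Since $T(\xi_a)$ acts from the right on the co-vector $\langle h_1,\ldots,h_{\mathsf{N}}|=\langle S|\prod_n T(\xi_n)^{h_n}$ simply by raising the exponent $h_a\mapsto h_a+1$, the cases $h_a\in\{0,1\}$ produce the next basis co-vector and match the wave-function shift $t_1(\xi_a)^{h_a}\mapsto t_1(\xi_a)^{h_a+1}$ immediately.

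The only non-trivial case is $h_a=2$, where $\langle h|T(\xi_a)=\langle S|\prod_{n\neq a}T(\xi_n)^{h_n}\,T(\xi_a)^{3}$ needs to be re-expressed in the basis. Here I would use the fusion identity $r(2\xi_a-\eta)r(2\xi_a-2\eta)\,T(\xi_a)T_2(\xi_a-\eta)=T_3(\xi_a)$ at operator level, with $T_2(\xi_a-\eta)$ expanded through its interpolation formula, to produce a cubic operator identity for $T(\xi_a)$ whose coefficients are polynomial in the other $T(\pm\xi_b)$ and $T(\pm\xi_b-\eta)$. When this identity is sandwiched between $\langle S|\prod_{n\neq a}T(\xi_n)^{h_n}$ and $|t\rangle$, every factor $T(\xi_b)$, $T(\xi_b-\eta)$ and $T(-\xi_b)=r_b T(\xi_b)^{-1}$ can be replaced on $|t\rangle$ by its eigenvalue assigned through the wave-function, producing precisely the scalar system equation \eqref{System-SoV-open} for the index $(\mu=+1,n=a)$; this is exactly the hypothesis that $\{x_a\}$ solves the system, and it yields $\langle h|T(\xi_a)|t\rangle=t_1(\xi_a)^{3}\prod_{n\neq a}t_1(\xi_n)^{h_n}=t_1(\xi_a)\langle h|t\rangle$. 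The analogous argument using the system equation at $(\mu=-1,n=a)$ handles $\langle h|T(-\xi_a)|t\rangle$, and then the inversion relation $T(\xi_a)T(-\xi_a)=r_a$ ensures $t_1(-\xi_a)=r_a/t_1(\xi_a)$ is consistent on $|t\rangle$.

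Finally, simplicity of the spectrum and uniqueness of the eigenvector (up to normalization) follow from the fact that the SoV co-vectors form a basis: two different eigenvalues would produce distinct sets $\{t_1(\xi_n)\}$ (since $t_1(\lambda)$ is recovered from these values by the interpolation), hence distinct separated wave-functions in the SoV basis, and conversely the wave-function $\prod_n t_1(\xi_n)^{h_n}$ uniquely determines $|t\rangle$ from $t_1$. The main obstacle in carrying this out in detail is the bookkeeping of the cubic reduction at $h_a=2$: verifying that the operator identity obtained by combining the second fusion identity with the interpolation formula for $T_2(\xi_a-\eta)$ and the inversion relations for $T(-\xi_b)$ produces, upon pairing with $\langle S|\prod_{n\neq a}T(\xi_n)^{h_n}$ on the left and $|t\rangle$ on the right, the exact scalar combination in the bracket of \eqref{System-SoV-open}, with no extra terms from the central contributions in $V(\lambda|T)$ being left unaccounted for.
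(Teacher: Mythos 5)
Your plan follows essentially the same route as the paper, whose own proof is only a two-line pointer to the analogous arguments for Theorem \ref{SoV-Sp-Ch-gl2-R} here and Theorem 5.1 of \cite{MaiN18}: the forward direction from the fusion identities, the converse by verifying the eigenvalue equation on the SoV co-vector basis via the interpolation formula, with the fusion and inversion relations handling the boundary cases $h_a=2$ and $T(-\xi_a)$, and simplicity from the basis property. The one point to organize carefully when fleshing this out --- which you correctly flag as the main obstacle --- is that ``replacing each $T(\xi_b)^{\pm1}$ by its eigenvalue on $|t\rangle$'' is only licensed for monomials whose exponents remain in $\{0,1,2\}$, so the cubic reduction must be arranged (e.g.\ by induction on $\sum_n h_n$) so that only in-range basis co-vectors ever appear.
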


\begin{proof}
The proof is done according to the same lines of the proof of the Theorem
5.1 for the fundamental representations of the $Y(gl_{3})$ Yang-Baxter algebra
case in \cite{MaiN18}. In fact, we have explicitly illustrated
this for the rank one case, where the proof of the Theorem \ref{SoV-Sp-Ch-gl2-R} for the fundamental representation of the $Y(gl_{2)}$ reflection algebra follows the same lines of that for the Yang-Baxter algebra.
\end{proof}

From the above discrete characterization of the transfer matrix spectrum in
our SoV basis we can prove the following quantum spectral curve functional
reformulation. Here, we consider explicitly only the case\footnote{Here, we have decided to implement the functional equation construction in this case as it is not covered in the existing literature. Indeed, the eigenvalue Ansatz construction presented in \cite{CaoYSW14} has been developed only for the case $(p_{-}=1,p_{+}=1)$. Note that anyhow we can also derive the quantum spectral curve in this last case and it has the same form of the $(p_{-}=2,p_{+}=1)$ case with just modified coefficients and inhomogeneous term.} $(p_{-}=2,p_{+}=1)$.
Note that denoting by $\alpha $ the boundary parameter introduced in $(\ref%
{Gen-Form-B+-})$, the matrices $\mathcal{M}^{\left( -\right) }$\ and\ $%
\mathcal{M}^{\left( +\right) }$ are non-commuting for $\alpha \neq 0$ mod$%
\,i\pi $. While for $\alpha =0$ we keep the transfer matrix simplicity
asking that it holds:%
\begin{equation}
\mathcal{M}^{\left( -\right) }=W\left( 
\begin{array}{ccc}
1 & 0 & 0 \\ 
0 & 1 & 0 \\ 
0 & 0 & -1%
\end{array}%
\right) W^{-1},\ \mathcal{M}^{\left( +\right) }=W\left( 
\begin{array}{ccc}
1 & 0 & 0 \\ 
0 & -1 & 0 \\ 
0 & 0 & -1%
\end{array}%
\right) W^{-1}\text{.}
\end{equation}%
In the above setting the following quantum spectral curve functional equation characterization 
of the spectrum holds

\begin{theorem}
The entire function $t_{1}(\lambda )$ satisfying the conditions $(\ref%
{Central-T1})$ and $\left( \ref{Inversion-symmetry}\right) $ is a $T(\lambda
)$ eigenvalue if and only if there exists a unique polynomial:%
\begin{equation}
\varphi _{t}(\lambda )=\prod_{a=1}^{\mathsf{M}}(\lambda -\lambda
_{a})(\lambda +\lambda _{a}+\eta )\text{\ \ with }\mathsf{M}\leq \mathsf{N}%
\text{,}  \label{Phi-form}
\end{equation}%
and $\lambda _{a}\neq \xi _{n}$ $\forall (a,n)\in \{1,...,\mathsf{M}\}\times
\{1,...,\mathsf{N}\}$ such that $t_{1}(\lambda )$,%
\begin{equation}
t_{2}(\lambda )=T_{2}^{(\infty )}(\lambda )+\sum_{\epsilon =\pm
1}\sum_{n=1}^{\mathsf{N}}f_{n,\epsilon }(\lambda )t_{1}(\epsilon \xi
_{n}-\eta )t_{1}(\epsilon \xi _{n})+V(\lambda |t_{1}(\lambda )),
\end{equation}%
and $\varphi _{t}(\lambda )$ are solutions of the following quantum spectral
curve functional equation:%
\begin{equation}
\alpha (\lambda )\varphi _{t}(\lambda -3\eta )-\beta (\lambda )t_{1}(\lambda
-2\eta )\varphi _{t}(\lambda -2\eta )-\gamma (\lambda )t_{2}(\lambda -\eta
)\varphi _{t}(\lambda -\eta )+T_{3}(\lambda )\varphi _{t}(\lambda
)=f(\lambda ),
\end{equation}%
where:%
\begin{eqnarray}
f(\lambda ) &=&\left( 1-\cos \alpha \right) v_{3}(\lambda )a(\lambda
)d(\lambda )d(\lambda -\eta )d(\lambda -2\eta ) \\
\alpha (\lambda ) &=&v_{2}(\lambda )\gamma _{0}(\lambda )\gamma _{0}(\lambda
-\eta )\gamma _{0}(\lambda -2\eta ), \\
\beta (\lambda ) &=&v_{1}(\lambda )\gamma _{0}(\lambda )\gamma _{0}(\lambda
-\eta ), \\
\gamma (\lambda ) &=&v_{0}(\lambda )\gamma _{0}(\lambda ),
\end{eqnarray}%
and
\begin{align}
\gamma _{0}(\lambda )& =(\eta /2+\zeta _{+}-\lambda )(\zeta _{-}+\lambda
)a(\lambda ), \\
v_{0}(\lambda )& =2^{4}(\lambda ^{2}-\eta ^{2})(\lambda +3\eta /2)(\lambda
-\eta /2), \\
v_{1}(\lambda )& =2^{6}\lambda (\lambda ^{2}-\eta ^{2})(\lambda ^{2}-(3\eta
/2)^{2})(\lambda +\eta /2), \\
v_{2}(\lambda )& =2^{6}\lambda (\lambda ^{2}-\eta ^{2})(\lambda ^{2}-(\eta
/2)^{2})(\lambda +3\eta /2), \\
v_{3}(\lambda )& =2^{8}\lambda (\lambda +\zeta _{-}-\eta )(\lambda -\eta
)(\lambda -2\eta )(\lambda ^{2}-(\eta /2)^{2})  \notag \\
& \times (\lambda ^{2}-\eta ^{2})(\lambda ^{2}-(3\eta /2)^{2})\gamma
_{0}(\lambda ).
\end{align}%
Moreover, up to a normalization the common transfer matrix eigenvector $%
|t\rangle $ admits the following separate wave-function representation:%
\begin{equation}
\langle h_{1},...,h_{\mathsf{N}}|t\rangle =\prod_{a=1}^{\mathsf{N}}\gamma
^{h_{a}}(\xi _{a})\varphi _{t}^{h_{a}}(\xi _{a}-\eta )\varphi
_{t}^{2-h_{a}}(\xi _{a}).
\end{equation}
\end{theorem}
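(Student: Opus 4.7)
The plan is to adapt the standard two-step argument of \cite{Nic10a,KitMN14}, which converts a discrete SoV spectrum characterization into an equivalent Baxter-type functional equation, to the present third-order reflection-algebra setting. The bridge between the two descriptions is the polynomial $\varphi_t(\lambda)$, whose role is to encode the eigenvalue $t_1$ at the inhomogeneity points via $t_1(\xi_n)/\gamma(\xi_n)=\varphi_t(\xi_n-\eta)/\varphi_t(\xi_n)$, which is exactly what the separate wave-function formula dictates.

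For the forward direction, suppose $t_1(\lambda)$ is a $T(\lambda)$ eigenvalue, i.e.\ it solves the discrete system $(\ref{System-SoV-open})$ with $t_2$ defined by the fusion identities. I would read off the ratios $t_1(\xi_n)/\gamma(\xi_n)$ as the target values of $\varphi_t(\xi_n-\eta)/\varphi_t(\xi_n)$; the inversion relation $(\ref{Inversion-symmetry})$ combined with the centrality of $T_3$ then forces the symmetric structure $\varphi_t(\lambda)=\varphi_t(-\lambda-\eta)$ of $(\ref{Phi-form})$ with $\mathsf{M}\leq \mathsf{N}$, and uniqueness follows from polynomial interpolation. To then establish the functional equation itself, I would verify it at the special points $\lambda=\pm\xi_n+k\eta$ for $k\in\{0,1,2,3\}$: at each such value, at least one factor of the form $\gamma_0(\lambda-j\eta)$ vanishes (since $\gamma_0$ contains $a(\lambda)$), which collapses the four-term equation into identities that reduce to the fusion, inversion and discrete SoV relations already established. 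Both sides of the equation are entire functions (indeed polynomials) of $\lambda$ whose degrees can be read off the explicit expressions of $\alpha,\beta,\gamma,T_3,f$; agreement at sufficiently many distinct points together with matching leading asymptotics then yields equality globally.

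For the reverse direction, assume $t_1,t_2,\varphi_t$ satisfy the functional equation. I would specialize it successively at $\lambda=\xi_n+k\eta$ for $k=0,1,2,3$: the vanishing of different factors of $\gamma_0$ at these values trivialises one or more of the terms, and the resulting recursions express $\varphi_t$ at $\xi_n,\xi_n-\eta,\xi_n-2\eta,\xi_n-3\eta$ in terms of $t_1(\pm\xi_n+\cdot)$ and $t_2(\pm\xi_n+\cdot)$; eliminating the $\varphi_t$-values between these relations reproduces exactly the quadratic system $(\ref{System-SoV-open})$ together with the inversion symmetry, so by Theorem preceding this one, $t_1\in\Sigma_T$. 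The separate wave-function representation then follows by substituting $t_1^{h_n}(\xi_n)=\gamma^{h_n}(\xi_n)\varphi_t^{h_n}(\xi_n-\eta)/\varphi_t^{h_n}(\xi_n)$ into $(\ref{SoV-Ch-T-eigenV-open-gl3})$ and absorbing the overall factor $\prod_n \varphi_t^2(\xi_n)$ into the normalisation of $|t\rangle$.

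The main obstacle I anticipate lies in the uniqueness claim and in controlling the inhomogeneous term $f(\lambda)$ when $\alpha\neq 0\bmod i\pi$: since the Baxter-type equation is here third order, its space of formal polynomial solutions is a priori larger than in the rank-one case, and one must carefully use the central values $(\ref{Central-T1})$, $(\ref{Central-T2a})$--$(\ref{Central-T2b})$ and the inversion relation to rule out spurious $\varphi_t$ and enforce the symmetric product form $(\ref{Phi-form})$. In the degenerate configuration $\alpha=0$ treated in the statement, the restricted form of $\mathcal{M}^{(\pm)}$ guarantees that the inhomogeneous term drops out and the equation reduces to a homogeneous third-order Baxter equation whose analysis runs along the lines of the $gl_3$ Yang--Baxter case in \cite{MaiN18}.
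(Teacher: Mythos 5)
Your overall strategy is the same as the paper's: define $\varphi_t$ through the discrete conditions $\gamma(\pm\xi_a)\varphi_t(\pm\xi_a-\eta)/\varphi_t(\pm\xi_a)=t_1(\pm\xi_a)$, and pass back and forth between the functional equation and the fusion/SoV relations by evaluating the (polynomial) equation at shifts of the inhomogeneities. However, there is a genuine gap in the direction ``eigenvalue $\Rightarrow$ functional equation''. The equation is a polynomial identity of degree $8\mathsf{N}+12$ in $\lambda$; after matching the leading coefficients you still need $8\mathsf{N}+12$ vanishing points of the difference, and the inhomogeneity lattice only supplies $8\mathsf{N}$ of them (the points $\pm\xi_a+k\eta$ with $k\in\{-1,0,1,2\}$, of which the $2\mathsf{N}$ points at $k=-1$ are trivially satisfied). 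Your claim that ``agreement at sufficiently many distinct points together with matching leading asymptotics yields equality globally'' papers over the missing $12$ conditions. The paper closes this count by two additional steps you do not have: (i) it shows that every term of the equation is divisible by a fixed degree-$6$ polynomial, $(\lambda+\zeta_--\eta)(\lambda^2-\eta^2)(\lambda+3\eta/2)(\eta/2+\zeta_+-\lambda)(\zeta_-+\lambda)$, using in particular $K^-_{\langle ab\rangle}(-\zeta_-)=0$ so that $T_2(-\zeta_-)=0$; and (ii) it verifies the reduced equation at the six further points $\lambda=0,\pm\eta/2,\eta,3\eta/2,2\eta$ by means of the central zeros \rf{Central-zeros-T2} and the central values \rf{Central-T1}, \rf{Central-T2a}--\rf{T2byT1b}. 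You mention these central values only in connection with the uniqueness of $\varphi_t$, not where they are actually needed.

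A second, smaller but concrete error: your set of evaluation points $\lambda=\pm\xi_n+k\eta$ with $k\in\{0,1,2,3\}$ is wrong at $k=3$. At $\lambda=\pm\xi_n+3\eta$ none of $\alpha,\beta,\gamma,T_3,f$ nor $t_2(\lambda-\eta)$ has a forced zero (the relevant zeros of $\gamma_0$ sit at $\pm\xi_a-\eta$ and those of $T_3$ and of the central factor of $T_2$ at $\pm\xi_a\pm\eta,\pm\xi_a+2\eta$), so that specialization yields no usable relation; the correct fourth family is $k=-1$, where \emph{all} terms vanish and the equation is trivially satisfied, which is precisely how the paper reaches the count of $8\mathsf{N}$ points. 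With the shift set corrected, your three nontrivial specializations ($k=0,1,2$) do reproduce the paper's relations \rf{Ch-1-Q-open}--\rf{Ch-3-Q-open} and hence the fusion system, so the converse direction of your argument is essentially sound once one also extracts, as the paper does, the asymptotic coefficient $t_{1,2\mathsf{N}+2}=1-2\cos\alpha$ from the functional equation.
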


\begin{proof}
Let us start assuming that the entire function $t_{1}(\lambda )$ satisfies
with the polynomial $t_{2}(\lambda )$ and $\varphi _{t}(\lambda )$ the
functional equation then it is a degree $2\mathsf{N}+2$ polynomial in $%
\lambda $ with leading coefficient $t_{1,2\mathsf{N}+2}$ satisfying the
equation:%
\begin{equation}
-1-\text{ }t_{1,2\mathsf{N}+2}+\text{\thinspace tr}%
_{ab}P_{ab}^{-}\mathcal{M}_{a}^{\left( +\right) }\mathcal{M}_{a}^{\left(
-\right) }\mathcal{M}_{b}^{\left( +\right) }\mathcal{M}_{b}^{\left( -\right)
}P_{ab}^{-}-1=4\left( \cos \alpha -1\right) \text{.}
\end{equation}%
By using the identity $\left( \ref{Gen-Form-B+-}\right) $, we can compute
now the following traces:%
\begin{eqnarray}
\text{tr}_{ab}\mathcal{M}_{a}^{\left( +\right) }\mathcal{M}_{a}^{\left(
-\right) } &=&2\cos \alpha -1, \\
\text{tr}_{ab}P_{ab}^{-}\mathcal{M}_{a}^{\left( +\right) }\mathcal{M}%
_{a}^{\left( -\right) }\mathcal{M}_{b}^{\left( +\right) }\mathcal{M}%
_{b}^{\left( -\right) } &=&2\cos \alpha -1,
\end{eqnarray}%
from which it follows:%
\begin{equation}
t_{1,2\mathsf{N}+2}=1-2\cos \alpha =-\text{tr}_{ab}\mathcal{M}_{a}^{\left(
+\right) }\mathcal{M}_{a}^{\left( -\right) },
\end{equation}%
as it is required for transfer matrix eigenvalues. Let us observe now
that it holds:%
\begin{equation}
\alpha (\pm \xi _{a})=\beta (\pm \xi _{a})=f(\pm \xi _{a})=0,\text{ }\gamma
(\pm \xi _{a})\neq 0,\text{ }T_{3}(\pm \xi _{a})\neq 0,
\end{equation}%
so that the functional equation implies:%
\begin{equation}
\frac{\gamma (\pm \xi _{a})\varphi _{t}(\pm \xi _{a}-\eta )}{\varphi
_{t}(\pm \xi _{a})}=\frac{T_{3}(\pm \xi _{a})}{t_{2}(\pm \xi _{a}-\eta )}.
\label{Ch-1-Q-open}
\end{equation}%
Moreover, we have%
\begin{equation}
\alpha (\pm \xi _{a}+\eta )=T_{3}(\pm \xi _{a}-\eta )=f(\pm \xi _{a}+\eta
)=0,\text{ \ }\beta (\pm \xi _{a}+\eta )\neq 0,\text{ }\gamma (\pm \xi
_{a}+\eta )\neq 0,
\end{equation}%
so that the functional equation implies:%
\begin{equation}
\frac{\beta (\pm \xi _{a}+\eta )\varphi _{t}(\pm \xi _{a}-\eta )}{\gamma
(\pm \xi _{a}+\eta )\varphi _{t}(\pm \xi _{a})}=\frac{t_{2}(\pm \xi _{a})}{%
t_{1}(\pm \xi _{a}-\eta )}.  \label{Ch-2-Q-open}
\end{equation}%
Finally, we have:%
\begin{equation}
t_{2}(\pm \xi _{a}+\eta )=T_{3}(\pm \xi _{a}+2\eta )=f(\pm \xi _{a}+2\eta
)=0,\text{ \ }\beta (\pm \xi _{a}+2\eta )\neq 0,\text{ }\alpha (\pm \xi
_{a}+2\eta )\neq 0,
\end{equation}%
so that the functional equation implies:%
\begin{equation}
\frac{\alpha (\pm \xi _{a}+2\eta )\varphi _{t}(\pm \xi _{a}-\eta )}{\beta
(\pm \xi _{a}+2\eta )\varphi _{t}(\pm \xi _{a})}=t_{1}(\pm \xi _{a}).
\label{Ch-3-Q-open}
\end{equation}%
These identities imply the following ones:%
\begin{eqnarray}
r(\pm 2\xi _{a}-\eta )r(\pm 2\xi _{a}-2\eta )t_{1}(\pm \xi _{a})t_{2}(\pm
\xi _{a}-\eta ) &=&T_{3}(\pm \xi _{a}),\text{ }\forall a\in \{1,...,\mathsf{N%
}\}, \\
r(\pm 2\xi _{a}-\eta )t_{1}(\pm \xi _{a})t_{1}(\pm \xi _{a}-\eta )
&=&t_{2}(\pm \xi _{a}),\text{ }\forall a\in \{1,...,\mathsf{N}\},
\end{eqnarray}%
so that, by the SoV characterization obtained in our previous theorem, we have
that $t_{1}(\lambda )$ and $t_{2}(\lambda )$ are eigenvalues of the transfer
matrices $T(\lambda )$ and $T_{2}(\lambda )$, associated to the same
eigenvector $|t\rangle $.

Let us now prove the reverse statement, i.e. we assume that $t_{1}(\lambda )$
is eigenvalue of the transfer matrix $T(\lambda )$
and we want to show that there exists a polynomial $\varphi _{t}(\lambda )$
which satisfies with $t_{1}(\lambda )$ and $t_{2}(\lambda )$ the functional
equation. Here, we characterize $\varphi _{t}(\lambda )$ by imposing that it
satisfies the following set of conditions:%
\begin{equation}
\gamma (\pm \xi _{a})\frac{\varphi _{t}(\pm \xi _{a}-\eta )}{\varphi
_{t}(\pm \xi _{a})}=t_{1}(\pm \xi _{a}).  \label{Discrete-Ch-Q-open}
\end{equation}%
The fact that this characterizes uniquely a polynomial of the form $\left( %
\ref{Phi-form}\right) $ can be shown just following the general proof given
in \cite{MaiN18a}. Let us show that this characterization of $\varphi
_{t}(\lambda )$ implies that the functional equation is indeed satisfied.
The functional equation is a degree $8\mathsf{N}+12$ polynomial in $\lambda $
so to show it we have just to prove that it is satisfied in $8\mathsf{N}+12$
distinct points as the leading coefficient is zero, as we have shown above.
We use the following $8\mathsf{N}$ points $\pm \xi _{a}+k_{a}\eta $, for any 
$a\in \{1,...,\mathsf{N}\}$ and $k_{a}\in \left\{ -1,0,1,2\right\} $.
Indeed, for $\lambda =\pm \xi _{a}-\eta $ it holds:%
\begin{equation}
\alpha (\pm \xi _{a}-\eta )=\beta (\pm \xi _{a}-\eta )=\gamma (\pm \xi
_{a}-\eta )=T_{3}(\pm \xi _{a}-\eta )=f(\pm \xi _{a}-\eta )=0,
\end{equation}%
from which the functional equation is satisfied for any $a\in \{1,...,%
\mathsf{N}\}$ and in the remaining $6\mathsf{N}$ points the functional
equation reduces to the $6\mathsf{N}$ equations $\left( \ref{Ch-1-Q-open}%
\right) $-$\left( \ref{Ch-3-Q-open}\right) $ which are equivalent to the
discrete characterization $\left( \ref{Discrete-Ch-Q-open}\right) $, thanks to the fusion equations satisfied by the transfer matrix eigenvalues.
Finally, by using the explicit form of the quantum
determinant, we can show that the spectral curve equation factorizes the
following polynomial of degree $6$:%
\begin{equation}
(\lambda +\zeta _{-}-\eta )(\lambda ^{2}-\eta ^{2})(\lambda +3\eta /2)\gamma
_{0}(\lambda ),
\end{equation}%
as indeed it holds:
\begin{equation}
T_{2}(-\zeta _{-})=0\text{, being }K_{\langle ab\rangle }^{-}(-\zeta _{-})=0.
\end{equation}
Moreover, we can prove that this simplified quantum spectral equation (i.e. the one obtained after removing the above 6 common zeros) is satisfied in
the following $6$ points:%
\begin{equation}
\lambda =0,\pm \eta /2,\eta ,3\eta /2,2\eta ,
\end{equation}
just using the known central zeros of the second transfer matrix $\left( \ref{Central-zeros-T2}\right) $ and the transfer matrix properties \rf{Central-T1}
and \rf{Central-T2a}-\rf{T2byT1b}. This completes our proof of the equivalent rewriting of the spectrum in terms of the quantum spectral curve. 

Finally, renormalizing the eigenvector $|t\rangle $ multiplying it by the
non-zero product of the $\varphi _{t}^{2}(\xi _{a})$ over all the $a\in
\{1,...,\mathsf{N}\}$ we get: 
\begin{equation}
\prod_{a=1}^{\mathsf{N}}\varphi _{t}^{2}(\xi _{a})\prod_{a=1}^{\mathsf{N}%
}t_{1}^{h_{a}}(\xi _{a})\overset{\left( \ref{Discrete-Ch-Q-open}\right) }{=}%
\prod_{a=1}^{\mathsf{N}}\gamma ^{h_{a}}(\xi _{a})\varphi _{t}^{h_{a}}(\xi
_{a}-\eta )\varphi _{t}^{2-h_{a}}(\xi _{a}),
\end{equation}%
which proves our statement on the SoV characterization of the transfer
matrix eigenvectors presented in this theorem.
\end{proof}

\section{SoV basis for fundamental representations of $Y(gl_{n})$ reflection
algebra}

Here, we show that the transfer matrices of the fundamental representations
of $Y(gl_{n})$ reflection algebra can be used also for the general higher
rank $n\geq 3$ case as the independent generators of the SoV basis.

Let us consider the $Y(gl_{n})$ $R$-matrix%
\begin{equation}
R_{ab}(\lambda _{a}-\lambda _{b})=(\lambda _{a}-\lambda _{b})I_{ab}+\eta 
\mathbb{P}_{ab}\in \mathrm{{End}(V_{a}\otimes V_{b}),}
\end{equation}%
$\text{with }V_{a}=\mathbb{C}^{n}\text{, }V_{b}=\mathbb{C}^{n}\text{, }n\in 
\mathbb{N}^{\ast }$, solution of the Yang-Baxter equation:%
\begin{equation}
R_{ab}(\lambda _{a}-\lambda _{b})R_{ac}(\lambda _{a}-\lambda
_{c})R_{bc}(\lambda _{b}-\lambda _{c})=R_{bc}(\lambda _{b}-\lambda
_{c})R_{ac}(\lambda _{a}-\lambda _{c})R_{ab}(\lambda _{a}-\lambda _{b})\in 
\mathrm{{End}(V_{a}\otimes V_{b}\otimes V_{c})},
\end{equation}%
where $\mathbb{P}_{ab}$ is the permutation operator on the tensor product $%
V_{a}\otimes V_{b}$ and $\eta $ is an arbitrary complex number. Then, we can
define the bulk monodromy matrix:%
\begin{equation}
M_{a}(\lambda )\equiv R_{a\mathsf{N}}(\lambda _{a}-\xi _{\mathsf{N}})\cdots
R_{a1}(\lambda _{a}-\xi _{1})\in \mathrm{{End}(V_{a}\otimes \mathcal{H})},
\end{equation}%
satisfying the Yang-Baxter algebra:%
\begin{equation}
R_{ab}(\lambda _{a}-\lambda _{b})M_{a}(\lambda _{a})M_{b}(\lambda
_{b})=M_{b}(\lambda _{b})M_{a}(\lambda _{b})R_{ab}(\lambda _{a}-\lambda
_{b})\in \mathrm{{End}(V_{a}\otimes V_{b}\otimes \mathcal{H}),}
\end{equation}%
where $\mathcal{H}\equiv \otimes _{l=1}^{\mathsf{N}}V_{l}$. The boundary
matrices:%
\begin{equation}
K_{\pm }(\lambda )=I\mp \frac{\lambda -n\delta _{\pm 1,1}\eta /2}{\zeta
_{\pm }}\mathcal{M}^{\left( \pm \right) },
\end{equation}%
where%
\begin{equation}
(\mathcal{M}^{\left( \pm \right) })^{2}=r^{\left( \pm \right) }I,r^{\left(
\pm \right) }=1,0
\end{equation}%
define the most general scalar solutions to the reflection and dual
reflection equations:%
\begin{equation}
K_{-,a}(\lambda _{a})R_{ab}(\lambda _{a}-\lambda _{b})K_{-b}(\lambda
_{b})R_{ab}(\lambda _{a}+\lambda _{b})=K_{-,b}(\lambda _{b})R_{ab}(\lambda
_{a}+\lambda _{b})K_{-,a}(\lambda _{a})R_{ab}(\lambda _{a}-\lambda _{b})
\end{equation}%
and%
\begin{equation}
K_{+,a}^{t_{a}}(\lambda _{a})R_{ab}(\lambda _{b}-\lambda
_{a})K_{+,b}^{t_{b}}(\lambda _{b})R_{ab}(\lambda _{a}+\lambda _{b}-n\eta
)=K_{+,b}^{t_{b}}(\lambda _{b})R_{ab}(\lambda _{a}+\lambda _{b}-n\eta
)K_{+,a}^{t_{a}}(\lambda _{a})R_{ab}(\lambda _{b}-\lambda _{a}).
\end{equation}%
By using them we can define the boundary transfer matrix:%
\begin{equation}
T(\lambda )\equiv tr_{V_{a}}[K_{+,a}(\lambda )M_{a}(\lambda
)K_{-,a}(\lambda)\hat{M}_{a}(\lambda )],
\end{equation}%
where
\begin{equation}
\hat{M}_{a}(\lambda )\equiv R_{a1}(\lambda +\xi _{1})\cdots R_{a\mathsf{N}%
}(\lambda +\xi _{\mathsf{N}}),
\end{equation}%
is proven to be a one parameter family of commuting
operators following Sklyanin's paper \cite{Skl88}.

\subsection{Generating the SoV basis by transfer matrix action}

In these fundamental representations for the $Y(gl_{n})$ reflection algebra
the following theorem holds:

\begin{theorem}
\label{General-SoV-basis}The following set of co-vectors:%
\begin{equation}
\langle h_{1},...,h_{\mathsf{N}}|\equiv \langle S|\prod_{a=1}^{\mathsf{N}%
}(T(\xi _{a}))^{h_{a}}\text{\ for any }\{h_{1},...,h_{\mathsf{N}%
}\}\in \{0,...,n-1\}^{\otimes \mathsf{N}},  \label{co-vector-basis-open}
\end{equation}%
is a basis of $\mathcal{H}^{\ast }$ for almost any choice of the co-vector $%
\langle S|$, of the value of $\eta \in \mathbb{C}$ , of the inhomogeneity
parameters satisfying \rf{Inhomog-cond} and of the boundary parameters in $K_{\pm ,a}(\lambda )$.

In particular, for any choice of the boundary parameters such that there exist $\alpha _{\pm }\in \mathbb{C}$ for which the following matrix
\begin{equation}
K_{a}^{(+,-)}(\alpha _{\pm },\mathcal{M}^{\left( \pm \right) })=(\alpha
_{-}I_{a}+a\mathcal{M}_{a}^{\left( -\right) })(\alpha _{+}I_{a}+a\mathcal{M}%
_{a}^{\left( +\right) })\in \mathrm{{End}(V_{a})}
\end{equation}%
has simple spectrum on $V_{a}$ for any $a\in \{1,...,\mathsf{N}\},$
then we can take
\begin{equation}
\langle S|=\bigotimes_{a=1}^{\mathsf{N}}\langle S,a|,\text{ \ \ with \ }%
\langle S,a|\in V_{a}\text{ \ }\forall a\in \{1,...,\mathsf{N}\},
\label{Starting-State-TP}
\end{equation}
such that\footnote{The existence of $\langle S,a|$ is implied by the spectrum simplicity of $K_{a}^{(+,-)}$.}
\begin{equation}
\langle S,a|(K_{a}^{(+,-)})^{h}\text{ with }h\in \{0,...,n-1\},
\end{equation}%
form a co-vector basis for $V_{a}$ for any $a\in \{1,...,\mathsf{N}\}$.
\end{theorem}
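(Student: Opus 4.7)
The plan is to adapt the argument used for the rank-one reflection case (Theorem \ref{Th-SoV-basis}) and for the closed $Y(gl_n)$ Yang--Baxter case treated in \cite{MaiN18}. The strategy is to exhibit a specialization of the inhomogeneities for which the leading asymptotic of the transfer matrix at the inhomogeneity points factorizes into purely local operators, and then to invoke a polynomiality argument to extend the basis property to generic parameters.

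First, I would specialize $\xi_a = a\xi$ with a single free parameter $\xi$ and analyze the behaviour of $T(\xi_l)$ as $\xi \to \infty$. Using the reduction $R_{a,l}(0) = \eta\,\mathbb{P}_{a,l}$ together with the polynomial structure in $\lambda$ of the $R$-matrix and of the boundary matrices $K_\pm$, the transfer matrix $T(\xi_l)$ is a polynomial in $\xi$ of definite degree, whose leading coefficient is, up to an explicit non-vanishing scalar depending on $l$, $\eta$, $\mathsf{N}$ and on $\zeta_\pm$, a single-site operator of the form $K_l^{(+,-)}(\alpha_\pm,\mathcal{M}^{(\pm)}) \otimes I_{\neq l}$, for values of $\alpha_\pm$ that can be read off from the boundary parameters. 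Choosing $\langle S|$ of the tensor product form $(\ref{Starting-State-TP})$ with each $\langle S,a|$ a cyclic co-vector for $K_a^{(+,-)}$ — which is possible precisely because of the assumed simple spectrum of $K_a^{(+,-)}$ — the leading asymptotic of the co-vectors $(\ref{co-vector-basis-open})$ is a tensor product of local bases on each site. Hence the determinant of the change-of-basis matrix from the canonical tensor basis of $\mathcal{H}^*$ to $(\ref{co-vector-basis-open})$ factorizes in this limit into a product of $\mathsf{N}$ non-zero local $n \times n$ determinants, one on each quantum site.

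Since this global determinant is a polynomial in the inhomogeneities and in the various boundary parameters (with a controlled rational dependence on $\zeta_\pm$ and $\alpha_\pm$), its non-vanishing at one chosen configuration implies non-vanishing on a Zariski-dense open set of parameters, which yields the claimed basis property for almost any choice satisfying $(\ref{Inhomog-cond})$. The hardest step is the first one: tracking the precise leading behaviour of $T(\xi_l)$ in $\xi$ and identifying its local factor with $K_l^{(+,-)}(\alpha_\pm,\mathcal{M}^{(\pm)})$ for suitably tuned $\alpha_\pm$. This requires careful bookkeeping of the contributions coming from the two boundary matrices and from the $R$-factors $R_{a,b}(\pm(a-b)\xi)$ at large $\xi$, and is where the rank-two argument of Theorem \ref{Th-SoV-basis} must be generalized both to arbitrary rank $n$ and to the quadratic combination $(\alpha_- I + \mathcal{M}^{(-)})(\alpha_+ I + \mathcal{M}^{(+)})$ that appears in the statement rather than the bare product $\mathcal{M}^{(-)}\mathcal{M}^{(+)}$ obtained in the rank-one case.
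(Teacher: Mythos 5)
Your proposal follows essentially the same route as the paper's proof: specialize $\xi_a=a\xi$, identify the leading coefficient in $\xi$ of $T(\xi_l)$ with the purely local operator $K_l^{(+,-)}(\alpha_\pm,\mathcal{M}^{(\pm)})$, reduce the global $n^{\mathsf{N}}\times n^{\mathsf{N}}$ determinant to a product of local $n\times n$ determinants for a tensor-product $\langle S|$, and conclude by polynomiality in all remaining parameters. The only device you leave implicit is how the affine combination $(\alpha_- I + l\mathcal{M}^{(-)})(\alpha_+ I + l\mathcal{M}^{(+)})$, rather than the bare product $\mathcal{M}^{(-)}\mathcal{M}^{(+)}$, actually arises: in the paper this comes from scaling the boundary parameters together with the inhomogeneities, $\zeta_\pm=\alpha_\pm\xi$, and reading the large-$\xi$ asymptotics off the inversion-type identity \rf{T+ID}.
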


\begin{proof}
We can follow the method already presented in \cite{MaiN18} for the
proof of the general Proposition 2.4. Here we use that the transfer matrix
is a polynomial in $\eta $, the inhomogeneities $\{\xi _{a}\}_{a\in \{1,...,%
\mathsf{N}\}}$ and Laurent polynomial in the boundary parameters. So the
determinant of the $n^{\mathsf{N}}\times n^{\mathsf{N}}$ matrix, whose lines
coincides with the component of the co-vectors $\left( \ref%
{co-vector-basis-open}\right) $ in the natural  basis of $\mathcal{H}%
$, is a polynomial in the component of the co-vector $\langle S|\in 
\mathcal{H}^{\ast }$, in $\eta $, in the inhomogeneities $\{\xi _{a}\}_{a\in
\{1,...,\mathsf{N}\}}$ and a Laurent polynomial in the boundary parameters.
Then it is enough to prove that it is nonzero for some special value of
these parameters to prove that it is so for almost any value of these
parameters.

Let us observe now that from \rf{T+ID}, it follows that 
$T(\xi _{l})\zeta _{+}\zeta _{-}$ are polynomials of degree $2$%
\textsf{$N$}$+1$ in $\xi $ for all $l\in \{1,...,\mathsf{N}\}$ with
maximal degree coefficient given by:%
\begin{equation}
d_{l,2\mathsf{N}+1}K_{l}^{(+,-)}(\alpha _{\pm },\mathcal{M}^{\left( \pm
\right) }),\text{ with }d_{l,2\mathsf{N}+1}=\eta (-1)^{\mathsf{N}-l}l(%
\mathsf{N}-l)!(\mathsf{N}+l)!,
\end{equation}
once we impose:%
\begin{equation}
\xi _{a}=a\xi \text{ \ }\forall a\in \{1,...,\mathsf{N}\}\text{ and }\zeta
_{\pm }=\alpha _{\pm }\xi.  \label{First-gl_n-asymp-Cond}
\end{equation}%
So that the co-vectors $\langle h_{1},...,h_{\mathsf{N}}|$ have the following expansion in $\xi$:%
\begin{equation}
\langle h_{1},...,h_{\mathsf{N}}|\equiv \frac{\xi ^{(2\mathsf{N}%
+1)\sum_{a=1}^{\mathsf{N}}h_{a}}\prod_{a=1}^{\mathsf{N}}d_{a,2\mathsf{N}%
+1}^{h_{a}}\langle S|\prod_{a=1}^{\mathsf{N}}(K_{a}^{(+,-)}(\alpha _{\pm },%
\mathcal{M}^{\left( \pm \right) }))^{h_{a}}+O(\xi ^{((2\mathsf{N}%
+1)\sum_{a=1}^{\mathsf{N}}h_{a})-1})}{\prod_{a=1}^{\mathsf{N}}\xi
_{a}^{h_{a}}(\xi _{a}^{h_{a}}-\eta n/4)}.
\end{equation}%
Hence  a sufficient condition to generate a
basis is given by:%
\begin{equation}
\text{det}_{n^{\mathsf{N}}}||\left( \langle S|\left( \prod_{a=1}^{\mathsf{N}%
}(K_{a}^{(+,-)}(\alpha _{\pm },\mathcal{M}^{\left( \pm \right)
}))^{h_{a}(i)}\right) |e_{j}\rangle\right) _{i,j\in \{1,...,n^{\mathsf{N}}\}}||\neq
0,
\end{equation}%
where for any $i\in \{1,...,n^{\mathsf{N}}\}$ the $\mathsf{N}$-tuple $(h_{1}(i),...,h_{\mathsf{N}}(i))\in
\{1,...,n\}^{\otimes \mathsf{N}}$ is uniquely defined by \rf{Isomorph-j} and $|e_{j}\rangle$ is the element $j\in \{1,...,n^{\mathsf{N}}\}$ of the natural
basis in $\mathcal{H}$. If we take $\langle S|$ of the tensor product form $(%
\ref{Starting-State-TP})$\ then the above determinant reduces to:%
\begin{equation}
\prod_{a=1}^{\mathsf{N}}\text{det}_{n}||\left( \langle S,a|\left( (\alpha
_{-}+a\mathcal{M}_{a}{}^{\left( -\right) })^{i-1}(\alpha _{+}+a\mathcal{M}%
_{a}^{\left( +\right) })^{i-1}\right) |e_{j}(a)\rangle\right) _{i,j\in
\{1,...,n\}}||,
\end{equation}%
where $|e_{j}(a)\rangle$ is the element $j\in \{1,...,n\}$ of the natural basis in $%
V_{a}$. Let us now show that for general choice of the boundary parameters
the matrices $K_{a}^{(+,-)}(\alpha _{\pm },\mathcal{M}^{\left( \pm \right)
}) $ can be indeed taken with non-degenerate spectrum. Let us prove it for
the special choice:%
\begin{equation}
K_{a}^{(+,-)}(\alpha _{\pm }=0,\mathcal{M}^{\left( \pm \right) })=\mathcal{M}%
_{a}^{\left( -\right) }\mathcal{M}_{a}^{\left( +\right) },
\end{equation}%
for any $a$. The remaining boundary parameters are indeed contained in the
choice of two matrices $\mathcal{M}_{a}^{\left( -\right) }$ and $\mathcal{M}%
_{a}^{\left( +\right) }$ consistently with the conditions $\mathcal{M}%
_{a}^{\left( \pm \right) 2}=I_{a}$. In particular, we are free to take:%
\begin{equation}
\lbrack \mathcal{M}_{a}^{\left( -\right) },\mathcal{M}_{a}^{\left( +\right)
}]\neq 0,
\end{equation}%
so that we have to determine the conditions on the eigenvalues of the full
matrix $\mathcal{M}_{a}^{\left( -\right) }\mathcal{M}_{a}^{\left( +\right) }$
and prove that it can have simple spectrum. Let us denote by $t_{j}$
an eigenvalue of the matrix $\mathcal{M}_{a}^{\left( -\right) }\mathcal{M}_{a}^{\left( +\right) }$ and $m_{j}$ the corresponding degeneracy, for $j\in \{1,...,\bar{n}\}$ and $n=\sum_{j=1}^{\bar{n}}m_{j}$. Then, by%
\begin{equation}
\text{det}{}_{a}\mathcal{M}_{a}^{\left( -\right) }\mathcal{M}_{a}^{\left(
+\right) }=\text{det}{}_{a}\mathcal{M}_{a}^{\left( -\right) }\text{det}{}_{a}%
\mathcal{M}_{a}^{\left( +\right) },
\end{equation}%
it follows%
\begin{equation}
\prod_{j=1}^{\bar{n}}t_{j}^{m_{j}}=\left( -1\right) ^{s_{-}+s_{+}}\text{
with }\left( -1\right) ^{s_{\pm }}=\text{det}{}_{a}\mathcal{M}_{a}^{\left(
\pm \right) },  \label{Con-1}
\end{equation}%
while the identity%
\begin{equation}
\sum_{j=1}^{\bar{n}}m_{j}t_{j}^{r}=\sum_{j=1}^{\bar{n}}m_{j}t_{j}^{-r}\text{
\ for positive integer }r,  \label{Con-2}
\end{equation}%
follows from the identities:%
\begin{align}
tr_{V_{a}}\left( \left( \mathcal{M}_{a}^{\left( -\right) }\mathcal{M}%
_{a}^{\left( +\right) }\right) ^{r}\right) & =tr_{V_{a}}\left( \mathcal{M}%
_{a}^{\left( -\right) }\mathcal{M}_{a}^{\left( +\right) }\cdots \mathcal{M}%
_{a}^{\left( -\right) }\mathcal{M}_{a}^{\left( +\right) }\right)  \notag \\
& =tr_{V_{a}}\left( \mathcal{M}_{a}^{\left( +\right) }\mathcal{M}%
_{a}^{\left( -\right) }\mathcal{M}_{a}^{\left( +\right) }\cdots \mathcal{M}%
_{a}^{\left( -\right) }\right)  \notag \\
& =tr_{V_{a}}\left( \left( \mathcal{M}_{a}^{\left( +\right) }\mathcal{M}%
_{a}^{\left( -\right) }\right) ^{r}\right)
\end{align}%
and from the identity:%
\begin{equation}
\left( \mathcal{M}_{a}^{\left( -\right) }\mathcal{M}_{a}^{\left( +\right)
}\right) ^{-1}=\mathcal{M}_{a}^{\left( +\right) }\mathcal{M}_{a}^{\left(
-\right) }.
\end{equation}%
The conditions $(\ref{Con-1})$ and $(\ref{Con-2})$ imply that $t_{j}\neq 0$
for any $j\in \{1,...,\bar{n}\}$ and%
\begin{equation}
\forall \text{ }j\in \{1,...,\bar{n}\}:t_{j}\neq \pm 1\rightarrow \exists
!\,h(j)\text{ }\in \{1,...,\bar{n}\}:t_{j}=t_{h(j)}^{-1},m_{j}=m_{h(j)}.
\end{equation}%
It is easy now to show that these conditions are compatible with the simplicity of the 
spectrum of $\mathcal{M}_{a}^{\left( -\right) }\mathcal{M}_{a}^{\left(
+\right) }$. Avoiding the case of non-trivial Jordan blocks for simplicity,
for example we can ask directly $\bar{n}=n$, i.e. $m_{j}=1$ for any $j\in \{1,...,%
\bar{n}\}$. Then we can distinguish the cases, for $n$ odd we can choose the
following solution to the above conditions:%
\begin{equation}
t_{1}=\left( -1\right) ^{s_{-}+s_{+}},\text{ }t_{1+j+(n-1)/2}=t_{1+j}^{-1},%
\text{ }t_{1+j}\neq \pm 1,t_{1+j}\neq t_{1+h}\text{\ }\forall h\neq j\in
\{1,...,(n-1)/2\}
\end{equation}%
for $n$ even and $s_{-}+s_{+}$ even, we can choose the following solution to
the above conditions:%
\begin{equation}
t_{j+n/2}=t_{j}^{-1},\text{ }t_{j}\neq \pm 1,t_{j}\neq t_{h},\text{\ }%
\forall h\neq j\in \{1,...,n/2\},
\end{equation}%
while for $n$ even and $s_{-}+s_{+}$ odd, we can choose:%
\begin{equation}
t_{1}=1,t_{2}=-1,\text{ }t_{2+j+(n-2)/2}=t_{2+j}^{-1},\text{ }t_{2+j}\neq
\pm 1,t_{2+j}\neq t_{2+h}\text{\ }\forall h\neq j\in \{1,...,n/2-1\}.
\end{equation}%
This prove the possibility to choose $\mathcal{M}_{a}^{\left( -\right) }%
\mathcal{M}_{a}^{\left( +\right) }$ with simple spectrum which completes the
proof.
\end{proof}

\textbf{Remark:} Note that the results about the construction of the SoV
basis and the simplicity and diagonalizability of the fundamental transfer matrix
can be extended naturally to the fundamental representations of the $U_{q}(%
\widehat{gl}_{n})$ reflection algebra just using the same arguments
described in the general Proposition 2.5 and Proposition 2.6 of \cite{MaiN18}. In particular, the proof follows mainly the same lines
described for the case $n=2$ in Theorem \ref{SoV-Basis-6v-Open-trigo} of our
current paper.

\subsection{Diagonalizability and simplicity of the transfer matrix}

We want to show that the transfer matrix associated to the fundamental
representation of the $Y(gl_{n})$ reflection algebra is diagonalizable with
simple spectrum under some further requirement on the boundary matrices.

\begin{theorem}
Let the boundary matrices $\mathcal{M}_{a}^{\left( -\right) }$ and $\mathcal{%
M}_{a}^{\left( +\right) }$ be non-commuting while the product matrix $%
\mathcal{M}_{a}^{\left( -\right) }\mathcal{M}_{a}^{\left( +\right) }$ is
diagonalizable and with simple spectrum, then, for almost any value of $\eta
\in \mathbb{C}$ \ and of the inhomogeneity parameters satisfying  \rf{Inhomog-cond}, it holds:%
\begin{equation}
\langle t|t\rangle \neq 0,
\end{equation}%
where $|t\rangle $ and $\langle t|$ are the unique eigenvector and
eigenco-vector associated to $t(\lambda )$, a generic eigenvalue of $%
T(\lambda )$, and $T(\lambda )$ is diagonalizable
with simple spectrum.
\end{theorem}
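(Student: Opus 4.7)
The plan is to adapt the strategy of the rank-one and rank-two cases (the analogous theorem at the end of Section 2 and the spectrum characterization of Section 4) to arbitrary $n$, splitting the argument into two essentially independent parts: first, establish the separated wave-function of eigenvectors and the simplicity of the spectrum directly from the SoV basis of Theorem \ref{General-SoV-basis}; second, reduce diagonalizability to the nonvanishing of $\langle t | t \rangle$ via an asymptotic computation.

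First, I would show that any right eigenvector $|t\rangle$ of $T(\lambda)$ associated to an eigenvalue $t(\lambda)$ admits, in the left SoV basis (\ref{co-vector-basis-open}), the separated wave-function $\langle h_1,\ldots,h_{\mathsf{N}}\,|\,t\rangle = \prod_{n=1}^{\mathsf{N}} t^{h_n}(\xi_n)$ for any $\mathbf{h}\in\{0,\ldots,n-1\}^{\mathsf{N}}$. This follows by the same induction on $\sum_a h_a$ used in the proof of Theorem \ref{SoV-Sp-Ch-gl2-R} and its $Y(gl_3)$ counterpart, combining the defining action of $T(\xi_a)$ on the basis co-vectors with the fusion relations. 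Simplicity of the spectrum is then immediate: distinct polynomial eigenvalues $t(\lambda)\neq t'(\lambda)$ must differ at some $\xi_n$ by the interpolation argument of Section 4, hence yield linearly independent wave-functions, so the number of distinct eigenvalues cannot exceed $n^{\mathsf{N}}=\dim\mathcal{H}$ and each eigenspace is one-dimensional.

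Second, to show $\langle t | t \rangle \neq 0$ I would proceed as in the rank-one case by extracting the leading coefficient of this scalar product in the scaling limit $\xi_a=a\xi$, $\zeta_\pm=\alpha_\pm\xi$ of the proof of Theorem \ref{General-SoV-basis}, with $\alpha_\pm=0$. In this limit the leading coefficient of $T(\xi_a)$ as $\xi\to\infty$ reduces (up to the nonzero scalar $d_{a,2\mathsf{N}+1}$) to $\mathcal{M}_a^{(-)}\mathcal{M}_a^{(+)}$, whose $n$ eigenvalues $\mathsf{k}_0,\ldots,\mathsf{k}_{n-1}$ are distinct and nonzero by hypothesis. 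Each $t(\xi_a)$ is therefore asymptotic to $d_{a,2\mathsf{N}+1}\,\mathsf{k}_{\theta_a}\,\xi^{2\mathsf{N}+1}$ for some $\theta_a\in\{0,\ldots,n-1\}$ that uniquely labels the eigenstate. Using a higher-rank extension of the Appendix A scalar product formula,
\begin{equation}
\langle t | t \rangle = \sum_{\mathbf{h}\in\{0,\ldots,n-1\}^{\mathsf{N}}} \prod_{a=1}^{\mathsf{N}} \mu_a(h_a)\, t^{h_a}(\xi_a)\,\bar{t}^{\,h_a}(\xi_a)\,\frac{\hat{V}(\xi^{(\mathbf{h})})}{\hat{V}(\xi)},
\end{equation}
with a dual weight $\bar{t}$ and a Vandermonde-type measure $\hat{V}$, the asymptotic regime reduces $\hat{V}(\xi^{(\mathbf{h})})/\hat{V}(\xi)$ and the weights $\mu_a$ to their asymptotic $K_a^{(+,-)}$ counterparts, and the sum factorizes site by site into a generalized Vandermonde product of the schematic form $\mathsf{N}!\,\prod_{a=1}^{\mathsf{N}}\prod_{\theta\neq\theta_a}(\mathsf{k}_{\theta_a}-\mathsf{k}_\theta)/\mathsf{k}_{\theta_a}$, which is nonzero precisely because the spectrum of $\mathcal{M}^{(-)}\mathcal{M}^{(+)}$ is simple. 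By rationality of $\langle t | t \rangle$ in $\eta$ and the $\xi_a$, nonvanishing in this asymptotic regime propagates to almost any choice of parameters satisfying (\ref{Inhomog-cond}); the non-orthogonality of $\langle t|$ and $|t\rangle$ then forbids any nontrivial Jordan block of $T(\lambda)$ at $t(\lambda)$, yielding the diagonalizability.

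The main obstacle will be the higher-rank extension of the Appendix A scalar product formula: one must produce a dual SoV basis and a Vandermonde-like measure whose asymptotic factorization at $\xi_a=a\xi$ is transparent enough to yield the product of eigenvalue differences above. An attractive alternative that bypasses this explicit formula is to work with the full $n^{\mathsf{N}}\times n^{\mathsf{N}}$ Gram matrix of the SoV basis against its conjugate, to compute its determinant in the asymptotic regime directly from the leading behavior of $T(\xi_a)$, and to read off $\langle t | t \rangle \neq 0$ via Cramer's rule applied to the pairing of each eigenvector with its eigenco-vector. This would reduce the entire nonvanishing issue to a single generalized Vandermonde determinant in the leading coefficients $\mathsf{k}_{\theta_a}$, which is guaranteed to be nonzero exactly under the stated simplicity hypothesis on $\mathcal{M}^{(-)}\mathcal{M}^{(+)}$.
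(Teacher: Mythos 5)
Your overall strategy is the right one and coincides with the paper's: impose the scaling $\xi_a=a\xi$, observe that the leading coefficient of $T(\xi_a)$ in $\xi$ is a nonzero scalar times $\mathcal{M}_a^{(-)}\mathcal{M}_a^{(+)}$, exploit the simplicity of the spectrum of that matrix, and propagate the nonvanishing of $\langle t|t\rangle$ to generic parameters by polynomiality/algebraicity. Your first part (separated wave functions and simplicity of the spectrum from the SoV basis) is also fine and is indeed how the paper argues.

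The genuine gap is the step you yourself flag: your primary route computes $\langle t|t\rangle$ from ``a higher-rank extension of the Appendix A scalar product formula,'' i.e.\ from an explicit dual SoV basis and Vandermonde-type measure for the $Y(gl_n)$ reflection algebra. No such formula exists in the paper --- Appendix A constructs the orthogonal right basis and the measure only for rank one, and its derivation there leans on the specific degree-two fusion relation \rf{gl_2-fusion}; producing the analogous measure for $n\geq 3$ is a substantial open construction, not a routine extension, so the claimed site-by-site factorization into $\prod_{\theta\neq\theta_a}(\mathsf{k}_{\theta_a}-\mathsf{k}_\theta)$ is unsupported. The paper avoids this entirely: after identifying the leading operator $T_{l,2\mathsf{N}+1}\propto\mathcal{M}_l^{(-)}\mathcal{M}_l^{(+)}$ it invokes the mechanism of Proposition 2.5 of \cite{MaiN18}, which works directly with the limiting commuting family of one-site operators $\{d_{a,2\mathsf{N}+1}^{(+,-)}\mathcal{M}_a^{(-)}\mathcal{M}_a^{(+)}\}$ --- diagonalizable with simple joint spectrum, hence with nonvanishing pairing of left and right eigenvectors --- and then uses algebraicity of eigenvalues and eigenvectors in the parameters; no SoV measure is needed. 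Your ``alternative'' Gram-matrix route is closer in spirit to this, but as stated it still presupposes a controlled dual basis and does not reduce to ``a single generalized Vandermonde determinant'' without further work. A minor additional imprecision: you should keep $\zeta_\pm$ fixed (so that the identity part of $K_\pm$ is subleading), rather than literally setting $\zeta_\pm=\alpha_\pm\xi$ with $\alpha_\pm=0$, which would make $K_\pm$ singular.
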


\begin{proof}
Let us impose here:%
\begin{equation}
\xi _{a}=a\xi \text{ \ }\forall a\in \{1,...,\mathsf{N}\},
\end{equation}%
then it follows that $T(\xi _{l})$ are polynomials of degree $2%
\mathsf{N}+1$ in $\xi $ for all $l\in \{1,...,\mathsf{N}\}$ with maximal
degree coefficient given by:%
\begin{equation}
T_{l,2\mathsf{N}+1}\equiv d_{l,2\mathsf{N}+1}^{\left( +,-\right)
}\mathcal{M}_{l}^{\left( -\right) }\mathcal{M}_{l}^{\left( +\right) },\text{
\ with \ }d_{l,2\mathsf{N}+1}^{\left( +,-\right) }=l^{2}d_{l,2\mathsf{N}%
+1}/(\zeta _{+}\zeta _{-}).
\end{equation}%
The proof now proceed exactly as in the general Proposition 2.5 of  \cite{MaiN18}, for the rank $n-1$ fundamental representations of the $%
Y(gl_{n)}$ Yang-Baxter algebra. Indeed by assumption $T_{l,2\mathsf{N}%
+1}$ is diagonalizable and has simple spectrum and so we can just replace it to the
asymptotic operator $T_{l,\mathsf{N}-1}^{(K)}$ used in the proof of
Proposition 2.5 of \cite{MaiN18}.
\end{proof}

We can also give a more general characterization of the boundary conditions
leading to the diagonalizability and simplicity of the transfer matrix, as
it follows:

\begin{theorem}
Let the boundary matrix product $K_{+a}(\lambda )K_{-a}(\lambda )$ be simple
and diagonalizable, then for almost any choice of $\eta \in \mathbb{C}$, of
the inhomogeneity parameters satisfying  \rf{Inhomog-cond} and of the boundary parameters $\zeta _{\pm }$,
it holds:%
\begin{equation}
\langle t|t\rangle \neq 0,
\end{equation}%
where $|t\rangle $ and $\langle t|$ are the unique eigenvector and
eigenco-vector associated to $t(\lambda )$, a generic eigenvalue of $%
T(\lambda )$, and $T(\lambda )$ is diagonalizable
with simple spectrum.
\end{theorem}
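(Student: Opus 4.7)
The plan is to generalise the asymptotic argument of the previous theorem by letting the role of $\mathcal{M}_{a}^{(-)}\mathcal{M}_{a}^{(+)}$ (which drove simplicity there) be taken over by the full boundary product $K_{+,a}(\lambda)\,K_{-,a}(\lambda)$. First I would revisit the inversion-type identity~\rf{T+ID}, which expresses $T(\xi_{l})$ as a conjugation, by bulk $R$-matrices acting on the other sites, of the local operator $K_{-,l}(\xi_{l})\,\mathrm{tr}_{V_{a}}[K_{+,a}(\xi_{l})R_{a,l}(0)R_{a,l}(2\xi_{l})]$ acting on site $l$. The key point is that the partial trace over $V_{a}$ collapses, thanks to $R_{a,l}(0)=\eta\,P_{a,l}$, to a multiple of $K_{+,l}(\xi_{l})$ times a scalar factor coming from $R_{a,l}(2\xi_{l})$, so that the site-$l$ operator driving $T(\xi_{l})$ is proportional to $K_{+,l}(\xi_{l})\,K_{-,l}(\xi_{l})$.

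Second, I would choose the same asymptotic regime as in the preceding proof, namely $\xi_{a}=a\xi$ with $\xi\to\infty$, but now keeping $\zeta_{\pm}$ and the remaining boundary data fixed. In this regime the bulk $R$-matrix dressings appearing in~\rf{T+ID} become, at leading order in $\xi$, scalar permutation-like dressings, so that the leading coefficient of $T(\xi_{l})$ acts locally on $V_{l}$ as $K_{+,l}(\lambda)\,K_{-,l}(\lambda)$ (up to an explicit scalar that can be absorbed). By the hypothesis of the theorem, this local operator is diagonalizable with simple spectrum.

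Third, I would plug this asymptotic into the SoV scalar-product expression for $\langle t|t\rangle$ derived in Appendix~A, and follow the step-by-step computation of the previous theorem verbatim. The leading coefficient in $\xi$ reduces to a product over sites of terms of the form $(\mathsf{k}_{\hat{\theta}_{a}}-\mathsf{k}_{\theta_{a}})$, where $\mathsf{k}_{0},\dots,\mathsf{k}_{n-1}$ are the distinct eigenvalues of the local operator; simplicity of its spectrum guarantees each factor is nonzero, hence $\langle t|t\rangle\neq 0$ at this special point in parameter space. The density argument then applies: $\langle t|t\rangle$ is an algebraic function of $\eta$, of the inhomogeneities, and of $\zeta_{\pm}$, so its non-vanishing at one admissible point extends to almost every choice of these parameters under the constraint~\rf{Inhomog-cond}. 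Non-vanishing of $\langle t|t\rangle$ combined with the SoV simplicity already established implies diagonalizability with simple spectrum, since any nontrivial Jordan block would force the eigenco-vector and eigenvector associated to the same eigenvalue to be orthogonal.

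The main obstacle will be to make rigorous the claim that, in the chosen scaling regime, the bulk $R$-matrix dressings in~\rf{T+ID} really degenerate to scalar factors acting trivially on $V_{l}$, so that the leading asymptotic operator genuinely is proportional to $K_{+,l}(\lambda)K_{-,l}(\lambda)$ on site $l$ and not to some convolution with site-$l$ projections that could wash out simplicity. The earlier theorem dealt with this issue by scaling $\zeta_{\pm}$ as well (which forced the asymptotic operator to collapse to $\mathcal{M}^{(-)}\mathcal{M}^{(+)}$); here the whole $K_{+}K_{-}$ product must survive the limit, so the expansion of~\rf{T+ID} in $1/\xi$ has to be controlled at next-to-leading order in the bulk contributions, and this is the step that requires the most care.
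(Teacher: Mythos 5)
There is a genuine gap, and it sits exactly where you flag ``the step that requires the most care'': in the regime $\xi_a=a\xi$, $\xi\to\infty$ \emph{with $\zeta_\pm$ fixed}, the leading coefficient of $T(\xi_l)$ is \emph{not} proportional to $K_{+,l}K_{-,l}$. Since $K_\pm(\lambda)=I\mp\frac{\lambda-n\delta_{\pm1,1}\eta/2}{\zeta_\pm}\mathcal{M}^{(\pm)}$, keeping $\zeta_\pm$ fixed while $\xi_l\to\infty$ kills the identity parts and the top-degree coefficient collapses to a multiple of $\mathcal{M}_l^{(-)}\mathcal{M}_l^{(+)}$ --- this is precisely what happens in the preceding theorem, which is why that theorem must \emph{assume} simplicity of $\mathcal{M}^{(-)}\mathcal{M}^{(+)}$. (You have the logic inverted in your last paragraph: the earlier theorem keeps $\zeta_\pm$ fixed and therefore collapses to $\mathcal{M}^{(-)}\mathcal{M}^{(+)}$; it is the rescaling of $\zeta_\pm$ that preserves the full product structure.) Under the present hypothesis only $K_{+a}(\lambda)K_{-a}(\lambda)$ is assumed simple; $\mathcal{M}^{(-)}\mathcal{M}^{(+)}$ may be degenerate or non-diagonalizable (e.g.\ when $r^{(\pm)}=0$ the $\mathcal{M}^{(\pm)}$ are nilpotent), so your leading coefficient need not have simple spectrum and the product of eigenvalue differences in your step three is not guaranteed to be nonzero. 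A secondary inaccuracy: the partial trace in \rf{T+ID} gives $\eta\,(2\xi_l K_{+,l}(\xi_l)+\eta\,\mathrm{tr}(K_+(\xi_l))\,I_l)$, not a pure multiple of $K_{+,l}(\xi_l)$, though that extra identity term is subleading and not the real problem.

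The missing idea, which is how the paper closes this gap, is to use the \emph{same} scaling as in the SoV-basis construction, namely $\xi_a=a\xi$ \emph{and} $\zeta_\pm=\alpha_\pm\xi$ as in \rf{First-gl_n-asymp-Cond}. Then the leading coefficient of $T(\xi_l)\zeta_+\zeta_-$ is $d_{l,2\mathsf{N}+1}K_l^{(+,-)}(\alpha_\pm,\mathcal{M}^{(\pm)})$ with $K_l^{(+,-)}=(\alpha_-+l\mathcal{M}^{(-)})(\alpha_++l\mathcal{M}^{(+)})$, and one observes the purely algebraic identity
\begin{equation}
K_{1}^{(+,-)}(\alpha _{\pm },\mathcal{M}^{\left( \pm \right) })=\alpha
_{-}\alpha _{+}\,K_{+,1}(\xi _{1})\,K_{-,1}(\xi _{1}),\qquad \alpha _{-}=\frac{\zeta _{-}}{\xi _{1}},\ \ \alpha _{+}=\frac{\zeta_{+}}{\xi _{1}-n\eta /2},
\end{equation}
which transfers the assumed simplicity and diagonalizability of $K_{+}K_{-}$ directly to the asymptotic operator; polynomiality in $\alpha_\pm$ (hence in $\zeta_\pm$) then extends non-vanishing of the relevant determinants from site $1$ to every site $l$ and to almost all parameter values, after which the argument of Proposition 2.5 of \cite{MaiN18} applies. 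Note also that for general $n$ the conclusion is reached via that proposition rather than via the Appendix~A scalar-product formula, which is derived only for the rank-one case.
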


\begin{proof}
Let us start observing that if the boundary matrix product $K_{+a}(\lambda
)K_{-a}(\lambda )$ is simple for a given value of the $\zeta _{\pm }$ then
it stays simple for almost any value of these parameters being $%
K_{+a}(\lambda )K_{-a}(\lambda )$ a polynomial of degree one in $1/\zeta
_{\pm }$. Moreover, for $K_{+a}(\lambda )K_{-a}(\lambda )$ simple and
diagonalizable, we also have that $K_{+,1}(\xi _{1})K_{-,1}(\xi _{1})$ is
simple and diagonalizable for almost all the values of $\xi _{1}$. Let us
now remark that the following identity holds: 
\begin{equation}
K_{1}^{(+,-)}(\alpha _{\pm },\mathcal{M}^{\left( \pm \right) })=\alpha
_{-}\alpha _{+-}K_{+,1}(\xi _{1})K_{-,1}(\xi _{1}),
\end{equation}%
with%
\begin{equation}
\alpha _{-}=\frac{\zeta _{-}}{\xi _{1}},\text{ \ }\alpha _{+}=\frac{\zeta
_{+}}{\xi _{1}-n\eta /2},
\end{equation}%
so that $K_{1}^{(+,-)}(\alpha _{\pm },\mathcal{M}^{\left( \pm \right) })$ is
simple and diagonalizable for almost all the values of $\xi _{1}$, $\alpha
_{\pm }$. This simplicity implies that the determinant%
\begin{equation}
\text{det}_{n}||\langle S,1|(K_{1}^{(+,-)}(\alpha _{\pm },\mathcal{M}%
^{\left( \pm \right) }))^{i}|e_{j}(1)\rangle||_{i,j\in \{1,...,n\}},
\end{equation}%
is nonzero for almost all the values of $\xi _{1}$, $\alpha _{\pm }$ and $%
\langle S,1|\in V_{1}$. Note that this determinant is a nonzero polynomial
in the $\alpha _{\pm }$, and so in the $\zeta _{\pm }$. Then we can always
find values of $\alpha _{\pm }$ such that the following determinants%
\begin{equation}
\text{det}_{n}||\langle S,l|(K_{l}^{(+,-)}(\alpha _{\pm },\mathcal{M}%
^{\left( \pm \right) }))^{i}|e_{j}(l)\rangle||_{i,j\in \{1,...,n\}}=\text{det}%
_{n}||\langle S,1|(K_{1}^{(+,-)}(l\alpha _{\pm },\mathcal{M}^{\left( \pm
\right) }))^{i}|e_{j}(1)\rangle||_{i,j\in \{1,...,n\}},
\end{equation}%
are nonzero for almost any value of the $\alpha _{\pm }$, and so of the $\zeta _{\pm }$,
being also  polynomials.

Let us now impose:%
\begin{equation}
\xi _{a}=a\xi \text{ \ }\forall a\in \{1,...,\mathsf{N}\},
\end{equation}%
the leading coefficient of $T(\xi _{l})\zeta _{+}\zeta _{-}$
reads:%
\begin{equation}
T_{l,2\mathsf{N}+1}=d_{l,2\mathsf{N}+1}K_{l}^{(+,-)}(\alpha
_{\pm },\mathcal{M}^{\left( \pm \right) }),
\end{equation}%
so that following the proof of the general Proposition 2.5 of \cite{MaiN18},
our statements hold being the operators $T_{l,2\mathsf{N}+1}$
diagonalizable and with simple spectrum on $V_{l}$ for any $l\in \{1,...,%
\mathsf{N}\}$ for almost any value of the $\zeta _{\pm }$.
\end{proof}

\section{Conclusion}

In this paper we have solved the longstanding open problem to define the quantum Separation of Variables for the class of integrable quantum models associated to the fundamental representations 
of the $Y(gl_n)$ reflection algebras. We have used the SoV basis to completely characterize the eigenvalue and eigenvector spectrum of the transfer matrix for the rank one and rank two cases and proven its 
equivalence to the so-called quantum spectral curve equation. The result on the construction of the SoV basis for any positive integer rank, indeed, allows us to extend the complete characterization of the transfer 
matrix spectrum as well as to introduce the quantum spectral curve characterization of it also to any higher rank $n$. In this article, we have seen explicitly how the results for the rational fundamental 
representation of the $Y(gl_2)$ reflection algebra can be used to prove the construction of the SoV basis for the general trigonometric case and how our new SoV basis allows for the characterization of the 
transfer matrix spectrum in these representations. This also includes their quantum spectral curve equation. The same results can be similarly derived  also for the fundamental 
representations of the trigonometric $U_q(gl_n)$ reflection algebras for any integer $n$. Our current investigations are both on completing the spectral analysis of other important quantum integrable 
models in our new SoV approach and to implement the analysis of the dynamics for the models already solved in this SoV framework. Here, the first fundamental step is the 
derivation of the scalar product formulae for the separate states of the type that we have derived for the rank one case in the appendix bellow. Such results should give  access to the computation of 
matrix elements of local operators on transfer matrix eigenstates, i.e. the first fundamental step toward the dynamics of quantum models in this higher rank cases.

%%%%%%%%%%%%%%%%%%%%%%%%%%%%%%%%%%%%%%%%%%%%%%%%%%%%%%%%%%%%%%%%%%%%%%%%%%%%%%%%%%%%%%%%%%%%%%%%%%%%%%
\section*{Acknowledgements}
J. M. M. and G. N.  are supported by CNRS and ENS de Lyon.

%%%%%%%%%%%%%%%%%%%%%%%%%%%%%%%%%%%%%%%%%%%%%%%%%%%%%%%%%%%%%%%%%%%%%%%%%%%%%%%%%%%%%%%%%%%%%%%%%%%%%%%%%%%%%%%%%%

%--------------------------------------------------------
\appendix

\section{Appendix: Scalar products in $Y(gl_{2})$ reflection algebra}

In this appendix we derive the scalar product of separate
states, which contain as particular cases the transfer matrix eigenvectors.
Let us comment that in the main text of the article we have shown that for
fundamental representations of $Y(gl_{2})$ reflection algebra associated to
general non-commuting boundary matrices $K_{+}(\lambda )$ and $K_{-}(\lambda
)$ our SoV basis can be reduced to the generalized Sklyanin's one under a
proper choice of the generating co-vector in the SoV basis. This observation
implies that for this set of representations the "measure" of the left/right
SoV vectors must coincide with the "Sklyanin's neasure" and so the scalar
product of separate states can be computed according to the known literature 
\cite{Nic12,KitMNT17,KitMNT18}. Here, we use this appendix to show how to
compute these scalar products directly in the framework of our new SoV
approach. This has the advantage to prove scalar product formulae also for
the representation associated to commuting boundary matrices, showing that
they keep the same form independently from the applicability of the
Sklyanin's original approach. Using the same type of computations 
presented in the following we can show that this same statement applies also for the
fundamental representations of $U_{q}(\widehat{gl}_{2})$ reflection algebra.
Hence, the results of  \cite{Nic12,KitMNT17,KitMNT18} hold
as well for diagonal boundary conditions and under conditions on the
parameters which make the generalized version of the Sklyanin's approach inapplicable.

\subsection{Construction of the right SoV basis orthogonal to the left one}

The following theorem allows to produce the orthogonal basis to the left SoV
basis and show that it is also of SoV type just using the polynomial form of
the transfer matrix and the fusion equation. Let us denote by $|S\rangle $
the nonzero vector orthogonal to all the SoV co-vectors with the exception of 
$\langle S|$, i.e.%
\begin{equation}
\langle h_{1},...,h_{\mathsf{N}}|S\rangle =\frac{\prod_{n=1}^{\mathsf{N}%
}\delta _{h_{n},0}}{N_{S}\widehat{V}(\xi _{1}^{\left( 0\right) },...,\xi _{%
\mathsf{N}}^{\left( 0\right) })}\text{ }\forall \{h_{1},...,h_{\mathsf{N}%
}\}\in \{0,1\}^{\otimes \mathsf{N}},
\end{equation}%
for some nonzero normalization $N_{S}$ and with 
\begin{equation}
\widehat{V}(x_{1},\ldots ,x_{\mathsf{N}})=\text{det}_{1\leq i,j\leq \mathsf{N%
}}[x_{i}^{2(j-1)}]=\prod_{1\leq k<j\leq \mathsf{N}}(x_{k}^{2}-x_{j}^{2}).
\label{VSQ}
\end{equation}%
Moreover, being the set of SoV co-vectors a basis, then $|S\rangle $ is
uniquely defined by the above normalization.

Similarly, we can introduce the nonzero vector $|\bar{S}\rangle $ orthogonal
to all the SoV co-vectors with the exception of $\langle 1,...,1|$, i.e.%
\begin{equation}
\langle h_{1},...,h_{\mathsf{N}}|\bar{S}\rangle =\frac{\prod_{n=1}^{\mathsf{N%
}}\delta _{h_{n},1}}{N_{S}\widehat{V}(\xi _{1}^{\left( 1\right) },...,\xi _{%
\mathsf{N}}^{\left( 1\right) })}\text{ \ }\forall \{h_{1},...,h_{\mathsf{N}%
}\}\in \{0,1\}^{\otimes \mathsf{N}},
\end{equation}%
which also defines completely $|\bar{S}\rangle $.

\begin{theorem}
Under the same conditions ensuring that the set of SoV co-vectors is a basis,
then the following set of vectors:%
\begin{equation}
|h_{1},...,h_{\mathsf{N}}\rangle =\prod_{a=1}^{\mathsf{N}} [\frac{T(\xi
_{a}+\eta /2)}{\text{\textsc{k}}_{a}\,\mathsf{A}_{\bar{\zeta}_{+},\bar{\zeta}%
_{-}}(\eta /2-\xi _{a})} ]^{h_{a}}|S\rangle \text{ \ }\forall \{h_{1},...,h_{%
\mathsf{N}}\}\in \{0,1\}^{\otimes \mathsf{N}}
\label{First-R-SoV}
\end{equation}%
forms an orthogonal basis to the left SoV basis:%
\begin{equation}
\langle h_{1},...,h_{\mathsf{N}}|k_{1},...,k_{\mathsf{N}}\rangle =\frac{%
\prod_{n=1}^{\mathsf{N}}\delta _{h_{n},k_{n}}}{N_{S}\widehat{V}(\xi
_{1}^{(h_{1})},...,\xi _{\mathsf{N}}^{(h_{\mathsf{N}})})}.
\end{equation}%
Let $t(\lambda )$ be a transfer matrix eigenvalue, $t(\lambda )\in \Sigma
_{T}$, then the uniquely defined eigenvector $|t\rangle $ and co-vectors $%
\langle t|$ admit the following SoV representations:%
\begin{eqnarray}
|t\rangle  &=&\sum_{h_{1},...,h_{\mathsf{N}}=0}^{1}\prod_{a=1}^{\mathsf{N}} [%
\frac{t(\xi _{a}-\eta /2)}{\mathsf{A}_{\bar{\zeta}_{+},\bar{\zeta}_{-}}(\eta
/2-\xi _{n})} ]^{1-h_{a}}\text{ }\widehat{V}(\xi _{1}^{(h_{1})},...,\xi _{%
\mathsf{N}}^{(h_{\mathsf{N}})})|h_{1},...,h_{\mathsf{N}}\rangle , \\
\langle t| &=&\sum_{h_{1},...,h_{\mathsf{N}}=0}^{1}\prod_{a=1}^{\mathsf{N}} [%
\frac{t(\xi _{a}+\eta /2)}{\text{\textsc{k}}_{a}\,\mathsf{A}_{\bar{\zeta}%
_{+},\bar{\zeta}_{-}}(\eta /2-\xi _{n})} ]^{h_{a}}\text{ }\widehat{V}(\xi
_{1}^{(h_{1})},...,\xi _{\mathsf{N}} ^{(h_{\mathsf{N}})})\langle h_{1},...,h_{%
\mathsf{N}}|,
\end{eqnarray}%
where we have set their normalization by imposing:%
\begin{equation}
\langle S|t\rangle =\langle t|S\rangle =1/N_{S}.  \label{Normalization}
\end{equation}
\end{theorem}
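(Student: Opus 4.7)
The plan is first to establish the orthogonality relation $\langle h|k\rangle = \prod_n \delta_{h_n,k_n}/(N_S\hat V(\xi^{(h)}))$, from which the SoV expansions of $|t\rangle$ and $\langle t|$ follow by linear algebra. The starting point is to expand the scalar product using the commutativity of $T(\lambda)$ in the spectral parameter,
\begin{equation*}
\langle h|k\rangle = \langle S|\prod_n(T(\xi_n^{(1)})/A_n)^{1-h_n}(T(\xi_n^{(0)})/(\mathsf{k}_n A_n))^{k_n}|S\rangle,
\end{equation*}
with $A_n \equiv \mathsf A_{\bar\zeta_+,\bar\zeta_-}(\eta/2-\xi_n)$. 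The fusion identity $(\ref{gl_2-fusion})$ makes $T(\xi_n^{(0)})T(\xi_n^{(1)})$ a nonzero scalar for generic inhomogeneities, so each $T(\xi_n^{(\pm)})$ is invertible with $T(\xi_n^{(0)}) = \mathsf A(\xi_n^{(0)})A_n\,T(\xi_n^{(1)})^{-1}$. Applying this site by site collapses the per-site factor into four cases indexed by $(h_n,k_n)$: $(0,1)$ yields a pure scalar; $(1,0)$ yields the identity; $(0,0)$ retains $T(\xi_n^{(1)})/A_n$; and $(1,1)$ becomes a scalar times $T(\xi_n^{(1)})^{-1}$.

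I would then prove the orthogonality by induction on the Hamming weight $|h|_1 = \sum_n h_n$. The base case $h = \underline{0}$, for which $|h\rangle = |S\rangle$, is exactly the defining property $\langle k|S\rangle = \delta_{k,\underline{0}}/(N_S\hat V(\xi^{(0)}))$. For the inductive step, pick $n_0$ with $h_{n_0} = 1$ and factor $|h\rangle = [T(\xi_{n_0}^{(0)})/(\mathsf{k}_{n_0} A_{n_0})]|\tilde h\rangle$ with $\tilde h_{n_0} = 0$. If $k_{n_0} = 0$, direct fusion of $T(\xi_{n_0}^{(0)})$ with the $T(\xi_{n_0}^{(1)})/A_{n_0}$ factor present in $\langle k|$ produces $\langle k|h\rangle \propto \langle \tilde k|\tilde h\rangle$ with $\tilde k_{n_0}=1 \neq 0=\tilde h_{n_0}$, so the induction hypothesis forces vanishing. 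The main obstacle is the complementary subcase $k_{n_0}=1$, in which $\langle k|$ lacks the $T(\xi_{n_0}^{(1)})$ factor needed for direct fusion: here I would use the transfer-matrix interpolation formula $(\ref{Interp-T-open})$ with the choice $h \leftrightarrow k$ to express $T(\xi_{n_0}^{(0)})$ as a linear combination of the operators $T(\xi_a^{(k_a)})$ adapted to the structure of $\langle k|$, and then apply the induction hypothesis term by term. The Vandermonde-like interpolation kernels $r_{a,k}$, $s_k$, $u_k$ are precisely what assemble the factor $\hat V(\xi^{(h)})$ on the diagonal upon collecting all scalar contributions accumulated through the recursion.

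Once the orthogonality is in hand, the SoV expansions are immediate consequences of duality. Writing $|t\rangle = \sum_h \alpha_h |h\rangle$ and applying $\langle h|$ extracts $\alpha_h = N_S\hat V(\xi^{(h)})\langle h|t\rangle$; substituting the wave function $\langle h|t\rangle \propto \prod_a(t(\xi_a - \eta/2)/A_a)^{1-h_a}$ from Theorem~\ref{SoV-Sp-Ch-gl2-R} yields the claimed expansion of $|t\rangle$. The dual expansion of $\langle t|$ follows analogously: expand $\langle t| = \sum_h \beta_h \langle h|$, pair with $|h\rangle$, and use the eigenvalue equation $\langle t|T(\xi_a^{(0)}) = t(\xi_a + \eta/2)\langle t|$ to produce the factor $(t(\xi_a + \eta/2)/(\mathsf{k}_a A_a))^{h_a}$. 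The normalization $\langle S|t\rangle = \langle t|S\rangle = 1/N_S$ then fixes the overall scale consistently through the identifications $\langle S| = \langle 1,\ldots,1|$ and $|S\rangle = |0,\ldots,0\rangle$, which are the respective corner elements of the left and right SoV bases and which select the corner coefficient of each expansion.
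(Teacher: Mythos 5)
Your proof follows essentially the same route as the paper's: induction on the weight of the right SoV vector, with the fusion identity \rf{gl_2-fusion} handling the sites where the co-vector carries a matching transfer-matrix factor and the interpolation formula \rf{Interp-T-open} handling the sites where it does not, followed by the resolution of the identity to obtain the eigenvector and eigenco-vector expansions. The only blemishes are cosmetic: the roles of $h$ and $k$ swap between your first and second paragraphs, the preliminary ``four cases via $T^{-1}$'' discussion is never actually used, and the diagonal normalization (assembling $\widehat{V}(\xi_1^{(h_1)},\ldots,\xi_{\mathsf{N}}^{(h_{\mathsf{N}})})$ from the kernels $r_{a,\mathbf{h}}$) is asserted rather than computed --- though the paper likewise defers that last standard step to the literature.
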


\begin{proof}
Let us start proving the orthogonality condition: 
\begin{equation}
\langle h_{1},...,h_{\mathsf{N}}|k_{1},...,k_{\mathsf{N}}\rangle =0\text{ \
for }\forall \{k_{1},...,k_{\mathsf{N}}\}\neq \{h_{1},...,h_{\mathsf{N}%
}\}\in \{0,1\}^{\otimes \mathsf{N}}.
\end{equation}%
The proof is done by induction. For any vector $|k_{1},...,k_{\mathsf{N}}\rangle$ let us denote $l=\sum_{n=1}^{\mathsf{N}}k_{n}$. The property is obviously true for $l=0$.  Assuming that it is true for any vector $%
|k_{1},...,k_{\mathsf{N}}\rangle $ with $\sum_{n=1}^{\mathsf{N}}k_{n}=l$ for some 
$l\leq \mathsf{N}-1$ let us prove it holds for vectors $|k_{1}^{\prime },...,k_{%
\mathsf{N}}^{\prime }\rangle $ with $\sum_{n=1}^{\mathsf{N}}k_{n}^{\prime
}=l+1$. To this aim we fix a vector $|k_{1},...,k_{\mathsf{N}}\rangle $ with 
$\sum_{n=1}^{\mathsf{N}}k_{n}=l$ and we denote by $\pi $ a permutation on
the set $\{1,...,\mathsf{N}\}$ such that:%
\begin{equation}
k_{\pi (a)}=1\text{ for }a\leq l\text{\ \ and }k_{\pi (a)}=0\text{ for }l<a,
\end{equation}%
and then we compute:%
\begin{equation}
\langle h_{1},...,h_{\mathsf{N}}|T(\xi _{\pi (l+1)}^{\left( 0\right)
})|k_{1},...,k_{\mathsf{N}}\rangle =\text{\textsc{k}}_{a}\,\mathsf{A}_{\bar{%
\zeta}_{+},\bar{\zeta}_{-}}(-\xi _{n}^{(1)})\langle h_{1},...,h_{\mathsf{N}%
}|k_{1}^{\prime },...,k_{\mathsf{N}}^{\prime }\rangle\,,
\end{equation}%
where we have defined: 
\begin{equation}
k_{\pi (a)}^{\prime }=k_{\pi (a)}\text{ }\forall a\in \{1,...,\mathsf{N}%
\}\backslash \{l+1\}\text{ and }k_{\pi (l+1)}^{\prime }=1,
\end{equation}%
for any $\{h_{1},...,h_{\mathsf{N}}\}\neq \{k_{1}^{\prime },...,k_{\mathsf{N}%
}^{\prime }\}\in \{0,1\}^{\otimes \mathsf{N}}$. There are two cases, the
first case is $h_{\pi (l+1)}=0$, then it holds:%
\begin{equation}
\langle h_{1},...,h_{\mathsf{N}}|T(\xi _{\pi (l+1)}^{\left( 0\right)
})|k_{1},...,k_{\mathsf{N}}\rangle =\mathsf{A}_{\bar{\zeta}_{+},\bar{\zeta}%
_{-}}(\xi _{n}^{(0)})\langle h_{1}^{\prime },...,h_{\mathsf{N}}^{\prime
}|k_{1},...,k_{\mathsf{N}}\rangle ,  \label{Caso-1}
\end{equation}%
where we have defined: 
\begin{equation}
h_{\pi (a)}^{\prime }=h_{\pi (a)}\text{ }\forall a\in \{1,...,\mathsf{N}%
\}\backslash \{l+1\}\text{ and }h_{\pi (l+1)}^{\prime }=1.
\end{equation}%
Then from $\{h_{1},...,h_{\mathsf{N}}\}\neq \{k_{1}^{\prime },...,k_{\mathsf{%
N}}^{\prime }\}\in \{0,1\}^{\otimes \mathsf{N}}$ it follows also that $%
\{h_{1}^{\prime },...,h_{\mathsf{N}}^{\prime }\}\neq \{k_{1},...,k_{\mathsf{N%
}}\}\in \{0,1\}^{\otimes \mathsf{N}}$ and so the induction hypothesis implies that the
r.h.s. of (\ref{Caso-1}) is zero and so we get:%
\begin{equation}
\langle h_{1},...,h_{\mathsf{N}}|k_{1}^{\prime },...,k_{\mathsf{N}}^{\prime
}\rangle =0.  \label{l+1-Id}
\end{equation}%
The second case is $h_{\pi (l+1)}=1$, then we can use the following
interpolation formula:%
\begin{equation}
T(\xi _{\pi (l+1)}^{\left( 0\right) })=t_{\mathsf{N}+1}u_{\mathbf{h}}(\xi
_{\pi (l+1)}^{\left( 0\right) })+t(\eta /2)s_{\mathbf{h}}(\xi _{\pi
(l+1)}^{\left( 0\right) })+\sum_{a=1}^{\mathsf{N}}r_{a,\mathbf{h}}(\xi _{\pi
(l+1)}^{\left( 0\right) })T(\xi _{a}^{\left( h_{a}\right) }),
\end{equation}%
from which $\langle h_{1},...,h_{\mathsf{N}}|T(\xi _{\pi (l+1)}^{\left(
0\right) })|k_{1},...,k_{\mathsf{N}}\rangle $ reduces to the following sum:%
\begin{align}
& \left( t_{\mathsf{N}+1}u_{\mathbf{h}}(\xi _{\pi (l+1)}^{\left( 0\right)
})+t(\eta /2)s_{\mathbf{h}}(\xi _{\pi (l+1)}^{\left( 0\right) })\right)
\langle h_{1},...,h_{\mathsf{N}}|k_{1},...,k_{\mathsf{N}}\rangle  \notag \\
& +\sum_{a=1}^{\mathsf{N}}r_{a,\mathbf{h}}(\xi _{\pi (l+1)}^{\left( 0\right)
})\left( \mathsf{A}_{\bar{\zeta}_{+},\bar{\zeta}_{-}}(\xi _{n}^{(0)})\right)
^{1-h_{\pi (a)}}\left( \mathsf{A}_{\bar{\zeta}_{+},\bar{\zeta}_{-}}(-\xi
_{n}^{(1)})\right) ^{h_{\pi (a)}}\langle h_{1}^{(a)},...,h_{\mathsf{N}%
}^{(a)}|k_{1},...,k_{\mathsf{N}}\rangle ,
\end{align}%
where we have defined:%
\begin{equation}
h_{\pi (j)}^{(a)}=h_{\pi (j)}\text{ }\forall j\in \{1,...,\mathsf{N}%
\}\backslash \{a\}\text{ and }h_{\pi (a)}^{(a)}=1-h_{\pi (a)}.
\end{equation}%
Let us now note that from $h_{\pi (l+1)}=1$ it follows that $\{h_{1},...,h_{%
\mathsf{N}}\}\neq \{k_{1},...,k_{\mathsf{N}}\}$ as $k_{\pi (l+1)}=0$ by
definition and similarly $\{h_{1}^{(a)},...,h_{\mathsf{N}}^{(a)}\}\neq
\{k_{1},...,k_{\mathsf{N}}\}$ being by definition $h_{\pi
(l+1)}^{(a)}=h_{\pi (l+1)}=1$ for any $a\in \{1,...,\mathsf{N}\}\backslash
\{l+1\}$. Finally from $\{h_{1},...,h_{\mathsf{N}}\}\neq \{k_{1}^{\prime
},...,k_{\mathsf{N}}^{\prime }\}$ with $h_{\pi (l+1)}=k_{\pi (l+1)}^{\prime
}=1$ clearly it follows that $\{h_{1}^{(l+1)},...,h_{\mathsf{N}%
}^{(l+1)}\}\neq \{k_{1},...,k_{\mathsf{N}}\}$. So by using the induction
argument we get that all the terms in the above sum are equal to zero. So that also
in the case $h_{\pi (l+1)}=1$, we get that (\ref{l+1-Id}) is satisfied, and
so it is satisfied for any $\{h_{1},...,h_{\mathsf{N}}\}\neq \{k_{1}^{\prime
},...,k_{\mathsf{N}}^{\prime }\}$ which proves the induction of the
orthogonality to $l+1$. Indeed, by changing the permutation $\pi $ we can
both take for $\{\pi (1),...,\pi (l)\}$ any subset of cardinality $l$ in $%
\{1,...,\mathsf{N}\}$ and with $\pi (l+1)$ any element in its complement $%
\{1,...,\mathsf{N}\}\backslash \{\pi (1),...,\pi (l)\}$.

We can compute now the left/right normalization, and to do this we just need
to compute the following type of ratio:%
\begin{equation}
\frac{\langle h_{1}^{(a)},...,h_{\mathsf{N}}^{(a)}|h_{1}^{(a)},...,h_{%
\mathsf{N}}^{(a)}\rangle }{\langle \bar{h}_{1}^{(a)},...,\bar{h}_{\mathsf{N}%
}^{(a)}|\bar{h}_{1}^{(a)},...,\bar{h}_{\mathsf{N}}^{(a)}\rangle }=\mathsf{A}%
_{\bar{\zeta}_{+},\bar{\zeta}_{-}}(-\xi _{n}^{(1)})\frac{\langle
h_{1}^{(a)},...,h_{\mathsf{N}}^{(a)}|h_{1}^{(a)},...,h_{\mathsf{N}%
}^{(a)}\rangle }{\langle h_{1}^{(a)},...,h_{\mathsf{N}}^{(a)}|T(\xi
_{a}^{(1)})|\bar{h}_{1}^{(a)},...,\bar{h}_{\mathsf{N}}^{(a)}\rangle }
\end{equation}%
with $\bar{h}_{j}^{(a)}=h_{j}^{(a)}$ for any $j\in \{1,...,\mathsf{N}%
\}\backslash \{a\}$ while $\bar{h}_{a}^{(a)}=0$ and $h_{a}^{(a)}=1$. We can
use now once again the interpolation formula $(\ref{Interp-T-open})$
computed in $\lambda =\xi _{a}^{(1)}$ which by the orthogonality condition
produces only one non-zero term, the one associate to $T(\xi _{a}^{(0)})$,
i.e. it holds:%
\begin{equation}
\frac{\langle h_{1}^{(a)},...,h_{\mathsf{N}}^{(a)}|h_{1}^{(a)},...,h_{%
\mathsf{N}}^{(a)}\rangle }{\langle \bar{h}_{1}^{(a)},...,\bar{h}_{\mathsf{N}%
}^{(a)}|\bar{h}_{1}^{(a)},...,\bar{h}_{\mathsf{N}}^{(a)}\rangle }=\frac{1}{%
\text{\textsc{k}}_{a}r_{a,\mathbf{\bar{h}}}(\xi _{a}^{(1)})}=\prod_{b\neq
a,b=1}^{\mathsf{N}}\frac{(\xi _{a}^{(0)})^{2}-(\xi _{b}^{(h_{b})})^{2}}{(\xi
_{a}^{(1)})^{2}-(\xi _{b}^{(h_{b})})^{2}}.
\end{equation}%
It is now standard \cite{GroMN12,Nic13} to get the Vandermonde
determinant for the normalization once we use the above result.

Let us note thatg the set of SoV co-vectors and vectors  being both basis, it
follows that for any transfer matrix eigenstates $|t\rangle $ and $\langle t|
$ there exist at least a $\{r_{1},...,r_{\mathsf{N}}\}\in \{0,1\}^{\otimes 
\mathsf{N}}$ and a $\{s_{1},...,s_{\mathsf{N}}\}\in \{0,1\}^{\otimes \mathsf{%
N}}$ such that:%
\begin{equation}
\langle r_{1},...,r_{\mathsf{N}}|t\rangle \neq 0,\text{ }\langle
t|s_{1},...,s_{\mathsf{N}}\rangle \neq 0,
\end{equation}%
which together with the identities:%
\begin{equation}
\langle h_{1},...,h_{\mathsf{N}}|t\rangle \propto \langle S|t\rangle ,\text{ 
}\langle t|h_{1},...,h_{\mathsf{N}}\rangle \propto \langle t|\bar{S}\rangle 
\text{ \ }\forall \{h_{1},...,h_{\mathsf{N}}\}\in \{0,1\}^{\otimes \mathsf{N}%
},
\end{equation}%
imply that:%
\begin{equation}
\langle S|t\rangle \neq 0,\text{ }\langle t|\bar{S}\rangle \neq 0\text{,}
\end{equation}%
so that we are free to fix the normalization of the eigenstates $|t\rangle $
and $\langle t|$ by (\ref{Normalization}). Finally, the representations for
these left and right transfer matrix eigenvectors follow from the use of the
SoV decomposition of the identity:%
\begin{equation}
\mathbb{I}=\text{ }N_{S}\sum_{h_{1},...,h_{\mathsf{N}}=0}^{1}\widehat{V}(\xi
_{1}^{(h_{1})},...,\xi _{\mathsf{N}}^{(h_{\mathsf{N}})})|h_{1},...,h_{%
\mathsf{N}}\rangle \langle h_{1},...,h_{\mathsf{N}}|.
\end{equation}
\end{proof}

\begin{corollary}
Under the same conditions ensuring that the set of SoV co-vectors is a basis,
then the vectors of the right SoV basis admit also the following
representations:%
\begin{equation}
|h_{1},...,h_{\mathsf{N}}\rangle =\prod_{a=1}^{\mathsf{N}}(\frac{\text{%
\textsc{k}}_{a}\,T(\xi _{a}-\eta /2)}{\mathsf{A}_{\bar{\zeta}_{+},\bar{\zeta}%
_{-}}(\xi _{a}+\eta /2)})^{1-h_{a}}|\bar{S}\rangle \text{ \ }\forall
\{h_{1},...,h_{\mathsf{N}}\}\in \{0,1\}^{\otimes \mathsf{N}},
\end{equation}%
as well as for any element of the spectrum of $T(\lambda ,\{\xi \})$ the
unique associated eigenco-vector $\langle t|$ admit the following SoV
representations:%
\begin{equation}
\langle t|=\mathsf{N}_{t}\sum_{h_{1},...,h_{\mathsf{N}}=0}^{1}\prod_{a=1}^{%
\mathsf{N}}(\frac{\text{\textsc{k}}_{a}\,t(\xi _{a}-\eta /2)}{\mathsf{A}_{%
\bar{\zeta}_{+},\bar{\zeta}_{-}}(\xi _{a}+\eta /2)})^{1-h_{a}}\widehat{V}%
(\xi _{1}^{(h_{1})},...,\xi _{\mathsf{N}}^{(h_{\mathsf{N}})})\langle
h_{1},...,h_{\mathsf{N}}|,
\end{equation}%
where we have defined:%
\begin{equation}
\mathsf{N}_{t}=\langle t|\bar{S}\rangle =\frac{1}{N_{S}}\prod_{a=1}^{\mathsf{%
N}}\frac{\mathsf{A}_{\bar{\zeta}_{+},\bar{\zeta}_{-}}(\xi _{a}+\eta /2)}{%
\text{\textsc{k}}_{a}\,t(\xi _{a}-\eta /2)}\neq 0,  \label{Normalization+}
\end{equation}%
once we fix the normalization by (\ref{Normalization}).
\end{corollary}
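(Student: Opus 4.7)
The plan is to derive both claims directly from the first representation of the right SoV basis \eqref{First-R-SoV} established in the preceding theorem, combined with the transfer matrix fusion identity \eqref{gl_2-fusion}. First I would identify $|\bar{S}\rangle$ with the ``top'' vector of that basis. Setting all $h_a=1$ in \eqref{First-R-SoV} gives $\prod_{a=1}^{\mathsf{N}} T(\xi_a+\eta/2)/(\text{\textsc{k}}_a\,\mathsf{A}_{\bar{\zeta}_+,\bar{\zeta}_-}(\eta/2-\xi_a))\,|S\rangle = |1,\ldots,1\rangle$, and by the orthogonality of the left and right SoV bases proven in the preceding theorem this vector already satisfies the defining relations of $|\bar{S}\rangle$. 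Uniqueness then forces $|\bar{S}\rangle = |1,\ldots,1\rangle$.

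Next, plugging this identification into the right-hand side of the claimed second representation and rearranging the mutually commuting transfer matrix operators site by site, one finds that at each site $a$ the product reduces to either the single factor $T(\xi_a^{(0)})/(\text{\textsc{k}}_a\,\mathsf{A}_{\bar{\zeta}_+,\bar{\zeta}_-}(-\xi_a^{(1)}))$ when $h_a=1$, or else the product
\begin{equation*}
\frac{\text{\textsc{k}}_a\,T(\xi_a^{(1)})}{\mathsf{A}_{\bar{\zeta}_+,\bar{\zeta}_-}(\xi_a^{(0)})}\cdot\frac{T(\xi_a^{(0)})}{\text{\textsc{k}}_a\,\mathsf{A}_{\bar{\zeta}_+,\bar{\zeta}_-}(-\xi_a^{(1)})}=\frac{T(\xi_a^{(0)})T(\xi_a^{(1)})}{\mathsf{A}_{\bar{\zeta}_+,\bar{\zeta}_-}(\xi_a^{(0)})\,\mathsf{A}_{\bar{\zeta}_+,\bar{\zeta}_-}(-\xi_a^{(1)})}
\end{equation*}
when $h_a=0$, which collapses to the identity operator by \eqref{gl_2-fusion}. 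The surviving product over the sites with $h_a=1$ is precisely the first representation \eqref{First-R-SoV} of $|h_1,\ldots,h_{\mathsf{N}}\rangle$, proving the first claim.

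For the eigenco-vector statement, I would pair $\langle t|$ with the newly established second representation of $|h_1,\ldots,h_{\mathsf{N}}\rangle$. Since the commuting transfer matrices act on $\langle t|$ by multiplication by the scalar $t(\lambda)$, one immediately obtains $\langle t|h_1,\ldots,h_{\mathsf{N}}\rangle = \prod_{a=1}^{\mathsf{N}} [\text{\textsc{k}}_a\,t(\xi_a-\eta/2)/\mathsf{A}_{\bar{\zeta}_+,\bar{\zeta}_-}(\xi_a+\eta/2)]^{1-h_a}\,\langle t|\bar{S}\rangle$. The scalar $\langle t|\bar{S}\rangle$ is then evaluated by pairing $\langle t|$ with the first-representation expression of $|\bar{S}\rangle$ and using the prescribed normalization $\langle t|S\rangle = 1/N_S$; applying the eigenvalue version $t(\xi_a^{(0)})\,t(\xi_a^{(1)})=\mathsf{A}_{\bar{\zeta}_+,\bar{\zeta}_-}(\xi_a^{(0)})\mathsf{A}_{\bar{\zeta}_+,\bar{\zeta}_-}(-\xi_a^{(1)})$ of \eqref{gl_2-fusion} (obtained by evaluating that identity on the eigenstate) to convert $t(\xi_a+\eta/2)$ into $\mathsf{A}_{\bar{\zeta}_+,\bar{\zeta}_-}(\xi_a+\eta/2)\mathsf{A}_{\bar{\zeta}_+,\bar{\zeta}_-}(\eta/2-\xi_a)/t(\xi_a-\eta/2)$ reproduces \eqref{Normalization+}. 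The stated expansion of $\langle t|$ then follows by inserting the SoV decomposition of the identity $\mathbb{I} = N_S\sum_{\mathbf{h}} \widehat{V}(\xi_1^{(h_1)},\ldots,\xi_{\mathsf{N}}^{(h_{\mathsf{N}})})\,|h_1,\ldots,h_{\mathsf{N}}\rangle\langle h_1,\ldots,h_{\mathsf{N}}|$ to the right of $\langle t|$ and substituting the scalar wave function just computed.

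The only nontrivial bookkeeping is tracking the prefactor $N_S$ through the identity decomposition together with the conversion $t(\xi_a+\eta/2) \leftrightarrow t(\xi_a-\eta/2)$ via the scalar fusion relation, so as to reproduce exactly the form displayed in the corollary; this is where the proof has to be written out carefully, but it is not a conceptual difficulty, the essential content being captured by the fusion identity and the orthogonal basis structure already established in the preceding theorem.
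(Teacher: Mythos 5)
Your proposal follows essentially the same route as the paper's own proof: identify $|\bar S\rangle$ with $|1,\dots,1\rangle$, substitute the first representation $(\ref{First-R-SoV})$ and cancel the site-by-site products $T(\xi_a^{(0)})T(\xi_a^{(1)})$ against $\mathsf{A}_{\bar{\zeta}_{+},\bar{\zeta}_{-}}(\xi_a^{(0)})\,\mathsf{A}_{\bar{\zeta}_{+},\bar{\zeta}_{-}}(-\xi_a^{(1)})$ via the fusion identity $(\ref{gl_2-fusion})$, then evaluate $\langle t|\bar S\rangle$ from the normalization $\langle t|S\rangle=1/N_S$ together with the scalar fusion relation. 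The one point to watch in the final bookkeeping (which you correctly flag) is that inserting the resolution of the identity yields an overall constant $N_S\,\mathsf{N}_t$ in front of the sum, so the prefactor displayed in the corollary should be checked against the convention adopted for $N_S$.
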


\begin{proof}
Taking into account the chosen normalizations clearly it holds:%
\begin{equation}
|\bar{S}\rangle =|h_{1}=1,...,h_{\mathsf{N}}=1\rangle =\prod_{a=1}^{\mathsf{N%
}}\frac{T(\xi _{a}+\eta /2)}{\text{\textsc{k}}_{a}\,\mathsf{A}_{\bar{\zeta}%
_{+},\bar{\zeta}_{-}}(\eta /2-\xi _{n})}|S\rangle ,
\end{equation}%
so that:%
\begin{align}
\prod_{a=1}^{\mathsf{N}}(\frac{\text{\textsc{k}}_{a}\,T(\xi _{a}-\eta /2)}{%
\mathsf{A}_{\bar{\zeta}_{+},\bar{\zeta}_{-}}(\xi _{n}+\eta /2)})^{1-h_{a}}|%
\bar{S}\rangle & =\prod_{a=1}^{\mathsf{N}}(\frac{\text{\textsc{k}}%
_{a}\,T(\xi _{a}-\eta /2)}{\mathsf{A}_{\bar{\zeta}_{+},\bar{\zeta}_{-}}(\xi
_{n}+\eta /2)})^{1-h_{a}}\frac{T(\xi _{a}+\eta /2)}{\text{\textsc{k}}_{a}\,%
\mathsf{A}_{\bar{\zeta}_{+},\bar{\zeta}_{-}}(\eta /2-\xi _{n})}|S\rangle 
\notag \\
& =\prod_{a=1}^{\mathsf{N}}(\frac{T(\xi _{a}-\eta /2)T(\xi _{a}+\eta /2)}{%
\mathsf{A}_{\bar{\zeta}_{+},\bar{\zeta}_{-}}(\xi _{n}+\eta /2)\mathsf{A}_{%
\bar{\zeta}_{+},\bar{\zeta}_{-}}(\eta /2-\xi _{n})})^{1-h_{a}}  \notag \\
& \times (\frac{T(\xi _{a}+\eta /2)}{\text{\textsc{k}}_{a}\,\mathsf{A}_{\bar{%
\zeta}_{+},\bar{\zeta}_{-}}(\eta /2-\xi _{n})})^{h_{a}}|S\rangle  \notag \\
& =|h_{1},...,h_{\mathsf{N}}\rangle ,
\end{align}%
by the fusion identities (\ref{gl_2-fusion}). From this representation of
the right SoV vectors and from the original one in (\ref{First-R-SoV}), it
follows that: 
\begin{equation}
\langle t|\bar{S}\rangle \equiv \langle t|1,...,1\rangle =\prod_{a=1}^{%
\mathsf{N}}\frac{t(\xi _{a}+\eta /2)}{\text{\textsc{k}}_{a}\,\mathsf{A}_{%
\bar{\zeta}_{+},\bar{\zeta}_{-}}(\eta /2-\xi _{n})}\langle t|S\rangle ,
\end{equation}%
from which our result follows.
\end{proof}

\subsection{Algebraic Bethe Ansatz form of separate states}

Let us rewrite the left and right transfer matrix eigenstates in terms of
the $Q$-functions. The following corollary holds:

\begin{corollary}
Under the same conditions ensuring that the set of SoV co-vectors is a basis,
then for any element of the spectrum of $T(\lambda )$ the unique associated
eigenvector $|t\rangle $ admit the following SoV representations:%
\begin{align}
& |t\rangle =\sum_{\mathbf{h}\in \{0,1\}^{\mathsf{N}}}\prod_{n=1}^{\mathsf{N}%
}Q_{t}(\xi _{n}^{(h_{n})})\ \widehat{V}(\xi _{1}^{(h_{1})},\ldots ,\xi _{%
\mathsf{N}}^{(h_{\mathsf{N}})})\,\,|\,h_{1},...,h_{\mathsf{N}}\,\rangle ,
\label{eigenT-right} \\
& \langle \,t\,|=\sum_{\mathbf{h}\in \{0,1\}^{\mathsf{N}}}\prod_{n=1}^{%
\mathsf{N}}\left[ \left( \frac{\xi _{n}-\eta }{\xi _{n}+\eta }\frac{\mathsf{A%
}_{\bar{\zeta}_{+},\bar{\zeta}_{-}}(\xi _{n}^{(0)})}{\mathsf{A}_{\bar{\zeta}%
_{+},\bar{\zeta}_{-}}(-\xi _{n}^{(1)})}\right) ^{\!h_{n}}Q_{t}(\xi
_{n}^{(h_{n})})\right]   \notag \\
& \hspace{7cm}\times \widehat{V}(\xi _{1}^{(h_{1})},\ldots ,\xi _{\mathsf{N}%
}^{(h_{\mathsf{N}})})\,\langle h_{1},...,h_{\mathsf{N}}\,|\,.
\label{eigenT-left}
\end{align}
\end{corollary}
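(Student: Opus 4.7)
My plan is to derive the claimed Algebraic Bethe Ansatz form by feeding the quantum spectral curve equation into the separated wave-functions already established in Theorem~\ref{SoV-Sp-Ch-gl2-R} and in the preceding theorem on the orthogonal right SoV basis. The skeleton of the argument has three steps.

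First, I would evaluate the inhomogeneous Baxter equation (\ref{InH-BAX-eq}) at the two shifted inhomogeneities $\lambda=\xi_n^{(1)}=\xi_n-\eta/2$ and $\lambda=\xi_n^{(0)}=\xi_n+\eta/2$. From the product form $\mathsf{A}_{\bar\zeta_+,\bar\zeta_-}(\lambda)\propto a(\lambda)d(-\lambda)$ one reads off the zeros $\mathsf{A}_{\bar\zeta_+,\bar\zeta_-}(\xi_n^{(1)})=0$ and $\mathsf{A}_{\bar\zeta_+,\bar\zeta_-}(-\xi_n^{(0)})=0$, while the inhomogeneous term $F(\lambda)$ of (\ref{InH-F}) vanishes at $\pm\xi_n^{(h)}$ for any $h\in\{0,1\}$. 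The equation therefore collapses to the two discrete identities
\begin{align}
t(\xi_n^{(1)})\,Q_t(\xi_n^{(1)}) &= \mathsf{A}_{\bar\zeta_+,\bar\zeta_-}(-\xi_n^{(1)})\,Q_t(\xi_n^{(0)}),\\
t(\xi_n^{(0)})\,Q_t(\xi_n^{(0)}) &= \mathsf{A}_{\bar\zeta_+,\bar\zeta_-}(\xi_n^{(0)})\,Q_t(\xi_n^{(1)}),
\end{align}
which I rewrite as the ratios $t(\xi_n^{(1)})/\mathsf{A}_{\bar\zeta_+,\bar\zeta_-}(-\xi_n^{(1)})=Q_t(\xi_n^{(0)})/Q_t(\xi_n^{(1)})$ and $t(\xi_n^{(0)})/\mathsf{A}_{\bar\zeta_+,\bar\zeta_-}(\xi_n^{(0)})=Q_t(\xi_n^{(1)})/Q_t(\xi_n^{(0)})$.

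Second, I substitute the first ratio into the separated wave-function $\langle h_1,\dots,h_{\mathsf{N}}|t\rangle=\prod_n[t(\xi_n^{(1)})/\mathsf{A}_{\bar\zeta_+,\bar\zeta_-}(-\xi_n^{(1)})]^{1-h_n}$ of (\ref{SoV-Ch-T-eigenV-open}). This gives $\langle \mathbf{h}|t\rangle=\prod_n Q_t(\xi_n^{(0)})^{1-h_n}\,Q_t(\xi_n^{(1)})^{-(1-h_n)}$, which after multiplying by the $\mathbf{h}$-independent constant $\prod_n Q_t(\xi_n^{(1)})$ becomes exactly $\prod_n Q_t(\xi_n^{(h_n)})$, since $Q_t(\xi_n^{(0)})^{1-h_n}Q_t(\xi_n^{(1)})^{h_n}=Q_t(\xi_n^{(h_n)})$. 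Applying the SoV resolution of the identity $\mathbb{I}=N_S\sum_{\mathbf{h}}\widehat V(\xi^{(\mathbf{h})})|\mathbf{h}\rangle\langle\mathbf{h}|$ and absorbing the overall scalar into the (free) normalization of $|t\rangle$ yields formula (\ref{eigenT-right}).

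Third, I repeat the operation for the co-vector. Starting from the second SoV representation of $\langle t|$ given in the previous corollary, namely $\langle t|=\mathsf{N}_t\sum_{\mathbf{h}}\prod_n[\text{\textsc{k}}_n t(\xi_n-\eta/2)/\mathsf{A}_{\bar\zeta_+,\bar\zeta_-}(\xi_n+\eta/2)]^{1-h_n}\,\widehat V\,\langle\mathbf{h}|$, I use the fusion identity (\ref{gl_2-fusion}) together with the second Baxter ratio to replace $t(\xi_n^{(0)})/\mathsf{A}_{\bar\zeta_+,\bar\zeta_-}(-\xi_n^{(1)})$ by $[(\xi_n-\eta)/(\xi_n+\eta)]\,[\mathsf{A}_{\bar\zeta_+,\bar\zeta_-}(\xi_n^{(0)})/\mathsf{A}_{\bar\zeta_+,\bar\zeta_-}(-\xi_n^{(1)})]\,Q_t(\xi_n^{(1)})/Q_t(\xi_n^{(0)})$, noting that $1/\text{\textsc{k}}_n=(\xi_n-\eta)/(\xi_n+\eta)$. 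The same telescoping $Q_t(\xi_n^{(1)})^{h_n}Q_t(\xi_n^{(0)})^{1-h_n}=Q_t(\xi_n^{(h_n)})$ then produces (\ref{eigenT-left}) after absorbing the global factor into the normalization.

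There is no serious obstacle: the only delicate point is verifying that the zeros of $\mathsf{A}_{\bar\zeta_+,\bar\zeta_-}$ at the shifted inhomogeneities combine correctly with those of the inhomogeneous term $F(\lambda)$ so that the Baxter equation at $\lambda=\pm\xi_n^{(h)}$ reduces to a two-term recursion for $Q_t$. Once that reduction is in place, the remaining steps are purely algebraic rearrangements consistent with the chosen normalizations $\langle S|t\rangle=\langle t|S\rangle=1/N_S$ and $\langle t|\bar S\rangle=\mathsf{N}_t$ fixed in the previous theorem and corollary.
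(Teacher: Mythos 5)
Your derivation is correct and is precisely the argument the paper leaves implicit, since this corollary is stated without proof: evaluating the inhomogeneous Baxter equation at $\lambda=\xi_n^{(0)}$ and $\lambda=\xi_n^{(1)}$, where one of the two $\mathsf{A}$-coefficients and the inhomogeneous term $F$ vanish, converts the separated wave-functions of the preceding theorem into the telescoped products $\prod_n Q_t(\xi_n^{(h_n)})$ after absorbing an $\mathbf{h}$-independent constant into the eigenvector normalization. The only cosmetic slip is in your third step, where the quantity actually being rewritten is $t(\xi_n^{(0)})/(\textsc{k}_n\,\mathsf{A}_{\bar{\zeta}_+,\bar{\zeta}_-}(-\xi_n^{(1)}))$ and the fusion relation is what flips the exponent from $1-h_n$ to $h_n$ in the co-vector representation you start from; the algebra as you set it up nevertheless closes correctly.
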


Let us remark that these representations for the left and right transfer
matrix eigenstates formally coincide with that obtained in the schema of the
generalized Sklyanin's SoV approach \cite{KitMNT16} even when this last
approach does not apply and without any requirement on the form of the
co-vector $\langle \,S\,|$. One can introduce the following class of left and
right separate states:%
\begin{align}
\langle \,\alpha \,|& =\sum_{h_{1},...,h_{\mathsf{N}}=0}^{1}\prod_{a=1}^{%
\mathsf{N}}\alpha (\xi _{a}^{(h_{a})})\widehat{V}(\xi _{1}^{(h_{1})},...,\xi
_{\mathsf{N}}^{(h_{\mathsf{N}})})\langle h_{1},...,h_{\mathsf{N}}|,
\label{L_S-state} \\
|\,\beta \,\rangle & =\sum_{h_{1},...,h_{\mathsf{N}}=0}^{1}\prod_{a=1}^{%
\mathsf{N}}\left( \frac{\xi _{n}-\eta }{\xi _{n}+\eta }\frac{\mathsf{A}_{%
\bar{\zeta}_{+},\bar{\zeta}_{-}}(\xi _{n}^{(0)})}{\mathsf{A}_{\bar{\zeta}%
_{+},\bar{\zeta}_{-}}(-\xi _{n}^{(1)})}\right) ^{\!h_{n}}\beta (\xi
_{a}^{(h_{a})})\text{ }\widehat{V}(\xi _{1}^{(h_{1})},...,\xi _{\mathsf{N}%
}^{(h_{\mathsf{N}})})|h_{1},...,h_{\mathsf{N}}\rangle ,  \label{R_S-state}
\end{align}%
where $\alpha (\lambda )$ and $\beta (\lambda )$ are generic functions. It
is then clear by the previous corollary that the left and right transfer
matrix eigenstates are special elements in these classes.

Let us now introduce the one parameter family of commuting operators by:%
\begin{equation}
\mathbb{B}\left( \lambda \right) =\text{ }N_{S}\sum_{h_{1},...,h_{\mathsf{N}%
}=0}^{1}b_{h_{1},...,h_{\mathsf{N}}}(\lambda )\widehat{V}(\xi
_{1}^{(h_{1})},...,\xi _{\mathsf{N}}^{(h_{\mathsf{N}})})|h_{1},...,h_{%
\mathsf{N}}\rangle \langle h_{1},...,h_{\mathsf{N}}|,
\end{equation}%
where we have defined:%
\begin{equation}
b_{h_{1},...,h_{\mathsf{N}}}(\lambda )=\prod_{a=1}^{\mathsf{N}}(\lambda
^{2}-(\xi _{a}^{(h_{a})})^{2}).
\end{equation}%
Clearly, if the two boundary matrices are non simultaneously diagonalizable
and we take the special choice $\langle S|=\langle 0|\mathcal{W}_{K}^{-1}$
then it holds:%
\begin{equation}
\widehat{\mathcal{B}}_{-}(\lambda )=(-1)^{\mathsf{N}}\,\bar{\mathsf{b}}_{-}%
\frac{\lambda -\eta /2}{\bar{\zeta}_{-}}\mathbb{B}\left( \lambda \right) .
\end{equation}%
Let us assume that $\alpha (\lambda )$ be the following polynomial: 
\begin{equation}
\alpha (\lambda )=\prod_{k=1}^{R}(\lambda ^{2}-\alpha _{k}^{2}),
\end{equation}%
then the left and right separate states $\langle \,\alpha \,|$ and $%
|\,\alpha \,\rangle $ associated admits the following Algebraic Bethe Ansatz
form: 
\begin{equation}
\langle \,\alpha \,|=(-1)^{R\mathsf{N}}\langle \,1\,|\prod_{k=1}^{R}\mathbb{B%
}(\alpha _{k}),\qquad |\,\alpha \,\rangle =(-1)^{R\mathsf{N}}\prod_{k=1}^{R}%
\mathbb{B}(\alpha _{k})|\,1\,\rangle ,  \label{ABA_SoV-form}
\end{equation}%
where we have defined $\langle \,1\,|$ and $|\,1\,\rangle $ to be the
separate co-vector and vector associated to the identity polynomial:%
\begin{align}
\langle \,1\,|& =\sum_{h_{1},...,h_{\mathsf{N}}=0}^{1}\widehat{V}(\xi
_{1}^{(h_{1})},...,\xi _{\mathsf{N}}^{(h_{\mathsf{N}})})\langle h_{1},...,h_{%
\mathsf{N}}|, \\
|\,1\,\rangle & =\sum_{h_{1},...,h_{\mathsf{N}}=0}^{1}\prod_{a=1}^{\mathsf{N}%
}\left( \frac{\xi _{n}-\eta }{\xi _{n}+\eta }\frac{\mathsf{A}_{\bar{\zeta}%
_{+},\bar{\zeta}_{-}}(\xi _{n}^{(0)})}{\mathsf{A}_{\bar{\zeta}_{+},\bar{\zeta%
}_{-}}(-\xi _{n}^{(1)})}\right) ^{\!h_{n}}\text{ }\widehat{V}(\xi
_{1}^{(h_{1})},...,\xi _{\mathsf{N}}^{(h_{\mathsf{N}})})|h_{1},...,h_{%
\mathsf{N}}\rangle .
\end{align}

\subsection{Scalar product of separate states}

Let us consider a couple of separate states $\langle \,\alpha \,|$ and $%
|\,\beta \,\rangle $, then it holds:%
\begin{align}
\langle \,\alpha \,|\,\beta \,\rangle & =\sum_{h_{1},...,h_{\mathsf{N}%
}=0}^{1}\prod_{a=1}^{\mathsf{N}}\left( \frac{\xi _{n}-\eta }{\xi _{n}+\eta }%
\frac{\mathsf{A}_{\bar{\zeta}_{+},\bar{\zeta}_{-}}(\xi _{n}^{(0)})}{\mathsf{A%
}_{\bar{\zeta}_{+},\bar{\zeta}_{-}}(-\xi _{n}^{(1)})}\right)
^{\!h_{n}}\alpha (\xi _{a}^{(h_{a})})\beta (\xi _{a}^{(h_{a})})\frac{%
\widehat{V}(\xi _{1}^{(h_{1})},...,\xi _{\mathsf{N}}^{(h_{\mathsf{N}})})}{%
N_{S}} \\
& =\sum_{h_{1},...,h_{\mathsf{N}}=0}^{1}\prod_{a=1}^{\mathsf{N}}\left(
g_{n}f_{n}\right) ^{\!h_{n}}\alpha (\xi _{a}^{(h_{a})})\beta (\xi
_{a}^{(h_{a})})\frac{\widehat{V}(\xi _{1}^{(h_{1})},...,\xi _{\mathsf{N}%
}^{(h_{\mathsf{N}})})}{N_{S}},  \label{Sp-al-be-1}
\end{align}%
where:%
\begin{equation}
g_{n}\equiv g_{\bar{\zeta}_{+},\bar{\zeta}_{-}}(\xi _{n})=\frac{(\xi _{n}+%
\bar{\zeta}_{+})(\xi _{n}+\bar{\zeta}_{-})}{(\xi _{n}-\bar{\zeta}_{+})(\xi
_{n}-\bar{\zeta}_{-})},  \label{g_n}
\end{equation}%
and 
\begin{align}
f_{n}\equiv f(\xi _{n},\{\xi \})& =-\prod_{\substack{ a=1 \\ a\neq n}}^{%
\mathsf{N}}\frac{(\xi _{n}-\xi _{a}+\eta )(\xi _{n}+\xi _{a}+\eta )}{(\xi
_{n}-\xi _{a}-\eta )(\xi _{n}+\xi _{a}-\eta )}  \notag \\
& =-\prod_{\substack{ a=1 \\ a\neq n}}^{\mathsf{N}}\frac{\left[ (\xi
_{n}^{(0)})^{2}-(\xi _{a}^{(1)})^{2}\right] \left[ (\xi _{n}^{(0)})^{2}-(\xi
_{a}^{(0)})^{2}\right] }{\left[ (\xi _{n}^{(1)})^{2}-(\xi _{a}^{(1)})^{2}%
\right] \left[ (\xi _{n}^{(1)})^{2}-(\xi _{a}^{(0)})^{2}\right] }.
\label{f_n}
\end{align}%
So that in our general SoV approach the scalar product of separate states
admits the same representations which hold for the separate states in the
generalized Sklyanin's approach, as one can directly infer comparing $\left( %
\ref{Sp-al-be-1}\right) $ with the formula (4.12) of \cite{KitMNT17}.
Moreover, our current result is not limited to the cases of non-commuting
boundary matrices, where the generalized Sklyanin's approach applies. In
particular, setting the normalization as:%
\begin{equation}
N_{S}=\widehat{V}(\xi _{1},...,\xi _{\mathsf{N}})\frac{\widehat{V}(\xi
_{1}^{\left( 0\right) },...,\xi _{\mathsf{N}}^{\left( 0\right) })}{\widehat{V%
}(\xi _{1}^{\left( 1\right) },...,\xi _{\mathsf{N}}^{\left( 1\right) })}%
\prod_{n=1}^{\mathsf{N}}\frac{\xi _{n}}{\xi _{n}-\bar{\zeta}_{-}},
\end{equation}%
we obtain 
\begin{equation}
\langle \,\alpha \,|\,\beta \,\rangle =\prod_{n=1}^{\mathsf{N}}\frac{\xi
_{n}-\bar{\zeta}_{-}}{\xi _{n}}\sum_{h_{1},...,h_{\mathsf{N}%
}=0}^{1}\prod_{a=1}^{\mathsf{N}}\left( -g_{n}\right) ^{\!h_{n}}\alpha (\xi
_{a}^{(h_{a})})\beta (\xi _{a}^{(h_{a})})\frac{\hat{V}(\xi
_{1}^{(1-h_{1})},...,\xi _{\mathsf{N}}^{(1-h_{\mathsf{N}})})}{\hat{V}(\xi
_{1},...,\xi _{\mathsf{N}})},
\end{equation}%
which coincides with the formula (4.13) of \cite{KitMNT17}, up to the non
required prefactor $1/\bar{\mathsf{b}}_{-}$. This means that we can use in
our more general SoV framework the manipulation of these scalar product
formulae to obtain Izergin and Slavnov type scalar products \cite%
{Tsu98,Ize87,Sla89,FodW12} and the generalized Gaudin type formula as done
in \cite{KitMNT17}. The same statements apply as well to the trigonometric
case in comparison with the results in \cite{KitMNT18}.

{\small
}


\begin{thebibliography}{999}

\bibitem{MaiN18}
J.~M. Maillet, and G.~Niccoli.
\newblock {\em On quantum separation of variables}
\newblock J. Math. Phys. {\bf 59} (2018) 091417.

\bibitem{MaiN18a} J.~M. Maillet, and G.~Niccoli. \newblock {\em Complete spectrum
of quantum integrable lattice models associated to $Y(gl_n)$ by separation of variables} 
\newblock arXiv:1810.11885.

\bibitem{MaiN18b} 
J.~M. Maillet, and G.~Niccoli.
\newblock {\em Complete spectrum of quantum integrable lattice models associated to $U_q(\hat{gl}_n)$ by separation
of variables.}
\newblock arXiv:1811.08405.

\bibitem{MaiN19} J.~M. Maillet, and G.~Niccoli. \newblock {\em On quantum
separation of variables: beyond fundamental representations}, \newblock arXiv:1903.06618.


% QISM::::::::::::::::
\bibitem{FadS78}
L.~D. Faddeev and E.~K. Sklyanin.
\newblock {\em Quantum-mechanical approach to completely integrable field theory
  models.}
\newblock Sov. Phys. Dokl., 23:902--904, 1978.

\bibitem{FadST79}
L.~D. Faddeev, E.~K. Sklyanin, and L.~A. Takhtajan.
\newblock {\em Quantum inverse problem method {I}.}
\newblock Theor. Math. Phys., 40:688--706, 1979.
\newblock Translated from Teor. Mat. Fiz. 40 (1979) 194-220.

\bibitem{FadT79} L.~A. Takhtadzhan and L.~D. Faddeev, \newblock \emph{The
quantum method of the inverse problem and the {H}eisenberg {XYZ} model}, %
\newblock \href{http://dx.doi.org/10.1070/RM1979v034n05ABEH003909}{Russ.
Math. Surveys \textbf{34}(5), 11 (1979)}.

\bibitem{Skl79}
E.~K. Sklyanin.
\newblock {\em Method of the inverse scattering problem and the non-linear quantum
  {S}chr{\"o}dinger equation.}
\newblock Sov. Phys. Dokl. , 24:107--109, 1979.

\bibitem{Skl79a}
E.~K. Sklyanin.
\newblock {\em On complete integrability of the {L}andau-{L}ifshitz equation.}
\newblock Preprint LOMI E-3-79, 1979.

\bibitem{FadT81}
L.~D. Faddeev and L.~A. Takhtajan.
\newblock {\em Quantum inverse scattering method.}
\newblock Sov. Sci. Rev. Math., C 1:107, 1981.

\bibitem{Skl82}
E.~K. Sklyanin.
\newblock {\em Quantum version of the inverse scattering problem method.}
\newblock J. Sov. Math., 19:1546--1595, 1982.

\bibitem{Fad82}
L.~D. Faddeev.
\newblock {\em Integrable models in $(1 + 1)$-dimensional quantum field theory.}
\newblock In J.~B. Zuber and R.~Stora, editors, {\em Les Houches 1982, Recent
  advances in field theory and statistical mechanics}, pages 561--608. Elsevier
  Science Publ., 1984.

\bibitem{Fad96}
L.~D. Faddeev.
\newblock {\em How algebraic {B}ethe ansatz works for integrable model.}
\newblock Les Houches Lectures, 1996.


%Reflection Eq start :::::::::::::::
\bibitem{Che84} I.~V. Cherednik, \newblock \emph{Factorizing particles on a
half line and root systems}, \newblock  \href{http://dx.doi.org/10.1007/BF01038545%
}{Theor. Math. Phys. \textbf{61}, 977 (1984)}.

\bibitem{Skl88} E.~K. Sklyanin, \newblock \emph{Boundary conditions for
integrable quantum systems}, \newblock \href{http://dx.doi.org/10.1088/0305-4470/21/10/015%
}{J. Phys. A: Math. Gen. \textbf{21}, 2375 (1988)}.


% Open System and QISM open

\bibitem{Gau71a} M.~Gaudin, \newblock \emph{Boundary energy of a {B}ose gas
in one dimension}, \newblock \href{http://dx.doi.org/10.1103/PhysRevA.4.386}{%
Phys. Rev. A \textbf{4}, 386 (1971)}.

\bibitem{Bar79} R.~Z. Bariev, \newblock \emph{Correlation functions of the
semi-infinite two-dimensional {I}sing model. {I}. {L}ocal magnetization}, %
\newblock \href{http://dx.doi.org/10.1007/BF01019245}{Teoret. Mat. Fiz. 
\textbf{40}(1), 95 (1979)}.

\bibitem{Bar80} R.~Z. Bariev, \newblock \emph{Local magnetization of the
semi-infinite {XY}-chain}, \newblock  \href{http://dx.doi.org/10.1016/0378-4371(80)90224-1%
}{Physica A: Stat. Mech. Appl. \textbf{103}(1), 363 (1980)}.

\bibitem{Bar80a} R.~Z. Bariev, \newblock \emph{Correlation functions of the
semi-infinite two-dimensional {I}sing model. {II}. {T}wo-point correlation
functions}, \newblock \href{http://dx.doi.org/10.1007/BF01032121}{Theor.
Math. Phys. \textbf{42}(2), 173 (1980)}.

\bibitem{Sch85} H.~Schulz, \newblock \emph{Hubbard chain with reflecting ends}, \newblock \href{http://dx.doi.org/10.1088/0022-3719/18/3/010}{J. Phys. C:
Solid State Phys. \textbf{18}(3), 581 (1985)}.

\bibitem{AlcBBBQ87} F.~Alcaraz, M.~Barber, M.~Batchelor, R.~Baxter and
G.~Quispel, \newblock \emph{Surface exponents of the quantum {XXZ}, {A}shkin-{T}eller and {P}otts models}, \newblock  \href{http://dx.doi.org/10.1088/0305-4470/20/18/038}{J. Phys. A: Math. Gen. \textbf{20}, 6397 (1987)}.


\bibitem{Bar88} R.~Z. Bariev, \newblock \emph{Correlation functions of the
semi-infinite two-dimensional {I}sing model. {III}. {I}nfluence of a
``fixed'' boundary}, \newblock \href{http://dx.doi.org/10.1007/BF01028685}{%
Teor. Mat. Fiz. \textbf{77}(1), 127 (1988)}.

\bibitem{MezNR90} L.~Mezincescu, R.~I. Nepomechie and V.~Rittenberg, %
\newblock \emph{Bethe ansatz solution of the {F}ateev-{Z}amolodchikov
quantum spin chain with boundary terms}, \newblock \href{http://dx.doi.org/10.1016/0375-9601(90)90016-H%
}{Phys. Lett. A \textbf{147}(1), 70 (1990)}.

\bibitem{PasS90} V.~Pasquier and H.~Saleur, \newblock \emph{Common
structures between finite systems and conformal field theories through
quantum groups}, \newblock  \href{http://dx.doi.org/10.1016/0550-3213(90)90122-T%
}{Nucl. Phys. B \textbf{330}, 523 (1990)}.

\bibitem{BatMNR90} M.~T. Batchelor, L.~Mezincescu, R.~I. Nepomechie and
V.~Rittenberg, \newblock \emph{{q}-deformations of the {O}(3) symmetric
spin-1 heisenberg chain}, \newblock  \href{http://dx.doi.org/10.1088/0305-4470/23/4/003%
}{J. Phys. A: Math. Gen. \textbf{23}(4), L141 (1990)}.

\bibitem{KulS91} P.~P. Kulish and E.~K. Sklyanin, \newblock \emph{The
general ${U}_q (sl(2))$ invariant {XXZ} integrable quantum spin chain}, %
\newblock  \href{http://dx.doi.org/10.1088/0305-4470/24/8/009}{J. Phys. A:
Math. Gen. \textbf{24}(8) L435 (1991)}.

\bibitem{FriM91}
{L. Freidel and J.M. Maillet}.
\newblock {\em Quadratic algebras and integrable systems}.
\newblock {\em {Physics Letters B}}, {262}:{278--284}, {1991}.

\bibitem{MezN91} L.~Mezincescu and R.~I. Nepomechie, \newblock \emph{%
Integrability of open spin chains with quantum algebra symmetry}, \newblock 
\href{http://dx.doi.org/10.1142/S0217751X91002458}{Int. J. Mod. Phys. A 
\textbf{06}(29), 5231 (1991)}.

\bibitem{MezN91b} L. Mezincescu and R. Nepomechie, \newblock{\em Analytical Bethe Ansatz for quantum-algebra-invariant spin chains}. \newblock Int. J. Mod. Phys. A 6 (1991) 5231 with addendum


\bibitem{deVG93} H.~J. de~Vega and A.~G. Ruiz, \newblock \emph{Boundary {$K$}%
-matrices for the six vertex and the $n(2n-1)$ {$A_{n-1}$} vertex models}, %
\newblock \href{http://dx.doi.org/10.1088/0305-4470/26/12/007}{J. Phys. A :
Math. Gen. \textbf{26}, L519 (1993)}.

\bibitem{deVG93a}
{H. J. de Vega and A. Gonz\'alez-Ruiz}.
\newblock {\em The highest weight property for the SUq(n) invariant spin chains}.
\newblock {Journal of Physics A: Mathematical and General}, {26}:{L519},
  {1993}.

\bibitem{deVG94} H.~J. de~Vega and A.~Gonzalez-Ruiz, \newblock \emph{%
Boundary {K}-matrices for the {XYZ}, {XXZ} and {XXX} spin chains}, \newblock
\href{http://dx.doi.org/10.1088/0305-4470/27/18/021}{J. Phys. A: Math. Gen. 
\textbf{27}(18), 6129 (1994)}.

\bibitem{GhoZ94a} S.~Ghoshal and A.~Zamolodchikov, \newblock \emph{Boundary {$%
S$}-matrix and boundary state in two-dimensional integrable quantum field
theory}, \newblock \href{http://dx.doi.org/10.1142/S0217751X94001552}{Int.
J. Mod. Phys. \textbf{A9}, 3841 (1994)}.

\bibitem{GhoZ94b}
{S. Ghoshal and A. B. Zamolodchikov}.
\newblock {\em Boundary s matrix and boundary state in two-dimensional integrable
  quantum field theory}.
\newblock {International Journal of Modern Physics A}, {9}:{4353},
  {1994}.

\bibitem{ArtMN95} S.~Artz, L.~Mezincescu and R.~I. Nepomechie, \newblock 
\emph{Analytical {B}ethe ansatz for a ${A}_{2n-1}^{(2)}$ , ${B}_n^{(1)}$ , ${%
C}_n^{(1)}$ , ${D}_n^{(1)}$ quantum-algebra-invariant open spin chains}, %
\newblock \href{http://dx.doi.org/10.1088/0305-4470/28/18/006}{J. Phys. A:
Math. Gen. \textbf{28}(18), 5131 (1995)}.

\bibitem{JimKKKM95} M.~Jimbo, R.~Kedem, T.~Kojima, H.~Konno and T.~Miwa, %
\newblock \emph{{XXZ} chain with a boundary}, \newblock  \href{http://dx.doi.org/10.1016/0550-3213(95)00062-W%
}{Nucl. Phys. B \textbf{441}, 437 (1995)}.

\bibitem{JimKKMW95} M.~Jimbo, R.~Kedem, H.~Konno, T.~Miwa and R.~Weston, %
\newblock \emph{Difference equations in spin chains with a boundary}, %
\newblock \href{http://dx.doi.org/10.1016/0550-3213(95)00218-H}{Nucl. Phys.
B \textbf{448}, 429 (1995)}.
  
\bibitem{Kul96}  P.P. Kulish \newblock {\em Yang-Baxter equation and reflection equations in integrable models.} \newblock (1996) \newblock In: Grosse H., Pittner L. (eds) Low-Dimensional Models in Statistical Physics and Quantum Field Theory. \href{}{Lecture Notes in Physics, vol 469. Springer, Berlin, Heidelberg}.

\bibitem{FanHSY96} H.~Fan, B.-Y. Hou, K.-J. Shi and Z.-X. Yang, \newblock 
\emph{Algebraic {B}ethe ansatz for the eight-vertex model with general open
boundary conditions}, \newblock  \href{http://dx.doi.org/10.1016/0550-3213(96)00398-7%
}{Nucl. Phys. B \textbf{478}(3), 723 (1996)}.


\bibitem{Zho96} H.-Q. Zhou, \newblock \emph{Quantum integrability for the
one-dimensional {H}ubbard open chain}, \newblock \href{http://dx.doi.org/10.1103/PhysRevB.54.41%
}{Phys. Rev. B \textbf{54}, 41 (1996)}.

\bibitem{AsaS96} H.~Asakawa and M.~Suzuki, \newblock \emph{Finite-size
corrections in the {XXZ} model and the {H}ubbard model with boundary fields}%
, \newblock \href{http://dx.doi.org/10.1088/0305-4470/29/2/004}{J. Phys. A:
Math. Gen. \textbf{29}(2), 225 (1996)}.

\bibitem{Zho97} H.-Q. Zhou, \newblock \emph{Graded reflection equations and
the one-dimensional {H}ubbard open chain}, \newblock \href{http://dx.doi.org/10.1016/S0375-9601(97)00035-2%
}{Phys. Lett. A \textbf{228}(1), 48 (1997)}.

\bibitem{GuaWY97} X.-W. Guan, M.-S. Wang and S.-D. Yang, \newblock \emph{Lax
pair and boundary {K}-matrices for the one-dimensional {H}ubbard model}, %
\newblock \href{http://dx.doi.org/10.1016/S0550-3213(96)00630-X}{Nucl. Phys.
B \textbf{485}(3), 685 (1997)}. 
%\newblock \href{http://dx.doi.org/10.1016/S0550-3213(96)00630-X}.

\bibitem{ShiW97} M.~Shiroishi and M.~Wadati, \newblock \emph{Bethe ansatz
equation for the hubbard model with boundary fields}, \newblock \href{http://dx.doi.org/10.1143/JPSJ.66.1%
}{J. Phys. Soc. Jpn. \textbf{66}(1), 1 (1997)}.

\bibitem{Tsu98} O.~Tsuchiya, \newblock \emph{Determinant formula for the
six-vertex model with reflecting end}, \newblock  \href{http://dx.doi.org/10.1063/1.532606%
}{J. Math. Phys. \textbf{39}, 5946 (1998)}.


\bibitem{Gua00} X.-W. Guan, \newblock \emph{Algebraic {B}ethe ansatz for the
one-dimensional {H}ubbard model with open boundaries}, \newblock \href{http://dx.doi.org/10.1088/0305-4470/33/30/309%
}{J. Phys. A: Math. Gen. \textbf{33}(30), 5391 (2000)}.

\bibitem{MinRS01} M. Mintchev, E. Ragoucy and P. Sorba. \newblock {\em Spontaneous symmetry breaking in the gl(N)-NLS hierarchy on the half line} \newblock  \href{}{ J. Phys A : Math. Gen. 34 (2001) 8345.}

\bibitem{Nep02} R.~I. Nepomechie, \newblock \emph{Solving the open {XXZ}
spin chain with nondiagonal boundary terms at roots of unity}, \newblock 
\href{http://dx.doi.org/10.1016/S0550-3213(01)00585-5}{ Nucl. Phys. B 
\textbf{622}, 615 (2002)}.

\bibitem{NepR03} R.~I. Nepomechie and F.~Ravanini, \newblock \emph{%
Completeness of the {B}ethe {A}nsatz solution of the open {XXZ} chain with
nondiagonal boundary terms}, \newblock  \href{http://dx.doi.org/10.1088/0305-4470/36/45/003%
}{J. Phys. A: Math. Gen. \textbf{36}(45), 11391 (2003)}.

\bibitem{CaoLSW03} J.~Cao, H.-Q. Lin, K.-J. Shi and Y.~Wang, \newblock \emph{%
Exact solution of {XXZ} spin chain with unparallel boundary fields}, %
\newblock  \href{http://dx.doi.org/10.1016/S0550-3213(03)00372-9}{Nucl.
Phys. B \textbf{663}, 487 (2003)}.

\bibitem{Doi03} A.~Doikou, \newblock \emph{Fused integrable lattice models
with quantum impurities and open boundaries}, \newblock \href{http://dx.doi.org/10.1016/j.nuclphysb.2003.07.001%
}{Nucl. Phys. B \textbf{668}(3), 447 (2003)}.

\bibitem{Nep04} R.~I. Nepomechie, \newblock \emph{Bethe ansatz solution of
the open {XXZ} chain with nondiagonal boundary terms}, \newblock  \href{http://dx.doi.org/10.1088/0305-4470/37/2/012%
}{J. Phys. A: Math. Gen. \textbf{37}, 433 (2004)}.

\bibitem{deGP04} J. de Gier and P. Pyatov, \newblock {\em Bethe ansatz for the Temperley–Lieb loop model with open boundaries}.  \newblock  \href{}{JSTAT 03 (2004) P002}

\bibitem{ArnACDFR04} D.~Arnaudon, J Avan, N.~Cramp\'e, A.~Doikou, L.~Frappat and
E.~Ragoucy, \newblock \emph{General boundary conditions for the $sl(N)$ and $sl(M|N)$
open spin chains}, \newblock \href{https://iopscience.iop.org/article/10.1088/1742-5468/2004/08/P08005/meta}{JSTAT 08 (2004) P005}.


\bibitem{GalM05} W. Galleas and M.J. Martins, \newblock {\em Solution of the SU(N) vertex model with non-diagonal open
boundaries}, \newblock Phys. Lett. A 335 (2005) 167

\bibitem{ArnCDFR05} D.~Arnaudon, N.~Cramp\'e, A.~Doikou, L.~Frappat and
E.~Ragoucy, \newblock \emph{Analytical {B}ethe ansatz for closed and open $%
gl (\mathcal{N})$-spin chains in any representation}, \newblock \href{http://dx.doi.org/10.1088/1742-5468/2005/02/P02007%
}{J. Stat. Mech. \textbf{2005}, P02007 (2005)}.

\bibitem{ArnCDFR06} D.~Arnaudon, N.~Cramp{\'e}, A.~Doikou, L.~Frappat and
E.~Ragoucy, \newblock \emph{Spectrum and {B}ethe ansatz equations for the ${U%
}_q \left( gl(\mathcal{N}) \right)$ closed and open spin chains in any
representation}, \newblock  \href{http://dx.doi.org/10.1007/s00023-006-0280-x%
}{Annales Henri Poincar{\'e} \textbf{7}(7), 1217 (2006)}.

\bibitem{MurNS06} R.~Murgan, R.~I. Nepomechie and C.~Shi, \newblock \emph{%
Exact solution of the open {XXZ} chain with general integrable boundary
terms at roots of unity}, \newblock  \href{http://dx.doi.org/10.1088/1742-5468/2006/08/P08006%
}{J. Stat. Mech. \textbf{2006}(08), P08006 (2006)}.

\bibitem{Doi06} A.~Doikou, \newblock \emph{The open {XXZ} and associated
models at q root of unity}, \newblock  \href{http://dx.doi.org/10.1088/1742-5468/2006/09/P09010%
}{J. Stat. Mech. Theory Exp. \textbf{2006}(09), P09010 (2006)}.

\bibitem{YanNZ06} W.-L. Yang, R.~I. Nepomechie and Y.-Z. Zhang, \newblock 
\emph{{Q}-operator and {T}-{Q} relation from the fusion hierarchy}, %
\newblock  \href{http://dx.doi.org/10.1016/j.physletb.2005.12.022}{Phys.
Lett. B \textbf{633}(4), 664 (2006)}.

\bibitem{KitKMNST07} N.~Kitanine, K.~K. Kozlowski, J.~M. Maillet,
G.~Niccoli, N.~A. Slavnov and  V.~Terras, \newblock \emph{Correlation
functions of the open XXZ chain: I}, \newblock  \href{http://dx.doi.org/10.1088/1742-5468/2007/10/P10009%
}{J. Stat. Mech. \textbf{2007}, P10009 (2007)}.

\bibitem{BasK07} P.~Baseilhac and K.~Koizumi, \newblock \emph{Exact spectrum
of the {XXZ} open spin chain from the $q$-{O}nsager algebra representation
theory}, \newblock \href{http://dx.doi.org/10.1088/1742-5468/2007/09/P09006}{%
J. Stat. Mech. \textbf{2007}, P09006 (2007)}.

\bibitem{YanZ07} W.-L. Yang and Y.-Z. Zhang, \newblock \emph{On the second
reference state and complete eigenstates of the open {XXZ} chain}, \newblock 
\href{http://dx.doi.org/10.1088/1126-6708/2007/04/044}{JHEP \textbf{04}, 044
(2007)}. \newblock 

\bibitem{RagS07} E.~Ragoucy and G.~Satta, \newblock \emph{Analytical bethe
ansatz for closed and open $gl (\mathcal{M}|\mathcal{N})$ super-spin chains
in arbitrary representations and for any Dynkin diagram}, \newblock \href{http://dx.doi.org/10.1088/1126-6708/2007/09/001%
}{JHEP \textbf{09}, 001 (2007)}.

\bibitem{FraNR07} L.~Frappat, R.~I. Nepomechie and E.~Ragoucy, \newblock 
\emph{A complete {B}ethe ansatz solution for the open spin-s {XXZ} chain
with general integrable boundary terms}, \newblock  \href{http://dx.doi.org/10.1088/1742-5468/2007/09/P09009%
}{J. Stat. Mech. \textbf{2007}(09), P09009 (2007)}.

\bibitem{AmiFOR07} L.~Amico, H.~Frahm, A.~Osterloh and G.~Ribeiro, \newblock 
\emph{Integrable spin-boson models descending from rational six-vertex models%
}, \newblock  \href{http://dx.doi.org/10.1016/j.nuclphysb.2007.07.022}{Nucl.
Phys. B \textbf{787}(3), 283 (2007)}.

\bibitem{KitKMNST08} N.~Kitanine, K.~K. Kozlowski, J.~M. Maillet,
G.~Niccoli, N.~A. Slavnov and  V.~Terras, \newblock \emph{Correlation
functions of the open XXZ chain: II}, \newblock \href{http://dx.doi.org/10.1088/1742-5468/2008/07/P07010%
}{J. Stat. Mech. \textbf{2008}, P07010 (2008)}.

\bibitem{Gal08} W.~Galleas, \newblock \emph{Functional relations from the
Yang-Baxter algebra: Eigenvalues of the XXZ model with non-diagonal twisted
and open boundary conditions}, \newblock \href{http://dx.doi.org/10.1016/j.nuclphysb.2007.09.011%
}{ Nucl. Phys. B \textbf{790}, 524 (2008)}.

\bibitem{BelR09} S.~Belliard and E.~Ragoucy, \newblock \emph{The nested {B}%
ethe ansatz for 'all' open spin chains with diagonal boundary conditions}, %
\newblock  \href{http://dx.doi.org/10.1088/1751-8113/42/20/205203}{J. Phys.
A: Math. Theor. \textbf{42}(20), 205203 (2009)}.

\bibitem{Nep10} R.I. Nepomechie, \newblock {\em Nested algebraic Bethe ansatz for open GL(N) spin chains with projected
K-matrices}, \newblock Nucl. Phys. B 831 (2010) 429

\bibitem{FilK10} G.~Filali and N.~Kitanine, \newblock \emph{Partition
function of the trigonometric {SOS} model with reflecting end}, \newblock  
\href{http://dx.doi.org/10.1088/1742-5468/2010/06/L06001}{J. Stat. Mech.
L06001 (2010)}.

\bibitem{FilK11} G.~Filali and N.~Kitanine, \newblock \emph{Spin chains with
non-diagonal boundaries and trigonometric {SOS} model with reflecting end}, %
\newblock  \href{http://dx.doi.org/10.3842/SIGMA.2011.012}{SIGMA \textbf{7},
012 (2011)}.

\bibitem{Fil11} G.~Filali, \newblock \emph{Elliptic dynamical reflection
algebra and partition function of SOS model with reflecting end}, \newblock  
\href{http://dx.doi.org/10.1016/j.geomphys.2011.01.002}{J. Geom. Phys. 
\textbf{61}, 1789 (2011)}.

\bibitem{CraR12} N.~Crampe and E.~Ragoucy, \newblock \emph{Generalized
coordinate {B}ethe ansatz for non-diagonal boundaries}, \newblock \href{http://dx.doi.org/10.1016/j.nuclphysb.2012.01.020%
}{Nucl. Phys. B \textbf{858}(3), 502 (2012)}.

\bibitem{CaoYSW13b} J.~Cao, W.-L. Yang, K.~Shi and Y.~Wang, \newblock \emph{Off-diagonal {B}ethe ansatz solutions of the anisotropic spin-1/2 chains
with arbitrary boundary fields}, \newblock  \href{http://dx.doi.org/10.1016/j.nuclphysb.2013.10.001%
}{Nucl. Phys. B \textbf{877}, 152 (2013)}.

\bibitem{BasB13} P.~Baseilhac and S.~Belliard, \newblock \emph{The
half-infinite {XXZ} chain in {O}nsager's approach}, \newblock \href{http://dx.doi.org/10.1016/j.nuclphysb.2013.05.003%
}{Nucl. Phys. B \textbf{873}(3), 550 (2013)}.

\bibitem{BelC13} S.~Belliard and N.~Cramp{\'e}, \newblock \emph{Heisenberg {%
XXX} model with general boundaries: Eigenvectors from algebraic {B}ethe
ansatz}, \newblock  \href{http://dx.doi.org/10.3842/SIGMA.2013.072}{SIGMA 
\textbf{9}, 072 (2013)}.

\bibitem{BelCR13} S.~Belliard, N.~Cramp{\'e} and E.~Ragoucy, \newblock \emph{
Algebraic {B}ethe ansatz for open {XXX} model with triangular boundary
matrices}, \newblock  \href{http://dx.doi.org/10.1007/s11005-012-0601-6}{%
Lett. Math. Phys. \textbf{103}, 493 (2013)}.


\bibitem{CaoYSW14} Junpeng Cao, Wen-Li Yang, Kangjie Shi and Yupeng Wang, \newblock  \emph{Nested off-diagonal Bethe ansatz and exact solutions
of the su(n) spin chain with generic integrable boundaries} JHEP04(2014)143.

\bibitem{Bel15} S.~Belliard, \newblock \emph{Modified algebraic {B}ethe
ansatz for {XXZ} chain on the segment - {I} - {T}riangular cases}, \newblock
\href{http://dx.doi.org/10.1016/j.nuclphysb.2015.01.003}{Nucl. Phys. B 
\textbf{892}, 1 (2015)}. 
%\newblock \eprint{https://arxiv.org/abs/1408.4840}.

\bibitem{BelP15} S.~Belliard and R.~A. Pimenta, \newblock \emph{Modified
algebraic {B}ethe ansatz for {XXZ} chain on the segment - {II} - general
cases}, \newblock \href{http://dx.doi.org/10.1016/j.nuclphysb.2015.03.016}{%
Nucl. Phys. B \textbf{894}, 527 (2015)}.

\bibitem{AvaBGP15} J.~Avan, S.~Belliard, N.~Grosjean and R.~Pimenta, %
\newblock \emph{Modified algebraic {B}ethe ansatz for {XXZ} chain on the
segment -- {III} -- {P}roof}, \newblock  \href{http://dx.doi.org/10.1016/j.nuclphysb.2015.08.006%
}{Nucl. Phys. B \textbf{899}, 229 (2015)}.

\bibitem{XuHYCYS16} X.~Xu, K.~Hao, T.~Yang, J.~Cao, W.-L. Yang and K.-J.
Shi, \newblock \emph{Bethe ansatz solutions of the $\tau_2$-model with
arbitrary boundary fields}, \newblock  \href{http://dx.doi.org/10.1007/JHEP11(2016)080%
}{JHEP \textbf{11}, 80 (2016)}.

\bibitem{BasB17} P.~Baseilhac and S.~Belliard, \newblock \emph{Non-{A}belian
symmetries of the half-infinite {XXZ} spin chain}, \newblock \href{http://dx.doi.org/10.1016/j.nuclphysb.2017.01.012%
}{Nucl. Phys. B \textbf{916}, 373 (2017)}.

% Open System SoV

\bibitem{DerKM03b} S.~E. Derkachov, G.~P. Korchemsky and A.~N. Manashov, %
\newblock \emph{Baxter {Q}-operator and separation of variables for the open 
{SL(2,$\mathbb{R}$)} spin chain}, \newblock  \href{http://dx.doi.org/10.1088/1126-6708/2003/10/053%
}{JHEP \textbf{10}, 053 (2003)}.

\bibitem{FraSW08} H.~Frahm, A.~Seel and T.~Wirth, \newblock \emph{Separation
of variables in the open {XXX} chain}, \newblock  \href{http://dx.doi.org/10.1016/j.nuclphysb.2008.04.008%
}{Nucl. Phys. B \textbf{802}, 351 (2008)}.

\bibitem{AmiFOW10} L.~Amico, H.~Frahm, A.~Osterloh and T.~Wirth, \newblock 
\emph{Separation of variables for integrable spin-boson models}, \newblock 
\href{http://dx.doi.org/10.1016/j.nuclphysb.2010.07.005}{Nucl. Phys. B 
\textbf{839}(3), 604 (2010)}.

\bibitem{FraGSW11} H.~Frahm, J.~H. Grelik, A.~Seel and T.~Wirth, \newblock 
\emph{Functional {B}ethe ansatz methods for the open {XXX} chain}, \newblock
\href{http://dx.doi.org/10.1088/1751-8113/44/1/015001}{J. Phys A: Math.
Theor. \textbf{44}, 015001 (2011)}.

\bibitem{Nic12} G.~Niccoli, \newblock \emph{Non-diagonal open spin-1/2 {XXZ}
quantum chains by separation of variables: Complete spectrum and matrix
elements of some quasi-local operators}, \newblock \href{http://dx.doi.org/10.1088/1742-5468/2012/10/P10025%
}{J. Stat. Mech. \textbf{2012}, P10025 (2012)}.

\bibitem{FalN14} S.~Faldella and G.~Niccoli, \newblock \emph{{SOV} approach
for integrable quantum models associated with general representations on
spin-1/2 chains of the 8-vertex reflection algebra}, \newblock  \href{http://dx.doi.org/10.1088/1751-8113/47/11/115202%
}{J. Phys. A: Math. Theor. \textbf{47}, 115202 (2014)}.

\bibitem{FalKN14} S.~Faldella, N.~Kitanine and G.~Niccoli, \newblock \emph{%
Complete spectrum and scalar products for the open spin-1/2 {XXZ} quantum
chains with non-diagonal boundary terms}, \newblock  \href{http://dx.doi.org/10.1088/1742-5468/2014/01/P01011%
}{J. Stat. Mech. \textbf{2014}, P01011 (2014)}.

\bibitem{KitMN14} N.~Kitanine, J.~M. Maillet and G.~Niccoli, \newblock \emph{%
Open spin chains with generic integrable boundaries: Baxter equation and {B}%
ethe ansatz completeness from separation of variables}, \newblock  \href{http://dx.doi.org/10.1088/1742-5468/2014/05/P05015%
}{J. Stat. Mech. \textbf{2014}, P05015 (2014)}.

\bibitem{MaiNP17} J.~M. Maillet, G.~Niccoli and B.~Pezelier, \newblock \emph{%
Transfer matrix spectrum for cyclic representations of the 6-vertex
reflection algebra I}, \newblock  \href{http://dx.doi.org/10.21468/SciPostPhys.2.1.009%
}{SciPost Phys. \textbf{2}, 009 (2017)}.

\bibitem{KitMNT17} N.~Kitanine, J.~M. Maillet, G.~Niccoli and V.~Terras, %
\newblock \emph{The open {XXX} spin chain in the SoV framework: scalar
product of separate states}, \newblock \href{http://dx.doi.org/10.1088/1751-8121/aa6cc9}{J. Phys. A: Math. Theor. \textbf{50}(22), 224001 (2017)}.

\bibitem{MaiNP18} J.~M. Maillet, G.~Niccoli and B.~Pezelier, \newblock \emph{%
Transfer matrix spectrum for cyclic representations of the 6-vertex
reflection algebra {II}}, \newblock  \href{https://scipost.org/10.21468/SciPostPhys.5.3.026}{SciPost Phys. 5, 026 (2018)}.

\bibitem{KitMNT18} N.~Kitanine, J.~M. Maillet, G.~Niccoli and V.~Terras, %
\newblock \emph{The open {XXZ} spin chain in the SoV framework: scalar
product of separate states}, \newblock \href{https://iopscience.iop.org/article/10.1088/1751-8121/aae76f/meta}{J. Phys. A: Math. Theor. \textbf{51}(48), 485201 (2018)}.

%............................................Here Non-Equi --transport.......

\bibitem{deGE05}
{J. de Gier and F. H. L. Essler}.
\newblock {\em Bethe Ansatz solution of the asymmetric exclusion process with open
  boundaries}.
\newblock {Physical Review Letters}, {95}:{240601}, {2005}.

\bibitem{deGE06}
{J. de Gier and F. H. L. Essler}.
\newblock {\em Exact spectral gaps of the asymmetric exclusion process with open
  boundaries}.
\newblock {Journal of Statistical Mechanics: Theory and Experiment}, page
  {P12011}, {2006}.

\bibitem{KinWW06} T.~Kinoshita, T.~Wenger and D.~S. Weiss, \newblock \emph{A
quantum newton's cradle}, \newblock  \href{http://dx.doi.org/10.1038/nature04693%
}{Nature \textbf{440}, 900 EP (2006)}.

\bibitem{HofLFSS07} S.~Hofferberth, I.~Lesanovsky, B.~Fischer, T.~Schumm and
J.~Schmiedmayer, \newblock \emph{Non-equilibrium coherence dynamics in
one-dimensional {B}ose gases}, \newblock \href{http://dx.doi.org/10.1038/nature06149%
}{Nature \textbf{449}, 324 EP (2007)}.

\bibitem{BloDZ08} I.~Bloch, J.~Dalibard and W.~Zwerger, \newblock \emph{%
Many-body physics with ultracold gases}, \newblock \href{http://dx.doi.org/10.1103/RevModPhys.80.885%
}{Rev. Mod. Phys. \textbf{80}, 885 (2008)}.

\bibitem{CraRS10} N.~Cramp{\'e}, E.~Ragoucy and D.~Simon, \newblock \emph{%
Eigenvectors of open {XXZ} and {ASEP} models for a class of non-diagonal
boundary conditions}, \newblock \href{http://dx.doi.org/10.1088/1742-5468/2010/11/P11038%
}{J. Stat. Mech. \textbf{2010}, P11038 (2010)}.

\bibitem{Pro11} T.~Prosen, \newblock \emph{Open {XXZ} spin chain:
Nonequilibrium steady state and a strict bound on ballistic transport}, %
\newblock \href{http://dx.doi.org/10.1103/PhysRevLett.106.217206}{Phys. Rev.
Lett. \textbf{106}, 217206 (2011)}.


\bibitem{TroCFMSEB12} S.~Trotzky, Y.-A. Chen, A.~Flesch, I.~P. McCulloch,
U.~Schollw{\"o}ck,  J.~Eisert and I.~Bloch, \newblock \emph{Probing the
relaxation towards equilibrium in an isolated strongly correlated
one-dimensional bose gas}, \newblock \href{http://dx.doi.org/10.1038/nphys2232%
}{Nature Physics \textbf{8}, 325 EP (2012)}.

\bibitem{SchHRWBBBDMRR12} U.~Schneider, L.~Hackerm{\"u}ller, J.~P.
Ronzheimer, S.~Will, S.~Braun,  T.~Best, I.~Bloch, E.~Demler, S.~Mandt,
D.~Rasch and A.~Rosch, \newblock \emph{Fermionic transport and
out-of-equilibrium dynamics in a homogeneous {H}ubbard model with ultracold
atoms}, \newblock  \href{http://dx.doi.org/10.1038/nphys2205}{Nature Physics 
\textbf{8}, 213 EP (2012)}.

\bibitem{RonSBHLMHBS13} J.~P. Ronzheimer, M.~Schreiber, S.~Braun, S.~S.
Hodgman, S.~Langer, I.~P.  McCulloch, F.~Heidrich-Meisner, I.~Bloch and
U.~Schneider, \newblock \emph{Expansion dynamics of interacting bosons in
homogeneous lattices in one and two dimensions}, \newblock \href{http://dx.doi.org/10.1103/PhysRevLett.110.205301%
}{Phys. Rev. Lett. \textbf{110}, 205301 (2013)}.


\bibitem{EisFG15} J.~Eisert, M.~Friesdorf and C.~Gogolin, \newblock \emph{%
Quantum many-body systems out of equilibrium}, \newblock\href{http://dx.doi.org/10.1038/nphys3215%
}{Nature Physics \textbf{11}, 124 EP (2015)}.
%:::::::::: SOV

\bibitem{Skl85} E.~K. Sklyanin, \newblock \emph{The quantum {T}oda chain}, %
\newblock in N.~Sanchez, ed., \emph{Non-Linear Equations in Classical and
Quantum Field Theory}, \href{http://dx.doi.org/10.1007/3-540-15213-X_80}{
pp. 196--233. (1985) Springer Berlin Heidelberg, Berlin, Heidelberg}, %
\newblock ISBN 978-3-540-39352-8.

\bibitem{Skl85a} E.~K. Sklyanin, \newblock \emph{Goryachev-{C}haplygin top
and the inverse scattering method}, \newblock  \href{http://dx.doi.org/10.1007/BF02107243%
}{J. Soviet Math. \textbf{31}(6), 3417 (1985)}.

\bibitem{Skl92} E.~K. Sklyanin, \newblock \emph{Quantum inverse scattering
method. {S}elected topics}, \newblock In M.-L. Ge, ed., \emph{Quantum Group
and Quantum Integrable Systems}, pp. 63--97. Nankai Lectures in Mathematical
Physics, World  Scientific (1992), \href{https://arxiv.org/abs/hep-th/9211111%
}{ArXiv:hep-th/9211111}.

\bibitem{Skl95} E.~K. Sklyanin, \newblock \emph{Separation of variables. {N}%
ew trends}, \newblock \href{http://dx.doi.org/10.1143/PTPS.118.35}{Prog.
Theor. Phys. \textbf{118}, 35 (1995)}.

%IRF
\bibitem{Bax73a}
{R. J. Baxter}.
\newblock {\em Eight-vertex model in lattice statistics and one-dimensional
  anisotropic Heisenberg chain. I, II, III}.
\newblock\  \href{http://dx.doi.org/10.1016/0003-4916(73)90440-5} { {Annals of Physics}, {76}:{1--24 ; 25--47 ; 48--71}, {1973}}.

\bibitem{Bax82B} R.~J. Baxter, \newblock \emph{Exactly solved models in
statistical mechanics}, \newblock Academic Press, London, \newblock ISBN
0-12-083180-5 (1982),  

%General SoV use

\bibitem{BabBS96} O.~Babelon, D.~Bernard and F.~A. Smirnov, \newblock \emph{%
Quantization of solitons and the restricted sine-gordon model}, \newblock  
\href{http://dx.doi.org/10.1007/BF02517893}{Commun. Math. Phys. \textbf{182}%
(2), 319 (1996)}. 

\bibitem{BabBS97} O.~Babelon, D.~Bernard and F.~A. Smirnov, \newblock \emph{%
Null-vectors in integrable field theory}, \newblock \href{http://dx.doi.org/10.1007/s002200050122%
}{Commun. Math. Phys. \textbf{186}(3), 601 (1997)}.

\bibitem{Smi98a} F.~A. Smirnov, \newblock \emph{Structure of matrix elements
in the quantum {T}oda chain}, \newblock \href{http://dx.doi.org/10.1088/0305-4470/31/44/019%
}{J. Phys. A: Math. Gen. \textbf{31}(44), 8953 (1998)}.

\bibitem{DerKM03} S.~E. Derkachov, G.~P. Korchemsky, and A.~N. Manashov, %
\newblock \emph{Separation of variables for the quantum {SL(2,$\mathbb{\ R}$)%
} spin chain}, \newblock \href{http://dx.doi.org/10.1088/1126-6708/2003/10/053%
}{JHEP \textbf{07}, 047 (2003)}.

\bibitem{BytT06} A.~Bytsko and J.~Teschner, \newblock \emph{Quantization of
models with non--compact quantum group symmetry. Modular XXZ magnet and
lattice sinh--Gordon model}, \newblock  \href{http://dx.doi.org/10.1088/0305-4470/39/41/S11%
}{J. Phys. A: Mat. Gen. \textbf{39}, 12927 (2006)}.

\bibitem{vonGIPS06} G.~von Gehlen, N.~Iorgov, S.~Pakuliak and V.~Shadura, %
\newblock \emph{The Baxter--Bazhanov--Stroganov model: separation of
variables and the Baxter equation}, \newblock \href{http://dx.doi.org/10.1088/0305-4470/39/23/006%
}{J. Phys. A: Math. Gen. \textbf{39}, 7257 (2006)}.

\bibitem{vonGIPS09} G.~von Gehlen, N.~Iorgov, S.~Pakuliak and V.~Shadura, %
\newblock \emph{Factorized finite-size Ising model spin matrix elements from
separation of variables}, \newblock \href{http://dx.doi.org/10.1088/1751-8113/42/30/304026%
}{J. Phys. A: Math. Theor. \textbf{42}, 304026 (2009)}.

\bibitem{vonGIPST07} G.~von Gehlen, N.~Iorgov, S.~Pakuliak, V.~Shadura and
Y.~Tykhyy, \newblock \emph{Form-factors in the Baxter--Bazhanov--Stroganov
model I: norms and matrix elements}, \newblock\href{http://dx.doi.org/10.1088/1751-8113/40/47/006%
}{J. Phys. A: Math. Theor. \textbf{40}, 14117 (2007)}.

\bibitem{vonGIPST08} G.~von Gehlen, N.~Iorgov, S.~Pakuliak, V.~Shadura and
Y.~Tykhyy, \newblock \emph{Form-factors in the Baxter--Bazhanov--Stroganov
model II: Ising model on the finite lattice}, \newblock  \href{http://dx.doi.org/10.1088/1751-8113/41/9/095003%
}{J. Phys. A: Math. Theor. \textbf{41}, 095003 (2008)}.

\bibitem{NicT10} G.~Niccoli and J.~Teschner, \newblock \emph{The Sine-Gordon
model revisited I}, \newblock  \href{http://dx.doi.org/10.1088/1742-5468/2010/09/P09014%
}{J. Stat. Mech. \textbf{2010}, P09014 (2010)}.

\bibitem{Nic10a} G.~Niccoli, \newblock \emph{Reconstruction of Baxter
Q-operator from Sklyanin SOV for cyclic representations of integrable
quantum models}, \newblock \href{http://dx.doi.org/10.1016/j.nuclphysb.2010.03.009%
}{Nucl. Phys. B \textbf{835}, 263 (2010)}.

\bibitem{GroN12} N.~Grosjean and G.~Niccoli, \newblock \emph{The $\tau_2$%
-model and the chiral Potts model revisited: completeness of Bethe equations
from Sklyanin's SOV method}, \newblock  \href{http://dx.doi.org/10.1088/1742-5468/2012/11/P11005%
}{J. Stat. Mech. \textbf{2012}, P11005 (2012)}.

\bibitem{GroMN12} N.~Grosjean, J.~M. Maillet and G.~Niccoli, \newblock \emph{%
On the form factors of local operators in the lattice sine-{G}ordon model}, %
\newblock \href{http://dx.doi.org/10.1088/1742-5468/2012/10/P10006}{J. Stat.
Mech. \textbf{2012}, P10006 (2012)}.

\bibitem{Nic13} G.~Niccoli, \newblock \emph{Antiperiodic spin-1/2 {XXZ}
quantum chains by separation of variables: Complete spectrum and form factors%
}, \newblock \href{http://dx.doi.org/10.1016/j.nuclphysb.2013.01.017}{Nucl.
Phys. B \textbf{870}, 397 (2013)}. 

\bibitem{Nic13b} G.~Niccoli, \newblock \emph{Form factors and complete
spectrum of {XXX} antiperiodic higher spin chains by quantum separation of
variables}, \newblock \href{http://dx.doi.org/10.1063/1.4807078}{J. Math.
Phys. \textbf{54}, 053516 (2013)}.

\bibitem{Nic13a} G.~Niccoli, \newblock \emph{An antiperiodic dynamical
six-vertex model: {I}. {C}omplete spectrum by {SOV}, matrix elements of the
identity on separate states and connections to the periodic eight-vertex
model}, \newblock  \href{http://dx.doi.org/10.1088/1751-8113/46/7/075003}{J.
Phys. A: Math. Theor. \textbf{46}, 075003 (2013)}. 
%\newblock \eprint{https://arxiv.org/abs/1207.1928}.

\bibitem{GroMN14} N.~Grosjean, J.~M. Maillet and G.~Niccoli, \newblock \emph{
On the form factors of local operators in the Bazhanov-Stroganov and chiral
Potts models}, \newblock \href{http://dx.doi.org/10.1007/s00023-014-0358-9}{Annales Henri Poincar{\'e} \textbf{16}, 1103 (2015)}.

\bibitem{NicT15} G.~Niccoli and V.~Terras, \newblock \emph{Antiperiodic {XXZ}
chains with arbitrary spins: Complete eigenstate construction by functional
equations in separation of variables}, \newblock  \href{http://dx.doi.org/10.1007/s11005-015-0759-9%
}{Lett. Math. Phys. \textbf{105}, 989 (2015)}. 
%\newblock \eprint{https://arxiv.org/abs/1411.6488}.

\bibitem{NicT16} G.~Niccoli and V.~Terras, \newblock \emph{The 8-vertex
model with quasi-periodic boundary conditions}, \newblock \href{http://dx.doi.org/10.1088/1751-8113/49/4/044001%
}{J. Phys. A: Math. Theor. \textbf{49}, 044001 (2016)}. 
%\newblock \eprint{https://arxiv.org/abs/1508.03230}.

\bibitem{LevNT16} D.~Levy-Bencheton, G.~Niccoli and V.~Terras, \newblock 
\emph{Antiperiodic dynamical 6-vertex model by separation of variables {II}
: Functional equations and form factors}, \newblock \href{http://dx.doi.org/10.1088/1742-5468/2016/03/033110%
}{J. Stat. Mech. \textbf{2016}, 033110 (2016)}.

\bibitem{KitMNT16} N.~Kitanine, J.~M. Maillet, G.~Niccoli and V.~Terras, %
\newblock \emph{On determinant representations of scalar products and form
factors in the {SoV} approach: the {XXX} case}, \newblock \href{http://dx.doi.org/10.1088/1751-8113/49/10/104002%
}{J. Phys. A: Math. Theor. \textbf{49}, 104002 (2016)}.


\bibitem{Smi98} F.~A. Smirnov, \newblock \emph{Quasi-classical study of form
factors in finite volume}, \newblock \href{https://arxiv.org/abs/hep-th/9802132%
}{ArXiv:hep-th/9802132 (1998)}.

\bibitem{NieWF09} S.~Niekamp, T.~Wirth and H.~Frahm, \newblock \emph{The {XXZ} model with anti-periodic twisted boundary conditions}, \newblock  \href{http://dx.doi.org/10.1088/1751-8113/42/19/195008%
}{J. Phys. A: Math. Theor. \textbf{42}, 195008 (2009)}.

\bibitem{JiaKKS16}
Y. Jiang, S. Komatsu, I. Kostov and D. Serban
\newblock {\em The hexagon in the mirror: the three-point function in the SoV representation}
\newblock J. Phys A 49:174007, 2016.

\bibitem{RyaV18} P.~Ryan, D.~Volin 
\newblock {\em Separated variables and wave functions for rational gl(N) spin chains in the companion twist frame }
\newblock arXiv:1810.10996.







\bibitem{KulRS81} P.~P. Kulish, N.~Yu. Reshetikhin, and E.~K. Sklyanin. %
\newblock {\em {Y}ang-{B}axter equation and representation theory {I}.} \newblock
Lett. Math. Phys., 5:393--403, 1981.

\bibitem{KulR82} P.~P. Kulish and N.~{Yu}. Reshetikhin. \newblock {\em {$Gl_3$}%
-invariant solutions of the {Y}ang-{B}axter equation and associated quantum
systems.} \newblock J. Sov. Math. , 34:1948--1971, 1986. \newblock %
translated from Zap. Nauch. Sem. LOMI 120, 92-121 (1982).

\bibitem{Res83} N.~Yu. Reshetikhin, \newblock {\em A method of functional
equations in the theory of exactly solvable quantum systems}
\newblock Lett. Math. Phys. {\bf 7} (1983) 205.


\bibitem{KirR86} A. N. Kirillov and N. Yu. Reshetikhin. \newblock {\em Exact solution of the {H}eisenberg {XXZ} model of spin s.} \newblock J. Sov. Math. {\bf 35}, 2627--2643 (1986), Translated from Zap. Nauch. Sem. LOMI {\bf145}, 109--133, (1985).

\bibitem{Ize87} A.~G. Izergin, \newblock \emph{Partition function of the
six-vertex model in a finite volume}, \newblock Sov. Phys. Dokl. \textbf{32}%
, 878 (1987).

\bibitem{Sla89}
N.~A. Slavnov.
\newblock {\em Calculation of scalar products of wave functions and form factors in
  the framework of the algebraic {B}ethe {A}nsatz.}
\newblock Theor. Math. Phys. , 79:502--508, 1989.


\bibitem{FodW12} O.~Foda and M.~Wheeler, \newblock \emph{Variations on {S}lavnov's scalar product}, \newblock  \href{http://dx.doi.org/10.1007/JHEP10(2012)096}{JHEP \textbf{10}, 096 (2012)}.


\end{thebibliography}
\end{document}